\renewcommand{\theequation}{\thesection\arabic{equation}}
\DeclareMathAlphabet{\mathscrbf}{OMS}{mdugm}{b}{n}
\newcommand\comment[1]{{\color{red} \fbox{Stuff commented out -- see the latex source.}}}
\newcommand{\bn}{\begin{enumerate}}
	\newcommand{\en}{\end{enumerate}}
\newcommand{\bc}{\begin{cases}}
	\newcommand{\ec}{\end{cases}}
\newcommand{\bse}{\begin{eqnarray*}}
\newcommand{\ese}{\end{eqnarray*}}
\newcommand{\ben}{\begin{eqnarray}}
\newcommand{\een}{\end{eqnarray}}
\newcommand{\var}{{\rm Var}}
\newcommand{\cov}{{\mathbb{C}}{\rm ov}}
\newcommand{\tr}{{\rm tr}}
\newcommand{\diag}{\mathrm{diag}}
\newcommand{\la}{\langle}
\newcommand{\ra}{\rangle}
\DeclareMathOperator*{\argmin}{argmin}
\newcommand{\bbR}{{\mathbb R}}
\newcommand{\bbH}{{\mathbb H}}
\newcommand{\bbL}{{\mathbb L}}
\newcommand{\bdlambda}{{\boldsymbol{\lambda}}}
\newcommand{\bdbeta}{{\boldsymbol{\beta}}}
\newcommand{\bdeta}{{\boldsymbol{\eta}}}
\newcommand{\bdomega}{{\boldsymbol{\omega}}}
\newcommand{\bdepsilon}{{\boldsymbol{\varepsilon}}}
\newcommand{\eqid}{\buildrel d \over =}
\newcommand{\Normal}{{\mathrm{Normal}}}
\newcommand{\wh}{\widehat}
\newcommand{\wt}{\widetilde}
\newcommand{\E}{\mathbb{E}}
\newcommand{\pr}{\mathbb{P}}
\newcommand{\CA}{\mathscr{A}}
\newcommand{\CC}{\mathcal{C}}
\newcommand{\BA}{\boldsymbol{A}}
\newcommand{\BB}{\boldsymbol{B}}
\newcommand{\BI}{{\boldsymbol{I}}}
\newcommand{\BK}{{\boldsymbol{K}}}
\newcommand{\BR}{{\boldsymbol{R}}}
\newcommand{\BX}{\boldsymbol{X}}
\newcommand{\BY}{\boldsymbol{Y}}
\newcommand{\bdf}{{\boldsymbol{f}}}
\newcommand{\bdg}{{\boldsymbol{g}}}
\renewcommand{\theenumi}{\roman{enumi}}
\newcites{supp}{Supplementary References}
\theoremstyle{plain}
\newtheorem{definition}{Definition}
\newtheorem{theorem}{Theorem}
\newtheorem{lemma}{Lemma}
\newtheorem*{lemma*}{Lemma}
\newtheorem{proposition}{Proposition}
\newtheorem{corollary}{Corollary}
\newtheorem{assum}{C.\ignorespaces}
\newtheorem*{assum*}{C}
\newtheorem{remark}{Remark}
\def\II{I\negthinspace I}
\def\III{I\negthinspace I\negthinspace I}
\newcommand{\scrI}{\mathscr{I}}
\newcommand{\scrL}{\mathscr{L}}
\newcommand{\scrT}{\mathscr{T}}
\newcommand{\scrbfL}{\mathscrbf{L}}
\newcommand{\scrbfC}{\mathscrbf{C}}
\newcommand{\scrbfT}{\mathscrbf{T}}
\newcommand{\calS}{ \mathcal{S} }
\newcommand{\gns}{ {\boldsymbol{g}_{\calS}} }
\newcommand{\gnsc}{ {\boldsymbol{g}_{\calS^c}} }
\newcommand{\ws}{ \boldsymbol{\omega}_{\calS} }
\newcommand{\wsc}{ \boldsymbol{\omega}_{\calS^c} }
\newcommand{\fzeros}{ \boldsymbol{f}_{0 \calS} }
\newcommand{\fs}{ \boldsymbol{f}_{\calS} }
\newcommand{\fhats}{ \boldsymbol{\widehat{f}}_{\calS} }
\newcommand{\fhatsc}{ \boldsymbol{\widehat{f}}_{\calS^c} }
\newcommand{\fchecks}{ \boldsymbol{\check{f}}_{\calS} }
\newcommand{\fhatts}{ \boldsymbol{\widehat{f}}_{\calS} }
\newcommand{\fhaths}{ \boldsymbol{\widehat{f}}_{\wh\calS} }
\newcommand{\ftildets}{ \boldsymbol{\widetilde{f}}_{\calS} }
\newcommand{\Tn}{ \mathscrbf{T}_n }
\newcommand{\TnSS}{  \mathscrbf{T}_n^{(\calS, \calS)} }
\newcommand{\TnScS}{  \mathscrbf{T}_n^{(\calS^c, \calS)} }
\newcommand{\TnjS}{  \mathscrbf{T}_n^{(j, \calS)} }
\newcommand{\TnSSc}{  \mathscrbf{T}_n^{(\calS, \calS^c)} }
\newcommand{\TnScSc}{  \mathscrbf{T}_n^{(\calS^c, \calS^c)} }
\newcommand{\TSS}{  \mathscrbf{T}^{(\calS, \calS)} }
\newcommand{\TjS}{  \mathscrbf{T}^{(j, \calS)} }
\newcommand{\TScS}{  \mathscrbf{T}^{(\calS^c, \calS)} }
\newcommand{\Tnjj}{  \mathscr{T}_n^{(j, j)} }
\newcommand{\Tjj}{  \mathscr{T}^{(j, j)} }
\newcommand{\T}{  \mathscrbf{T} }
\newcommand{\JnSS}{  \mathscrbf{T}_{n, \lambda_2}^{(\calS, \calS)} }
\newcommand{\JSS}{  \mathscrbf{T}_{\lambda_2}^{(\calS, \calS)} }
\newcommand{\QSS}{  \mathscrbf{Q}^{(\calS, \calS)} }
\newcommand{\ESS}{  \mathscrbf{E}^{(\calS, \calS)} }
\newcommand{\vertiii}[1]{{\left\vert\kern-0.25ex\left\vert\kern-0.25ex\left\vert #1 
    \right\vert\kern-0.25ex\right\vert\kern-0.25ex\right\vert}}
\newcommand{\GP}{\mathcal{G}\hspace{-.1cm}\mathcal{P}}
\begin{document}


\renewcommand{\baselinestretch}{2}

\markright{ \hbox{\footnotesize\rm Statistica Sinica
}\hfill\\[-13pt]
\hbox{\footnotesize\rm
}\hfill }

\markboth{\hfill{\footnotesize\rm Guo, Li and Hsing} \hfill}
{\hfill {\footnotesize\rm Variable Section in High-dimensional Functional Linear Models} \hfill}

\renewcommand{\thefootnote}{}
$\ $\par


\fontsize{12}{14pt plus.8pt minus .6pt}\selectfont \vspace{0.8pc}
\centerline{\large\bf Variable Selection and Minimax Prediction  }
\vspace{2pt} 
\centerline{\large\bf in High-dimensional Functional Linear Models}
\vspace{.4cm} 
\centerline{Xingche Guo and Yehua Li and Tailen Hsing} 
\vspace{.4cm} 
\centerline{\it University of Connecticut,  University of California Riverside}
\centerline{\it and University of Michigan}
 \vspace{.55cm} \fontsize{9}{11.5pt plus.8pt minus.6pt}\selectfont


\begin{quotation}
\noindent {\it Abstract:}
High-dimensional functional data have become increasingly prevalent in modern applications such as high-frequency financial data and neuroimaging data analysis. We investigate a class of high-dimensional linear regression models, where each predictor is a random element in an infinite-dimensional function space, and the number of functional predictors $p$ can potentially be ultra-high. Assuming that each of the unknown coefficient functions belongs to some reproducing kernel Hilbert space (RKHS), we regularize the fitting of the model by imposing a group elastic-net type of penalty on the RKHS norms of the coefficient functions. We show that our loss function is Gateaux sub-differentiable, and our functional elastic-net estimator exists uniquely in the product RKHS. Under suitable sparsity assumptions and a functional version of the irrepresentable condition, we derive a non-asymptotic tail bound for variable selection consistency of our method. Allowing the number of true functional predictors $q$ to diverge with the sample size, we also show a post-selection refined estimator can achieve the oracle minimax optimal prediction rate. The proposed methods are illustrated through simulation studies and a real-data application from the Human Connectome Project.

\noindent {\it Key words:}
Elastic-net penalty; Functional linear regression; Minimax optimality; Model selection consistency; Reproducing kernel Hilbert space; Sparsity.
\par
\end{quotation}\par

\def\thefigure{\arabic{figure}}
\def\thetable{\arabic{table}}

\renewcommand{\theequation}{\thesection.\arabic{equation}}

\fontsize{12}{14pt plus.8pt minus .6pt}\selectfont

\section{Introduction}
 \vspace{-1em}
\label{s:intro}
Modern science and technology give rise to large data sets with high-frequency repeated measurements, resulting in random trajectories that can be modeled as functional data \citep{RamsaySilverman05}. There has been a large volume of literature on scalar-on-function regression models, where the most studied model is the functional linear model (FLM); see \cite{James2002, Muller2005, CaiHall06, ReissOgden2007, Crambes2009, CaiYuan2012, Lei2014, Shang-Cheng2015, Liu2021}, among others.
With functional data belonging to an infinite-dimensional function space \citep{hsing2015theoretical}, the sequence of eigenvalues of the covariance operator decays to zero, rendering the covariance operator non-invertible and hence
the inference of the FLM a challenging inverse problem. 

There has been a recent surge in applications of high-dimensional functional data analysis due to new developments in neuroimaging (e.g. fMRI and TDI), electroencephalogram (EEG), and high-frequency stock exchange data. For example, \cite{Qiao2019jasa} modeled EEG activity data from different nodes as high-dimensional functional data and proposed a functional Gaussian graphical model to study the connectivity between the nodes. \cite{lee2021conditional} considered a class of conditional functional graphical models to model the connectivity between different regions of interest (ROI) of the brain using fMRI data.

It is also natural to consider regression models with high-dimensional functional predictors. \cite{FanJamesRadchenko2015} studied variable selection procedures for linear and non-linear regression models with high-dimensional functional predictors. Their approach was to reduce the dimension of each functional predictor by representing it as a linear combination of some known basis functions and to apply a group-lasso type of penalty in model fitting. As pointed out in \cite{XueYao2021}, the results in \cite{FanJamesRadchenko2015} relied heavily on the assumption that the minimum eigenvalues of the design matrices are bounded away from zero, which ignored the infinite-dimensional nature of functional data and essentially limited their methods to functional data reside in a finite-dimensional function subspace. 
\cite{XueYao2021}, on the other hand, focused on hypothesis testing issues in high-dimensional FLMs rather than variable selection consistency.
As \cite{FanJamesRadchenko2015}, \cite{XueYao2021} also based their approach on representing functional predictors on pre-selected basis functions and minimizing a penalized least square loss function, where the group penalty can be flexibly chosen from lasso \citep{Tibshirani1996}, SCAD \citep{FanLi2001} or MCP \citep{Zhang2010aos}. To the best of our knowledge, the variable selection consistency property for the high-dimensional FLM in a general functional-data setting remains an open problem to date.

We propose to conduct variable selection in high-dimensional FLMs under the RKHS framework using a double-penalty approach, where the first penalty resembles the group-lasso type penalty in \cite{XueYao2021}, which encourages sparsity, and the second penalty is on the squared RKHS norms of the functional coefficients to regularize the smoothness of the fit. As shown in \cite{CaiYuan2012}, the RKHS approach can outperform the principal component regression approach when the coefficient functions are not directly spanned by the eigenfunctions of the functional predictors. Many of the existing high-dimensional functional regression approaches including \cite{FanJamesRadchenko2015} and \cite{XueYao2021} are similar in spirit to the principal component regression in which both the functional predictors and the coefficient functions are expressed using the same set of basis functions. Our approach offers the extra flexibility of picking the reproducing kernel based on the application and thus can outperform the existing methods when the coefficient functions are ``misaligned'' with the functional predictors as described by \cite{CaiYuan2012}.
Our double penalization method resembles a group-penalized version of the elastic-net \citep{zou2005regularization}, where the two penalties enforces sparsity and stabilizes the solution paths, respectively. It is well known that the lasso alone tends not to work well when the predictors are highly correlated, while the elastic-net may offer a more stable solution path and better prediction performance under high collinearity.

One of the main contributions of the present paper is providing a theory that addresses variable selection consistency for high-dimensional FLMs.
In the scalar case that they considered, \cite{ZouZhang2009aos} established a variable selection consistency result for the elastic-net. However, the noninvertibility of the design matrices of the functional predictors in our problem makes it necessary to create a completely new proof.
Another important contribution of our paper is that we develop the minimax optimal prediction rate for the high-dimensional FLMs, where the number of true functional predictors $q$ is allowed to grow to infinity with the sample size $n$. We show that a post-selection, refined estimation of the high-dimensional FLM using our RKHS approach can achieve such a minimax optimal rate.

\section{Functional Elastic-Net Regression}\label{sec:fnet}
 \vspace{-1em}
\subsection{Model Assumptions} \label{s:2.1}
\vspace{-0.8em}

Let $\mathbb{L}_2[0,1]$ be the $L_2$-space of square-integrable, measurable functions on $[0,1]$, equipped with the inner product $\langle f, g\rangle_{2} = \int_0^1 f(t)g(t)dt$ and functional norm $\|f\|_2=\langle f, f\rangle_2^{1/2}$, for any $f, g \in \mathbb{L}_2[0,1]$.
We will also be concerned with the $p$-fold product space of $\mathbb{L}_2^p[0,1]$ containing
elements $\boldsymbol{f}=(f_1,\ldots, f_p)^\top$ with each $f_j\in \mathbb{L}_2[0,1]$, $\|\boldsymbol{f}\|_2 \equiv (\sum_{j=1}^p \|f_j\|_2^2)^{1/2} <\infty $ and inner product 
$\langle \boldsymbol{f}, \boldsymbol{g}\rangle_{2} \equiv \sum_{j=1}^p \langle f_j, g_j\rangle_{2}$ for 
$\boldsymbol{f}=(f_1,\ldots, f_p)^\top, \boldsymbol{g} = (g_1, \ldots,g_p)^\top$. Let $\otimes$ be
the outer product associated with either inner product such that $f\otimes g$ defines an operator $(f\otimes g) h = f \langle g, h\rangle_2$. 
In this paper, we consider a high-dimensional FLM: 
\vspace{-1.5em}
\begin{align} \label{e:model} 
    Y_i =  \sum_{j=1}^p \langle X_{ij},  \beta_{j} \rangle_{2} + \varepsilon_i, \quad i = 1, \ldots, n,
\end{align}
where the functional predictors $X_{ij}(\cdot)$ are random elements in $\bbL_2[0,1]$, $\beta_{j}(\cdot)$ are unknown coefficient functions in $\mathbb{L}_2[0,1]$, and $\varepsilon_i$ are iid zero-mean random errors with variance $\sigma^2$.
\textcolor{black}{Without loss of generality, assume that both $Y_i$ and $X_{ij}(t)$ are centered at $0$, i.e., $\mathbb{E}Y_i = 0$ and $\mathbb{E}X_{ij}(t) = 0$ for $t \in [0, 1]$, $j = 1,\dots,p$, so that no intercept is needed in (\ref{e:model}).}

Consider $\BX_{i\bullet}=(X_{i1},\ldots, X_{ip})^\top$, $i=1,\ldots, n$, as iid zero-mean random vectors, with the covariance operator $\mathscrbf{C}$ defined as $\mathscrbf{C} = \E (X_{i1},\ldots, X_{ip})^\top \otimes (X_{i1},\ldots, X_{ip}).$
Note that we do not assume that the functional predictors are independent.
It is convenient to view $\mathscrbf{C}$ as a $p\times p$ operator-valued matrix 
$\{\mathcal{C}^{(j,j')}\}$ where $\mathcal{C}^{(j,j')}=\E (X_{ij}\otimes X_{ij'})$ 
is the cross covariance operators of $X_{ij}$ and $X_{ij'}$. 
Denote $\boldsymbol{Y}_n = (Y_1, \dots, Y_n)^{\top}$, 
$\boldsymbol{\varepsilon}_n = (\varepsilon_1, \dots, \varepsilon_n)^{\top}$ and 
$\BX_n=(\BX_{1\bullet}, \ldots, \BX_{n\bullet})^\top$
as the $n\times p$ matrix of functional 
predictors. Then, the sample covariance operator $\mathscrbf{C}_n$ is defined as
\vspace{-1.5em}
\begin{align} \label{e:sample_cov_op}
    \mathscrbf{C}_n = \frac{1}{n} \sum_{i=1}^n (X_{i1},\ldots, X_{ip})^\top \otimes (X_{i1},\ldots, X_{ip})  = \frac{1}{n} \boldsymbol{X}_n^{\top} \otimes \boldsymbol{X}_n.
\end{align}
We further assume that $\beta_{j}(\cdot) \in \mathbb{H}_j := \mathbb{H}(K_j)$, which is the reproducing kernel Hilbert space (RKHS) with kernel $K_j$ \citep{wahba1990spline}. Recall that a real, symmetric, square-integrable, and nonnegative definite function $K(\cdot, \cdot)$ on $[0,1]^2$ is called a reproducing kernel (RK) for a Hilbert space of functions $\mathbb{H}(K)$ on $[0,1]$ if $K(\cdot, t) \in \mathbb{H}(K)$ for any $t\in [0,1]$ and $\mathbb{H}(K)$ is equipped with the inner product such that $\langle \beta, K(\cdot, t)\rangle_{\mathbb{H}(K)} = \beta(t)$ for any $\beta\in \mathbb{H}(K)$ and any $t\in [0,1]$; the Hilbert space $\mathbb{H}(K)$ is then called an RKHS. With a proper choice of RK, an RKHS provides a flexible class of functions which can also be naturally regularized using the RKHS norm. As such, the RKHS is a useful framework in nonparametric estimation \citep{wahba1990spline} and functional data analysis \citep{CaiYuan2012, hsing2015theoretical, SunDuWangMa2018, lee2021conditional}.  

\begin{remark}
The choice of kernel $K$ determines the smoothness class. Sobolev kernels of order $m$ \citep{hsing2015theoretical} regulate the $m$-th derivative, whereas Gaussian kernels yield infinitely differentiable functions. 
In contrast, total-variation penalties, although successfully applied in scalar-on-image functional regression \citep{wang2017generalized} with the benefits of promoting piecewise structure and allowing jumps, are not induced by an RKHS norm and therefore lie outside our RKHS-based framework.
\end{remark}

We adopt the commonly assumed setting where the total number of functional predictors, $p$, can be much larger than the sample size $n$ but only a small portion of those have non-zero effects on the response.
Denote the signal set as $\mathcal{S}=\{  j \in \{1,\dots,p\}:  \var (\la X_{1j}, \beta_j \ra_2)$ $= \la \beta_{j}, \CC^{(j,j)} \beta_j \ra_2  \neq 0 \}$ and the non-signal set as $\calS^c = \{1, \dots, p\} \backslash \calS$, and write $q:= | \calS |$.

 \vspace{-1em}
\subsection{Functional Elastic-Net Based on RKHS} \label{s:2.2}
\vspace{-0.8em}

In order to regularize the solution as well as to enforce sparsity in $\bdbeta=(\beta_1, \ldots, \beta_p)^\top$, we assume $\bdbeta\in \mathbb{H} := \otimes_{j=1}^p \mathbb{H}_j$, which is the direct product of the RKHS \citep{hsing2015theoretical}, and estimate it by
\vspace{-0.8em}
\begin{align}
\label{equ:mini0}
    \wh \bdbeta=\argmin_{\bdbeta \in \mathbb{H}} \bigg\{ \frac{1}{2n} \sum_{i=1}^n \bigg(Y_i -  
\sum_{j=1}^p \langle X_{ij},  \beta_j \rangle_{2}  \bigg)^2 
+  
    \sum_{j=1}^p \mathrm{Pen}(\beta_j; \bdlambda)\bigg\} 
\end{align}
where 
$\mathrm{Pen}(\beta_j; \bdlambda)$ 
is the functional elastic-net penalty to be specified below with  $\bdlambda$ denoting a vector of tuning parameters. 

Following \cite{CaiYuan2012}, for any symmetric positive semi-definite kernel $R(\cdot, \cdot)$, denote $\mathscr{L}_R$ as the integral operator $(\scrL_R f)(\cdot)= \int_0^1 R(s, \cdot) f(s) ds$, $f\in \mathbb{L}_2[0,1]$. Suppose $R$ has a spectral decomposition $R(s,t)=\sum_{j=1}^\infty \theta_j^R\varphi_j^R(s) \varphi_j^R(t)$. Then its square root is defined as 
$R^{1/2}(s,t)=\sum_{j=1}^\infty (\theta_j^R)^{1/2} \varphi_j^R(s) \varphi_j^R (t)$, and  
$\mathscr{L}_{R^{1/2}}$ is the associated square-root integral operator. For a matrix of kernel functions $\BR=(R_{ij})_{i, j=1}^{k, m}$, let $\mathscrbf{L}_{\BR}: \mathbb{L}_2^m \to \mathbb{L}_2^k $ be the corresponding matrix of operators such that $\mathscrbf{L}_{\BR} \bdf =\left(\sum_{j=1}^m \scrL_{R_{ij}} f_j \right)_{i=1}^k$ for any $\bdf=(f_1, \ldots, f_m)^\top \in \mathbb{L}_2^m$. 
By \cite{wahba1990spline} and \cite{CaiYuan2012}, for any positive semi-definite 
kernel $K$ and any $\beta \in \mathbb{H}(K)$, there exists an $f \in \mathbb{L}_2[0,1]$ such that $\beta = \mathscr{L}_{K^{1/2}} f$. 
If $K$ is not strictly positive definte, then multiple $f$'s satisfy this relationship. However, there is always a unique $f$ satisfying $\|\beta\|_{\mathbb{H}(K)} = \|f\|_2$. The ridge regularization term in our objective (introduced later) guarantees the 
identifiability of this representative.
Without causing any confusion, we use $\|\cdot\|_2$ to denote the norm of $\mathbb{L}_2$ functions or vectors of $\mathbb{L}_2$ functions as well as the Euclidean norm in $\mathbb{R}^p$.

Let $\beta_{j}= \mathscr{L}_{K_j^{1/2}} f_{j}$ for all $j$ and denote $\bdf=(f_1, \ldots, f_p)^\top$. Then $\bdbeta= \scrbfL_{\BK^{1/2}} \bdf$ where $\boldsymbol{K}(s,t)=\mathrm{diag}(K_1, \dots, K_p)(s,t)$. Define
$\widetilde{X}_{ij} = \mathscr{L}_{K_j^{1/2}} X_{ij}$, $\wt \BX_{i \bullet}= (\wt X_{i1}, \ldots, \wt X_{ip})^\top$, and $\wt \BX_n=(\wt \BX_{1\bullet}, \ldots, \wt \BX_{n\bullet})^\top$. Thus, the theoretical and empirical covariance of $\wt \BX_{i\bullet}$ 
are $\scrbfT=\cov(\wt \BX_{i\bullet})= \mathscrbf{L}_{\boldsymbol{K}^{1/2}}\boldsymbol{\mathscrbf{C}}
\mathscrbf{L}_{\boldsymbol{K}^{1/2}}$ and 
$\Tn= \mathscrbf{L}_{\boldsymbol{K}^{1/2}}\boldsymbol{\mathscrbf{C}}_n 
\mathscrbf{L}_{\boldsymbol{K}^{1/2}} = n^{-1} \wt \BX_n^\top \otimes \wt \BX_n.$
Define $\mathbb{M}_{nj} = \mathrm{Span}\big\{ \widetilde{X}_{ij}(\cdot), i = 1,\dots,n \big\}$ and $\mathbb{M}_{nj}^{\perp}$ the orthogonal complement of $\mathbb{M}_{nj}$. 
With the above $\bbL_2$ representation $\bdf$ of $\bdbeta$, the loss function in (\ref{equ:mini0}) can be rewritten as
\vspace{-0.8em}
\begin{align} \label{equ:mini1}
	\ell(\boldsymbol{f})
	=\frac{1}{2} \langle  \mathscrbf{T}_n {\boldsymbol{f}}, \boldsymbol{f}  \rangle_{2} 
	- \left\langle {1\over n} \wt \BX_n^\top \BY_n,   \boldsymbol{f}  \right\rangle_{2}
	+\frac{1}{2n} \|\BY_n\|_2^2 +  \sum_{j=1}^p \mathrm{Pen}(f_j; \bdlambda).
\end{align}
We propose to use the following functional elastic-net penalty
\vspace{-0.6em}
\begin{align*}
	\mathrm{Pen}(f_j; \lambda_1, \lambda_2)= \lambda_1  \|\Psi_j f_j \|_{2} + \frac{\lambda_2}{2} \|f_j\|_{2}^2,  \quad \lambda_1,\lambda_2 >0,
\end{align*}
where $\Psi_j$ is an operator on $\mathbb{L}_2[0,1]$ satisfying the following condition. 
%
\begin{assum}
For $j=1,\dots,p$, $\Psi_j$ is a self-adjoint operator such that $\Psi_j f \in \mathbb{M}_{nj}$ for all $f \in \mathbb{M}_{nj}$. 
Assume that there exist positive constants $0<C_{\min} <C_{\max}< \infty$ such that, uniformly for all $j$,  the eigenvalues of $\Psi_j$ are in the interval $[C_{\min},C_{\max}]$.
\label{ass:a1}
\end{assum}
%
%
\begin{remark}
(i) The $\bbL_2$-norm $\|f_j\|_2$ in $\mathrm{Pen}(f_j; \lambda_1, \lambda_2)$ corresponds to the RKHS norm $\|\beta_j\|_{\bbH_j}$, a commonly used norm in functional regression problems \citep[cf.][]{CaiYuan2012}.

\noindent (ii) A simple choice for $\Psi_j$ is $\Psi_j = \mathscr{I}$, the identity operator, based on which the penalty $\mathrm{Pen}(f_j; \lambda_1, \lambda_2)$ includes both $\|f_j\|_2$ and $\|f_j\|_2^2$ and resembles an elastic-net \citep[cf.][]{zou2005regularization} version of the group lasso \citep{yuan2006model}. In the high-dimensional functional regression setting, \cite{XueYao2021} considered a penalty that focused on the amount of variation $X_j$ explains rather than the norm of $f_j$. Their penalty translates in our setting to  $\lambda_1 n^{-1/2} ( \sum_{i=1}^n  \langle X_{ij}, \beta_j\rangle_2^2 )^{1/2}  = \lambda_1 \| \{\Tnjj\}^{1/2} f_j\|_2$ where $\Tnjj$ is the empirical covariance of $\wt \BX_{\bullet j}= (\wt X_{1j}, \cdots, \wt X_{nj})^{\top}$ or the $(j,j)$th entry of $\Tn$. The approach in \cite{XueYao2021} does not penalize the squared norm, but both $X_j$ and $\beta_j$ are represented by a growing but finite number of basis functions, which effectively sets a lower bound on the smallest eigenvalue of $\Tnjj$. In our setting, we can achieve similar effects by setting $\Psi_j = (\Tnjj + \theta \mathscr{I})^{1/2}$, where $\theta>0$ provides a floor to the smallest eigenvalue of $\Psi_j$ and is treated as a tuning parameter. 
\end{remark}


Note that the functional estimator, $\widehat{\boldsymbol{f}}$, is defined as the solution that minimizes \eqref{equ:mini1} over an infinite-dimensional space $\mathbb{L}_2^p[0,1]$. The following proposition establishes that the minimization problem is indeed well defined and any minimizer must be in a finite-dimensional subspace. 

\begin{proposition}\label{lemma:solution_form}
Suppose that Condition C.\ref{ass:a1} holds. Then, for each $j=1,\ldots,p$, any minimizer $\widehat{f_j}$ of \eqref{equ:mini1} must be in the space $\mathbb{M}_{nj}$.
\end{proposition}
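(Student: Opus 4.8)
The plan is to prove a representer-type statement: the objective in \eqref{equ:mini1} sees the component of each $f_j$ lying in $\mathbb{M}_{nj}^{\perp}$ only through the ridge term $\tfrac{\lambda_2}{2}\|f_j\|_2^2$, which strictly penalizes it, so a minimizer cannot carry such a component. Concretely, for each $j$ let $P_j:\mathbb{L}_2[0,1]\to\mathbb{L}_2[0,1]$ be the orthogonal projection onto $\mathbb{M}_{nj}$, which is well defined because $\mathbb{M}_{nj}$ is finite-dimensional, hence closed; write $f_j = P_jf_j + (f_j - P_jf_j)$ with $f_j-P_jf_j\in\mathbb{M}_{nj}^{\perp}$, and set $\boldsymbol{f}^{\parallel}=(P_1f_1,\dots,P_pf_p)^{\top}$. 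I will show $\ell(\boldsymbol{f}^{\parallel})\le\ell(\boldsymbol{f})$ with equality only when $\boldsymbol{f}=\boldsymbol{f}^{\parallel}$; applying this to $\boldsymbol{f}=\widehat{\boldsymbol{f}}$ and invoking optimality forces $\widehat{f_j}=P_j\widehat{f_j}\in\mathbb{M}_{nj}$ for every $j$.

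First I would dispose of the data-fit terms. Both $\langle\mathscrbf{T}_n\boldsymbol{f},\boldsymbol{f}\rangle_2 = n^{-1}\sum_{i=1}^n\big(\sum_{j=1}^p\langle\widetilde{X}_{ij},f_j\rangle_2\big)^2$ and $\langle n^{-1}\wt\BX_n^{\top}\BY_n,\boldsymbol{f}\rangle_2 = n^{-1}\sum_{i=1}^n Y_i\sum_{j=1}^p\langle\widetilde{X}_{ij},f_j\rangle_2$ depend on $\boldsymbol{f}$ only through the scalars $\langle\widetilde{X}_{ij},f_j\rangle_2$; since $\widetilde{X}_{ij}\in\mathbb{M}_{nj}$, we have $\langle\widetilde{X}_{ij},f_j\rangle_2 = \langle\widetilde{X}_{ij},P_jf_j\rangle_2$ for every $i$, so the whole data-fit part of $\ell$, together with the constant $\tfrac{1}{2n}\|\BY_n\|_2^2$, is unchanged when $\boldsymbol{f}$ is replaced by $\boldsymbol{f}^{\parallel}$.

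Next I would handle the penalties. For the group term I invoke Condition C.\ref{ass:a1}: $\Psi_j$ is self-adjoint and $\Psi_j(\mathbb{M}_{nj})\subseteq\mathbb{M}_{nj}$, so for any $g\in\mathbb{M}_{nj}^{\perp}$ and $h\in\mathbb{M}_{nj}$ one has $\langle\Psi_jg,h\rangle_2=\langle g,\Psi_jh\rangle_2=0$, whence $\Psi_j(\mathbb{M}_{nj}^{\perp})\subseteq\mathbb{M}_{nj}^{\perp}$; consequently $\Psi_jf_j=\Psi_jP_jf_j+\Psi_j(f_j-P_jf_j)$ is an orthogonal decomposition and $\|\Psi_jf_j\|_2\ge\|\Psi_jP_jf_j\|_2$. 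For the ridge term, orthogonality gives $\|f_j\|_2^2=\|P_jf_j\|_2^2+\|f_j-P_jf_j\|_2^2$. Summing over $j$ and combining with the previous paragraph, $\ell(\boldsymbol{f})-\ell(\boldsymbol{f}^{\parallel})\ge\tfrac{\lambda_2}{2}\sum_{j=1}^p\|f_j-P_jf_j\|_2^2\ge0$, and because $\lambda_2>0$ the left side is strictly positive unless $f_j-P_jf_j=0$ for all $j$, i.e.\ $\boldsymbol{f}=\boldsymbol{f}^{\parallel}$. Taking $\boldsymbol{f}=\widehat{\boldsymbol{f}}$ yields $\widehat{f_j}\in\mathbb{M}_{nj}$, proving the claim.

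I do not anticipate a genuine obstacle; the one step requiring care is the invariance $\Psi_j(\mathbb{M}_{nj}^{\perp})\subseteq\mathbb{M}_{nj}^{\perp}$, which is exactly where the self-adjointness in Condition C.\ref{ass:a1} is used (the eigenvalue bounds there are irrelevant for this proposition). It is also worth noting that $\lambda_2>0$ is essential for the ``any minimizer'' conclusion: with only the group penalty the argument would show merely that \emph{some} minimizer lies in $\prod_{j=1}^p\mathbb{M}_{nj}$.
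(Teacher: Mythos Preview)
Your proof is correct and follows essentially the same representer-theorem argument as the paper: decompose each $f_j$ into its $\mathbb{M}_{nj}$ and $\mathbb{M}_{nj}^{\perp}$ components, note the data-fit term is unaffected, and use the invariance $\Psi_j(\mathbb{M}_{nj})\subseteq\mathbb{M}_{nj}$ together with self-adjointness to get the Pythagorean identity $\|\Psi_j f_j\|_2^2=\|\Psi_j P_jf_j\|_2^2+\|\Psi_j(f_j-P_jf_j)\|_2^2$. If anything, you are slightly more explicit than the paper in deriving $\Psi_j(\mathbb{M}_{nj}^{\perp})\subseteq\mathbb{M}_{nj}^{\perp}$ and in isolating $\lambda_2>0$ as the source of strictness.
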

%
The proof of Proposition \ref{lemma:solution_form} uses the ideas of the well-known representer theorem for smoothing splines \citep{wahba1990spline}.
The fact that the minimizer of \eqref{equ:mini1} is in a finite-dimensional subspace allows us to establish its uniqueness in Proposition \ref{theorem:kkt} below.

Next, we develop the convex programming conditions in the functional space that characterize the optimizer of \eqref{equ:mini1}. 
For the classical lasso problem \citep{Tibshirani1996}, the Karush-Kuhn-Tucker (KKT) condition is used to characterize the solution \citep[cf.][]{zhao2006model,wainwright2009sharp}, where
subgradients are used in place of gradients due to the nondifferentiability of the lasso objective function. Similarly, in the function space, the objective function \eqref{equ:mini1} is not always differentiable because of the group-lasso-type penalty on $\| \Psi_j f_j \|_2$.
In Section \ref{s:KKT}, we review the definition of Gateaux differentiability and define the corresponding notion of sub-differential. With these in mind, we state the following result.

\begin{proposition}
\label{theorem:kkt}
Let $\bdbeta_0$ be the true value of $\bdbeta$ in Model  (\ref{e:model}), and $\boldsymbol{f}_0 = (f_{01}, \dots, f_{0p})^{\top}$ be the corresponding $\mathbb{L}_2^p$ surrogate such that $\bdbeta_0=\scrbfL_{\BK^{1/2}} \bdf_0$. 
Suppose Condition C.\ref{ass:a1} holds. Then, for all $\lambda_1, \lambda_2 > 0$, the solution $\widehat{\boldsymbol{f}}$ for \eqref{equ:mini1}
exists uniquely and satisfies 
\vspace{-1em}
\begin{align}
\label{equ:kkt}
	 \mathscrbf{T}_n( \widehat{\boldsymbol{f}} - \boldsymbol{f}_0 ) -  \boldsymbol{g}_n +     \lambda_2 \widehat{\boldsymbol{f}} +    \lambda_1  \boldsymbol{\omega} = 0,
\end{align}
where $\bdg_n= n^{-1} \wt \BX_n^\top \bdepsilon_n$, and $\omega_j =\frac{\Psi_j^2 \widehat{f}_j}{\| \Psi_j \widehat{f}_j \|_2}$ if $\widehat{f}_j\not =0$ and $\omega_j = \Psi_j\eta_j$ for some  $\eta_j$ with $\|\eta_j\|_2\le 1$ if $\widehat{f}_j=0$.
\end{proposition}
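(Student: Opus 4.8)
The plan is to split the argument into three parts: existence, the first-order (sub-differential) characterization, and uniqueness. For existence, I would first invoke Proposition \ref{lemma:solution_form}, which reduces the minimization of $\ell(\boldsymbol{f})$ over the infinite-dimensional space $\mathbb{L}_2^p[0,1]$ to a minimization over the finite-dimensional subspace $\mathbb{M}_n := \prod_{j=1}^p \mathbb{M}_{nj}$. On $\mathbb{M}_n$, the functional $\ell$ is continuous and, because of the $\frac{\lambda_2}{2}\sum_j \|f_j\|_2^2$ term together with the non-negativity of the quadratic form $\langle \Tn \boldsymbol{f},\boldsymbol{f}\rangle_2$ and of the group-lasso term, it is coercive: $\ell(\boldsymbol{f}) \to \infty$ as $\|\boldsymbol{f}\|_2 \to \infty$. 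A continuous coercive function on a finite-dimensional space attains its minimum, giving existence of $\widehat{\boldsymbol{f}}$.

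For the characterization \eqref{equ:kkt}, I would use convexity of $\ell$ (already noted in the text) together with the Gateaux sub-differential calculus set up in Section \ref{s:KKT}. Since $\ell$ is convex, $\widehat{\boldsymbol{f}}$ is a global minimizer if and only if $0$ belongs to the sub-differential $\partial\ell(\widehat{\boldsymbol{f}})$. The smooth part $\boldsymbol{f}\mapsto \frac{1}{2}\langle\Tn\boldsymbol{f},\boldsymbol{f}\rangle_2 - \langle n^{-1}\wt\BX_n^\top\BY_n,\boldsymbol{f}\rangle_2 + \frac{\lambda_2}{2}\sum_j\|f_j\|_2^2$ is Gateaux differentiable with derivative $\Tn\boldsymbol{f} - n^{-1}\wt\BX_n^\top\BY_n + \lambda_2\boldsymbol{f}$; substituting $\BY_n = \wt\BX_n\boldsymbol{f}_0 + \bdepsilon_n$ (valid because $\bdbeta_0 = \scrbfL_{\BK^{1/2}}\boldsymbol{f}_0$ and the model \eqref{e:model} is $\langle X_{ij},\beta_{0j}\rangle_2 = \langle\wt X_{ij}, f_{0j}\rangle_2$) rewrites $n^{-1}\wt\BX_n^\top\BY_n = \Tn\boldsymbol{f}_0 + \boldsymbol{g}_n$, yielding the terms $\Tn(\widehat{\boldsymbol{f}}-\boldsymbol{f}_0) - \boldsymbol{g}_n + \lambda_2\widehat{\boldsymbol{f}}$. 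The remaining term comes from the sub-differential of $f_j \mapsto \lambda_1\|\Psi_j f_j\|_2$: away from $\widehat f_j = 0$ this is Gateaux differentiable with derivative $\lambda_1 \Psi_j^2\widehat f_j / \|\Psi_j\widehat f_j\|_2$ (chain rule, using that $\Psi_j$ is self-adjoint), while at $\widehat f_j = 0$ the sub-differential is $\{\lambda_1\Psi_j\eta_j : \|\eta_j\|_2\le 1\}$, which can be checked directly from the definition $\lambda_1\|\Psi_j(f_j+h)\|_2 - \lambda_1\|\Psi_j f_j\|_2 \ge \langle v, h\rangle_2$ together with Cauchy--Schwarz and self-adjointness of $\Psi_j$. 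Summing the sub-differentials of the separable pieces gives \eqref{equ:kkt}; one should note here that Condition C.\ref{ass:a1} guarantees $\Psi_j$ is bounded and invertible on $\mathbb{M}_{nj}$, so all these expressions are well defined.

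For uniqueness, I would argue that although $\Tn$ is not invertible, the function $\ell$ is nonetheless \emph{strictly} convex on the relevant subspace. The key point is that $\ell(\boldsymbol{f}) = \frac{1}{2}\langle\Tn\boldsymbol{f},\boldsymbol{f}\rangle_2 + \frac{\lambda_2}{2}\|\boldsymbol{f}\|_2^2 + (\text{linear}) + (\text{convex group-lasso terms})$, and the quadratic form $\frac{1}{2}\langle\Tn\boldsymbol{f},\boldsymbol{f}\rangle_2 + \frac{\lambda_2}{2}\|\boldsymbol{f}\|_2^2$ is strictly convex for any $\lambda_2 > 0$ since $\Tn \succeq 0$; adding the convex penalty and linear part preserves strict convexity. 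Restricting to $\mathbb{M}_n$ (where a minimizer must lie by Proposition \ref{lemma:solution_form}), strict convexity forces the minimizer to be unique, and since $\ell$ takes the value $+\infty$ nowhere and any minimizer over $\mathbb{L}_2^p$ already lies in $\mathbb{M}_n$, the minimizer over the full space is unique as well. The main obstacle I anticipate is the careful handling of the Gateaux sub-differential of the composite map $f_j\mapsto\|\Psi_j f_j\|_2$ at the non-smooth point $\widehat f_j = 0$ — in particular verifying that the sub-differential is exactly $\{\lambda_1\Psi_j\eta_j:\|\eta_j\|_2\le 1\}$ and not something larger, which requires using both self-adjointness of $\Psi_j$ and the boundedness/invertibility afforded by Condition C.\ref{ass:a1}; the rest is a fairly standard convex-analysis argument adapted from the finite-dimensional lasso/elastic-net KKT conditions.
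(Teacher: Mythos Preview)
Your proposal is correct and follows essentially the same approach as the paper: reduce to the finite-dimensional subspace via Proposition \ref{lemma:solution_form}, obtain the KKT condition from the sub-differential calculus in Propositions \ref{prop:convex} and \ref{pp:L1_L3}, and deduce uniqueness from the strict convexity contributed by the $\tfrac{\lambda_2}{2}\|\boldsymbol f\|_2^2$ term. The only minor difference is in the existence step: you argue directly via continuity and coercivity on the finite-dimensional subspace, whereas the paper reformulates the penalized problem as a constrained quadratic program over a compact set and invokes Weierstrass' extreme value theorem; both routes are standard and essentially interchangeable here.
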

Equation \eqref{equ:kkt} is referred to as the functional KKT condition for 
the optimization problem \eqref{equ:mini1} and plays a central role in establishing Theorem \ref{thm:1} .

 %
 %
 %
 %
 %
 %
 %
 %
 %

 \vspace{-2em}
\section{Theoretical Results} \label{sec:theory}

 \vspace{-1.5em}
\subsection{Consistency property of variable selection} \label{sec:vs_theory}
In this section, we establish the consistency property of variable selection using our approach.
Even though the normality assumption is not essential to our methodology, 
in order to get sharp results that are comparable with those in
the literature, we assume that the rows of $\boldsymbol{X}_{i\bullet}$, $i=1,\ldots, n$, are iid zero-mean Gaussian 
random vectors with each element lies in $\mathbb{L}_2[0,1]$, and $\varepsilon_i \overset{iid}{\sim} \mathcal{N}(0, \sigma^2)$.
Note that Gaussianity is invoked only to obtain exponential concentration; 
our current theory does not cover genuinely heavy-tailed designs or errors.
Recall the definitions of $\mathcal{S}$ and $\widehat{\bdf}=(\wh f_1, \ldots, \wh f_p)^\top$ in Sections \ref{s:2.1} and \ref{s:2.2}, respectively, and define $\widehat{\mathcal{S}}=\big\{  j \in \{1,\dots,p\}:  \widehat{f}_{j} \neq 0 \big\}$. Then, variable selection consistency is achieved when $\widehat{\mathcal{S}} = \mathcal{S}$.

%

%

We collect here some notation used throughout the paper. 
Let $\mathbb{H}_1$ and $\mathbb{H}_2$ be two Hilbert spaces and $\CA: \mathbb{H}_1 \to \mathbb{H}_2$ be a compact linear operator mapping from $\mathbb{H}_1$ to $\mathbb{H}_2$. Then the $\mathbb{L}_2$ operator norm is defined as $\|\CA\|_2= \sup_{f\in \mathbb{H}_1} \|\CA f\|_2/ \|f\|_2$ which is the maximum singular value of $\CA$; if $\mathbb{H}_1=\mathbb{H}_2$ and $\CA$ is self-adjoint, the trace of $\CA$ is $\tr(\CA)=\sum_{j\ge 1} \Lambda_j(\CA)$, which is the sum of all eigenvalues.
For any $\boldsymbol{f} \in \mathbb{L}_2^p[0,1]$,  $\| \boldsymbol{f} \|_{\infty} := \max_j \| f_j \|_{2}$; 
for any $r \times s$ operator-valued matrix $\mathscrbf{A}=( \CA_{ij})_{i,j=1}^{r,s}$, where each $\CA_{ij}$ maps from $\mathbb{L}_2[0,1]$ to $\mathbb{L}_2[0,1]$, 
 define the norm $\vertiii{\mathscrbf{A}}_{a,b} := \sup_{\|\boldsymbol{f}\|_{a} \le 1} \|\mathscrbf{A} \boldsymbol{f} \|_{b}$ for $a, b \in \{2, \infty\}$. 
For any index sets $\calS_1$ and $\calS_2$, $\mathscrbf{A}^{(\calS_1, \calS_2)}$ is the submatrix of $\mathscrbf{A}$ with rows in $\calS_1$ and columns in $\calS_2$. This notation is used for matrices of operators, such as $\scrbfC$, $\scrbfT$, and $\Tn$. 
%
%
Consistent with this notation, $\mathscr{T}^{(j,j)}=\cov(X_j)$ is the $j$th diagonal element of $\scrbfT$, and define $\scrT^{(j,j)}_\lambda=\scrT^{(j,j)} + \lambda \scrI$ for any $\lambda>0$ where $\mathscrbf{I}$ is the identity operator. 
Let $\QSS=\diag\{\mathscr{T}^{(j,j)}, j \in \calS\}$ be the operator-valued matrix that only contains the diagonal terms of $\TSS$, and let $\QSS_{\lambda} = \QSS + \lambda \mathscrbf{I}$. 

In addition to Condition C.\ref{ass:a1}, we need the following conditions. 

\begin{assum}
Each $\mathscr{T}^{(j,j)}$ is standardized such that $\|\scrT^{(j,j)}\|_2=1$, 
with its trace uniformly bounded by a finite constant $\tau$, i.e.,
$\sup_{j \in \{1, \dots, p\}} \tr(\mathscr{T}^{(j,j)})  \le \tau$.
\label{ass:a2}
\end{assum}
\begin{assum}
Define
$\varkappa(\lambda_2) := \vertiii{ \TSS (\TSS_{\lambda_2})^{-1} }_{\infty, \infty}$.
Assume that for some $\gamma\in (0,1]$, we have
$\varkappa(\lambda_2) \cdot \vertiii{ \TScS (\TSS)^{-} }_{\infty, \infty} \le (C_{\min} / C_{\max}) (1 - \gamma)$,
where $(\TSS)^{-}$ is the Moore-Penrose generalized inverse of $\TSS$.
\label{ass:a3}
\end{assum}



\begin{assum}
$\aleph(\lambda_2):= \vertiii{ \left( \mathscrbf{T}^{(\calS, \calS)} - \mathscrbf{Q}^{(\calS, \calS)} \right)  \big(\mathscrbf{Q}^{(\calS, \calS)}_{\lambda_2}\big)^{-1} }_{\infty, \infty} < 1$.
\label{ass:a6}
\end{assum}
%
\noindent
\begin{remark}
(i)
Condition C.\ref{ass:a2} places a mild constraint on the decay rate of the eigenvalues for $\mathscr{T}^{(j,j)}$, which is equivalent to $\sup_{j \in \{1, \dots, p\}} \E \|\widetilde X_j\|_2^2 \le \tau$.

\noindent (ii)
Condition C.\ref{ass:a3} controls the correlation between functional predictors in the true signal set $\calS$ and those in the non-signal set $\calS^c$. This assumption is related to 
the so-called ``irrepresentable condition'' on model selection consistency of the classical lasso \citep{zhao2006model, wainwright2009sharp}, the classical elastic-net \citep{JiaYu2010}, and the sparse additive models \citep{ravikumar2009sparse}. Condition
C.\ref{ass:a3} becomes harder to fulfill when $\varkappa(\lambda_2)$ is large  or when $C_{\min}/C_{\max}$ is small. However, when the predictors in $\calS$ and in $\calS^c$ are uncorrelated, then $\vertiii{ \TScS (\TSS)^{-} }_{\infty, \infty}=0$ and the assumption holds trivially.

\noindent(iii)
Condition C.\ref{ass:a6} puts constraints on the correlations between the predictors in the true signal set $\calS$, so that none of the true predictors can be represented by other predictors in $\calS$. When the predictors in $\calS$ are uncorrelated, then $\aleph(\lambda_2) =0$ and C.\ref{ass:a6} trivially holds.
\end{remark}

To gain a deeper understanding of Conditions C.\ref{ass:a2}-C.\ref{ass:a6}, an example will be provided in Section \ref{s:partially_separable} where the functional predictors have a partially separable covariance structure \citep{Zapata2021}.
To state the variable selection consistency properties of our approach, we further assume without loss of generality that $\| \fzeros \|_{\infty} = 1$ below. 
Also, the symbol $D^*$ and similar symbols below will denote universal constants in $(0, \infty)$ that arise from inequalities, whose values change from line to line but do not depend on the model parameters, sample size, or regularization parameters. The specific expressions of universal constants may be complicated and do not add to the understanding of the results. With these in mind, define the following conditions on $\lambda_1,\lambda_2$:
\vspace{-1em}
\begin{align}
        &\lambda_1/\lambda_2 > \left({3\over\gamma} - 2\right) C_{\max}^{-1}, \quad D_{1,1}^* >  \lambda_1 > D_{1,2}^* { \tau^{1/2} (1+\sigma)  \over C_{\min} \gamma} \sqrt{{\log(p-q) \over n}}, \nonumber \\
    & D_{2,1}^* > \lambda_2 > D_{2,2}^* \frac{\tau (1+\sigma) (\rho_1 + 1)}{(C_{\min} / C_{\max})^2 \gamma^2}  \max\left({q\log(p-q)\over n},\sqrt{q^2\over n} \right).
\label{equ:lambda_12_bound_1}
\end{align}
where $\rho_1$ denotes the largest eigenvalue of $\TSS$ and $D_{1,1}^*, D_{1,2}^*, D_{2,1}^*, D_{2,2}^*$ are universal constants.
It is worth emphasizing that by carefully separating the model/regularization parameters with universal constants, our nonasymptotic results below can be readily used to state asymptotic results for which some or all of the parameters could change with $n$. An example of that is provided in Corollary \ref{cor:R(f)} below.

Finally, define the signal set containing predictors with ``substantial" predictive power $\calS_G:=\{j\in\calS:\left\| (\Tjj)^{1/2} f_{0j} \right\|_2>G\}$,
where $G\in (0,\infty)$; recall $\| (\Tjj)^{1/2} f_{0j}\|_2^2 = \E\langle X_j,\beta_j\rangle_2^2$.
The variable selection consistency of our functional elastic-net approach is given in the following result. 

\begin{theorem} \label{thm:1}
Consider the functional elastic-net problem \eqref{equ:mini1}.
Suppose that Conditions C.\ref{ass:a1}-C.\ref{ass:a3} and \eqref{equ:lambda_12_bound_1} hold. 
Then $\widehat{\calS}$ exists uniquely, and (i) and (ii) below 
hold with probability at least
\vspace{-1em}
\begin{align}
    1 -   \exp\left( -D{\lambda_2^2 n \over q } \right), \quad \mbox{where} \quad D = D^* \left((C_{\min}/C_{\max}) \gamma \over \tau^{1/2} (\rho_1+1) (\sigma + 1)  \right)^2,
\label{equ:vs_consistency_rate}
\end{align}
for some universal constant $D^*$.

\noindent (i) 
    The estimated signal set is contained in the true signal set, i.e. $\widehat{\calS} \subset \calS$. 
    (ii) 
    Under the additional Condition C.\ref{ass:a6}, we have $\widehat\calS\supset\calS_G$ for 
    \vspace{-0.8em}
\begin{align*}
    G = \frac{12 - 8\aleph(\lambda_2)}{1-\aleph(\lambda_2)} \left( C_{\max} \sqrt{ \lambda_1^2/\lambda_2} + 2 \sqrt{\lambda_2} \right),
\end{align*}
and, in particular, if $\calS_G=\calS$, then $\widehat\calS=\calS$ and variable selection consistency is achieved. 
\end{theorem}
%
%
%
%
%
%
%
%



\begin{remark}
(i)
Part (i) of Theorem \ref{thm:1} guarantees a sparse solution for the functional elastic-net where all predictors in the non-signal set are eliminated.
By examining \eqref{equ:lambda_12_bound_1} and \eqref{equ:vs_consistency_rate}, we can see that increasing $\lambda_2$ (and, consequently, $\lambda_1$) leads to a higher probability of eliminating the non-signals. 
Condition \eqref{equ:lambda_12_bound_1} also implies that, as the correlation of predictors between the signal and non-signal sets increases (i.e., decreasing value of $\gamma$), larger values of $\lambda_1,\lambda_2,\lambda_1 / \lambda_2$ are required.
Moreover, larger values of $\gamma$, smaller values of $\tau$, and reduced $\sigma^2$ (resulting in a decreased correlation between $\calS$ and $\calS^c$, faster eigenvalue decay for each $\Tjj$, and a higher signal-to-noise ratio, respectively) enhance the functional elastic-net's ability to accurately identify the signal set.

\noindent (ii)
Part (ii) of Theorem \ref{thm:1} provides conditions that prevent the functional elastic-net from removing the true signals and thus guarantees that the predictors identified by the functional elastic-net are not overly sparse.
Large values of $\lambda_1$, $\lambda_1 / \lambda_2$, and $\aleph(\lambda_2)$ result in a larger gap $G$, making signal detection more challenging.
This is understandable because a large sparsity penalty can lead to the removal of true signals, especially when there is a strong correlation.

\noindent (iii) 
Condition \eqref{equ:lambda_12_bound_1} requires that the lower bound of $\lambda_1$ must be of the rate $\sqrt{\frac{\log(p-q)}{n}}$ to control sparsity. This is similar to the lower bound of the regularization parameter of the lasso \citep[see\@ Theorem 3 of][]{wainwright2009sharp}. Our theory also requires a lower bound for $\lambda_2$ to control both the smoothness and variance of $\widehat{f}_j$. The roles of $\lambda_2$ in functional linear regression have been discussed by many \citep[see, e.g.,][]{CaiYuan2012}. The classical (finite-dimensional) elastic-net optimization \citep{zou2005regularization} includes lasso as a special case, with $\lambda_2=0$. However, this is not feasible in the infinite-dimensional functional setting.
To understand it, consider 
classical high-dimensional data (in the scalar setting) and let $\boldsymbol{\Sigma}_{\calS}$ be the $q \times q$ covariance matrix of the true predictors. 
A common assumption to avoid collinearity in that setting is to bound the minimum eigenvalue of $\boldsymbol{\Sigma}_{\calS}$ away from zero \citep{zhao2006model, wainwright2009sharp}, which is why $\lambda_2$ could be taken as zero. We cannot bound the eigenvalues of $\TSS$ that way in the functional setting because it contradicts the intrinsic infinite dimensionality of functional data; in fact, the sequence of eigenvalues for $\TSS$ shrinks to zero even if all the predictors in $\calS$ are uncorrelated. 
\end{remark}

Following \cite{CaiYuan2012}, we also study the excess risk as a metric to measure the prediction accuracy of the estimator $\mathcal{R}\left( \boldsymbol{f}  \right)
	 = \mathbb{E}\left( \sum_{j=1}^p \langle {\wt X_{j}}^*,  f_{0j} - f_j \rangle_{2}\right)^{2},$
%
where $\boldsymbol{\wt X}^*_{\bullet}$ is a copy of $\boldsymbol{\wt X}_{i\bullet}$. 
The excess prediction risk of our estimator, $\widehat{\boldsymbol{f}}$, is obtained by plugging $\widehat{\boldsymbol{f}}$ in $\mathcal{R}\left( \boldsymbol{f} \right)$. 
The following result describes the excess prediction risk of the functional elastic-net estimator.

\begin{theorem} \label{c:excess}
Assume that Conditions C.\ref{ass:a1}-C.\ref{ass:a3} and  \eqref{equ:lambda_12_bound_1} hold. 
Then, the excess risk satisfies $\mathcal{R}(\widehat{\boldsymbol{f}}) <  q \left(4 C_{\max} \lambda_1 + 4 \lambda_2 + C_{\max}^2 \lambda_1^2 / \lambda_2 \right) $ with probability bounded below by the expression in \eqref{equ:vs_consistency_rate}.
\end{theorem}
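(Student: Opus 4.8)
The plan is to recognize the excess risk as a quadratic form in the population covariance operator and then reuse the deterministic bound on $\fhats-\fzeros$ that lives inside the proof of Theorem~\ref{thm:1}. First, recall $\mathcal{R}(\boldsymbol f)=\E\big[\sum_{j=1}^{p}\langle \wt X^{*}_{j},f_{0j}-f_j\rangle_{2}\big]^{2}$, where the $\wt X^{*}_{j}$ are copies of $\wt X_{ij}$ independent of the sample, so that $\mathcal{R}(\widehat{\boldsymbol f})$ is this expectation evaluated at $\boldsymbol f=\widehat{\boldsymbol f}$. I would work on the event $\mathcal{E}$ on which the conclusion of part (i) of Theorem~\ref{thm:1} holds; by \eqref{equ:vs_consistency_rate}, $\pr(\mathcal{E})$ is at least the displayed bound, and on $\mathcal{E}$ we have $\widehat f_j=0$ for every $j\in\calS^{c}$. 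For $j\in\calS^{c}$ the definition of $\calS$ gives $\var(\langle \wt X_{j},f_{0j}\rangle_{2})=\langle \beta_{0j},\CC^{(j,j)}\beta_{0j}\rangle_{2}=0$, hence $\langle \wt X_{j},f_{0j}\rangle_{2}=0$ almost surely and likewise for the copies $\wt X^{*}_{j}$; therefore, on $\mathcal{E}$, $\sum_{j=1}^{p}\langle \wt X^{*}_{j},f_{0j}-\widehat f_j\rangle_{2}=-\sum_{j\in\calS}\langle \wt X^{*}_{j},\widehat f_j-f_{0j}\rangle_{2}$ almost surely, so that
\[
\mathcal{R}(\widehat{\boldsymbol f})=\big\langle \TSS\boldsymbol\delta_{\calS},\boldsymbol\delta_{\calS}\big\rangle_{2}\le\big\langle \TSS_{\lambda_2}\boldsymbol\delta_{\calS},\boldsymbol\delta_{\calS}\big\rangle_{2}=\big\|(\TSS_{\lambda_2})^{1/2}\boldsymbol\delta_{\calS}\big\|_{2}^{2},\qquad \boldsymbol\delta_{\calS}:=\fhats-\fzeros,
\]
where I used that $\TSS$ is the covariance operator-matrix of $(\wt X^{*}_{j})_{j\in\calS}$ and $\TSS\preceq\TSS_{\lambda_2}:=\TSS+\lambda_2\mathscrbf{I}$.

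It then remains only to bound $\|(\TSS_{\lambda_2})^{1/2}\boldsymbol\delta_{\calS}\|_{2}$, for which I would use the functional KKT condition \eqref{equ:kkt}. Restricting \eqref{equ:kkt} to the $\calS$-rows and using $\widehat{\boldsymbol f}_{\calS^{c}}=0$ together with $\TnSSc\boldsymbol f_{0\calS^{c}}=0$ (a.s., again because $\langle X_{ij},\beta_{0j}\rangle_{2}=0$ for $j\in\calS^{c}$), then pairing the resulting identity with $\boldsymbol\delta_{\calS}$, brings $\langle \TnSS\boldsymbol\delta_{\calS},\boldsymbol\delta_{\calS}\rangle_{2}+\lambda_2\|\boldsymbol\delta_{\calS}\|_{2}^{2}$ to the left-hand side; the right-hand side is controlled by (a) the Gaussian term $|\langle \boldsymbol g_{n\calS},\boldsymbol\delta_{\calS}\rangle_{2}|$, split via Cauchy--Schwarz in the $\JnSS$-geometry together with a $\chi^{2}$-type tail bound on $\|(\JnSS)^{-1/2}\boldsymbol g_{n\calS}\|_{2}$; (b) the smoothness terms, using $\|\fzeros\|_{2}\le\sqrt{q}$ (from $\|\fzeros\|_{\infty}=1$); and (c) the subgradient term $\lambda_1|\langle \ws,\boldsymbol\delta_{\calS}\rangle_{2}|$, using $\|\ws\|_{2}\le C_{\max}\sqrt{q}$ (Condition~C.\ref{ass:a1}) followed by Young's inequality. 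After absorbing the quadratic pieces on the left and passing from $\JnSS$ back to $\TSS_{\lambda_2}$ through the empirical-to-population comparison valid on $\mathcal{E}$, one gets $\|(\TSS_{\lambda_2})^{1/2}\boldsymbol\delta_{\calS}\|_{2}\le\sqrt{q}\,(C_{\max}\sqrt{\lambda_1^{2}/\lambda_2}+2\sqrt{\lambda_2})$, and squaring and expanding the square yields exactly $\mathcal{R}(\widehat{\boldsymbol f})<q\,(4C_{\max}\lambda_1+4\lambda_2+C_{\max}^{2}\lambda_1^{2}/\lambda_2)$. (This prediction-error estimate is in fact the same intermediate bound used inside the proof of Theorem~\ref{thm:1}, where it underlies the gap $G$ in part (ii), so it may simply be quoted.)

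The only genuine obstacle is this estimate for $\|(\TSS_{\lambda_2})^{1/2}\boldsymbol\delta_{\calS}\|_{2}$; its two delicate ingredients are the comparison $\TSS_{\lambda_2}\asymp\JnSS$ on a high-probability event --- which is exactly what couples $\pr(\mathcal{E})$ in \eqref{equ:vs_consistency_rate} to the lower bound imposed on $\lambda_2$ in \eqref{equ:lambda_12_bound_1} --- and bounding the Gaussian functional $\langle \boldsymbol g_{n\calS},\boldsymbol\delta_{\calS}\rangle_{2}$ at the correct scale, where the trace (effective-dimension) control of Condition~C.\ref{ass:a2} enters. Everything else is routine: the reduction in the first paragraph is purely algebraic, $\TSS\preceq\TSS_{\lambda_2}$ is immediate, and the residual effort is bookkeeping the universal constants so that they collapse to the displayed coefficients $4$, $4$, and $1$. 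In particular, and unlike part (ii) of Theorem~\ref{thm:1}, no use is made of Condition~C.\ref{ass:a6}: the excess risk is the \emph{aggregate} quadratic form $\langle \TSS\boldsymbol\delta_{\calS},\boldsymbol\delta_{\calS}\rangle_{2}$ rather than a per-coordinate quantity, so the diagonal-to-full comparison governed by $\aleph(\lambda_2)$ is not needed.
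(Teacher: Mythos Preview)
Your reduction to $\mathcal{R}(\widehat{\boldsymbol f})=\|(\TSS)^{1/2}\boldsymbol\delta_{\calS}\|_2^2$ on the event $\widehat\calS\subset\calS$ matches the paper, and your basic-inequality route (pair the restricted KKT with $\boldsymbol\delta_{\calS}$, then transfer from $\JnSS$ to $\TSS$) is a legitimate alternative to what the paper actually does. The paper instead uses the \emph{explicit} KKT solution $\boldsymbol\delta_{\calS}=(\JnSS)^{-1}(\gns-\lambda_2\fzeros-\lambda_1\ws)$ from \eqref{equ:f_diff}, writes $\|(\TSS)^{1/2}\boldsymbol\delta_{\calS}\|_2\le\sqrt{q}\,\vertiii{(\TSS)^{1/2}(\JSS)^{-1}}_{2,2}\cdot\{\|\gns\|_\infty+\lambda_2+C_{\max}\lambda_1\}\cdot(1+\lambda_2^{-1}\vertiii{\TSS-\TnSS}_{2,2})$, and then uses the spectral bound $\vertiii{(\TSS)^{1/2}(\JSS)^{-1}}_{2,2}\le 1/(2\sqrt{\lambda_2})$ together with the events $\|\gns\|_\infty\le\lambda_2$ and $\sqrt{q}\,\vertiii{\TSS-\TnSS}_{2,2}\le\lambda_2$ (Lemmas~\ref{lm:(iv)} and~\ref{lemma:tail_bound}). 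Your approach is arguably cleaner since it avoids the perturbation $(\JnSS)^{-1}-(\JSS)^{-1}$, but both land on $\sqrt{q/\lambda_2}\,(2\lambda_2+C_{\max}\lambda_1)$.

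Two points where your sketch goes astray. First, the suggestion that the needed bound ``may simply be quoted'' from part~(ii) of Theorem~\ref{thm:1} is wrong: part~(ii) controls the per-coordinate quantity $\max_j\|(\Tjj)^{1/2}(\widehat f_j-f_{0j})\|_2$ via $\vertiii{(\QSS)^{1/2}(\JSS)^{-1}}_{\infty,\infty}$, and that is precisely where Condition~C.\ref{ass:a6} and the $\aleph(\lambda_2)$ factor enter. Here the aggregate $\|\cdot\|_2$ norm lets one use $\vertiii{\cdot}_{2,2}$, which is bounded by pure spectral calculus without C.\ref{ass:a6}; you cannot recover the clean constants $4,4,1$ by quoting the $G$-bound. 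Second, there is no need for a separate $\chi^2$ tail bound on $\|(\JnSS)^{-1/2}\gns\|_2$: on the good event one already has $\|\gns\|_\infty\le\lambda_2$, hence $\|\gns\|_2\le\sqrt{q}\,\lambda_2$, and $(\JnSS)^{-1}\preceq\lambda_2^{-1}\mathscrbf{I}$ finishes it. Similarly, the detour through $\TSS_{\lambda_2}$ is unnecessary (and risks an extra factor of $2$): since $\vertiii{\TSS-\TnSS}_{2,2}\le\lambda_2$ on the event, $\TSS\preceq\JnSS$ directly, so $\|(\TSS)^{1/2}\boldsymbol\delta_{\calS}\|_2\le\|(\JnSS)^{1/2}\boldsymbol\delta_{\calS}\|_2$ without passing through $\TSS_{\lambda_2}$.
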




Next, we discuss asymptotic results readily derived from Theorems \ref{thm:1} and \ref{c:excess} by allowing $p,q$ as well as the model/regularization parameters to vary with the sample size $n$. 
To facilitate the discussion, denote $a_k \asymp b_k$ for two positive sequences $\{a_k\}_{k=1}^\infty$ and $\{b_k\}_{k=1}^\infty$, if $c_1 < a_k / b_k < c_2$ for some $0 < c_1 < c_2 < \infty$ and for all $k$. 
The following corollary is a direct result of Theorem \ref{c:excess}, the proof of which is in the Supplementary Material.
\begin{corollary} \label{cor:R(f)}
Assume that Conditions C.\ref{ass:a1}-C.\ref{ass:a3} and \eqref{equ:lambda_12_bound_1} hold, where
$C_{\min}$ and $\gamma$ are bounded away from $0$, and $\rho_1$, $\sigma^2$, $\tau$, and $C_{\max}$ bounded away from $\infty$.
Let 
$
\alpha(p,q,n) := \max\left(q, \sqrt{\log (p-q)}, \sqrt{q\log n}\right)
$
and assume that $q \alpha(p,q, n) = o(n^{1/2})$.
Then, for some sufficiently large constant $D$, the probability that $\mathcal{R}\left(\widehat{\boldsymbol{f}}\right) > D n^{-1/2} q \alpha(p,q,n)$ infinitely often is $0$.
\end{corollary}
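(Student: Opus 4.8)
The plan is to read Corollary~\ref{cor:R(f)} off Theorem~\ref{c:excess}: under the stated scaling the model-dependent constants collapse to absolute constants, so one can pick regularization parameters of the single rate $n^{-1/2}\alpha(p,q,n)$, verify that this choice is admissible for \eqref{equ:lambda_12_bound_1}, plug it into the excess-risk bound, and then kill the exceedance events by the first Borel--Cantelli lemma.

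First I would absorb the model quantities into absolute constants. Under the scaling hypotheses ($C_{\min},\gamma$ bounded away from $0$ and $\rho_1,\sigma^2,\tau,C_{\max}$ bounded away from $\infty$), each prefactor occurring in \eqref{equ:lambda_12_bound_1} --- namely $(3/\gamma-2)C_{\max}^{-1}$, $\tau^{1/2}(1+\sigma)/(C_{\min}\gamma)$ and $\tau(1+\sigma)(\rho_1+1)/\big((C_{\min}/C_{\max})^2\gamma^2\big)$ --- is bounded above by an absolute constant, the exponent constant $D$ in \eqref{equ:vs_consistency_rate} is bounded below by an absolute constant $\underline D>0$, and $C_{\max}$ in the excess-risk bound is $O(1)$. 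Then, writing $r_n:=n^{-1/2}\alpha(p,q,n)=\max\{q/\sqrt n,\ \sqrt{\log(p-q)/n},\ \sqrt{q\log n/n}\}$, I would set $\lambda_2=c\,r_n$ and $\lambda_1=c'\,r_n$ for absolute constants $c,c'$ with the ratio $c'/c$ chosen to exceed the (bounded) constant $(3/\gamma-2)C_{\max}^{-1}$, and with $c$ to be enlarged at the very end. The hypothesis $q\,\alpha(p,q,n)=o(n^{1/2})$ gives $r_n=o(1/q)=o(1)$, so $\lambda_1,\lambda_2$ fall below the fixed upper bounds $D_{1,1}^*,D_{2,1}^*$ for all large $n$; and since $r_n\ge q/\sqrt n$, $r_n\ge\sqrt{q\log n/n}$, and $r_n\ge\sqrt{\log(p-q)/n}\ge q\log(p-q)/n$ (the last because $q\sqrt{\log(p-q)}\le q\,\alpha(p,q,n)\le\sqrt n$ for all large $n$), taking $c,c'$ large makes the lower bound on $\lambda_1$, the lower bound on $\lambda_2$, and the lower bound on $\lambda_1/\lambda_2$ in \eqref{equ:lambda_12_bound_1} all hold as well. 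Hence \eqref{equ:lambda_12_bound_1} is satisfied for all large $n$ and Theorem~\ref{c:excess} applies.

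With $\lambda_1\asymp\lambda_2\asymp r_n$ and $C_{\max}=O(1)$, the right-hand side of the bound in Theorem~\ref{c:excess} is at most $D_0\,q\,r_n=D_0\,n^{-1/2}q\,\alpha(p,q,n)$ for an absolute constant $D_0$, so $\mathcal R(\widehat{\boldsymbol f})<D_0\,n^{-1/2}q\,\alpha(p,q,n)$ holds with probability at least $1-\exp(-D\lambda_2^2 n/q)$. Since $\alpha(p,q,n)^2\ge q\log n$, we have $\lambda_2^2 n/q=c^2\alpha(p,q,n)^2/q\ge c^2\log n$, so the exceedance probability is at most $\exp(-\underline D c^2\log n)=n^{-\underline D c^2}$. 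Finally I would fix $c$ (and hence $c'$) large enough that, besides the finitely many constant inequalities already used, $\underline D c^2\ge 2$; then $\sum_n n^{-\underline D c^2}\le\sum_n n^{-2}<\infty$, and the first Borel--Cantelli lemma (which requires no independence across $n$) gives that the event $\{\mathcal R(\widehat{\boldsymbol f})>D_0\,n^{-1/2}q\,\alpha(p,q,n)\}$ occurs for only finitely many $n$ with probability one. Taking $D=D_0$ proves the corollary.

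The only genuinely delicate step is the tuning choice in the second paragraph: one must check that the single rate $r_n=n^{-1/2}\alpha(p,q,n)$ simultaneously (i) clears all three lower bounds in \eqref{equ:lambda_12_bound_1} --- this is where the dimension assumption $q\,\alpha(p,q,n)=o(n^{1/2})$ is needed, to compare $q\log(p-q)/n$ against $\sqrt{\log(p-q)/n}$; (ii) stays below the fixed upper bounds, again via $q\,\alpha=o(\sqrt n)$; (iii) produces an excess-risk bound of exactly the target order $n^{-1/2}q\,\alpha(p,q,n)$; and (iv) makes $\exp(-D\lambda_2^2 n/q)$ summable in $n$. Requirement (iv) is forced by enlarging the free constant $c$, which is harmless because it only inflates the overall constant $D_0$ in (iii) and leaves (i)--(ii) intact.
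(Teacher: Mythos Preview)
Your proposal is correct and follows essentially the same approach as the paper: choose $\lambda_1\asymp\lambda_2\asymp n^{-1/2}\alpha(p,q,n)$, verify that this clears all constraints in \eqref{equ:lambda_12_bound_1} using $q\,\alpha(p,q,n)=o(n^{1/2})$, read off the excess-risk bound $O(q\lambda_2)=O(n^{-1/2}q\,\alpha(p,q,n))$ from Theorem~\ref{c:excess}, observe that $\lambda_2^2 n/q\ge c^2\log n$ makes the exceedance probability summable, and conclude by Borel--Cantelli. Your verification of the $\lambda_2$ lower bound via $\sqrt{\log(p-q)/n}\ge q\log(p-q)/n$ is the same observation the paper makes when it notes that $q\log(p-q)/\sqrt n$ can be absorbed into $\alpha(p,q,n)$ once $q^2\log(p-q)\le n$.
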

%
%
%
%
%
%
\begin{remark}
    Consider a high dimension FLM setting where $q\asymp n^{\varsigma}$ for some $0<\varsigma<1/4$, and suppose all functional predictor in the signal set have about the same contribution to the variation of the response such that $G= \min_{j\in \calS} \| (\Tjj)^{1/2} f_{0j}\|_2 \asymp 1/\sqrt{q}$. By Theorem \ref{thm:1} (ii), we can choose $\lambda_1 \asymp \lambda_2 \asymp (1/q)$ to guarantee recovery of the signal set $\calS_G$. Condition \eqref{equ:lambda_12_bound_1} is also satisfied if $\log p =O (n^{1-2\varsigma})$, which is an ultra-high-dimensional FLM setting. Under this setting and with the choice of tuning parameters described above, the probability bound in \eqref{equ:vs_consistency_rate} goes to 1 which ensures variable selection consistency; the condition $q \alpha(p,q, n) = o(n^{1/2})$ in Corollary \ref{cor:R(f)} is also satisfied, and we can conclude $\mathcal{R}\left(\widehat{\boldsymbol{f}}\right) \to 0$ almost surely.  
\end{remark}

%
%
%
%
%
%
%
%
%
%
%
%

\vspace{-1em}
\subsection{Oracle minimax optimal rate and a post-selection refined estimator}\label{sec:minimax}
\vspace{-1em}

\cite{CaiYuan2012} established the minimax lower bound of the excess prediction risk for univariate FLM with $q=1$. Such a lower bound is yet to be established for high-dimensional FLMs. In this subsection, we first investigate the minimax lower bound of the excess prediction risk under the oracle model, where $\calS$ is known and the true number of functional predictors $q$ is allowed to diverge with the sample size $n$. 
We need the following conditions for our results.
\begin{assum}
For each $j\in {\cal S}$, the $k$-th eigenvalue of ${\cal T}^{(j,j)}$ is bounded by $ck^{-2r}$ for some $c\in (0,\infty)$ and $r>1/2$. For some $b \in (0, \infty)$, 
\vspace{-1em}
\begin{align}
    \sup_{\alpha > 0} \vertiii{ \left(\QSS_{\alpha} \right)^{-1/2} \TSS_{\alpha} \left(\QSS_{\alpha} \right)^{-1/2} }_{2,2} \le b  \label{equ:bound_corr}
\end{align}

\label{ass:a8}
\end{assum}

\vspace{-1em}

Condition C.\ref{ass:a8} requires that the eigenvalues of each ${\cal T}^{(j,j)}$, $j\in {\cal S}$, to decay in a polynomial rate, which is the same assumption made in \cite{CaiYuan2012}. By requiring $r>1/2$, each ${\cal T}^{(j,j)}$ is a linear operator that belongs to the trace class, which includes the Hilbert-Schmidt operators. Condition C.\ref{ass:a8} trivially holds when $\TSS = \QSS$ meaning that the functional predictors are uncorrelated. 
When the functional predictors have a partially separable covariance structure, \eqref{equ:bound_corr} holds in mild conditions (see supplementary materials). The following proposition and its corollary further illustrate what Condition C.\ref{ass:a8} entails.
\begin{proposition}
\label{theorem:eigenvalue_rela}
Assume \eqref{equ:bound_corr} holds, we have $\Lambda_k(\TSS) \le b \Lambda_k(\QSS)$, where $\Lambda_k(\TSS)$ and $\Lambda_k(\QSS)$ denote the $k$-th largest eigenvalues of $\TSS$ and $\QSS$, respectively.
\end{proposition}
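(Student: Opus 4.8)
The plan is to convert the operator-norm bound \eqref{equ:bound_corr} into a Loewner ordering $\TSS \preceq b\,\QSS$ between the two operators on $\mathbb{L}_2^q[0,1]$, and then to read off the eigenvalue inequality from the Courant--Fischer min--max characterization of the eigenvalues of a compact self-adjoint operator.

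First I would fix $\alpha>0$ and recall $\QSS_{\alpha}=\QSS+\alpha\,\mathscrbf{I}$ and $\TSS_{\alpha}=\TSS+\alpha\,\mathscrbf{I}$. Since $\QSS\succeq 0$, the operator $\QSS_{\alpha}$ is self-adjoint with spectrum in $[\alpha,\infty)$, so $(\QSS_{\alpha})^{1/2}$ and $(\QSS_{\alpha})^{-1/2}$ are bounded, self-adjoint and bijective on $\mathbb{L}_2^q[0,1]$; also $\TSS_{\alpha}\succeq 0$. Hence $\BA_{\alpha}:=(\QSS_{\alpha})^{-1/2}\,\TSS_{\alpha}\,(\QSS_{\alpha})^{-1/2}$ is self-adjoint and positive semi-definite, and \eqref{equ:bound_corr} says $\vertiii{\BA_{\alpha}}_{2,2}\le b$, which for such an operator is equivalent to $\BA_{\alpha}\preceq b\,\mathscrbf{I}$. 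Applying this with the substitution $\boldsymbol{h}=(\QSS_{\alpha})^{1/2}\boldsymbol{f}$ and using that $(\QSS_{\alpha})^{-1/2}$ is self-adjoint gives $\langle\TSS_{\alpha}\boldsymbol{f},\boldsymbol{f}\rangle_2\le b\,\langle\QSS_{\alpha}\boldsymbol{f},\boldsymbol{f}\rangle_2$ for every $\boldsymbol{f}\in\mathbb{L}_2^q[0,1]$, i.e. $\TSS_{\alpha}\preceq b\,\QSS_{\alpha}$. Cancelling the identity terms, $\langle\TSS\boldsymbol{f},\boldsymbol{f}\rangle_2\le b\,\langle\QSS\boldsymbol{f},\boldsymbol{f}\rangle_2+(b-1)\alpha\|\boldsymbol{f}\|_2^2$ for all $\boldsymbol{f}$ and all $\alpha>0$; letting $\alpha\downarrow 0$ yields $\langle\TSS\boldsymbol{f},\boldsymbol{f}\rangle_2\le b\,\langle\QSS\boldsymbol{f},\boldsymbol{f}\rangle_2$ for all $\boldsymbol{f}$, that is $\TSS\preceq b\,\QSS$.

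Next I would invoke that $\QSS$ and $\TSS$ are trace-class (hence compact) self-adjoint operators on $\mathbb{L}_2^q[0,1]$, being covariance operators of $\mathbb{L}_2$-valued Gaussian vectors with $\tr\QSS=\tr\TSS=\sum_{j\in\calS}\tr\mathscr{T}^{(j,j)}<\infty$; thus each has a discrete nonnegative spectrum and $\Lambda_k(\cdot)$ is well defined, with $\Lambda_k(\BB)=\max\{\min_{\boldsymbol{f}\in V,\,\|\boldsymbol{f}\|_2=1}\langle\BB\boldsymbol{f},\boldsymbol{f}\rangle_2:\ V\subset\mathbb{L}_2^q[0,1],\ \dim V=k\}$ by Courant--Fischer. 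Since $\langle\TSS\boldsymbol{f},\boldsymbol{f}\rangle_2\le\langle b\QSS\boldsymbol{f},\boldsymbol{f}\rangle_2$ pointwise, the inner minimum over any fixed $k$-dimensional subspace is no larger for $\TSS$ than for $b\,\QSS$, and taking the maximum over subspaces gives $\Lambda_k(\TSS)\le\Lambda_k(b\,\QSS)=b\,\Lambda_k(\QSS)$, which is the assertion.

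The only step that calls for a little care is passing from the operator-norm control of the symmetrically normalized operator $\BA_{\alpha}$ to the Loewner bound $\TSS_{\alpha}\preceq b\,\QSS_{\alpha}$ and then to the unregularized operators; this is routine because $(\QSS_{\alpha})^{\pm1/2}$ is bounded and bijective for every $\alpha>0$ and the residual term $(b-1)\alpha\|\boldsymbol{f}\|_2^2$ vanishes as $\alpha\downarrow 0$. I do not anticipate a substantive obstacle: compactness and spectral discreteness enter only at the final step and are guaranteed by the trace-class property of the covariance operators $\QSS$ and $\TSS$.
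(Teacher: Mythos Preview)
Your proof is correct and in fact more direct than the paper's. The paper projects each $\mathscr{T}^{(j,j)}$ onto its first $m$ principal components to obtain finite-rank approximations $\TSS_{\alpha,m}$ and $\QSS_{\alpha,m}$, invokes a finite-dimensional eigenvalue inequality due to Lu (2000), namely $\Lambda_k(A)\le\Lambda_k(B)\,\|B^{-1/2}AB^{-1/2}\|$, shows the truncated norm is still bounded by $b$, and then passes to the limit $m\to\infty$ and $\alpha\downarrow 0$. You bypass the finite-dimensional detour entirely: from \eqref{equ:bound_corr} you get the Loewner ordering $\TSS_\alpha\preceq b\,\QSS_\alpha$ directly on $\mathbb{L}_2^q[0,1]$ via the bijectivity of $(\QSS_\alpha)^{1/2}$, let $\alpha\downarrow 0$, and read off the eigenvalue comparison from Courant--Fischer for compact self-adjoint operators. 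This is cleaner and avoids having to verify convergence of eigenvalues for the truncated operators. The paper's approach has the minor advantage that Lu's lemma packages the eigenvalue comparison as a black box, so min--max never needs to be stated explicitly; but your route is standard and more transparent, and the trace-class property you invoke is automatic for the covariance operators in play.
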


\begin{corollary}
\label{corollary:eigenvalue_rela}
Assume Condition C.\ref{ass:a8} holds, let $\{\rho_l =\Lambda_l (\TSS)\}_{l \ge 1}$ be the eigenvalues of $\TSS$ in a decreasing order, then $\rho_{q(k-1)+j} \le bc \cdot k^{-2r}$ for any $k \ge 1$ and $j=1, \dots, q$.
\end{corollary}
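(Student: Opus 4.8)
The plan is to combine Proposition~\ref{theorem:eigenvalue_rela} with the block-diagonal structure of $\QSS$ and the polynomial eigenvalue decay assumed in Condition~C.\ref{ass:a8}. By Proposition~\ref{theorem:eigenvalue_rela}, $\Lambda_l(\TSS) \le b\,\Lambda_l(\QSS)$ for every $l \ge 1$, so it suffices to bound the $l$-th largest eigenvalue of $\QSS = \diag\{\mathscr{T}^{(j,j)}: j \in \calS\}$ with $l = q(k-1)+j$. Since $\QSS$ is block-diagonal with $q$ blocks, its spectrum is the union (with multiplicity) of the spectra of the $q$ operators $\mathscr{T}^{(j,j)}$, $j \in \calS$. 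Condition~C.\ref{ass:a8} gives that the $k$-th largest eigenvalue of each individual block satisfies $\Lambda_k(\mathscr{T}^{(j,j)}) \le c\,k^{-2r}$.

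The key counting step is the following: among all eigenvalues of $\QSS$ that exceed $c\,k^{-2r}$, each block $\mathscr{T}^{(j,j)}$ can contribute at most $k-1$ of them (because the $k$-th eigenvalue of that block is already $\le c\,k^{-2r}$, and the eigenvalues are in decreasing order). With $q$ blocks in total, $\QSS$ has at most $q(k-1)$ eigenvalues strictly larger than $c\,k^{-2r}$. Consequently, the eigenvalue of $\QSS$ in position $q(k-1)+1$ — and hence in any position $q(k-1)+j$ for $j=1,\dots,q$ — is at most $c\,k^{-2r}$. That is, $\Lambda_{q(k-1)+j}(\QSS) \le c\,k^{-2r}$. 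Feeding this into $\rho_{q(k-1)+j} = \Lambda_{q(k-1)+j}(\TSS) \le b\,\Lambda_{q(k-1)+j}(\QSS)$ yields $\rho_{q(k-1)+j} \le bc\,k^{-2r}$, as claimed.

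I do not anticipate a serious obstacle here; the argument is essentially a pigeonhole/interlacing count on eigenvalues of a block-diagonal operator, and everything substantive has been packaged into Proposition~\ref{theorem:eigenvalue_rela}. The one point requiring a little care is making the counting argument rigorous for (possibly infinite-rank) compact self-adjoint operators: one should phrase ``the $l$-th largest eigenvalue'' via the Courant–Fischer min–max characterization, and note that for a block-diagonal operator the min–max value at rank $l$ is controlled by distributing the $l$ coordinates among the blocks, which reproduces exactly the ``at most $k-1$ per block'' bound. Since each $\mathscr{T}^{(j,j)}$ is trace class (as $r>1/2$), all the relevant eigenvalue sequences are well defined and decreasing to zero, so no technical difficulty arises.
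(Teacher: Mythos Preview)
Your proposal is correct and follows exactly the approach the paper intends: the paper states that Corollary~\ref{corollary:eigenvalue_rela} is a direct consequence of Proposition~\ref{theorem:eigenvalue_rela}, and your pigeonhole count on the block-diagonal operator $\QSS$ is precisely the way to make that ``direct'' step explicit. Nothing is missing.
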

%
%
%
%
%
%
Corollary \ref{corollary:eigenvalue_rela} is a direct result of Proposition \ref{theorem:eigenvalue_rela} and is essential in deriving the minimax lower bound in the following theorem.


\begin{theorem} \label{thm:minimax_lower}
Let $\mathscrbf{P}(r)$ be the class of covariance operators satisfying Condition C.\ref{ass:a8}. 
Then   
\vspace{-1.5em}
\begin{align*}
    \lim_{a \rightarrow 0} \lim_{n \rightarrow \infty} \inf_{ \wt f_\calS } \ \sup_{\TSS \in \mathscrbf{P}(r)} \ \sup_{\fzeros \in \mathbb{{L}}_2^q }  \mathbb{P}\left( \mathcal{R}( \wt f_\calS ) \ge a (n / q)^{-\frac{2r}{2r+1}}  \right) = 1,
\end{align*}
where the infimum is taken over all possible predictors $\wt f_\calS$ based on the training data $\{(\BX_{i\calS}, Y_i), i =1, \dots, n \}$.
\end{theorem}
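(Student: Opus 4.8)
The plan is to reduce the high-dimensional minimax lower bound to the univariate case of \cite{CaiYuan2012} by exploiting the block structure of $\TSS$ revealed through Corollary~\ref{corollary:eigenvalue_rela}, and then to aggregate $q$ independent univariate problems. First I would recall that $\mathcal{R}(\wt f_\calS) = \E[\sum_{j\in\calS}\langle \wt X_j^*, f_{0j}-\wt f_j\rangle_2]^2 = \|(\TSS)^{1/2}(\boldsymbol{f}_{0\calS}-\wt f_\calS)\|_2^2$, so the excess risk is an intrinsic quadratic form in the eigen-coordinates of $\TSS$. Corollary~\ref{corollary:eigenvalue_rela} gives $\rho_{q(k-1)+j}\le bc\,k^{-2r}$, i.e.\ the eigenvalues of $\TSS$, sorted decreasingly, decay at rate $\asymp (\ell/q)^{-2r}$ in position $\ell$; this is exactly the univariate polynomial decay of \cite{CaiYuan2012} but with the ``effective sample size per coordinate'' rescaled from $n$ to $n/q$. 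Concretely, I would take the worst-case $\TSS\in\mathscrbf{P}(r)$ to be block-diagonal, $\TSS=\diag\{\mathscr{T}^{(j,j)}\}_{j\in\calS}$, with each diagonal block having eigenvalues exactly $ck^{-2r}$; this lies in $\mathscrbf{P}(r)$ since then $\TSS=\QSS$ and \eqref{equ:bound_corr} holds trivially, and it decouples the problem into $q$ independent univariate FLMs each with sample size $n$.

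Next I would construct the hypothesis class. For the $j$-th block, following the Assouad/Varshamov–Gilbert construction of \cite{CaiYuan2012}, I would perturb $f_{0j}$ along a packing set of $\sim m$ high-frequency eigendirections (indices in a band $[m, 2m)$ of the $j$-th block's spectrum) with perturbation amplitude calibrated so that (a) each single-block Kullback–Leibler divergence between the induced Gaussian response laws is $O(1)$ — which forces the per-block perturbation energy $\sum_k \delta_{jk}^2 \rho_{jk} \asymp 1/n$, hence with $\rho_{jk}\asymp c m^{-2r}$ the squared-$\ell_2$ perturbation is $\asymp m^{2r-1}/n$ — and (b) the corresponding excess-risk separation is $\asymp m\cdot m^{-2r}\cdot(\text{energy per coord})$. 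Balancing gives, per block, a lower bound of order $n^{-2r/(2r+1)}$ for the individual excess risk. Now I would stack these: the full parameter $\boldsymbol{f}_{0\calS}$ ranges over the product of the $q$ per-block packing sets, the response vector $\by_n$ decomposes as a sum of $q$ independent signal contributions (since the blocks are orthogonal under $\TSS$), and the total excess risk is additive, $\mathcal{R}(\wt f_\calS)=\sum_{j\in\calS}\mathcal{R}_j$. A standard argument — either a direct $q$-fold Assouad lemma with Hamming-type loss, or applying Fano/Assouad to the product experiment — shows that no estimator can make more than a constant fraction of the $q$ blocks have small per-block risk, so $\mathcal{R}(\wt f_\calS)\gtrsim q\cdot n^{-2r/(2r+1)} = (n/q)^{-2r/(2r+1)}\cdot q^{1/(2r+1)}\cdot\ldots$; I would redo the calibration splitting the KL budget evenly so each block gets budget $1/(Cn)$ versus an effective per-block sample size tuned so the sum telescopes exactly to the claimed rate $(n/q)^{-2r/(2r+1)}$. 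The probability statement (rather than an expectation bound) and the double limit $\lim_{a\to 0}\lim_{n\to\infty}$ come for free from the Assouad construction: for fixed $a$ small and $n$ large the reduction yields a testing error bounded below by a constant, and letting $a\to0$ pushes this constant to $1$.

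The main obstacle I anticipate is the aggregation step: making precise how a $q$-fold product of univariate minimax problems yields an excess-risk lower bound that scales as $q$ times the single-problem rate \emph{with the sample size shared across all $q$ blocks}, as opposed to $q$ independent samples of size $n$. This is where the interpretation of $n/q$ as the effective per-coordinate sample size must be handled carefully — one must verify that spreading the total KL budget ($O(1)$ for the product experiment, hence $O(1/q)$ per block) across $q$ blocks, each of which then tolerates only a smaller perturbation, still leaves each block contributing $\asymp (n/q)^{-2r/(2r+1)}/q$ to the risk so that the sum is $\asymp (n/q)^{-2r/(2r+1)}$. A secondary technical point is checking that the block-diagonal worst case actually exhausts the rate over all of $\mathscrbf{P}(r)$ — i.e.\ that correlation between blocks (bounded via $b$ in \eqref{equ:bound_corr}) cannot help the statistician by more than a constant factor; Proposition~\ref{theorem:eigenvalue_rela} and Corollary~\ref{corollary:eigenvalue_rela} are the tools that rule this out, since they pin the eigenvalue decay of \emph{any} admissible $\TSS$, and the excess risk depends on $\TSS$ only through its eigenstructure after an orthogonal change of coordinates.
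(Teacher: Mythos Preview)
Your proposal diverges from the paper's argument in a way that introduces a real gap. The paper does \emph{not} decompose into $q$ univariate problems and then aggregate. Instead, it works directly in the eigenbasis $\{\boldsymbol{\phi}_k\}_{k\ge 1}$ of the \emph{full} operator $\TSS$: by Corollary~\ref{corollary:eigenvalue_rela}, any $\TSS\in\mathscrbf{P}(r)$ has $\rho_k\lesssim (k/q)^{-2r}$, and after restricting to a sub-class with a matching lower bound $\rho_k\gtrsim (k/q)^{-2r}$ the problem is literally a single Cai--Yuan-type problem with the altered decay exponent. The hypothesis class is $\boldsymbol{f}_\theta = M^{-1/2}\sum_{k=M+1}^{2M}\theta_k\boldsymbol{\phi}_k$ with $\theta\in\{0,1\}^M$; one applies Varshamov--Gilbert and Tsybakov's Theorem~2.5 once, with $M\asymp n^{1/(2r+1)}q^{2r/(2r+1)}$, and the rate $(n/q)^{-2r/(2r+1)}$ drops out of $\rho_{2M}\asymp(M/q)^{-2r}$. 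No aggregation, no KL-budget splitting, no per-block bookkeeping.

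Your framing of ``$q$ independent univariate FLMs each with sample size $n$'' is where things go wrong. The $q$ blocks share a \emph{single} scalar response $Y_i=\sum_{j\in\calS}\langle X_{ij},\beta_j\rangle_2+\varepsilon_i$, so they are not separate experiments and there is no product-experiment structure to invoke. Your first pass yields $q\cdot n^{-2r/(2r+1)}$, which exceeds $(n/q)^{-2r/(2r+1)}$ (since $q>q^{2r/(2r+1)}$) and therefore contradicts the matching upper bound of Theorem~\ref{thm:minimax_upper}; this confirms that naive summation of per-block lower bounds is invalid here. Your proposed fix---splitting the KL budget to $O(1/q)$ per block---is on the right track numerically but is really just an awkward repackaging of the paper's single-operator calculation: once you recognize that the total KL is $\tfrac{n}{2\sigma^2}\mathcal{D}^2$ with $\mathcal{D}^2$ additive across orthogonal eigen-directions, the block structure is irrelevant and you are back to a single $\{0,1\}^M$ hypercube. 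The cleanest route is to abandon the per-block decomposition and follow the paper: pick the worst-case $\TSS$ so that $\rho_k\asymp(k/q)^{-2r}$, then run Varshamov--Gilbert/Tsybakov once on the band $\{M+1,\ldots,2M\}$ of the full spectrum.
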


Theorem \ref{thm:minimax_lower} provides the oracle minimax lower bound for the excess prediction risk of the high dimensional FLM, which reduces to the lower bound of \cite{CaiYuan2012} if $q=1$. 
By comparing this result with Corollary \ref{cor:R(f)}, we can see that the excess risk of the functional elastic-net, $\mathcal{R}( \wh{\boldsymbol{f}} )$, is at a rate slower than $(n/q)^{-1/2}$, which in turn is slower than the oracle minimax rate in Theorem \ref{thm:minimax_lower} when $r > 1/2$. 
This is understandable, since the primary goal of functional elastic-net is to perform variable selection.
Suppose all assumptions in Theorem \ref{thm:1} hold and $\mathcal{S} = \mathcal{S}_G$, the functional elastic-net estimator enjoys variable selection consistency and can help us find an estimated signal set $\wh \calS$ that satisfies the following condition.
\begin{assum}
$\lim_{n \rightarrow \infty}\sup_{\TSS \in \mathscrbf{P}(r)} \sup_{\fzeros \in \mathbb{L}_2^q } \mathbb{P}\left(\widehat{\calS}\not=\calS\right) = 0.$
\label{ass:a10}
\end{assum}
%
%
%
This motivates us to refine our FLM estimator within the selected signal set with the goal of improving the excess prediction risk,
\vspace{-1em}
\begin{align}
    \fhaths =\argmin_{f_j \in \mathbb{L}_2} \left\{ \frac{1}{n} \sum_{i = 1}^n \left(Y_i -  
\sum_{j \in { \wh\calS}} \langle \wt{X}_{ij},  f_j \rangle_{2}  \right)^2 
+  
    \lambda_3 \sum_{j \in { \wh\calS}} \| f_j \|_2^2  \right\}. \label{equ:object_ridge} 
\end{align}
The refined estimator \eqref{equ:object_ridge} is a special case of the functional elastic-net estimator in Section \ref{s:2.2} by including functional predictors in $\wh \calS$ only and setting the $\ell_1$ penalty to 0, as the focus has shifted away from variable selection. As such, $\fhaths$ can be calculated the same way as the functional elastic-net with a minimum modification to the algorithm. 


\begin{theorem} \label{thm:minimax_upper}
Assume Conditions C.\ref{ass:a8}-C.\ref{ass:a10} hold, and the number of true signals satisfies $q = o \left(n^{\frac{2r-1}{4r}} \right)$.  Then 
\vspace{-1em}
\begin{align*}
    \lim_{A \rightarrow \infty} \lim_{n \rightarrow \infty} \ \sup_{\TSS \in \mathscrbf{P}(r)} \ \sup_{\fzeros \in \mathbb{{L}}_2^q }  \mathbb{P}\left( \mathcal{R}( \fhaths ) \ge A (n/q)^{-\frac{2r}{2r+1}} \right) = 0,
\end{align*}
provided that $\lambda_3 \asymp (n/q)^{-2r/(2r+1)}$. 
\end{theorem}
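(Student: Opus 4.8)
The plan is to reduce the analysis to the \emph{oracle} functional ridge estimator on the true signal set $\calS$, and then to run a bias--variance analysis of functional ridge regression in the spirit of \cite{CaiYuan2012}, the essential new ingredient being control of the effective dimension of the $q$-block covariance operator $\TSS$ via Corollary \ref{corollary:eigenvalue_rela}. Write $\delta_n:=(n/q)^{-2r/(2r+1)}$ and abbreviate $\BA:=\TSS+\lambda_3\mathscrbf{I}$, $\BA_n:=\TnSS+\lambda_3\mathscrbf{I}$. Let $\fchecks$ be the minimizer of \eqref{equ:object_ridge} with $\wh\calS$ replaced by the true $\calS$; on the event $\{\wh\calS=\calS\}$ one has $\fhaths=\fchecks$, and since $\langle{\wt X_j}^{*},f_{0j}\rangle_2$ has variance $0$ for $j\in\calS^{c}$ the coordinates outside $\calS$ contribute nothing to $\mathcal{R}$. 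Hence $\mathbb{P}(\mathcal{R}(\fhaths)\ge A\delta_n)\le\mathbb{P}(\wh\calS\ne\calS)+\mathbb{P}(\mathcal{R}(\fchecks)\ge A\delta_n)$, and Condition C.\ref{ass:a10} kills the first term uniformly over $\mathscrbf{P}(r)$. For the second, the normal equations of \eqref{equ:object_ridge} (using $\BY_n=\wt\BX_n^{(\calS)}\fzeros+\bdepsilon_n$ a.s.\ on $\{\wh\calS=\calS\}$) give $\fchecks-\fzeros=\BA_n^{-1}(\bdg_n^{(\calS)}-\lambda_3\fzeros)$ with $\bdg_n^{(\calS)}:=n^{-1}(\wt\BX_n^{(\calS)})^{\top}\bdepsilon_n$, so that, since $\mathcal{R}(\fchecks)=\bigl\|(\TSS)^{1/2}(\fchecks-\fzeros)\bigr\|_2^{2}$,
\[
\mathcal{R}(\fchecks)\ \le\ 2\lambda_3^{2}\bigl\|(\TSS)^{1/2}\BA_n^{-1}\fzeros\bigr\|_2^{2}\;+\;2\bigl\|(\TSS)^{1/2}\BA_n^{-1}\bdg_n^{(\calS)}\bigr\|_2^{2}\ =:\ 2\,\mathrm{Bias}+2\,\mathrm{Var}.
\]

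The crux is to show that on an event $\mathcal{E}_n$ of probability tending to $1$ uniformly over $\mathscrbf{P}(r)$,
\[
\bigl\|\BA^{-1/2}(\TnSS-\TSS)\BA^{-1/2}\bigr\|_2\le\tfrac12,\qquad\text{hence}\qquad\bigl\|\BA^{1/2}\BA_n^{-1}\BA^{1/2}\bigr\|_2\le2 .
\]
Since the $\wt X_{i\calS}$ are i.i.d.\ zero-mean Gaussian with covariance $\TSS$, this is a concentration inequality for self-adjoint Hilbert--Schmidt operators (operator Bernstein / Koltchinskii--Lounici type), whose magnitude is governed by the effective dimension $\mathcal{N}(\lambda_3):=\tr(\TSS\BA^{-1})$ and by a coherence bound on $\|\BA^{-1/2}\wt X_{i\calS}\|_2$; condition \eqref{equ:bound_corr} of C.\ref{ass:a8} is exactly what reduces this coherence control to the diagonal blocks $\mathscr{T}^{(j,j)}$. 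Grouping the eigenvalues of $\TSS$ in blocks of size $q$ and invoking Corollary \ref{corollary:eigenvalue_rela} ($\rho_{q(k-1)+j}\le bc\,k^{-2r}$) gives $\mathcal{N}(\lambda_3)\le q\sum_{k\ge1}bck^{-2r}/(bck^{-2r}+\lambda_3)\asymp q\lambda_3^{-1/(2r)}\asymp q(n/q)^{1/(2r+1)}$, so $\mathcal{N}(\lambda_3)/n\asymp\delta_n$; the scaling $q=o(n^{(2r-1)/(4r)})$ with $\lambda_3\asymp\delta_n$ forces $\mathcal{N}(\lambda_3)=o(\sqrt n)=o(n)$, which both makes the concentration bound $o(1)$ and keeps all remainder terms below the target order $\delta_n$.

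On $\mathcal{E}_n$ the two pieces are controlled as follows. For the bias, $\mathrm{Bias}\le\lambda_3^{2}\|(\TSS)^{1/2}\BA^{-1/2}\|_2^{2}\,\|\BA^{1/2}\BA_n^{-1}\BA^{1/2}\|_2^{2}\,\|\BA^{-1/2}\fzeros\|_2^{2}\le4\lambda_3\|\fzeros\|_2^{2}$, using $\|(\TSS)^{1/2}\BA^{-1/2}\|_2\le1$ and $\|\BA^{-1/2}\fzeros\|_2\le\lambda_3^{-1/2}\|\fzeros\|_2$; since $\|\fzeros\|_2$ is bounded over the parameter class (a fixed-radius ball in $\mathbb{L}_2^q$, as in \cite{CaiYuan2012}), $\mathrm{Bias}\lesssim\lambda_3\asymp\delta_n$. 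For the variance, condition on $\wt\BX_n^{(\calS)}$, under which $\bdg_n^{(\calS)}\sim\Normal(0,\tfrac{\sigma^{2}}{n}\TnSS)$, so
\[
\E\bigl[\mathrm{Var}\,\big|\,\wt\BX_n^{(\calS)}\bigr]=\tfrac{\sigma^{2}}{n}\tr\!\bigl(\BA_n^{-1}\TSS\BA_n^{-1}\TnSS\bigr)\le\tfrac{\sigma^{2}}{n}\,\|\BA^{1/2}\BA_n^{-1}\BA^{1/2}\|_2^{2}\,\tr\!\bigl(\BA^{-1}\TnSS\bigr)\lesssim\tfrac{\sigma^{2}}{n}\mathcal{N}(\lambda_3)\asymp\sigma^{2}\delta_n,
\]
where $\tr(\BA^{-1}\TnSS)\le2\mathcal{N}(\lambda_3)$ on $\mathcal{E}_n$ again by the above concentration; a conditional Markov inequality then gives $\mathbb{P}(\mathrm{Var}\ge A\delta_n)\le C\sigma^{2}/A+\mathbb{P}(\mathcal{E}_n^{c})$. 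Combining, $\mathbb{P}(\mathcal{R}(\fchecks)\ge A\delta_n)\le\mathbb{P}(\mathcal{E}_n^{c})+C\sigma^{2}/A+o(1)$, which tends to $0$ as $n\to\infty$ and then $A\to\infty$, uniformly over $\mathscrbf{P}(r)$ and $\fzeros$; with the first step this proves the theorem.

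The main obstacle is the concentration step: establishing $\|\BA^{-1/2}(\TnSS-\TSS)\BA^{-1/2}\|_2=o_p(1)$ \emph{uniformly over the covariance class $\mathscrbf{P}(r)$ with $q$ diverging}. The classical single-operator RKHS concentration arguments do not apply off the shelf, because the object here is a $q\times q$ block-valued operator whose effective dimension grows with $q$; it is the interplay of condition \eqref{equ:bound_corr}, the polynomial eigenvalue decay in C.\ref{ass:a8}, the resulting bound $\mathcal{N}(\lambda_3)\asymp q(n/q)^{1/(2r+1)}$ from Corollary \ref{corollary:eigenvalue_rela}, and the scaling $q=o(n^{(2r-1)/(4r)})$ that is exactly enough to control the coherence of the Gaussian design and push the concentration through while keeping every remainder of order $o(\delta_n)$. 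Everything else (the reduction via C.\ref{ass:a10}, the closed-form residual, and the bias/variance norm bounds) is then routine.
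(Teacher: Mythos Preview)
Your approach is correct and takes a genuinely different route from the paper. The paper does \emph{not} pass through the sandwiched concentration $\|\BA^{-1/2}(\TnSS-\TSS)\BA^{-1/2}\|_2=o_p(1)$. Instead it inserts the population ridge $\ftildets=(\TSS_{\lambda_3})^{-1}\TSS\fzeros$ and writes a self-referential inequality for the weighted norms $A(\nu)=\|(\TSS)^{\nu}(\ftildets-\fhatts)\|_2$, involving the random operator $B_2(\nu_1,\nu_2)=\vertiii{(\TSS)^{\nu_1}(\TSS_{\lambda_3})^{-1}(\TnSS-\TSS)(\TSS)^{-\nu_2}}_{2,2}$; this is bounded not by an operator-norm concentration result but by a crude Cauchy--Schwarz/Hilbert--Schmidt argument (Lemma~\ref{lemma:minimax_upper_4}), which is where the extra $q^{1/2}$ factor enters and hence where the growth restriction $q=o(n^{(2r-1)/(4r)})$ is actually consumed. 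A two-step bootstrap (first $\nu_1=\nu_2=\nu<1/2-1/(4r)$, then $\nu_1=1/2,\nu_2=\nu$) then upgrades the estimate to $\nu=1/2$. Your argument is the Caponnetto--De Vito/kernel-ridge style and is cleaner: once the sandwiched Koltchinskii--Lounici bound is granted for the Gaussian data $\BA^{-1/2}\wt X_{i\calS}$, the bias and variance fall out in one line each, and the effective-dimension calculation via Corollary~\ref{corollary:eigenvalue_rela} is all you need from C.\ref{ass:a8}. In fact the Gaussian K--L bound applied to covariance $\BA^{-1/2}\TSS\BA^{-1/2}$ gives order $\sqrt{\mathcal N(\lambda_3)/n}=\sqrt{\delta_n}$, so your concentration step already holds whenever $q=o(n)$; your claim that $\mathcal N(\lambda_3)=o(\sqrt n)$ is needed is over-cautious for the Gaussian case, though it does recover exactly the paper's stated hypothesis. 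What you lose relative to the paper is self-containment: the paper's Hilbert--Schmidt route reuses only elementary moment calculations and the already-present Lemma~\ref{lemma:tail_bound}, whereas your route imports an additional infinite-dimensional operator-norm concentration lemma.
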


Theorem \ref{thm:minimax_upper} shows that our refined estimator \eqref{equ:object_ridge} achieves the oracle the minimax rate in Theorem \ref{thm:minimax_lower}, which is determined by the rate of decay of the eigenvalues of the operator $\TSS$. 
When $q$ is a constant that does not grow with $n$, the minimax rate for the excess risk is on the order of $n^{-2r/(2r+1)}$, consistent with the findings in \cite{CaiYuan2012}. 
\section{Implementation and Numerical Studies}\label{sec:numerical_study}

 %
 %
 %
 %
 %
 %
 %
 %
 %
 %
 %
 %
 %
 %
 
\vspace{-1em}
\subsection{Practical Implementation}\label{sec:algorithm}
\vspace{-1em}

Proposition \ref{lemma:solution_form} provides an expression for the exact solution to the optimization problem \eqref{equ:mini1}, where each $\widehat{f}_j$ is a linear combination of $\wt \BX_{\bullet j}$. However, such a solution is not scalable to big data and ultra-high dimensions, since there are a total of $np$ parameters to estimate. In this subsection, we propose a computationally-efficient algorithm to fit the model based on the idea of reduced-rank approximations, which has been widely used in semiparametric regression \citep{ruppert2003semiparametric} and spline smoothing \citep{ma2015efficient}. Our low-rank approximation shares a similar spirit as the eigensystem truncation approach proposed by \cite{xu2021low} for a low-rank approximation of smoothing splines.

Since $\wh f_j$ falls in the subspace spanned by $\wt \BX_{\bullet j}$, it can be well approximated by the eigenfunctions of $\Tnjj$, which is the empirical covariance of $\wt \BX_{\bullet j}$. Let $\boldsymbol{\varphi}_j(t) = (\varphi_{j1}, \dots, \varphi_{j M_j})^{\top}(t)$ be the first $M_j$ eigenfunctions of $\Tnjj$, such that $\int_0^1 \boldsymbol{\varphi}_j(t) \boldsymbol{\varphi}_j^{\top}(t) dt = \boldsymbol{I}_{M_j}$, and we approximate $f_j$ with $\widetilde{f}_j(t) = \boldsymbol{\varphi}_j^{\top}(t) \boldsymbol{c}_j$. 
Define $\boldsymbol{\Gamma}_j = \int_0^1 \wt \BX_{\bullet j}(t) \boldsymbol{\varphi}_j^{\top}(t) dt$ and
$\boldsymbol{H}_j = \int_0^1 (\Psi_j \boldsymbol{\varphi}_j)(t) (\Psi_j \boldsymbol{\varphi}_j)^{\top}(t) dt$. 
We reparameterize the coefficient vectors as $\boldsymbol{d}_j = \boldsymbol{H}_j^{1/2} \boldsymbol{c}_j$, and solve the group elastic-net problem (\ref{equ:mini1}) iteratively using a block coordinate-descent algorithm. At coordinate $j$, we fix $\boldsymbol{d}_{j'}$ for $j' \neq j$, define $\wt{\boldsymbol{Y}}_n^{(j)} =  \boldsymbol{Y}_n -  \sum_{j' \neq  j} \boldsymbol{\Gamma}_{j'} \boldsymbol{H}_{j'}^{-1/2}   \boldsymbol{d}_{j'}$, and update $\boldsymbol{d}_{j}$ by
\vspace{-1.5em}
\begin{align}
    \widehat{\boldsymbol{d}_j} &= \argmin_{\boldsymbol{d}_j \in \mathbb{R}^{M_j}} \left\{ \frac{1}{2} \boldsymbol{d}_j^{\top} \boldsymbol{\Omega}_{j} \boldsymbol{d}_j -  \boldsymbol{\varrho}_j^{\top} \boldsymbol{d}_j + \lambda_1 \| \boldsymbol{d}_j \|_2 \right\}, 
\label{equ:mini_algo}
\end{align}
where $\boldsymbol{\Omega}_{j} =  \boldsymbol{H}_j^{-1/2} \left( \frac{1}{n} \boldsymbol{\Gamma}_j^{\top} \boldsymbol{\Gamma}_j + \lambda_2 \boldsymbol{I}_{M_j} \right) \boldsymbol{H}_j^{-1/2}$ and $\boldsymbol{\varrho}_j = n^{-1} \boldsymbol{H}_j^{-1/2} \boldsymbol{\Gamma}_j^{\top} \wt{\boldsymbol{Y}}_n^{(j)}$.

Proposition \ref{theorem:kkt_algoriothm} provides the solution to the minimization problem \eqref{equ:mini_algo}.
\begin{proposition}
\label{theorem:kkt_algoriothm}
For $\lambda_1 > 0$, the solution $\widehat{\boldsymbol{d}_j}$ for \eqref{equ:mini_algo} exists. Furthermore, if $\| \boldsymbol{\varrho}_j \|_2 \le \lambda_1$, then $\widehat{\boldsymbol{d}_j} = {0}$; if $\| \boldsymbol{\varrho}_j \|_2 > \lambda_1$, then $\widehat{\boldsymbol{d}_j} \neq 0$ and $\widehat{\boldsymbol{d}_j}$ is the solution to 
$\boldsymbol{\Omega}_{j} \boldsymbol{d}_j - \boldsymbol{\varrho}_j + \lambda_1 \boldsymbol{d}_j\| \boldsymbol{d}_j \|_2^{-1} = \boldsymbol{0}.$
\end{proposition}
We can solve $\widehat{\boldsymbol{d}_j}$ by iteratively updating $\boldsymbol{d}_j \leftarrow \left( \boldsymbol{\Omega}_{j} + \lambda_1 \| \boldsymbol{d}_j \|_2^{-1} \boldsymbol{I}_{M_j} \right)^{-1} \boldsymbol{\varrho}_j$ until convergence. 
Since the objective function \eqref{equ:mini_algo} is the combination of a convex and differentiable least squares loss and a convex penalty, the block coordinate-wise algorithm is guaranteed to converge to the global minimum \citep{Friedman2007Coordinate}.

For the refined estimator in \eqref{equ:object_ridge}, no iteration is needed since there is no $\ell_1$ penalty involved. Write $\widehat{f}_j(t) = \boldsymbol{\varphi}_j^{\top}(t) \widehat{\boldsymbol{c}}_j$ for each $j \in \widehat{\mathcal{S}} \equiv \{j_1, j_2, \dots, j_{\wh{q}}\}$.
Then, the coefficient vectors can be calculated as
\vspace{-1.5em}
\begin{align*}
    \left( \wh{\boldsymbol{c}}_{j_1}^{\top}, \dots, \wh{\boldsymbol{c}}_{j_{\hat{q}}}^{\top} \right)^{\top} = \frac{1}{n}  \left( \frac{1}{n} \boldsymbol{\Gamma}_{\wh\calS}^{\top} \boldsymbol{\Gamma}_{\wh\calS} + \lambda_3 \boldsymbol{I} \right)^{-1} \boldsymbol{\Gamma}_{\wh\calS}^{\top} \boldsymbol{Y}_n,
\end{align*}
where $  \boldsymbol{\Gamma}_{\wh\calS} = \left(\boldsymbol{\Gamma}_{j_1}, \dots, \boldsymbol{\Gamma}_{j_{\wh{q}}} \right)$ is the design matrix for functional predictors in the estimated signal set.

In most applications, the functional predictors are observed on $N$ equally spaced points, the kernel functions $K_j$ are evaluated as $N \times N$ matrices, $K_j^{1/2}$ are computed via the spectral decomposition of $K_j$, and all integrals can be approximated using Riemann sums on the observed discrete points. As discussed in \cite{zhou2023functional}, the incurred errors by these approximations are negligible when $N$ is sufficiently large.

\vspace{-1em}
\subsection{Simulation Studies}\label{sec:simulations}
\vspace{-1em}

We simulate the functional predictors as
$
X_{ij}(t)= \sqrt{2} \sum_{k \ge 1 } z_{ijk} \sqrt{\nu_k} \cos (k \pi t)$, $(i=1,\ldots,n, j=1,\ldots, p),
$
where $\boldsymbol{z}_{i \cdot k} =(z_{i1k}, \ldots, z_{ipk})^\top \sim $ i.i.d. $\Normal(\boldsymbol{0}, \boldsymbol{\Sigma}_p)$, and $\boldsymbol{\Sigma}_p$ is an autoregressive correlation matrix with the $(j,k)$th entry being $\rho^{|j-k|}$, $1 \leq k, j \leq p$. We generate the response $Y$ by the high-dimensional functional linear regression model \eqref{e:model}, using coefficient functions under one of the three scenarios described below and setting $\epsilon_i \sim \Normal(0, \sigma^2=0.5^2)$. For each scenario, we consider three correlation levels between the functional predictors, $\rho = 0$, $0.3$ and $0.75$, and three settings for the problem size: a high dimension and high sample size setting with $(n, p, q) = (500, 50, 5)$, a high dimension and low sample size setting with $(n, p, q) =(200, 100, 5)$, and an ultra-high dimension setting with $(n, p, q) = (100, 200, 10)$. 
For simplicity, we set the signal set to be $\calS=\{1,\ldots, q\}$, and set $\beta_{0j}(t)=4 \sum_{k \ge 1}   (-1)^{u_{jk}} r_k \phi_k(t)$, for $j\in \calS$,
where the basis functions $\phi_k(t)$ and coefficients $r_k$ are to be specified below, $u_{jk}$ are i.i.d. Bernoulli random variables with $P(u_{jk}=1)=0.5$. Inspired by \cite{CaiYuan2012}, we consider the following three scenarios for $\{ \phi_k(t), r_k, \nu_k\}$:

\noindent{\bf Scenario I:} $\phi_k(t)= \sqrt{2} \cos (k \pi t)$, and $\nu_k= r_k=\exp(-k/4)$, for $k\ge 1$;

\noindent{\bf Scenario \II:} $\phi_k(t)= \sqrt{2} \sin (k \pi t)$, and $\nu_k= r_k=\exp(-k/4)$, for $k\ge 1$;

\noindent{\bf Scenario \III:} $\phi_k(t)= \sqrt{2} \cos (k \pi t)$, $r_k= k^{-2}$, and $\nu_{k}=(|k-k_0|+1)^{-2}$ for $k\ge 1$, where we set $k_0 = 10$.

Scenario I represents a case where the functional predictors and the coefficient functions are perfectly aligned. Not only they are spanned by the same set of cosine functions, but the eigenvalues $\nu_k$ and the coefficients $r_k$ both monotonically decay with $k$. In other words, the signals most important to $X_{ij}$ also contribute the most to $Y_i$. As shown by \cite{CaiYuan2012}, $\beta_{0j}$ under this scenario belong to an RKHS with the RKHS norm $
\|\beta\|_{\mathbb{H}}=\{\int\left(\beta^{\prime \prime}\right)^{2} 
\}^{1/2}$, and the reproducing kernel
$K(s,t)  =-\frac{1}{3}\big[ B_{4}(|s-t| / 2)+B_{4}\{(s+t) / 2\} \big]$,
where $B_k$ is the $k$th Bernoulli polynomial. 

Scenarios \II~ and \III~ represent various cases of misalignment. Under Scenario \II, $X_{ij}$ and $\beta_{0j}$ are spanned by different bases. Using similar derivations as \cite{CaiYuan2012}, we can show $\beta_{0j}$ belong to an RKHS with the reproducing kernel $
K(s,t)  =-\frac{1}{3}\big[ B_{4}(|s-t| / 2) - B_{4}\{(s+t) / 2\} \big]$. Under Scenario \III, the maximum mode of variation in $X_{ij}$ is contributed from a high-frequency cosine function with $k=k_0$, however, these high-frequency signals do not contribute much to the response because the corresponding $r_k$'s are small.  
Even though the polynomial decay of the coefficient $r_k= k^{-2}$ in Scenario \III~ is slower than the exponential series $r_k= \exp(-k/4)$ in the asymptotic sense, as it turns out $\exp(-k/4) \ge k^{-2}$ for $k\le 26$. As such, there are practically more random components that contribute to the variations in $X_{ij}$ and the response $Y_i$ under Scenarios I and \II.

We repeat the simulation 200 times for each scenario, each level of correlation, and each problem size. For each simulated data set, we also simulate an additional sample of 100 data pairs of $(\BX, Y)$ as testing data to evaluate the prediction performance.
We apply our proposed functional elastic-net (fEnet) method to each simulated data set and make a comparison with the method proposed by \cite{XueYao2021}, which is to equip high-dimensional functional linear regression with a SCAD penalty \citep{FanLi2001} and thus termed FLR-SCAD. 
For FLR-SCAD, there are two tuning parameters, the SCAD penalty parameter $\lambda$ and the number of basis functions $s_1$ to represent both the functional predictor and the coefficient functions. For a fair comparison, we set the basis of FLR-SCAD to be the true basis $\phi_{k}(t)$ as described above. 
For the proposed fEnet, we set $\Psi_j = (\Tnjj + \theta \mathscr{I})^{1/2}$ and hence end up with four tuning parameters ($\lambda$, $\alpha$, $s$, and $\theta$), where $\lambda_1 = \alpha  \lambda$, $\lambda_2 = (1-\alpha)  \lambda$, and $s$ is the number of eigenfunctions used in the reduced rank approximation described in Section \ref{sec:algorithm}. 
For both methods, the tuning parameters are selected based on a grid search that minimizes the averaged mean square prediction error using the testing sample so that the results reported here represent the best possible performance of the two.
For a single tuning configuration, fEnet matches FLR–SCAD in runtime (around $3$ seconds at the optimal tuning parameter in the ultra–high–dimensional case). Since fEnet is insensitive to the basis size $s$, we fix $s$ and parallelize cross-validation to keep total cost low.
We use false positive rate (FPR) and false negative rate (FNR), defined as FPR$=|\wh \calS \cap \calS^c| / |\calS^c| $ and FNR$=|\wh \calS^c \cap \calS| / |\calS| $, to assess the variable selection performance, and we use the maximum norm difference (MND) to gauge the signal recovery performance, where MND is defined as
the maximum of the $\mathbb{L}_2$ norm of $\widehat{\beta}_j - \beta_{0j}$ for $j=1,\dots,p$. In order to make results from the three scenarios more comparable, we measure prediction error by the relative excess risk (RER): 
$
 { \E \{ \sum_{j=1}^p  \langle X_j^\ast, (\hat \beta_j- \beta_{j0}) \rangle\}^2 / \E \{ \sum_{j=1}^p  \langle X_j^\ast, \beta_{j0} \rangle\}^2 },
$
which is a standardized version of the excess risk. 

\begin{table}[!htbp]
\caption{Simulation Scenario I: summary of estimation, prediction, and variable selection performance of the proposed fEnet method versus FLR-SCAD under different problem sizes. }
\centering
\scriptsize
\begin{tabular}{cccccccc}
\toprule
$n$ & $p$ & $q$ & Method & FPR $(\%)$ & FNR $(\%)$ & MND & RER \\
\midrule
\multicolumn{8}{c}{$\rho=0$} \\
500 & 50 & 5 & fEnet & 0 (0, 0) & 0 (0, 0) & 0.36 (0.30, 0.45) & 0.0006 (0.0003, 0.0009) \\
& & & FLR-SCAD & 0 (0, 0) & 0 (0, 0) & 0.54 (0.37, 0.82) & 0.0009 (0.0005, 0.0019) \\ 
200 & 100 & 5 & fEnet & 0 (0, 0) & 0 (0, 0) & 0.53 (0.42, 0.68) & 0.0018 (0.0011, 0.0029) \\
& & & FLR-SCAD & 0 (0, 0) & 0 (0, 0) & 0.75 (0.58, 1.19) & 0.0035 (0.0017, 0.0106) \\
100 & 200 & 10 & fEnet & 0 (0, 1.1) & 0 (0, 0) & 1.31 (1.06, 1.65) & 0.0179 (0.0094, 0.0399) \\
& & & FLR-SCAD & 4.7 (1.6, 8.4) & 0 (0, 30)  & 4.89 (3.97, 5.00) & 0.5280 (0.3206, 0.7734) \\
\midrule
\multicolumn{8}{c}{$\rho=0.3$} \\
500 & 50 & 5 & fEnet & 0 (0, 0) & 0 (0, 0) & 0.37 (0.31, 0.47) & 0.0007 (0.0004, 0.0011) \\
& & & FLR-SCAD & 0 (0, 0) & 0 (0, 0) & 0.59 (0.41, 1.03) &  0.0012 (0.0006, 0.0027) \\
200 & 100 & 5 & fEnet & 0 (0, 0) & 0 (0, 0) & 0.58 (0.45, 0.73) & 0.0025 (0.0015, 0.0044)  \\
& & & FLR-SCAD & 0 (0, 0) & 0 (0, 0) & 0.78 (0.58, 1.51) & 0.0044 (0.0021, 0.0146) \\
100 & 200 & 10 & fEnet & 0 (0, 1.6) & 0 (0, 0) & 1.39 (1.08, 1.92) & 0.0192 (0.0103, 0.0441)  \\
& & & FLR-SCAD & 4.7 (1.6, 9.5) & 10 (0, 40) & 5.00 (4.37, 5.05)  & 0.5319 (0.3665, 0.7523)  \\
\midrule
\multicolumn{8}{c}{$\rho=0.75$} \\
500 & 50 & 5 & fEnet & 0 (0, 0) & 0 (0, 0) & 0.53 (0.42, 0.67) & 0.0012 (0.0007, 0.0019) \\
& & & FLR-SCAD & 0 (0, 0) & 0 (0, 0) & 0.98 (0.67, 1.78) &  0.0018 (0.0008, 0.0049) \\
200 & 100 & 5 & fEnet & 0 (0, 0) & 0 (0, 0) & 0.85 (0.72, 1.03) & 0.0035 (0.0021, 0.0056)  \\
& & & FLR-SCAD & 0 (0, 0) & 0 (0, 0) & 1.28 (0.76, 4.61) & 0.0066 (0.0029, 0.1287) \\
100 & 200 & 10 & fEnet & 0 (0, 4.2) & 0 (0, 10) & 2.04 (1.49, 5.00) & 0.0175 (0.0078, 0.1329)  \\
& & & FLR-SCAD & 2.1 (0, 4.2) & 50 (30, 70) & 5.86 (5.00, 7.91)  & 0.2895 (0.1932, 0.3894)  \\
\bottomrule
\end{tabular}
\label{table:compare}
\end{table}

Simulation results under Scenario I are summarized in Table \ref{table:compare}, where we compare the median FPR, FNR, MND, and RER as well as their $2.5\%$ and $97.5\%$ quantiles for the two competing methods. As we can see, both methods accurately choose the correct model under the first two problem sizes and for all correlation levels, although our method shows some small advantages in terms of estimation (MND) and prediction (RER). We now focus on the ultra-high dimension setting with $(n, p, q) = (100, 200, 10)$, where our method shows an overwhelming advantage over FLR-SCAD in all criteria considered for variable selection, estimation, and prediction. Note that under the high correlation setting ($\rho=0.75$), not only $\{X_{ij}, \ j\in \calS\}$ are strongly correlated among themselves, but they are also strongly correlated with some of the predictors in $\calS^c$. In this case, even though FLR-SCAD mistakes some of the non-signals with some real signals, its prediction performance may not be as bad as when $\rho=0$ or $0.3$.


To further investigate the variable selection performance under the ultra-high dimension setting, we plot the receiver operating characteristic (ROC) curves for the two methods in Figure \ref{img:ROC}, where the false positive rate and true positive rate (TPR), i.e. $1-$FNR, are calculated under different values of $\lambda$ while holding other tuning parameters fixed at their optimal values. As such, both FPR and TPR become functions of $\lambda$. As $\lambda$ increases, all coefficient functions are shrunk to 0 and hence both FPR and TPR decrease to 0. The ROC of our method yielding a higher area under the curve (AUC) than FLR-SCAD, especially when there is a high correlation between the functional predictors, means that our method has a better variable selection performance.


\begin{figure}[!htbp]
    \begin{subfigure}{0.31\textwidth}
        \centering
        \includegraphics[width=0.95\linewidth]{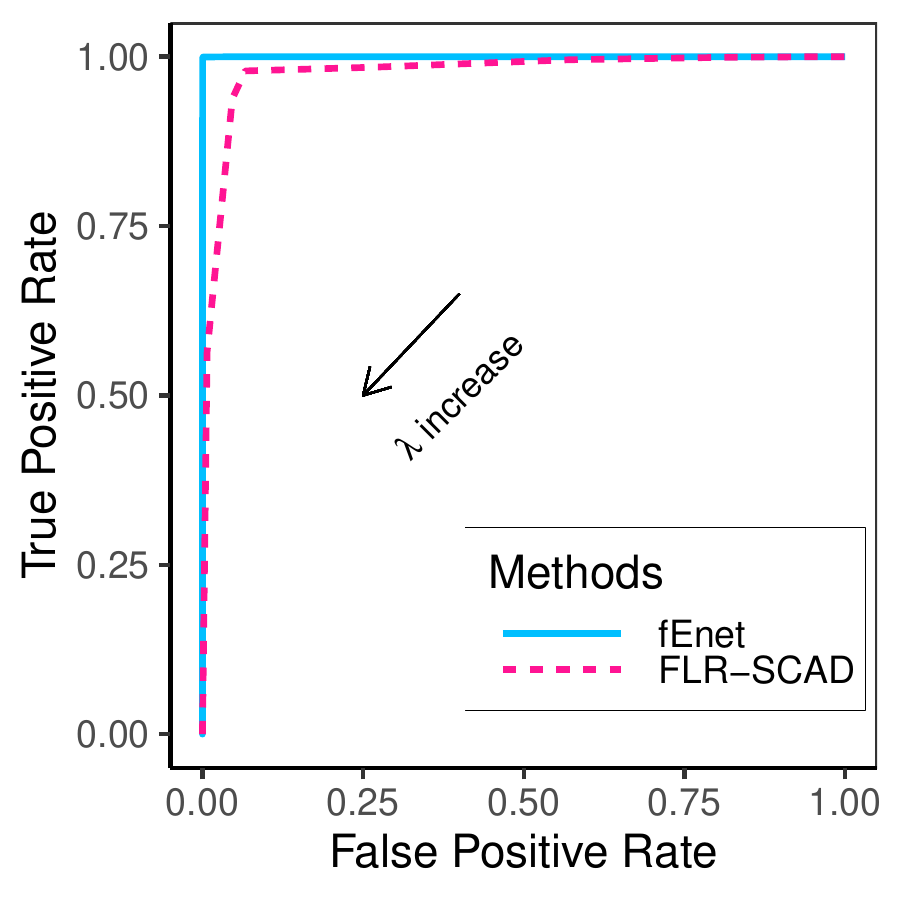}
        \vspace{-0.5em}
        \caption{$\rho = 0$}
    \end{subfigure}
    \begin{subfigure}{0.31\textwidth}
        \centering
        \includegraphics[width=0.95\linewidth]{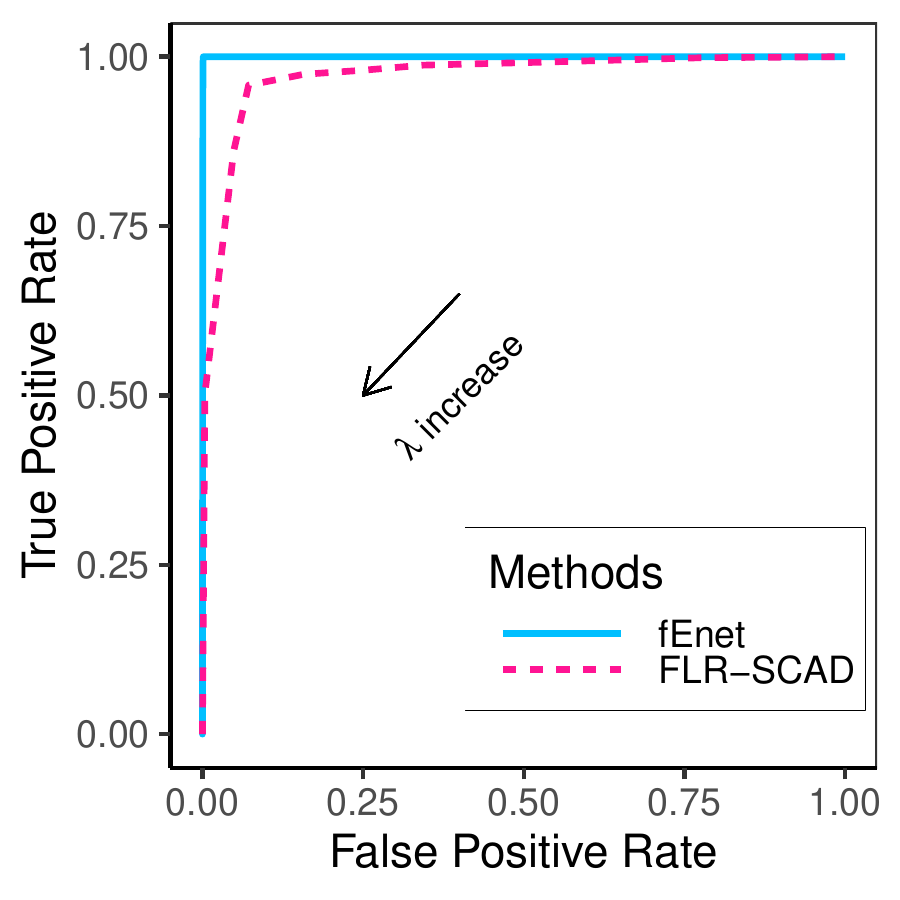}
        \vspace{-0.5em}
        \caption{$\rho = 0.3$}
    \end{subfigure}
    \begin{subfigure}{0.31\textwidth}
        \centering
        \includegraphics[width=0.95\linewidth]{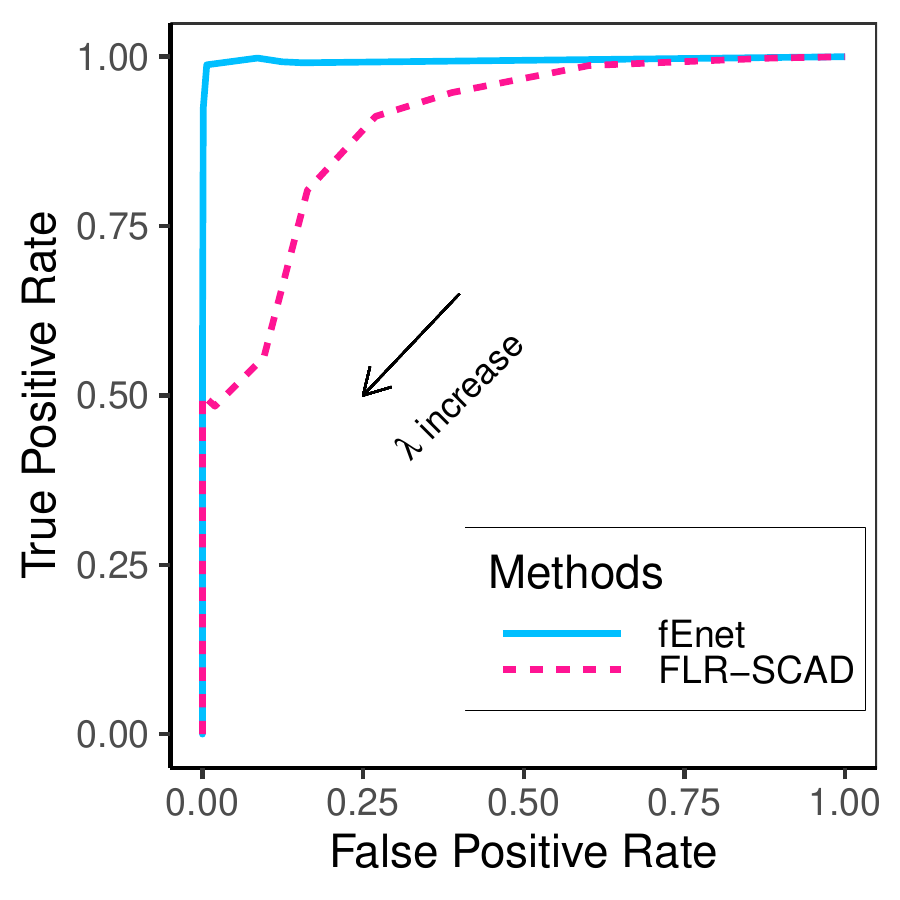}
        \vspace{-0.5em}
        \caption{$\rho = 0.75$}
    \end{subfigure}
  	\caption{Simulation Scenario I: The ROC curves of fEnet and FLR-SCAD under the ultra-high-dimension setting $(n, p, q) = (100, 200, 10)$. The ROC curves are obtained by changing the value of $\lambda$ and holding other hyperparameters at optimal.}
  	  	\label{img:ROC}
\end{figure}


\begin{figure}[h]
    \begin{subfigure}{0.31\textwidth}
        \centering
        \includegraphics[width=0.95\linewidth]{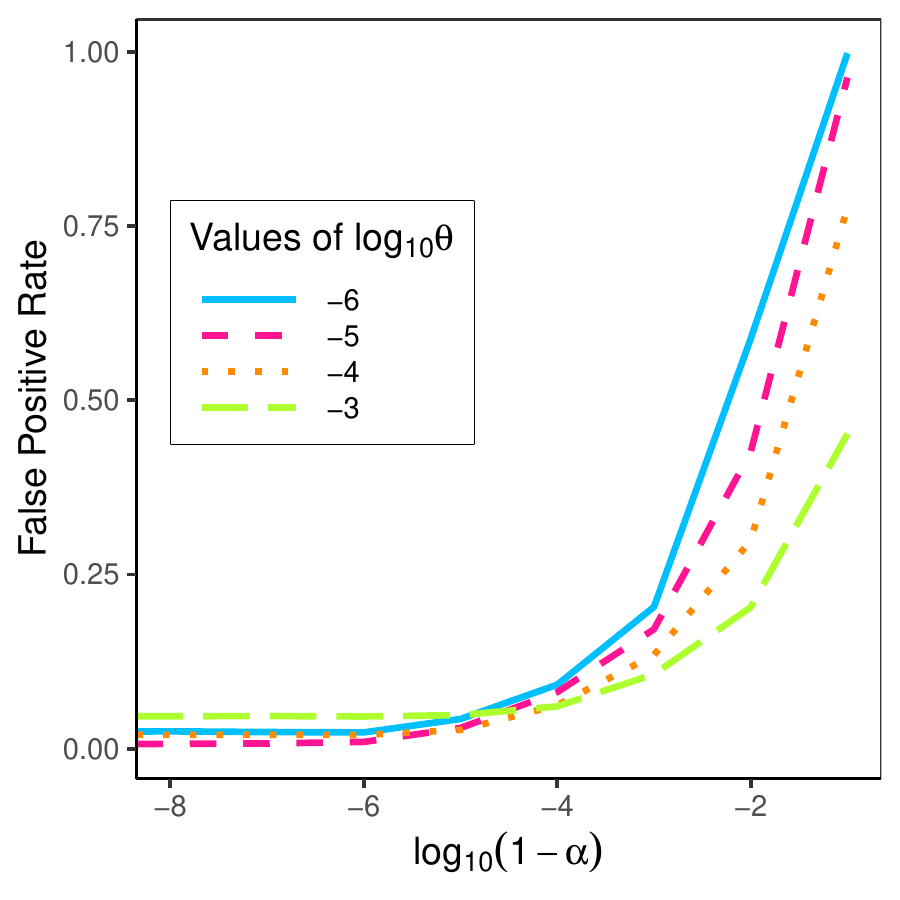}
        \vspace{-0.5em}
        \caption{FPR}
    \end{subfigure}
    \begin{subfigure}{0.31\textwidth}
        \centering
        \includegraphics[width=0.95\linewidth]{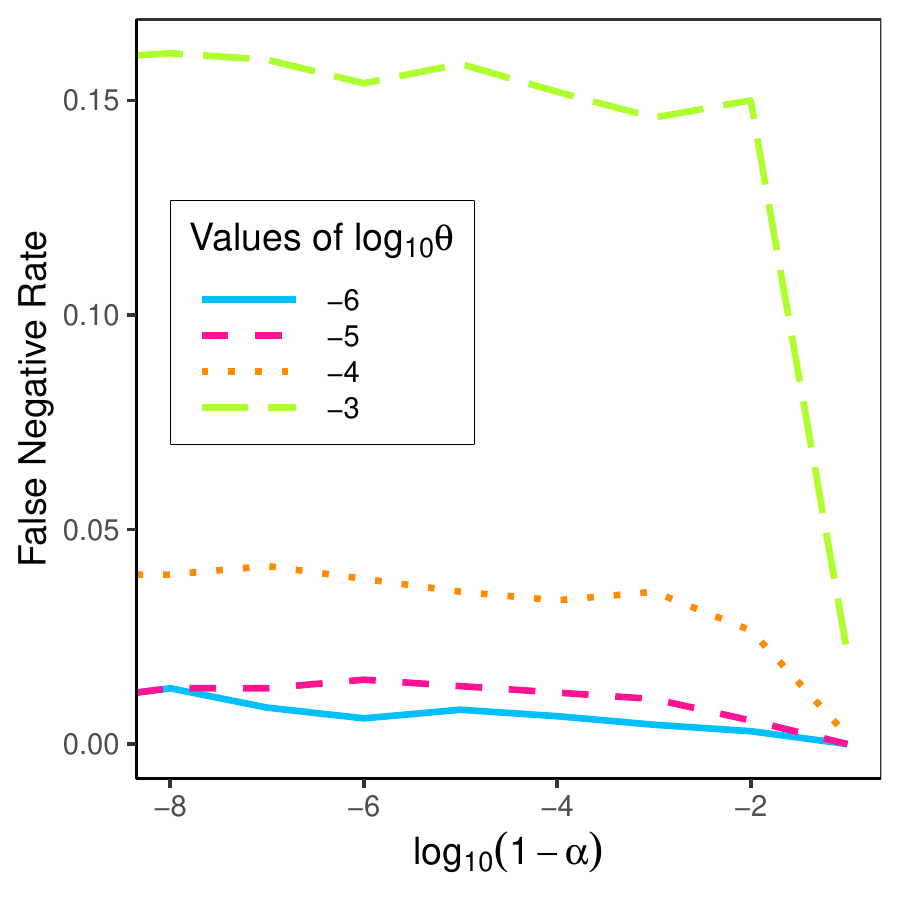}
        \vspace{-0.5em}
        \caption{FNR}
    \end{subfigure}
    \begin{subfigure}{0.31\textwidth}
        \centering
        \includegraphics[width=0.95\linewidth]{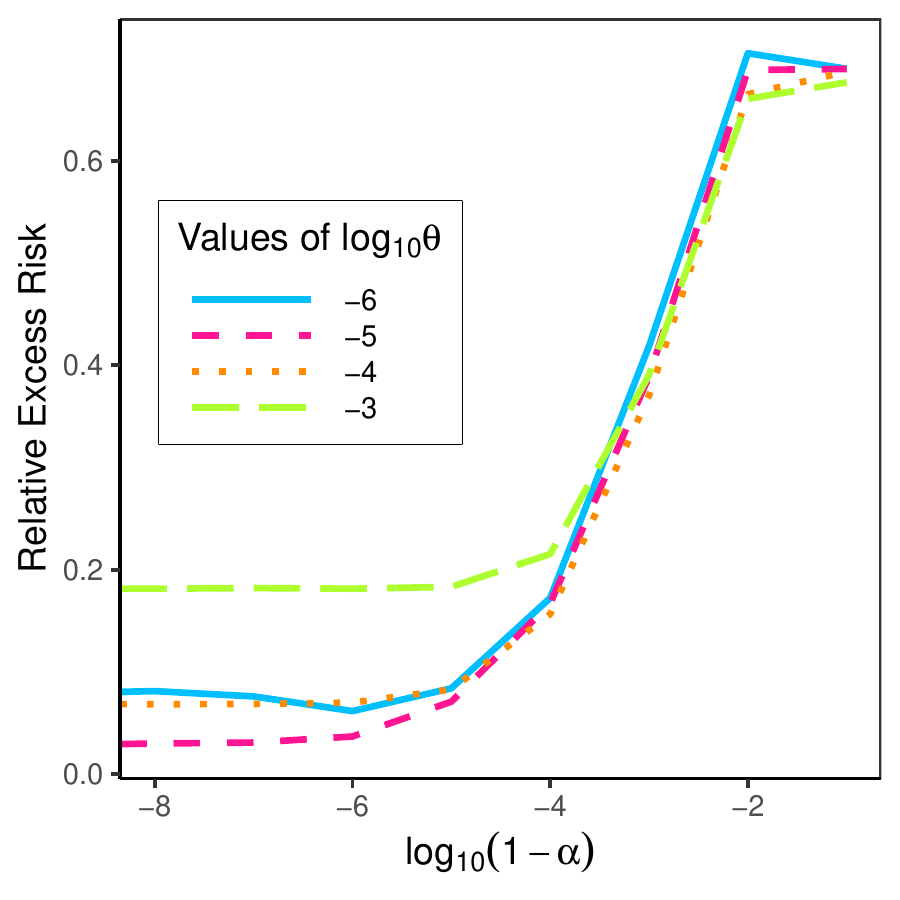}
        \vspace{-0.5em}
        \caption{RER}
    \end{subfigure}
  	\caption{Simulation Scenario I: The plots of FPR, FNR, and RER versus $\log_{10}(1-\alpha)$ for different values of $\theta$ under the ultra-high-dimensional case and $\rho=0.75$.}
  	  	\label{img:change_alpha_theta}
\end{figure}

To investigate the effect of $\alpha=\lambda_1/ (\lambda_1+\lambda_2)$ and $\theta$ on the variable selection and prediction performance, we revisit the ultra-high dimension setting with $\rho = 0.75$. We calculate the average FPR, FNR, and RER at various values of $\alpha$ and $\theta$ while keeping $\lambda$ and $s$ fixed at their optimal values. In Figure \ref{img:change_alpha_theta} we plot the averaged FPR, FNR, and RER against $\log_{10}(1-\alpha)$ for different values of $\theta$. 
These plots suggest that for any fixed $\theta$, FPR is a decreasing function of $\alpha$ while FNR increases with $\alpha$. This observation corroborates our remarks for Theorem \ref{thm:1} that a larger ratio between $\lambda_1$ and $\lambda_2$ means more predictors will be removed from the model and hence the decreased FPR and increased FNR. There should be an optimal $\alpha$, which is neither 0 nor 1, providing the best trade-off between FPR and FNR. The plot of RER against $\log (1-\alpha)$ also suggests the existence of a non-trivial optimal value for $\alpha$, which in turn suggests that we need both components in the elastic-net penalty for the best performance. 
By comparing curves across different values of $\theta$, we can see that FPR decreases with $\theta$, FNR increases with $\theta$, and RER is not monotone with $\theta$. All of these point to the conclusion that there is non-zero optimal value for $\theta$.

To save space, results under Scenarios \II ~and \III ~are deferred to the supplementary material. When there is a misalignment between the functional predictor and the coefficient functions, particularly under Scenario \III ~with a high correlation between the functional predictors, we observe better FPR and FNR from the proposed fEnet method not only for the ultra-high dimension setting but all the other problem sizes as well.

\begin{table}[h]
\caption{Relative efficiency (RE) between the functional elastic-net estimate and the two-stage estimate under Scenario I}
\centering
\footnotesize
\begin{tabular}{cccccc}
\toprule
$n$ & $p$ & $q$ & $\rho=0$ & $\rho=0.3$ & $\rho=0.75$ \\
\midrule
500 & 50 & 5 & 1.04  & 1.06  &  1.29  \\
200 & 100 & 5 & 1.30  & 1.44  & 1.51  \\
100 & 200 & 10 & 1.63  & 1.68  & 1.95 \\
\bottomrule
\end{tabular}
\label{table:two_stage}
\end{table}

Next, we demonstrate the efficiency gain of the refined estimator \eqref{equ:object_ridge} in prediction performance. Focusing on Scenario I, we refit FLM to the simulated data as described in \eqref{equ:object_ridge} using the predictors selected by fEnet only. The tuning parameter $\lambda_3$ is selected by a grid search that minimizes the averaged mean square prediction error using the testing sample. 
Table \ref{table:two_stage} presents a summary of the relative efficiency (RE) between the fEnet estimator $\widehat{\boldsymbol{f}}$ and the refined estimator $\fhaths$, where $\mbox{RE}(\wh{\boldsymbol{f}}, \fhaths) =  \mbox{RER}( \wh{\boldsymbol{f}} ) / \mbox{RER}( \fhaths)$. The reported REs are based on the average over 200 replicates, and a value of RE greater than 1 indicates an improved prediction performance in the refined estimator. These results demonstrate improved prediction performance of the refined estimator across all problem sizes and correlation levels, particularly in the case of ultra-high-dimension and high correlation between functional predictors, where the refined estimator is almost twice as efficient as the original fEnet. 

\subsection{Real Data Application}\label{sec:data}

We now demonstrate our methodology using a dataset obtained from the Human Connectome Project (HCP) \citep{van2013wu}. The data comprise resting-state fMRI scans from $n=549$ individuals, where each brain was repeatedly scanned over 1200 time points. These 3-dim fMRI images were pre-processed and parcellated into 268 brain regions-of-interest (ROI) using a whole-brain, functional atlas defined in \cite{finn2015functional}. Since the raw ROI level fMRI time series are quite noisy, we instead treat the smoothed periodograms at different ROI's as high-dimensional functional data. Specifically, we apply Fast Fourier Transform to the fMRI time series at each ROI, smooth the resulting periodogram using the `smooth.spline' function in \textit{R}, and keep the most informative segment from 1 to 300 Hz as a functional predictor. 
In addition to the fMRI, each subject in the study also undertook the Penn Progressive Matrix (PPM) test, the score of which is commonly used as a surrogate for fluid intelligence \citep{greene2018task}. 

This dataset was previously analyzed by \cite{lee2021conditional}, who used the raw fMRI time series as functional data and the PPM score as a covariate to study functional connectivity between the ROI's. We instead treat the smoothed periodograms from the 268 ROI's as high-dimensional functional predictors and the PPM score as the response.  
By fitting a high-dimensional functional linear model using the proposed fEnet method, our goal is to identify brain regions that are associated with fluid intelligence.

\begin{figure}[t]
    \begin{subfigure}{0.30\textwidth}
        \centering
        \includegraphics[width=0.75\linewidth]{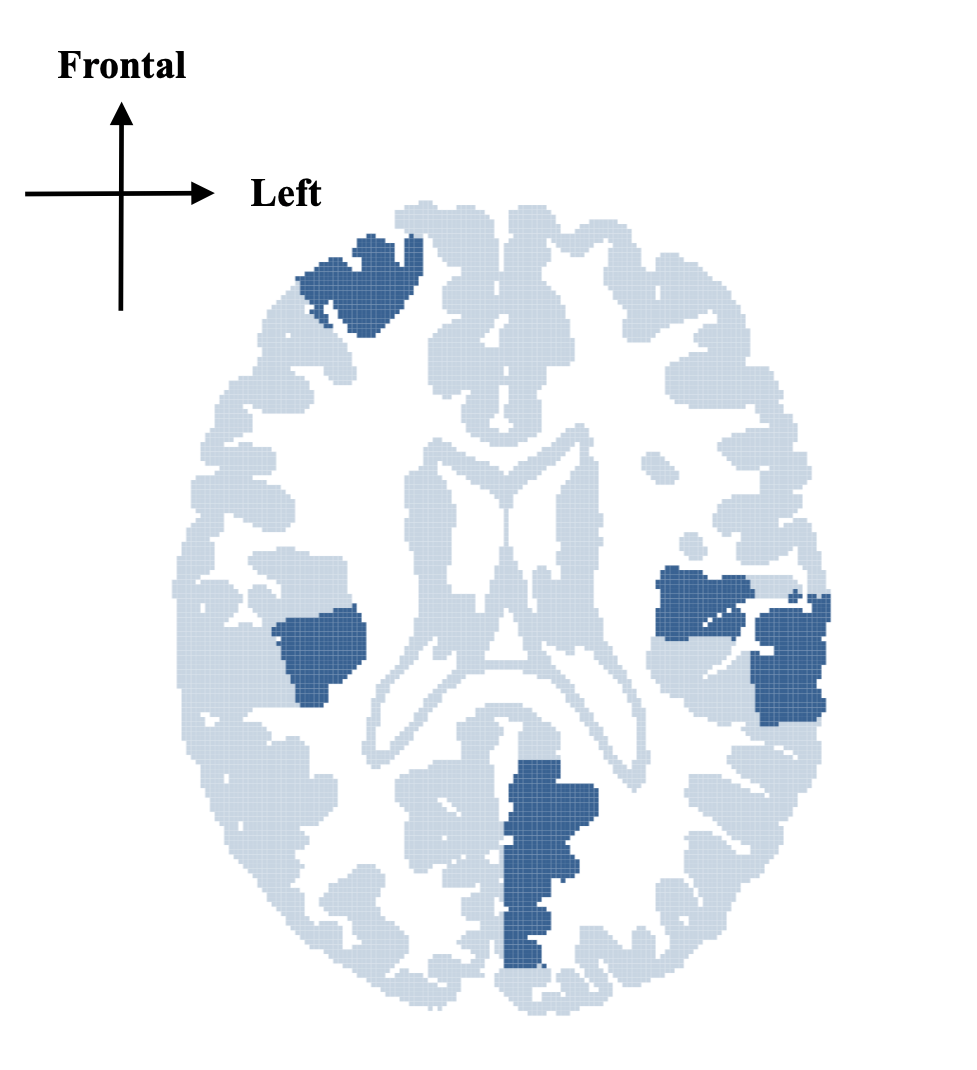}
        \vspace{-0.5em}
        \caption{Top view}
    \end{subfigure}
    \begin{subfigure}{0.33\textwidth}
        \centering
        \includegraphics[width=0.81\linewidth]{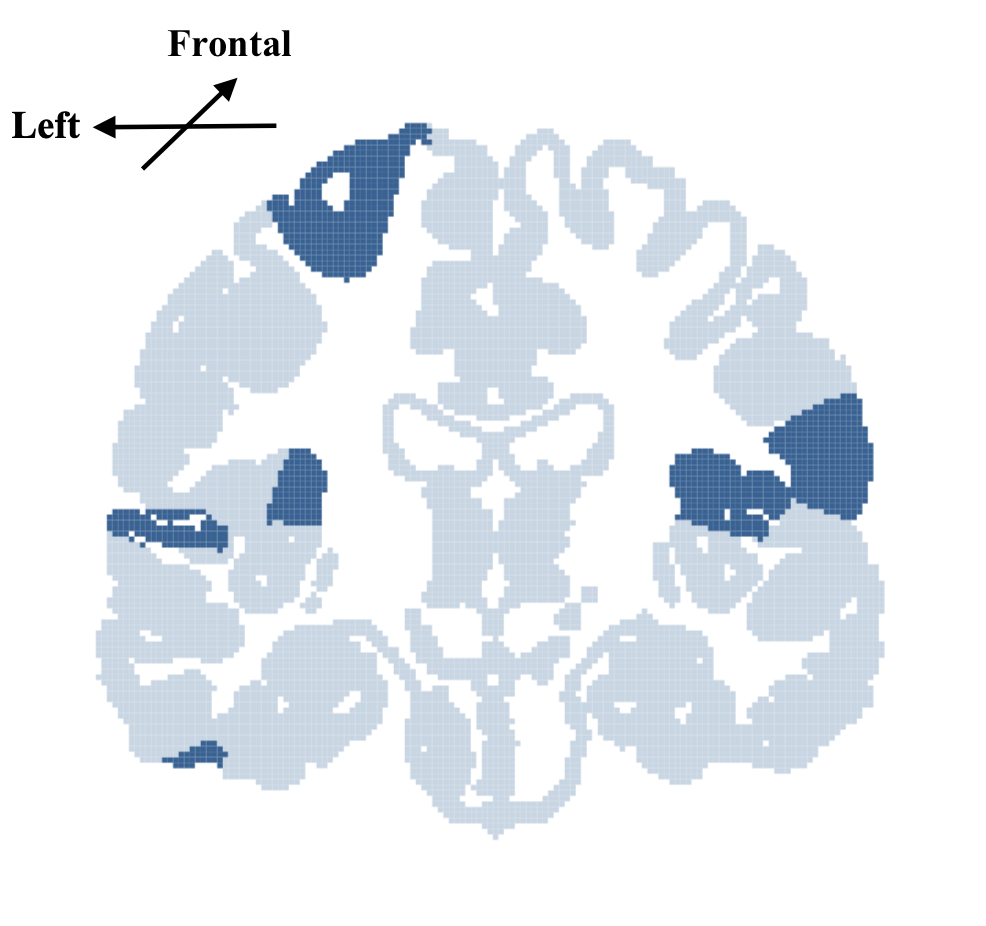}
        \vspace{-0.5em}
        \caption{Front view}
    \end{subfigure}
    \begin{subfigure}{0.33\textwidth}
        \centering
        \includegraphics[width=0.81\linewidth]{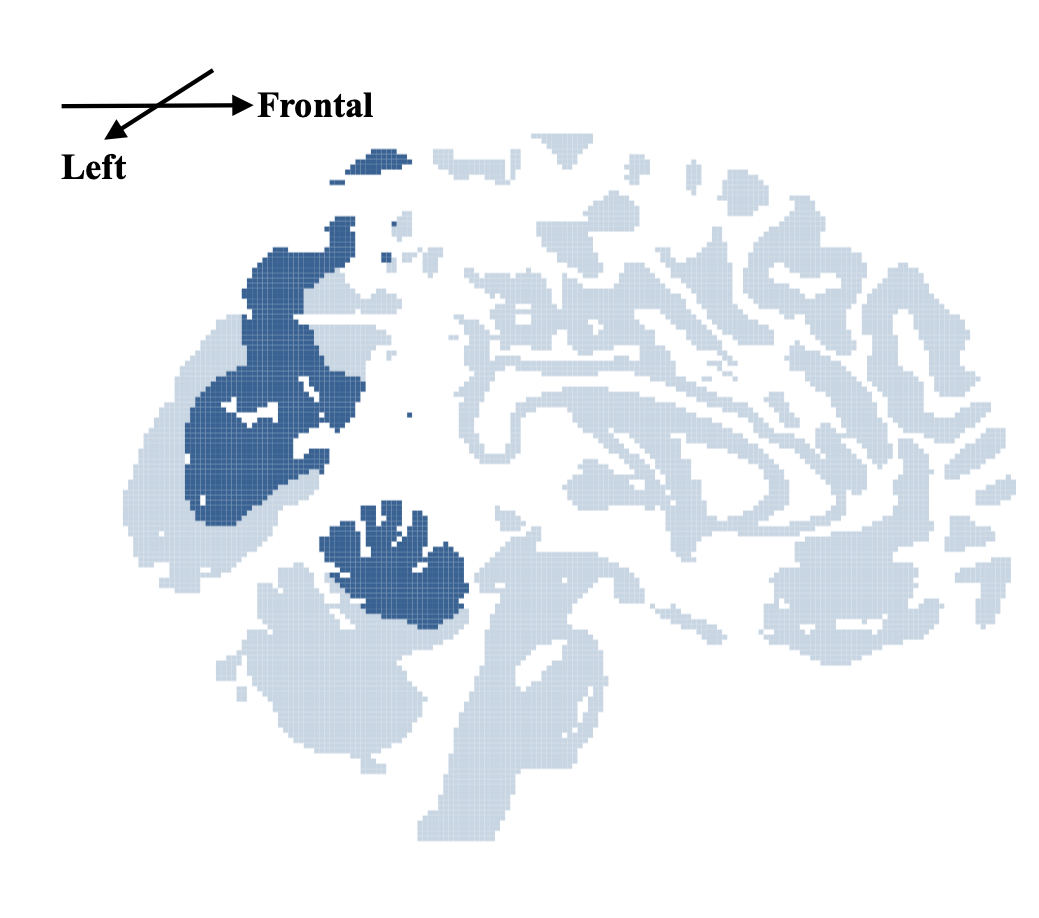}
        \vspace{-0.5em}
        \caption{Side view}
    \end{subfigure}
  	\caption{The orthographic projections of a brain (light blue), where the 33 selected ROIs using the HCP data are marked in dark blue.}
  	  	\label{img:real_roi}
\end{figure}


To ensure the robustness of our results, we randomly divide the 549 individuals into a training set ($80\%$) and a validation set ($20\%$) for a total of 200 times. We select the optimal tuning parameters of our model by minimizing the averaged mean squared prediction error (MSPE) on the 200 validation sets. 
We find 33 ROIs that are consistently selected by our proposed method across all 200 repetitions. 
In Figure \ref{img:real_roi}, we provide three projection views of the brain and mark the physical locations of the selected ROIs. Our results suggest that fluid-intelligence-related ROIs are distributed in multiple brain regions, including those on the prefrontal and parietal cortices. These findings agree with the literature \citep{duncan2000neural, jung2007parieto} that fluid intelligence, considered a complex cognitive ability that involves various cognitive processes, is typically associated with multiple brain regions.

\section{Summary}\label{sec:summary}

Our RKHS-based functional elastic-net method is different from existing high-dimensional functional linear regression methods in two important ways. First, we do not express the functional predictors and the coefficient functions using the same set of basis functions, which offers the extra flexibility to choose the reproducing kernel based on the application and better numerical performance when the functional predictors and the coefficient functions are misaligned. Second, our penalty consists of two parts: a lasso-type penalty on the normal of the prediction error to enforce sparsity and a ridge penalty that regularizes the smoothness of the coefficient function for better prediction. Our simulations show that both penalties are important and that the best performance in terms of variable selection, estimation, and prediction is achieved by finding the best trade-off between the two penalties. 
We also derived a sharp non-asymptotic probability bound on the event of our method achieving variable selection consistency, while assuming the functional predictors are non-degenerative random elements in infinite-dimensional Hilbert spaces. Our theory also suggests a bound for the smallest signal size that can be detected by the functional elastic-net method. Our investigation of the minimax optimal rate for high-dimensional FLM is completely new, and we show that our post-selection refined RKHS estimator achieves the oracle minimax optimal excessive risk. The efficiency gain from using the refined estimator is also demonstrated through simulation studies.

Handling sparsely or irregularly observed functional covariates is an important yet nontrivial extension. There has been some recent work addressing functional linear regression with a single discretely observed predictor under an FPCA framework \citep{zhou2023functional}. However, to the best of our knowledge, analogous results within an RKHS framework, particularly for high-dimensional settings, remain unavailable. We note this gap as a promising direction for future research.

\section*{Acknowledgement}
The authors thank the editor, the associate editor, and two anonymous referees for their many helpful and constructive comments, which led to significant improvements to our paper.

\vspace{-0.6em}
\begingroup
  \setstretch{1.6}          
  \setlength{\bibsep}{5pt}
  \setlength{\parskip}{2pt}
  \bibliographystyle{apalike}
  \bibliography{flm_vs_ref}
\endgroup

\clearpage\pagebreak\newpage
\setcounter{section}{0}
\renewcommand{\thesection}{S.\arabic{section}} 
\renewcommand{\thesubsection}{S.\arabic{section}.\arabic{subsection}}
\setcounter{page}{1}
\renewcommand{\theequation}{S.\arabic{equation}}
\renewcommand{\thefigure}{S.\arabic{figure}}
\renewcommand{\thetable}{S.\arabic{table}}
\setcounter{table}{0}
\setcounter{equation}{0} \setcounter{figure}{0} 
		
\renewcommand{\theenumi}{\arabic{enumi}}
\renewcommand{\thelemma}{S.\arabic{lemma}}
\setcounter{lemma}{0}
		
\begin{center}
{\Large{\bf Supplementary Material for ``Variable Selection and Minimax Prediction in High-Dimensional Functional Linear Models"}}\\ 
			\vskip1cm
\end{center}
\begin{center}
    \large{Xingche Guo, Yehua Li and Tailen Hsing}
\end{center}
	
\vskip1cm

This Supplemental Paper is organized as follows. We provide technical details in Section \ref{s:proofs}, where Section \ref{s:KKT} includes details for deriving the Karush-Kuhn-Tucker conditions in function spaces, Section \ref{s:partially_separable} introduces a special case for high-dimensional functional predictors with partially separable covariance structure.
All technical proofs are provided in Section \ref{sec:proofs}, where Section \ref{s:propositions} contains the proofs of the propositions, Section \ref{sec:proof_theorems_corollary} contains the proofs of Theorems \ref{thm:1}-\ref{thm:minimax_upper} as well as Corollary \ref{cor:R(f)}, Section \ref{s:lemmas} provides the proofs of the lemmas used in the main proof of Theorem \ref{thm:1}, and Section \ref{s:addi_lemmas} provides some additional technical lemmas. Substantiating examples are provided in Section \ref{s:MA_AR_cases} to support the technical assumptions made in the paper, and some additional simulation results are provided in Section \ref{sec:add_simulation}.


\section{Technical Details} \label{s:proofs}

\subsection{Karush-Kuhn-Tucker Conditions in Function Spaces} \label{s:KKT}

In this section, we introduce the Karush-Kuhn-Tucker (KKT) condition in function spaces 
and specialize it for \eqref{equ:mini1}.
First, we review the notion of Gateaux differentiability. For convenience, let $\mathcal{J}$ denote a mapping from some 
Hilbert space $\mathbb{H}$ to $\bbR$, where $\mathcal{J}$ is not necessarily linear. 
We note that the Hilbert space
assumption in the definition below could be relaxed depending on the context of the application.
 
\begin{definition}
(Gateaux differentiability)
For $f,\psi\in\mathbb{H}$, we say that  $\mathscr{J}$ is Gateaux differentiable at $f$ in the direction 
of $\psi$ if~ $\mathrm{lim}_{\tau \downarrow 0^{+}} \frac{\mathscr{J}(f + \tau \psi)  - \mathscr{J}(f) }{\tau}$ 
and  $\mathrm{lim}_{\tau \uparrow 0^{-}} \frac{\mathscr{J}(f + \tau \psi)  - \mathscr{J}(f) }{\tau}$ exist and are equal. The common limit in this case is denoted by $\mathcal{D}_{\mathcal{J}}(f; \psi)$ 
and is referred to as the Gateaux derivative of $\mathscr{J}$at $f$ in the direction of $\psi$.
If $\mathcal{D}_{\mathcal{J}}(f; \psi)$ is defined for all $\psi\in\mathbb{H}$, we
say that $\mathcal{J}$ is Gateaux differentiable at $f$.
 \end{definition}
 
Clearly, if $\mathcal{J}$ is Gateaux differentiable at $f$  then $\mathcal{D}_{\mathcal{J}}(f; \cdot)
\in \mathfrak{B}(\mathbb{H}, \mathbb{R})$, 
the space of continuous linear functionals on $\mathbb{H}$. 
On the other hand, if $\mathcal{J}$ is convex but not necessarily Gateaux differentiable, then the useful notions of sub-derivative and sub-differential can be defined as follows.

\begin{definition}
(Sub-derivative and sub-differential) The Gateaux sub-differential of a convex functional $\mathcal{J}$ at $g$ 
is defined as the collection 
$\partial_{\mathcal{J}(g)} = \{\mathscr{A}
\in \mathfrak{B}(\mathbb{H}, \mathbb{R}): \mathcal{J}(f) \ge \mathcal{J}(g) +  \mathscr{A} (f-g) \mbox{ for all } f \in \mathbb{H}\}$ of linear functionals, where the
elements in $\partial_{\mathcal{J}(g)}$ are referred to as sub-derivatives.
\end{definition}

\begin{proposition}\label{prop:convex}
Any Gateaux differentiable mapping $\mathcal{J}$ from $\mathbb{H}$ to $\mathbb{R}$ is convex if and only if $\mathcal{J}(f) \ge \mathcal{J}(g) + \mathcal{D}_{\mathcal{J}}(g; f-g)$ for all $f, g\in \mathbb{H}$, in which case
$\mathcal{J}(g)$ is the global minimum of $\mathcal{J}(\cdot)$ if and only if $\mathcal{D}_{\mathcal{J}}(g; \cdot) \equiv 0$.
Suppose, on the other hand, that $\mathcal{J}$ is convex but not Gateaux differentiable. Then $\mathcal{J}(g)$ is the global minimum of $\mathcal{J}$ if and only if $0 \in \partial_{\mathcal{J}(g)}$.
\end{proposition}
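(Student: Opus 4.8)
The plan is to handle the three assertions in turn, in each case reducing the question to the behaviour of $\mathcal{J}$ along a line segment. Fix $f,g\in\mathbb{H}$ and introduce the scalar auxiliary function $\phi(\tau):=\mathcal{J}\big(g+\tau(f-g)\big)$ for $\tau\in[0,1]$. Convexity of $\mathcal{J}$ is equivalent to $\phi$ being convex on $[0,1]$ for every such pair, and, because $\mathcal{J}$ is assumed Gateaux differentiable, the one-sided derivative $\phi'(0^{+})$ exists and equals $\mathcal{D}_{\mathcal{J}}(g;f-g)$; moreover, by the remark preceding the proposition, $\mathcal{D}_{\mathcal{J}}(g;\cdot)\in\mathfrak{B}(\mathbb{H},\mathbb{R})$ is linear, a fact I will use repeatedly.

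For the first equivalence, the ``only if'' direction goes as follows: if $\mathcal{J}$ is convex then $\phi$ is convex on $[0,1]$, so the difference quotient $\tau\mapsto(\phi(\tau)-\phi(0))/\tau$ is nondecreasing on $(0,1]$; evaluating it at $\tau=1$ and comparing with its infimum $\phi'(0^{+})$ gives $\mathcal{J}(f)-\mathcal{J}(g)=\phi(1)-\phi(0)\ge\phi'(0^{+})=\mathcal{D}_{\mathcal{J}}(g;f-g)$. For the converse, suppose the first-order inequality holds for all pairs; fix $f_1,f_2\in\mathbb{H}$ and $\alpha\in(0,1)$, put $g=\alpha f_1+(1-\alpha)f_2$, apply the inequality to $(f_1,g)$ and to $(f_2,g)$, and take the convex combination of the two inequalities with weights $\alpha$ and $1-\alpha$. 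Since $\alpha(f_1-g)+(1-\alpha)(f_2-g)=0$, linearity of $\mathcal{D}_{\mathcal{J}}(g;\cdot)$ collapses the two derivative terms to $\mathcal{D}_{\mathcal{J}}(g;0)=0$, leaving $\alpha\mathcal{J}(f_1)+(1-\alpha)\mathcal{J}(f_2)\ge\mathcal{J}(g)$, which is convexity.

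For the second equivalence, assume $\mathcal{J}$ convex and Gateaux differentiable. If $\mathcal{D}_{\mathcal{J}}(g;\cdot)\equiv 0$, the first-order inequality just proved gives $\mathcal{J}(f)\ge\mathcal{J}(g)$ for every $f$, so $g$ is a global minimizer. Conversely, if $g$ is a global minimizer, then for every $\psi\in\mathbb{H}$ and every $\tau>0$ we have $(\mathcal{J}(g+\tau\psi)-\mathcal{J}(g))/\tau\ge 0$, hence in the limit $\mathcal{D}_{\mathcal{J}}(g;\psi)\ge 0$; applying this with $-\psi$ in place of $\psi$ and using linearity, $-\mathcal{D}_{\mathcal{J}}(g;\psi)=\mathcal{D}_{\mathcal{J}}(g;-\psi)\ge 0$, so $\mathcal{D}_{\mathcal{J}}(g;\psi)=0$ for all $\psi$. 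Finally, the third assertion is a direct unwinding of the definition of the Gateaux sub-differential: $0\in\partial_{\mathcal{J}(g)}$ means exactly that $\mathcal{J}(f)\ge\mathcal{J}(g)+0\cdot(f-g)=\mathcal{J}(g)$ for all $f\in\mathbb{H}$, i.e. that $g$ is a global minimizer of $\mathcal{J}$.

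I do not expect a genuine obstacle here; the only points requiring care are (i) the identification of the one-sided Gateaux derivative with $\phi'(0^{+})$ together with the elementary monotone-difference-quotient property of scalar convex functions, and (ii) ensuring that $\mathcal{D}_{\mathcal{J}}(g;\cdot)$ is truly linear — and hence odd — so that the sign-reversal step in the second equivalence and the vanishing step $\mathcal{D}_{\mathcal{J}}(g;0)=0$ in the first are both legitimate; this is precisely what Gateaux differentiability buys us via the remark cited above.
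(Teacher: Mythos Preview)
Your proof is correct and follows essentially the same route as the paper's: both prove the first-order inequality from convexity by passing to the limit of the difference quotient, prove convexity from the inequality by applying it twice at the convex combination point and using linearity of $\mathcal{D}_{\mathcal{J}}(g;\cdot)$ to collapse the derivative terms, and handle the minimizer characterizations by sign arguments and by unwinding the sub-differential definition. The only cosmetic differences are your explicit introduction of the scalar function $\phi$ and your use of oddness $\mathcal{D}_{\mathcal{J}}(g;-\psi)=-\mathcal{D}_{\mathcal{J}}(g;\psi)$ in place of the paper's two-sided sandwich $\frac{\mathcal{J}(g)-\mathcal{J}(g-\tau\psi)}{\tau}\le \mathcal{D}_{\mathcal{J}}(g;\psi)\le \frac{\mathcal{J}(g+\tau\psi)-\mathcal{J}(g)}{\tau}$.
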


With $\mathscrbf{T}_n$ and $\boldsymbol{g}_n$ defined in \eqref{equ:kkt}, the objective function $\ell(\boldsymbol{f})$ can be expressed as
\begin{align} \label{equ:mini1_1}
\ell(\boldsymbol{f})
= \sum_{i=1}^4 \ell_i(\boldsymbol{f}) + \frac{1}{2n} \|\boldsymbol{\varepsilon}_n\|_2^2, 
\end{align}
where 
$\ell_1(\boldsymbol{f}) = {1\over 2} \langle  \mathscrbf{T}_n(\boldsymbol{f} - {\boldsymbol{f}_0}), \boldsymbol{f} - {\boldsymbol{f}_0} \rangle_{2}$, 
$\ell_2(\boldsymbol{f}) = -\langle \boldsymbol{g}_n,  \boldsymbol{f} - {\boldsymbol{f}_0}  \rangle_{2}$, 
$\ell_3(\boldsymbol{f}) = \frac{\lambda_2}{2} \|\boldsymbol{f}\|_{2}^2$, $ \ell_4(\boldsymbol{f}) = \lambda_1\sum_{j=1}^p\left\| \Psi_j {f_j} \right\|_{2}, \ \boldsymbol{f} \in\mathbb{L}_2^p$.
The following proposition contains the key elements in minimizing $\ell(\boldsymbol{f})$ based on \eqref{equ:mini1_1}.

\begin{proposition} \label{pp:L1_L3}
The functionals $\ell_i,i=1,2,3$, are Gateaux differentiable at all $\boldsymbol{f}\in\mathbb{L}_2^p$, where
$\mathcal{D}_{ \ell_1 }(\boldsymbol{f}; \boldsymbol{\psi})$ $ = \langle \mathscrbf{T}_n(\boldsymbol{f} - \boldsymbol{f}_0), \boldsymbol{\psi} \rangle_{2}$, 
$\mathcal{D}_{\ell_2 }(\bdf; \boldsymbol{\psi}) =  -\langle  \boldsymbol{g}_n , \boldsymbol{\psi} \rangle_{2}$, and $\mathcal{D}_{\ell_3}(\boldsymbol{f}; \boldsymbol{\psi}) = \lambda_2 \langle \boldsymbol{f}, \boldsymbol{\psi} \rangle_{2}$.
The sub-differential of $\ell_4$ at $\boldsymbol{f}$ contains all functionals of the form $\lambda_1\left\langle\boldsymbol{\omega}, \cdot\right\rangle_2$, $\boldsymbol{\omega}\in\mathbb{L}_2^p$, such that $\omega_j =\frac{\Psi_j^2 f_j}{\| \Psi_j f_j \|_2}$ if $f_j\not =0$ and $\omega_j = \Psi_j\eta_j$ for any arbitrary $\eta_j$ with $\|\eta_j\|_2\le 1$ if $f_j=0$.

 \end{proposition}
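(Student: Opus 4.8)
The three smooth pieces $\ell_1,\ell_2,\ell_3$ are handled by evaluating the two one-sided difference quotients in the definition of the Gateaux derivative directly. For $\ell_2$, which is affine in $\boldsymbol{f}$, we get $\ell_2(\boldsymbol{f}+\tau\boldsymbol{\psi})-\ell_2(\boldsymbol{f}) = -\tau\langle\boldsymbol{g}_n,\boldsymbol{\psi}\rangle_2$ exactly, so the quotient equals $-\langle\boldsymbol{g}_n,\boldsymbol{\psi}\rangle_2$ for every $\tau\neq 0$ and both limits agree. For $\ell_1$ and $\ell_3$, which are quadratic, I expand using that $\mathscrbf{T}_n$ is self-adjoint (being a sample covariance operator): $\ell_1(\boldsymbol{f}+\tau\boldsymbol{\psi}) = \ell_1(\boldsymbol{f}) + \tau\langle\mathscrbf{T}_n(\boldsymbol{f}-\boldsymbol{f}_0),\boldsymbol{\psi}\rangle_2 + \tfrac{\tau^2}{2}\langle\mathscrbf{T}_n\boldsymbol{\psi},\boldsymbol{\psi}\rangle_2$ and $\ell_3(\boldsymbol{f}+\tau\boldsymbol{\psi}) = \ell_3(\boldsymbol{f}) + \tau\lambda_2\langle\boldsymbol{f},\boldsymbol{\psi}\rangle_2 + \tfrac{\tau^2\lambda_2}{2}\|\boldsymbol{\psi}\|_2^2$; dividing by $\tau$ and letting $\tau\to 0$ from either side annihilates the $O(\tau)$ remainder and produces the stated derivatives. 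Each $\mathcal{D}_{\ell_i}(\boldsymbol{f};\cdot)$ is a bounded linear functional by Cauchy--Schwarz and the boundedness of $\mathscrbf{T}_n$, so it lies in $\mathfrak{B}(\mathbb{L}_2^p,\mathbb{R})$.

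The real content is the claim about $\ell_4(\boldsymbol{f}) = \lambda_1\sum_{j=1}^p\|\Psi_j f_j\|_2$. Since this is a finite sum over coordinates, it suffices to show that for each $j$ the scalar functional $h\mapsto\|\Psi_j h\|_2$ on $\mathbb{L}_2[0,1]$ has $\langle\omega_j,\cdot\rangle_2$ among its sub-derivatives at $f_j$, and then add the $p$ inequalities. Here Condition C.\ref{ass:a1} is what makes the candidate $\boldsymbol{\omega}$ well defined: because $\Psi_j$ is self-adjoint with eigenvalues in $[C_{\min},C_{\max}]$ and $C_{\min}>0$, it is invertible, so $\Psi_j f_j\neq 0$ whenever $f_j\neq 0$ and $\omega_j = \Psi_j^2 f_j/\|\Psi_j f_j\|_2$ is a legitimate element of $\mathbb{L}_2[0,1]$ with $\|\omega_j\|_2\le C_{\max}^2/C_{\min}$; when $f_j=0$ we have $\omega_j=\Psi_j\eta_j$ with $\|\omega_j\|_2\le C_{\max}$. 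Either way $\boldsymbol{\omega}=(\omega_1,\dots,\omega_p)^\top\in\mathbb{L}_2^p$ and $\lambda_1\langle\boldsymbol{\omega},\cdot\rangle_2\in\mathfrak{B}(\mathbb{L}_2^p,\mathbb{R})$.

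For the coordinatewise sub-derivative inequality I split into two cases. If $f_j=0$, then for any $h\in\mathbb{L}_2[0,1]$, self-adjointness of $\Psi_j$ and Cauchy--Schwarz give $\langle\omega_j,h-0\rangle_2 = \langle\eta_j,\Psi_j h\rangle_2 \le \|\eta_j\|_2\|\Psi_j h\|_2 \le \|\Psi_j h\|_2$, which is exactly the required inequality $\|\Psi_j h\|_2 \ge \|\Psi_j\cdot 0\|_2 + \langle\omega_j,h-0\rangle_2$. If $f_j\neq 0$, the map $h\mapsto\|\Psi_j h\|_2$ is convex (the norm composed with a linear operator) and Gateaux differentiable at $f_j$, with derivative in direction $\psi$ obtained by differentiating $\tau\mapsto(\|\Psi_j f_j\|_2^2 + 2\tau\langle\Psi_j f_j,\Psi_j\psi\rangle_2 + \tau^2\|\Psi_j\psi\|_2^2)^{1/2}$ at $\tau=0$, namely $\langle\Psi_j f_j/\|\Psi_j f_j\|_2,\Psi_j\psi\rangle_2 = \langle\omega_j,\psi\rangle_2$ by self-adjointness; the standard convexity argument (cf.\ Proposition \ref{prop:convex}) then yields $\|\Psi_j h\|_2 \ge \|\Psi_j f_j\|_2 + \langle\omega_j,h-f_j\rangle_2$ for all $h$. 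Summing the $p$ inequalities and multiplying by $\lambda_1$ gives $\ell_4(\boldsymbol{g}) \ge \ell_4(\boldsymbol{f}) + \lambda_1\langle\boldsymbol{\omega},\boldsymbol{g}-\boldsymbol{f}\rangle_2$ for all $\boldsymbol{g}\in\mathbb{L}_2^p$, i.e.\ $\lambda_1\langle\boldsymbol{\omega},\cdot\rangle_2\in\partial_{\ell_4(\boldsymbol{f})}$. I expect no genuine difficulty here; the only point requiring care is verifying $\Psi_j f_j\neq 0$ so that the normalized expression is meaningful, which is precisely where the lower eigenvalue bound $C_{\min}>0$ of Condition C.\ref{ass:a1} enters, and the rest is routine componentwise bookkeeping.
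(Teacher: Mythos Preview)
Your proposal is correct and follows exactly the standard route the paper has in mind: direct expansion of the difference quotient for the quadratic/affine pieces $\ell_1,\ell_2,\ell_3$, and a coordinatewise Cauchy--Schwarz argument (using self-adjointness of $\Psi_j$ and the lower eigenvalue bound $C_{\min}>0$ from Condition C.\ref{ass:a1}) for the sub-differential of $\ell_4$. The paper treats this proposition as routine and the supplementary material does not spell out a separate argument beyond what you have written; there is nothing to add.
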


Note that the KKT condition \eqref{equ:kkt} can be easily derived from Propositions \ref{prop:convex} and \ref{pp:L1_L3}. The proofs for Propositions \ref{prop:convex} and \ref{pp:L1_L3} are given in the Supplementary Material.

\subsection{Partially Separable Covariance Structure}
\label{s:partially_separable}

To gain a deeper understanding of Conditions C.\ref{ass:a2}-C.\ref{ass:a6}, we consider functional predictors with a partially separable covariance structure \citep{Zapata2021}, namely,
\ben
    \scrbfT^{(\calS,\calS)}=\sum_{k=1}^\infty \BA_k \psi_k \otimes \psi_k,\label{equ:partial_separable}
\een
where $\{\psi_k, k\ge 1\}$ are orthonormal functions in $\bbL_2[0,1]$ and $\{\BA_k, k\ge 1\}$ are a sequence of $q \times q$ covariance matrices. Further, consider $\BA_k = \nu_k \BR$, with $\nu_1 \ge \nu_2 \ge \cdots >0$ a sequence of eigenvalues and $\BR$ a $q\times q$ correlation matrix. In this setting, $\{X_{j}, j\in \calS\}$ share the same eigenvalues and eigenfunctions, and their principal component scores have the same correlation structure across different order $k$. To satisfy Condition C.\ref{ass:a2}, we must have $\nu_1=1$ and $\sum_{k\ge 1} \nu_k \le \tau <\infty$. To find the upper bound for $\varkappa(\lambda_2)$, first note that
\bse
    \scrbfT^{(\calS,\calS)} (\scrbfT^{(\calS,\calS)}_{\lambda_2})^{-1} = \sum_{k=1}^\infty \BA_k (\BA_k +\lambda_2 \BI)^{-1} \psi_k \otimes \psi_k =: \sum_{k=1}^\infty \BB_k \psi_k \otimes \psi_k,
    %
%
\ese
where $\BB_k = \BR (\BR +\vartheta_k \BI)^{-1}$ and $\vartheta_k= \lambda_2/ \nu_k \to \infty$ as $k\to \infty$.
Writing $\BB_k= \{B_{k,jj'}\}_{j,j'=1}^q$, it follows that
\ben\label{eq:infty_norm_bound}
    \vertiii{\scrbfT^{(\calS,\calS)} (\scrbfT^{(\calS,\calS)}_{\lambda_2})^{-1}}_{\infty, \infty} \le \max_{1\le j \le q} \sum_{j'=1}^q \max_k |B_{k,jj'}|.
\een
In Section \ref{s:MA_AR_cases} of the supplementary material, we examine two specific scenarios where $\BR$ is either a $MA(1)$ or $AR(1)$ correlation matrix. We find that the upper bound of $\varkappa(\lambda_2)$ is equal to some constant independent of $\lambda_2$ and the true signal size $q$. 
Furthermore, we find that Condition C.\ref{ass:a6} holds for all legitimate $MA(1)$ correlation matrices and for $AR(1)$ correlation matrices characterized by an autoregressive coefficient less than $1/3$.

\section{Technical Proofs}\label{sec:proofs}

\subsection{Proof of Propositions} \label{s:propositions}

\textbf{Proof of Proposition \ref{lemma:solution_form}}
\begin{proof}
Rewrite the minimization function (\ref{equ:mini1}),
\begin{align*} 
	\ell(\boldsymbol{f} )
	:=\frac{1}{2n} \sum_{i=1}^n \left(Y_i -  \sum_{j=1}^p \langle \widetilde{X}_{ij},  f_j \rangle_{2}  \right)^2 
	+  \lambda_1 \sum_{j=1}^p \|\Psi_j f_j \|_{2} + \frac{\lambda_2}{2} \sum_{j=1}^p \|f_j\|_{2}^2.
\end{align*}
The minimizer $\widetilde{f}_j(t)$ can always be represented in the form
\begin{align*}
	\widetilde{f}_j(t) = \widehat{f}_j(t) + \eta_j(t),
\end{align*}
where $\widehat{f}_j(\cdot) = \sum_{i=1}^n c_{ij} \widetilde{X}_{ij}(\cdot) \in \mathbb{M}_{nj}$, and $\eta_j(\cdot)  \in \mathbb{M}_{nj}^{\perp}$. 
Therefore, we have $ \langle \widetilde{X}_{ij},  \widetilde{f}_j \rangle_{2} = \langle \widetilde{X}_{ij},  \widehat{f}_j \rangle_{2}$,  $\| \widetilde{f}_j \|_2^2  = \| \widehat{f}_j \|_2^2 + \left\| \eta_j \right\|_2^2$, and 
$ \| \Psi_j \widetilde{f}_j \|_2^2  = \| \Psi_j \widehat{f}_j \|_2^2 + \| \Psi_j \eta_j \|_2^2$.
The last equation holds by Condition (C.\ref{ass:a1}).
Therefore, $\widetilde{f}_j(t)$ is the minimizer when $\eta_j \equiv 0$.
\end{proof}

\noindent {\bf Proof of Proposition \ref{theorem:kkt}} 

\begin{proof}
    \noindent
The KKT condition \eqref{equ:kkt} follows readily from Propositions \ref{prop:convex} and \ref{pp:L1_L3}.
We can show the existence of functional KKT solution by showing that the minimizer of (\ref{equ:mini1_1}) exists. Note that \eqref{equ:mini1_1} can be reformulated as a constrained quadratic programming problem:
\begin{align*}
\mbox{min}_{\boldsymbol{f}} \left\{\ell_1(\boldsymbol{f}) + \ell_2(\boldsymbol{f})
\right\} \mbox{ such that } \ell_3(\boldsymbol{f})
\le C_1 \mbox{ and } \ell_4(\boldsymbol{f})
\le C_2.
\end{align*}
where $(C_1,C_2)$ here have a one-to-one correspondence with the regularization parameters $(\lambda_1, \lambda_2)$ via the Lagrangian duality. It follows from
Proposition \ref{lemma:solution_form} that the solution can be found in a finite-dimensional subspace. 
Therefore, the above minimization problem involves a continuous finite-dimensional quadratic objective function over a compact set. By Weierstrass' extreme value theorem, the minimum is always achieved.
To show uniqueness, first note that there is either a unique solution or an (uncountably) infinite number of solutions. This is because if $\boldsymbol{f}_1$ and $\boldsymbol{f}_2$ are two minimizers, then by convexity $\ell(\alpha \boldsymbol{f}_1 + (1-\alpha) \boldsymbol{f}_2)  \le \alpha  \ell(\boldsymbol{f}_1) + (1-\alpha)  \ell(\boldsymbol{f}_2)$, and hence
\begin{align} \label{e:convex_ell}
\ell(\alpha \boldsymbol{f}_1 + (1-\alpha) \boldsymbol{f}_2) = \ell(\boldsymbol{f}_1) = \ell(\boldsymbol{f}_2) \mbox{ for all $\alpha \in (0,1)$}.
\end{align}
If $\boldsymbol{f}_1\not=\boldsymbol{f}_2$, then by the strict convexity of $\ell_3$ we have $\ell_3(\alpha \boldsymbol{f}_1 + (1-\alpha) \boldsymbol{f}_2)  < \alpha  \ell_3(\boldsymbol{f}_1) + (1-\alpha)  \ell_3(\boldsymbol{f}_2)$. Since $\ell_1,\ell_2$ and $\ell_4$ are all convex and in view of \eqref{equ:mini1_1}, the relationsip \eqref{e:convex_ell} cannot hold. Thus, $\boldsymbol{f}_1=\boldsymbol{f}_2$.

\end{proof}

\noindent {\bf Proof of Proposition \ref{theorem:eigenvalue_rela}} 
\begin{proof}
    Write the spectral decomposition of $\Tjj$ as $\Tjj = \sum_{k \ge 1} \nu_{jk} \eta_{jk} \otimes \eta_{jk}$ where $\{ \nu_{jk} \}_{k\ge 1}$ are the eigenvalues of $\Tjj$ in decreasing order, and $\{ \eta_{jk} \}_{k \ge 1}$ are the corresponding eigenfunctions. Define 
    \begin{align*}
        \Tjj_{m} = \Pi_{j,m} \Tjj \Pi_{j,m} = \sum_{k=1}^m \nu_{jk} \eta_{jk} \otimes \eta_{jk}
    \end{align*}
    where $\Pi_{j,m} = \sum_{k=1}^m \eta_{jk} \otimes \eta_{jk}$ is the projection operator onto the m-dimensional principal components of $\Tjj$. Recall that $\QSS = \mbox{diag}( \Tjj )_{1 \le j \le q}$. It is straightforward that 
    \begin{align*}
        \QSS_{\alpha, m} = \boldsymbol{\Pi}_{m} \QSS_{\alpha} \boldsymbol{\Pi}_m = \boldsymbol{\Pi}_{m} \left(\QSS + \alpha \mathscrbf{I} \right) \boldsymbol{\Pi}_m,
    \end{align*}
    where $\boldsymbol{\Pi}_m = \mbox{diag}(\Pi_{j,m})_{1 \le j \le q}$. 
    We know that $\QSS_{\alpha, m} \rightarrow \QSS_{\alpha}$ as $m \rightarrow \infty$.
    Define 
    \begin{align*}
        \TSS_{\alpha,m} = \boldsymbol{\Pi}_m \TSS_{\alpha} \boldsymbol{\Pi}_m = \boldsymbol{\Pi}_{m} \left(\TSS + \alpha \mathscrbf{I} \right) \boldsymbol{\Pi}_m.
    \end{align*}
    Note that
    \begin{align}
        \mathscr{T}^{(j_1,j_2)} &= \mathscr{T}^{(j_1,j_2)}_m + \E \left( \Pi_{j_1, m} \wt{X}_{j_1} \otimes \Pi_{j_2, m}^c \wt{X}_{j_2} \right) + \E \left( \Pi_{j_1, m}^c \wt{X}_{j_1} \otimes \Pi_{j_2, m} \wt{X}_{j_2} \right) \label{equ:approx_Tjj} \\
        & \quad\quad + \E \left( \Pi_{j_1, m}^c \wt{X}_{j_1} \otimes \Pi_{j_2, m}^c \wt{X}_{j_2} \right), \nonumber
    \end{align}
    where $\Pi_{j, m}^c = \sum_{k > m} \eta_{jk} \otimes \eta_{jk}$. By Cauchy–Schwarz inequality, for any $f_1, f_2 \in \mathbb{L}_2$,
    \begin{align*}
        \E \left| \left\langle \Pi_{j_1, m} \wt{X}_{j_1}, f_1 \right\rangle_2 \left\langle \Pi_{j_2, m}^c \wt{X}_{j_2} , f_2 \right\rangle_2 \right| &\le \left\| \mathscr{T}^{(j_1, j_1)}_m f_1 \right\|_2 \left\| \left( \mathscr{T}^{(j_2, j_2)} - \mathscr{T}^{(j_2, j_2)}_m \right) f_2 \right\|_2.
    \end{align*}
    As $m$ approaches infinity, the second term on the right-hand side of \eqref{equ:approx_Tjj} converges to 0. Similarly, the third and fourth terms also converge to 0. As a result, we show that $\TSS_{\alpha, m} \rightarrow \TSS_{\alpha}$ as $m \rightarrow \infty$.

    Note that $\TSS_{\alpha, m}$ and $\QSS_{\alpha, m}$ have one-to-one mapping to a vector space of at most $mq$ dimensions. According to \citesupp{lu2000some}, there exists a relationship between the eigenvalues of $\TSS_{\alpha, m}$ and $\QSS_{\alpha, m}$ as follows:
    \begin{align*}
        \Lambda_k \left(\TSS_{\alpha,m} \right) \le \Lambda_k \left(\QSS_{\alpha,m} \right) \vertiii{ \left( \QSS_{\alpha,m} \right)^{-1/2} \TSS_{\alpha,m} \left( \QSS_{\alpha,m} \right)^{-1/2} }_{2,2}.
    \end{align*}
    By the definition of operator norm, 
    \begin{align*}
        &\vertiii{ \left( \QSS_{\alpha,m} \right)^{-1/2} \TSS_{\alpha,m} \left( \QSS_{\alpha,m} \right)^{-1/2} }_{2,2}  \\
        & = \vertiii{ \boldsymbol{\Pi}_{m} \left( \QSS_{\alpha} \right)^{-1/2} \boldsymbol{\Pi}_{m} \TSS_{\alpha} \boldsymbol{\Pi}_{m} \left( \QSS_{\alpha} \right)^{-1/2} \boldsymbol{\Pi}_{m} }_{2,2}  \\
        & \le \vertiii{ \left( \QSS_{\alpha} \right)^{-1/2}  \TSS_{\alpha} \left( \QSS_{\alpha} \right)^{-1/2} \ }_{2,2} \\
        & \le b.
    \end{align*}
    The last inequality holds due to Condition C.\ref{ass:a8}. 
    Finally, let $m \rightarrow \infty$ and $\alpha \rightarrow 0$, we have
    \begin{align*}
        \Lambda_k \left(\TSS_{\alpha,m} \right) \rightarrow \Lambda_k \left(\TSS \right), \quad \Lambda_k \left(\QSS_{\alpha,m} \right) \rightarrow \Lambda_k \left(\QSS \right).
    \end{align*}
\end{proof}

\noindent{\bf Proof of Proposition \ref{theorem:kkt_algoriothm}}
\begin{proof}
 The convex program (\ref{equ:mini_algo}) can be reformulated as a constrained quadratic program
 \begin{align*}
         \min_{\boldsymbol{d}_j \in \mathbb{R}^{M_j}} \left\{ \frac{1}{2} \boldsymbol{d}_j^{\top} \boldsymbol{\Omega}_{j} \boldsymbol{d}_j -  \boldsymbol{\varrho}_j^{\top} \boldsymbol{d}_j \right\}, \quad \mbox{such that} \quad \| \boldsymbol{d}_j \|_2 \le C_1,
 \end{align*}
 where the regularization parameter $\lambda_1$ and constraint level
$C_1$ are in one-to-one correspondence via Lagrangian duality. 
As a result, the above minimization problem involves a continuous finite-dimensional quadratic objective function over a compact set. The Weierstrass’ extreme value theorem guarantees that the minimum is always achieved. According to the Karush-Kuhn-Tucker (KKT) condition to \eqref{equ:mini_algo}
\begin{align}
 \boldsymbol{\Omega}_{j} \boldsymbol{d}_j - \boldsymbol{\varrho}_j + \lambda_1 \boldsymbol{r}_j = \boldsymbol{0}, \label{equ:kkt_algo}
\end{align}
where $\boldsymbol{r}_j$ denotes the sub-gradient of $\| \boldsymbol{d}_j \|_2$ such that $\|\boldsymbol{r}_j \|_2 \le 1$ and $\boldsymbol{r}_j =  \| \boldsymbol{d}_j \|_2^{-1} \boldsymbol{d}_j$ holds for $\boldsymbol{d}_j \neq 0$. When $\| \boldsymbol{\varrho}_j \|_2 \le \lambda_1$, suppose $\boldsymbol{d}_j \neq 0$, according to \eqref{equ:kkt_algo}, we have
\begin{align*}
    \lambda_1 + \Lambda_{\min}\left( \boldsymbol{\Omega}_j \right) \| \boldsymbol{d}_j \|_2 \le \| \boldsymbol{\varrho}_j \|_2 \le  \lambda_1 + \Lambda_{\max}\left( \boldsymbol{\Omega}_j \right) \| \boldsymbol{d}_j \|_2,
\end{align*}
where $\Lambda_{\min}\left( \boldsymbol{\Omega}_j \right)$ and $\Lambda_{\max}\left( \boldsymbol{\Omega}_j \right)$ represent the smallest and largest eigenvalues of the $\boldsymbol{\Omega}_j$, respectively. In order words, when $\| \boldsymbol{\varrho}_j \|_2 \le \lambda_1$, we must have $\boldsymbol{d}_j = 0$. On the other hand, when $\| \boldsymbol{\varrho}_j \|_2 > \lambda_1$, suppose $\boldsymbol{d}_j = 0$, according to \eqref{equ:kkt_algo}, we have $\boldsymbol{\varrho} = \lambda_1 \boldsymbol{r}_j$, and hence $\| \boldsymbol{\varrho}_j \|_2 \le \lambda_1$. This statement presents a contradiction, therefore, 
\begin{align*}
    \left\{
    \begin{aligned}
        \boldsymbol{d}_j = 0, & \quad \mbox{if} \ \| \boldsymbol{\varrho}_j \|_2 \le \lambda_1, \\ 
        \boldsymbol{d}_j \neq 0, & \quad \mbox{if} \ \| \boldsymbol{\varrho}_j \|_2 > \lambda_1.
    \end{aligned}
    \right.
\end{align*}
\end{proof}

\noindent{\bf Proof of Proposition \ref{prop:convex}}
\begin{proof}
To begin with, assume $\mathcal{J}$ is convex and Gateaux differentiable.
Suppose $\mathcal{J}(f) \ge \mathcal{J}(g) + \mathcal{D}_{\mathcal{J}}(g; f-g)$ for all $f, g\in \mathbb{H}$.
Define $h = \lambda f + (1-\lambda) g$, then
$\mathcal{J}(f) \ge \mathcal{J}(h) + \mathcal{D}_{\mathcal{J}}(h; f-h)$ and 
$\mathcal{J}(g) \ge \mathcal{J}(h) + \mathcal{D}_{\mathcal{J}}(h; g-h)$, by the linear combination of the two inequalities, we have:
\begin{align*}
\lambda \mathcal{J}(f) + (1-\lambda) \mathcal{J}(g) \ge
\mathcal{J}(h) + \mathcal{D}_{\mathcal{J}}(h; 0) = \mathcal{J}(\lambda f + (1-\lambda) g),
\end{align*}
which shows convexity. On the other hand, by convexity, for all $f, g \in \mathbb{H}$, $\lambda \in (0,1)$, we have
\begin{align*}
\mathcal{J}(f) - \mathcal{J}(g) \ge \frac{\mathcal{J}(g + \lambda (f-g)) - \mathcal{J}(g)}{\lambda},
\end{align*}
let $\lambda \downarrow 0^{+}$, then the right-hand side will go to $\mathcal{D}_{\mathcal{J}}(g; f-g)$.

To find the global minimum of $\mathcal{J}(\cdot)$, suppose $\mathcal{D}_{\mathcal{J}}(g; \psi) = 0$ for all $\psi \in \mathbb{H}$, then $\mathcal{J}(g) \le \mathcal{J}(f) $ for all $f\in \mathbb{H}$.
On the other hand, by setting $f_1 = g + \tau \psi$, $f_2 = g - \tau \psi$, we have
\begin{align*}
\frac{\mathcal{J}(g) -  \mathcal{J}(g-\tau \psi)}{\tau}  \le  \mathcal{D}_{\mathcal{J}}(g; \psi) \le  \frac{\mathcal{J}(g+\tau \psi) -  \mathcal{J}(g)}{\tau}.
\end{align*}
suppose $\mathcal{J}(g)$ is the global minimum, the left side is smaller than 0 and the right side is greater than 0. By the definition of Gateaux differentiability, the limits on both sides exist and are equal when $\tau \rightarrow 0$. Therefore, $\mathcal{D}_{\mathcal{J}}(g; \psi)=0$ for all $\psi$.

Now assume $\mathcal{J}$ is convex but not Gateaux differentiable.
Then we can easily show $\mathcal{J}(g)$ is the global minimum of $\mathcal{J}$ if and only if $0 \in \partial_{\mathcal{J}(g)}$ using a similar derivation as above.

\end{proof}


\subsection{Proofs of Theorems and Corollary}\label{sec:proof_theorems_corollary}

\noindent \textbf{Proof for Theorem \ref{thm:1}} 

Recall that $\widehat{\boldsymbol{f}}$ is the solution of KKT condition \eqref{equ:kkt} and $\widehat{\mathcal{S}}=\big\{  i \in \{1,\dots,p\}:  \widehat{f}_i \neq 0 \big\}$.
Write $\wt\BX_n = (\wt\BX_{\calS}, \wt\BX_{\calS^c})$ by grouping the columns in $\calS$ and $\calS^c$. 
For $j \in \calS^c$, in the scenario where $\Tjj$ possesses finitely many nonzero eigenvalues, there exist infinitely many $f_{j}$ such that $\la f_{j}, \Tjj f_j \ra_2 = 0$, and those $f_j$ do not make contributions to the response. Without loss of generality, we assume that $\boldsymbol{f}_{0 \calS^c} = \boldsymbol{0}$, and we have $\bdf_0=(\bdf_{0\calS}^\top, {\pmb 0}^\top)^\top$. Similarly, partition $\wh \bdf= (\wh\bdf_\calS^\top, \wh \bdf_{\calS^c}^\top)^\top$, $\bdg_n=(\bdg_\calS^\top, \bdg_{\calS^c}^\top)^\top$ and $\bdomega=(\ws^\top, \wsc^\top)^\top$.
With the partitions defined above and those in Section \ref{sec:theory}, the KKT condition in (\ref{equ:kkt}) can be rewritten as
\begin{align}
\label{equ:kkt0}
\left(
\begin{matrix} 
\TnSS & \TnSSc \\
\TnScS & \TnScSc
\end{matrix}
\right)
 \left(
\begin{matrix} 
\fhats - \fzeros  \\
\fhatsc
\end{matrix}
\right) - 
 \left(
\begin{matrix} 
\gns  \\
\gnsc
\end{matrix}
\right) 
+ \lambda_2  \left(
\begin{matrix} 
\fhats  \\
\fhatsc
\end{matrix}
\right) + \lambda_1
 \left(
\begin{matrix} 
\ws  \\
\wsc
\end{matrix}
\right) = \boldsymbol{0}.
\end{align}

\noindent \textbf{Proof of (i) of Theorem \ref{thm:1}} 

{To utilize the Primal-Dual Witness argument in \cite{wainwright2009sharp}, let $\fchecks$ be the solution of the functional elastic-net problem knowing the true signal set $\calS$. In other words, $\fchecks$ is the value of $\fs$ that minimizes
\begin{align*}
    \frac{1}{2} 
 \left\langle  \TnSS(\fs-\fzeros),  \fs-\fzeros  \right\rangle_{2} -  \left\langle  \gns , \fs-\fzeros \right\rangle_{2} + \sum_{j\in \calS} \mathrm{Pen}(f_j; \lambda_1, \lambda_2).
\end{align*}
Using similar arguments as for Proposition \ref{theorem:kkt} , 
\begin{align}
	\TnSS(\fchecks - \fzeros) -  \gns + \lambda_2\fchecks + \lambda_1  \ws = 0,
 \label{equ:kkt1}
\end{align}
where $\ws=(\Psi_j \eta_j,j\in \calS)$ 
is the functional subgradient of $\ell_4$ for this problem described in Proposition \ref{theorem:kkt} and \ref{pp:L1_L3}. For convenience, let 
$\bdeta_{\cal W}=(\eta_j, j\in {\cal W})$ for any set ${\cal W}$.
By Proposition \ref{theorem:kkt}, the solution to the functional elastic-net problem is unique and satisfies the KKT equation \eqref{equ:kkt0}. If we can show that $ \left(\fchecks^{\top}, \boldsymbol{0}^\top\right)^{\top}$ solves \eqref{equ:kkt0}, then $\widehat{\boldsymbol{f}}=\left(\fchecks^{\top}, \boldsymbol{0}^\top\right)^{\top}$ and $\wh \calS\subset \calS$. It remains to show
\begin{align}
	& \TnScS(\fchecks - \fzeros) -  \gnsc + \lambda_1  \wsc = 0,  \label{equ:kkt2}
\end{align}
for some $\wsc$ satisfying $\wsc=(\Psi_j \eta_j, j\in \calS^c)$ where $\|\boldsymbol{\eta}_{\mathcal{S}^c} \|_{\infty}\le 1$.  
However, by (\ref{equ:kkt1}), 
\begin{align}
	\fchecks - \boldsymbol{f}_{0\mathcal{S}} =   \left(\JnSS\right)^{-1}  \left( \gns - 
 \lambda_2 \fzeros- \lambda_1 \ws \right),
 \label{equ:f_diff}
\end{align}
and hence, upon combining (\ref{equ:kkt2}) and (\ref{equ:f_diff}), any $\wsc$ that solves \eqref{equ:kkt2} must satisfy
\begin{align}\label{equ:omega_S_c1}
\begin{split}
	\wsc  &:= \frac{1}{\lambda_1} \left\{ \gnsc - \TnScS  \left(\JnSS\right)^{-1} \gns \right\} \\
& \hspace{.6cm} + \TnScS \left(\JnSS\right)^{-1} \left(\frac{\lambda_2}{\lambda_1} \fzeros+\ws\right). 
\end{split}
\end{align}
By Condition C.\ref{ass:a1}, the existence of $\wsc$ satisfying \eqref{equ:kkt2} is guaranteed by
\begin{align} \label{e:thm1_i}
	\|\wsc\|_{\infty} \le C_{\min}.
\end{align}
}
The rest of this subsection will be focusing on \eqref{e:thm1_i}.

It is easy to see that, for any $\boldsymbol{f} \in \mathbb{L}_2^q[0,1]$, $	(\TnSS \boldsymbol{f})(t) = {1\over n} \int \wt \BX_{\calS}^{\top}(t) \wt\BX_{\calS}(u) \boldsymbol{f}(u)du$. 
The first term on the right-hand side of \eqref{equ:omega_S_c1} can be rewritten as
$	(\lambda_1 {n})^{-1} \wt \BX_{\calS^c}^{\top} ( \mathrm{I} - \boldsymbol{\Delta}_n ) \boldsymbol{\varepsilon}_n $,
where
\begin{align}
 \boldsymbol{\Delta}_n =  {1\over n} \int \wt \BX_{\calS}(u) \left\{ \left(\JnSS\right)^{-1} \wt \BX_{\calS}^{\top}\right\} (u) du.
 \end{align}
Thus, for all $ j \in \calS^c$,
\begin{align}
 \label{equ:ineq1}
\begin{split}
	\| \omega_j \|_{2} 
	& =  \bigg\| \frac{\sigma}{\lambda_1 {n}} \wt \BX_{\bullet j}^{\top} \bigg( \mathrm{I} - \boldsymbol{\Delta}_n \bigg) \boldsymbol{z}_n  + \TnjS \left(\JnSS\right)^{-1} \left(\frac{\lambda_2}{\lambda_1} \fzeros+\ws\right) \bigg\|_{2} \\
	& \le  \bigg\| \frac{\sigma}{\lambda_1 {n}} \wt \BX_{\bullet j}^{\top} \bigg( \mathrm{I} - \boldsymbol{\Delta}_n \bigg) \boldsymbol{z}_n \bigg\|_{2} 
	+ \bigg\| \TnjS \left(\JnSS\right)^{-1} \left(\frac{\lambda_2}{\lambda_1} \fzeros+\ws\right) \bigg\|_{2},
 \end{split}
\end{align}
where $\boldsymbol{z}_n = \sigma^{-1} \boldsymbol{\varepsilon}_n$ has covariance matrix equal to an identity matrix. 
If $\widehat{\calS} \not\subset \calS$ then \eqref{e:thm1_i} fails, and, by Lemmas \ref{lm:(i)} and \ref{lm:proof(ii)} below, 
\begin{align} 
\begin{split}
\mathbb{P} \left( \widehat{\calS} \not\subset \calS \right) 
& \le \mathbb{P} \left( \|\wsc\|_{\infty} > \left(1 - \frac{\gamma}{9}\right)C_{\min} \right) \\
& \le \exp\left( - D^{(1)} \lambda_1^2 n  \right)
+\exp\left( - D^{(2)} {\lambda_2^2 n \over q} \right).
\end{split}
    \label{equ:tail_bound_final_1}
\end{align}
Note that $\exp\left( - D^{(1)} \lambda_1^2 n  \right) \le \exp\left( - D^{(1)} C_{\max}^{-2} q {\lambda_2^2 n \over q}  \right)$ since $\lambda_1> C_{\max}^{-1} \lambda_2$. Applying Lemma \ref{lm:tail_sum} with $\epsilon=1/2$, we can bound the rhs of \eqref{equ:tail_bound_final_1} by the probability in \eqref{equ:vs_consistency_rate}, provided $\lambda_2^2 n / q > (2 \log 2) D^{-1}$, which is guaranteed by Condition \eqref{equ:lambda_12_bound_1} for sufficiently large $D_{2,2}^*$ in the condition.

To conclude the proof of part (i) of Theorem \ref{thm:1}, it remains to establish the following three lemmas, the proofs of which are in the Supplemental Material. 

\begin{lemma} \label{lm:tail_sum}
For $a_k, b_k > 0$, $k = 1, \dots, K$,
\begin{align*}
    \sum_{k=1}^K a_k \exp(-b_k x) \le \exp\left\{ - (1-\epsilon) b x \right\}
\end{align*}
for $x > (\epsilon b)^{-1} \log(Ka)$, where $\epsilon \in (0, 1)$, $a = \max_k a_k$ and $b = \min_k b_k$.
\end{lemma}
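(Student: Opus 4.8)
The plan is to collapse the sum into a single exponential by crude termwise bounds and then read off the threshold condition on $x$. First I would use that $0 < b = \min_k b_k \le b_k$ together with $x > 0$ to get $\exp(-b_k x) \le \exp(-bx)$ for every $k$; the positivity of $x$ is automatic in the only case of interest (namely $Ka \ge 1$, which holds in every invocation of the lemma in this paper), since then the threshold $(\epsilon b)^{-1}\log(Ka)$ is nonnegative. Combining this with $a_k \le a = \max_k a_k$ gives the termwise bound $a_k \exp(-b_k x) \le a\exp(-bx)$, and summing over $k = 1,\dots,K$ yields
\begin{align*}
\sum_{k=1}^K a_k \exp(-b_k x) \le K a \exp(-bx).
\end{align*}

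Next I would factor the right-hand side as
\begin{align*}
K a \exp(-bx) = \exp\{-(1-\epsilon)bx\}\cdot\exp\{\log(Ka) - \epsilon b x\},
\end{align*}
so that it suffices to verify that the second factor is at most $1$, i.e.\ that $\log(Ka) - \epsilon b x \le 0$. But this is precisely the hypothesis $x > (\epsilon b)^{-1}\log(Ka)$ after multiplying through by the positive number $\epsilon b$. Chaining the two displays then gives the asserted bound $\sum_{k=1}^K a_k \exp(-b_k x) \le \exp\{-(1-\epsilon)bx\}$.

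There is no genuine obstacle here; the single point that warrants a line of care is the positivity of $x$ needed for the monotonicity step $\exp(-b_k x)\le\exp(-bx)$. I would dispose of it by noting that $(\epsilon b)^{-1}\log(Ka)\ge 0$ exactly when $Ka\ge 1$, which is the situation in all applications of the lemma, so that the hypothesis on $x$ already forces $x>0$; alternatively one could simply append the assumption $x>0$ to the statement, which leaves the proof unchanged.
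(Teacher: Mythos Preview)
Your proof is correct and follows essentially the same approach as the paper's: bound the sum by $Ka\exp(-bx)$ via the crude termwise estimates, then absorb the prefactor $Ka$ into the exponent using the hypothesis on $x$. The only difference is cosmetic---the paper writes $Ka\exp(-bx)=\exp[-\{b-x^{-1}\log(Ka)\}x]$ and checks $b-x^{-1}\log(Ka)>(1-\epsilon)b$, which is your factorization rearranged---and your explicit remark on the positivity of $x$ is a point the paper leaves implicit.
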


\begin{lemma} \label{lm:(i)} 
Let $\gamma$ be as in Condition C.\ref{ass:a3}. Suppose $\lambda_1 >  D_1^* (\sigma+1) \tau^{1/2}  (C_{\min} \gamma)^{-1} \sqrt{{\log(p-q) \over n}}$ for some constant $D_1^*$.
We have
\begin{align*}
	& \pr\left(\mathrm{max}_{j \in \calS^c} \bigg\| \frac{\sigma}{\lambda_1 {n}} \wt \BX_{\bullet j}^{\top} \bigg( \mathrm{I} - \boldsymbol{\Delta}_n \bigg) \boldsymbol{z}_n  \bigg\|_{2} \ge \frac{ \gamma C_{\min}}{9}\right) \le \exp\left( - D^{(1)} \lambda_1^2 n  \right)
\end{align*}
where $D^{(1)} = D_2^* C_{\min}^2 \gamma^2 (\sigma + 1)^{-2} \tau^{-1}$ and $D_2^*$ is a universal constant.
\end{lemma}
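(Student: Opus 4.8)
The plan is to condition on the functional predictors $\wt\BX_n$. Since the Gaussian noise $\boldsymbol{z}_n=\sigma^{-1}\bdepsilon_n\sim\Normal(\boldsymbol{0},\mathrm{I}_n)$ is independent of the predictors, once $\wt\BX_n$ is frozen the matrix $\boldsymbol{\Delta}_n$ and the vector $\wt\BX_{\bullet j}$ become deterministic, and $W_j:=\wt\BX_{\bullet j}^\top(\mathrm{I}-\boldsymbol{\Delta}_n)\boldsymbol{z}_n$ is simply the image of $\boldsymbol{z}_n$ under a fixed linear map $\R^n\to\bbL_2[0,1]$, hence a centered Gaussian element of $\bbL_2[0,1]$. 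The first thing I would record is a structural identity: reading off the definition of $\boldsymbol{\Delta}_n$ and of $\JnSS=\TnSS+\lambda_2\scrI$, the matrix $\boldsymbol{\Delta}_n$ is the ridge-regularized hat matrix $\tfrac1n\wt\BX_\calS(\JnSS)^{-1}\wt\BX_\calS^\top$, so $\boldsymbol{0}\preceq\boldsymbol{\Delta}_n\prec\mathrm{I}$ and therefore $\|\mathrm{I}-\boldsymbol{\Delta}_n\|_2\le 1$. Writing $\BG_j=(\langle\wt X_{ij},\wt X_{i'j}\rangle_2)_{i,i'=1}^n$ for the Gram matrix of $\wt\BX_{\bullet j}$, a short computation gives $\E[\|W_j\|_2^2\mid\wt\BX_n]=\tr\{(\mathrm{I}-\boldsymbol{\Delta}_n)^2\BG_j\}\le\tr\BG_j=\sum_{i=1}^n\|\wt X_{ij}\|_2^2$, and the same bound shows that the Lipschitz constant of $\boldsymbol z\mapsto\|W_j(\boldsymbol z)\|_2$ is at most $(\lambda_{\max}(\BG_j))^{1/2}\le(\tr\BG_j)^{1/2}$.

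The second step is Gaussian concentration plus a union bound. By the concentration inequality for Lipschitz functionals of a standard Gaussian, conditionally on $\wt\BX_n$,
\[
\pr\!\left(\|W_j\|_2\ \ge\ (\tr\BG_j)^{1/2}+t\ \mid\ \wt\BX_n\right)\ \le\ \exp\!\left(-\frac{t^2}{2\,\tr\BG_j}\right),\qquad t>0,
\]
where I used $\E[\|W_j\|_2\mid\wt\BX_n]\le(\E[\|W_j\|_2^2\mid\wt\BX_n])^{1/2}\le(\tr\BG_j)^{1/2}$. Choosing $t=(\tr\BG_j)^{1/2}\sqrt{2\log(p-q)+2u}$ with $u:=D^{(1)}\lambda_1^2 n$ and taking a union bound over the $p-q$ indices $j\in\calS^c$ gives, conditionally on $\wt\BX_n$ and with probability at least $1-e^{-u}$,
\[
\max_{j\in\calS^c}\|W_j\|_2\ \le\ \Big(\max_{j\in\calS^c}\tr\BG_j\Big)^{1/2}\Big(1+\sqrt{2\log(p-q)+2u}\Big).
\]

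The third step removes the conditioning by controlling $\max_{j\in\calS^c}\tr\BG_j=\max_{j\in\calS^c}\sum_i\|\wt X_{ij}\|_2^2$. Under Condition~C.\ref{ass:a2}, each $\|\wt X_{ij}\|_2^2$ is a weighted sum of i.i.d.\ squared standard normals with weights the eigenvalues of $\Tjj$, hence sub-exponential with mean $\tr(\Tjj)\le\tau$ and top weight $\|\Tjj\|_2=1$; a Bernstein bound yields $\pr(\sum_i\|\wt X_{ij}\|_2^2>2n\tau)\le e^{-c_0 n}$, and a union bound over $\calS^c$ gives $\pr(\max_{j\in\calS^c}\tr\BG_j>2n\tau)\le e^{-D\lambda_1^2 n}$ — here I would use the two-sided bounds on $\lambda_1$ in \eqref{equ:lambda_12_bound_1}, which force $\log(p-q)=O(n)$ and $\lambda_1=O(1)$, to absorb the $\log(p-q)$ cost and rewrite $e^{-c_0 n}$ on the scale $\lambda_1^2 n$. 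On the complement of this event, substituting $\max_j\tr\BG_j\le 2n\tau$ and $\sqrt{u}/\sqrt{n}=\sqrt{D^{(1)}}\,\lambda_1=\sqrt{D_2^*}\,C_{\min}\gamma(\sigma+1)^{-1}\tau^{-1/2}\lambda_1$ into the Step~2 bound gives
\[
\frac{\sigma}{\lambda_1 n}\max_{j\in\calS^c}\|W_j\|_2\ \le\ \frac{\sigma\sqrt{2\tau}}{\lambda_1}\sqrt{\frac{1+2\log(p-q)}{n}}\ +\ 2\sqrt{D_2^*}\,\frac{\sigma}{\sigma+1}\,C_{\min}\gamma .
\]
The first term is at most $\gamma C_{\min}/18$ once $D_1^*$ in the lower bound on $\lambda_1$ is large enough, and the second term is at most $\gamma C_{\min}/18$ once the universal constant $D_2^*$ hidden in $D^{(1)}=D_2^*C_{\min}^2\gamma^2(\sigma+1)^{-2}\tau^{-1}$ is small enough; hence the sum is below $\gamma C_{\min}/9$. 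Adding the failure probabilities $e^{-u}+e^{-D\lambda_1^2 n}$ and relabelling universal constants yields the stated bound $\exp(-D^{(1)}\lambda_1^2 n)$.

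The main obstacle I anticipate is the constant bookkeeping of the last step rather than any hard new inequality: one must make the union-bound cost $\log(p-q)$, the Gaussian tail exponent (which should emerge as exactly $D^{(1)}\lambda_1^2 n$), and the auxiliary Gram-trace event all compatible, and the de-conditioning in particular leans on the fact that \eqref{equ:lambda_12_bound_1} pins both $\log(p-q)$ and $\lambda_1^2 n$ to the scale of $n$. A smaller, standard point is justifying the Hilbert-space Gaussian concentration cleanly — e.g.\ via the Borell--TIS inequality applied to $\|W_j\|_2=\sup_{\|g\|_2\le1}\langle W_j,g\rangle_2$ viewed as a supremum of a Gaussian process.
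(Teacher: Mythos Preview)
Your proposal is correct and follows the same architecture as the paper's proof: condition on $\wt\BX_n$, exploit $0\preceq\mathrm{I}-\boldsymbol{\Delta}_n\preceq\mathrm{I}$ to bound the conditional covariance trace by $\tr(\Tnjj)$, restrict to the good event $\{\tr(\Tnjj)\le 2\tau\}$ controlled via a sub-exponential tail bound, take a union bound over $\calS^c$, and finally absorb the $(p-q)$ prefactors and convert $e^{-cn}$ to $e^{-c'\lambda_1^2 n}$ using the two-sided constraints on $\lambda_1$ from \eqref{equ:lambda_12_bound_1}. The single methodological difference is the conditional tail inequality: you use Gaussian Lipschitz concentration (Borell--TIS) on $\|W_j\|_2$, which introduces a centering term $(\tr\BG_j)^{1/2}$ that you must then show falls below threshold after rescaling, whereas the paper applies a direct chi-square-type MGF bound (its Lemma~\ref{lm:large_deviation_2nd_ver}(i)) to the squared norm, avoiding the centering at the cost of a harmless constant prefactor. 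Both routes deliver the same exponent $\asymp C_{\min}^2\gamma^2\lambda_1^2 n/(\sigma^2\tau)$ up to universal constants, so the remaining work is exactly the constant bookkeeping you flagged.
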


\begin{lemma} \label{lm:proof(ii)} 
Let $\gamma$ be as in Condition C.\ref{ass:a3}. Suppose, for some constant $D_1^*$,
\begin{align*}
    \lambda_2 > D_1^* \frac{\tau (\rho_1 + 1 )}{(C_{\min}/C_{\max})^2 \gamma^2}  \max\left({q\log(p-q)\over n},\sqrt{q^2\over n} \right) 
    \quad \mbox{and} \quad \frac{\lambda_1}{\lambda_2} > \left({3\over\gamma} - 2 \right) C_{\max}^{-1}.
\end{align*}
Then
\begin{align*} 
	& \pr\left\{   \max_{j \in \calS^c} \left\| \TnjS \left(\JnSS\right)^{-1} \left(\frac{\lambda_2}{\lambda_1} \fzeros+\ws\right) \right\|_{2}
 \ge \left(1-{2\gamma \over 9}\right) C_{\min} \right\} \le \exp\left( - D^{(2)} {\lambda_2^2 n \over q} \right)
\end{align*}
where $D^{(2)} = D_2^*  (C_{\min}/C_{\max})^2 \gamma^2 (\rho_1+1)^{-2} \tau^{-1}$ and $D_2^*$ is a universal constant.
\end{lemma}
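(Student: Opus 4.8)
Following the primal--dual witness scheme used for part (i), the quantity to be controlled is of the form $\TnjS(\JnSS)^{-1}v$ with $v:=\tfrac{\lambda_2}{\lambda_1}\fzeros+\ws$, and the plan is to factor out $v$ through $\|v\|_\infty$ and reduce the problem to an operator-norm estimate. By the normalization $\|\fzeros\|_\infty=1$ and Condition C.\ref{ass:a1} (the eigenvalues of each $\Psi_j$ lie in $[C_{\min},C_{\max}]$, so each component $\omega_j$, whether equal to $\Psi_j^2\check f_j/\|\Psi_j\check f_j\|_2$ or to $\Psi_j\eta_j$ with $\|\eta_j\|_2\le1$, satisfies $\|\omega_j\|_2\le C_{\max}$), one has the \emph{deterministic} bound $\|v\|_\infty\le \lambda_2/\lambda_1+C_{\max}$. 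Hence, by definition of $\vertiii{\cdot}_{\infty,\infty}$,
\[
\max_{j\in\calS^c}\bigl\|\TnjS(\JnSS)^{-1}v\bigr\|_2\;\le\;\vertiii{\TnScS(\JnSS)^{-1}}_{\infty,\infty}\bigl(\lambda_2/\lambda_1+C_{\max}\bigr),
\]
so it remains to bound the random operator norm $\vertiii{\TnScS(\JnSS)^{-1}}_{\infty,\infty}$, which depends only on $\wt\BX_n$ and neither on $\bdepsilon_n$ nor on $\fchecks$. I would split it as $\vertiii{\TScS(\JSS)^{-1}}_{\infty,\infty}$ (population) plus $\vertiii{\TnScS(\JnSS)^{-1}-\TScS(\JSS)^{-1}}_{\infty,\infty}$ (error).

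\textbf{Population term.} Since $\langle\TSS f,f\rangle_2=0$ forces $\langle\wt\BX_{i\calS},f\rangle_2=0$ a.s. and hence $\TScS f=0$, we have $\ker(\TSS)\subseteq\ker(\TScS)$, which lets us write $\TScS(\JSS)^{-1}=\TScS(\TSS)^{-}\TSS(\JSS)^{-1}$. Submultiplicativity of $\vertiii{\cdot}_{\infty,\infty}$ together with Condition C.\ref{ass:a3} (and the identity $\vertiii{\TSS(\JSS)^{-1}}_{\infty,\infty}=\varkappa(\lambda_2)$, as $\JSS=\TSS_{\lambda_2}$) then gives $\vertiii{\TScS(\JSS)^{-1}}_{\infty,\infty}\le\vertiii{\TScS(\TSS)^{-}}_{\infty,\infty}\,\varkappa(\lambda_2)\le(C_{\min}/C_{\max})(1-\gamma)$. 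Combining with the bound on $\|v\|_\infty$ and using $\lambda_1/\lambda_2>(3/\gamma-2)C_{\max}^{-1}$ --- equivalently $\lambda_2/\lambda_1+C_{\max}\le C_{\max}(3-\gamma)/(3-2\gamma)$ --- the population contribution to the left side above is at most $C_{\min}(1-\gamma)(3-\gamma)/(3-2\gamma)$, and the elementary identity $(1-\gamma/3)(3-2\gamma)-(1-\gamma)(3-\gamma)=\gamma(1-\gamma/3)\ge0$ shows this is $\le(1-\gamma/3)C_{\min}$.

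\textbf{Error term.} It now suffices to show $\vertiii{\TnScS(\JnSS)^{-1}-\TScS(\JSS)^{-1}}_{\infty,\infty}\,(\lambda_2/\lambda_1+C_{\max})\le(\gamma/9)C_{\min}$, i.e. (using $(3-\gamma)/(3-2\gamma)\le2$) that the operator-norm deviation is $\lesssim\gamma C_{\min}/C_{\max}$, off an event of probability at most $\exp(-D^{(2)}\lambda_2^2 n/q)$. I would telescope
\[
\TnScS(\JnSS)^{-1}-\TScS(\JSS)^{-1}=(\TnScS-\TScS)(\JnSS)^{-1}+\TScS(\JSS)^{-1}(\TSS-\TnSS)(\JnSS)^{-1},
\]
bound each resolvent using $\vertiii{(\JnSS)^{-1}}_{2,2}\le\lambda_2^{-1}$ together with the conversion between the $(\infty,\infty)$ and $(2,2)$ operator-valued-matrix norms (which costs powers of $q$), use $\vertiii{\TScS(\JSS)^{-1}}_{\infty,\infty}\le1$, and control $\vertiii{\TnSS-\TSS}_{\infty,\infty}$ and $\max_{j\in\calS^c}\|\TnjS-\TjS\|$ uniformly by Gaussian concentration for sample covariance operators (Hilbert-space Bernstein-type inequalities), the union bound over $\calS^c$ contributing a $\log(p-q)$ term and the trace bound $\tau$ of Condition C.\ref{ass:a2} and $\rho_1=\Lambda_1(\TSS)$ entering as scale factors. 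Requiring the deviation level to be a small constant multiple of $\gamma C_{\min}/C_{\max}$ is precisely the regime guaranteed by the stated lower bound $\lambda_2\gtrsim\tau(\rho_1+1)(C_{\min}/C_{\max})^{-2}\gamma^{-2}\max\{q\log(p-q)/n,\sqrt{q^2/n}\}$ (the two branches of the maximum matching the sub-Gaussian and sub-exponential parts of the tail), and it yields the failure probability $\exp(-D^{(2)}\lambda_2^2 n/q)$ with $D^{(2)}=D_2^*(C_{\min}/C_{\max})^2\gamma^2(\rho_1+1)^{-2}\tau^{-1}$. Adding the $(\gamma/9)C_{\min}$ slack to the population bound $(1-\gamma/3)C_{\min}$ gives the target $(1-2\gamma/9)C_{\min}$.

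\textbf{Main obstacle.} The delicate part is the concentration step: obtaining the operator-norm bound on $\TnScS(\JnSS)^{-1}$ with the right simultaneous dependence on $q$, on $\rho_1$, and on the exponential rate. The subtleties I expect to fight with are (i) passing between the $(\infty,\infty)$ and $(2,2)$ operator-valued-matrix norms without incurring avoidable factors of $q$; (ii) the non-independence of the predictors in $\calS$ and $\calS^c$, which precludes a naive conditioning argument in the uniform-over-$j$ bound; and (iii) ensuring the resolvent factors $(\JnSS)^{-1}$ do not blow up --- which is exactly what the lower bound on $\lambda_2$ prevents, and the structural reason $\lambda_2$ cannot be taken to zero in the functional setting.
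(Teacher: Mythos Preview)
Your population step is correct and matches the paper's: the factorization $\TScS(\JSS)^{-1}=\TScS(\TSS)^{-}\,\TSS(\JSS)^{-1}$ together with C.\ref{ass:a3} gives exactly the $(1-\gamma/3)C_{\min}$ budget you compute. The difficulty is entirely in the error term, and your proposal of handling it by a direct Hilbert--space Bernstein inequality on $\max_{j\in\calS^c}\vertiii{\TnjS-\TjS}_{2,2}$ does \emph{not} yield the stated lemma.

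The reason is a rate mismatch. After the unavoidable $\sqrt{q}/\lambda_2$ loss from $\vertiii{(\JnSS)^{-1}}_{2,2}\le\lambda_2^{-1}$ and the $\|\cdot\|_\infty\to\|\cdot\|_2$ conversion, you need each $\vertiii{\TnjS-\TjS}_{2,2}\lesssim (C_{\min}/C_{\max})\gamma\,\lambda_2/\sqrt{q}$. But $\TnjS-\TjS$ is a centered average of the products $\wt X_{ij}\otimes\wt\BX_{i\calS}$ with $\wt X_{ij}$ and $\wt\BX_{i\calS}$ \emph{dependent} Gaussians, so its tail is sub-exponential rather than sub-Gaussian. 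A Bernstein-type bound followed by the union bound over $p-q$ indices forces $\lambda_2^2 n/q\gtrsim\log(p-q)$, i.e.\ $\lambda_2\gtrsim\sqrt{q\log(p-q)/n}$. This is \emph{strictly stronger} than the lemma's hypothesis $\lambda_2\gtrsim\max\bigl(q\log(p-q)/n,\;q/\sqrt{n}\bigr)$ whenever $q<\log(p-q)$ and $q\log(p-q)<n$ (e.g.\ $q=1$, $\log(p-q)\asymp\sqrt{n}$: the lemma allows $\lambda_2\asymp n^{-1/2}$ but your route needs $\lambda_2\asymp n^{-1/4}$). So the proof would not close under the stated assumptions, and your obstacle (ii) is not a technicality but the crux.

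The paper's fix is to avoid operator-norm concentration across $\calS^c$ altogether and instead decompose at the level of the \emph{data}: for each $j\in\calS^c$ write $\wt\BX_{\bullet j}=\TjS(\TSS)^{-}\wt\BX_\calS^{\top}+\boldsymbol{E}_j$ with $\boldsymbol{E}_j$ Gaussian and \emph{independent} of $\wt\BX_\calS$ (Gaussian conditional regression). Then
\[
\TnjS(\JnSS)^{-1}v=\TjS(\TSS)^{-}\,\TnSS(\JnSS)^{-1}v+\boldsymbol{E}_j^{\top}\boldsymbol{Z},\qquad
\boldsymbol{Z}:=n^{-1}\!\int\wt\BX_\calS(s)\,(\JnSS)^{-1}v(s)\,ds.
\]
The first summand involves only $\wt\BX_\calS$ and is controlled by $\vertiii{\TnSS(\JnSS)^{-1}}_{\infty,\infty}\le\varkappa(1+\delta)$ via a single Koltchinskii--Lounici concentration on $\TnSS-\TSS$ (this is where $q/\sqrt{n}$ enters). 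For the second summand one conditions on $\wt\BX_\calS$, so $\boldsymbol{Z}$ is fixed and $\boldsymbol{E}_j^{\top}\boldsymbol{Z}$ is a genuine Gaussian process with $\mathrm{tr}(\mathrm{cov})\le\|\boldsymbol{Z}\|^2\tau\le q\tau/(n\lambda_2)\cdot(\lambda_2/\lambda_1+C_{\max})^2$. The Gaussian tail then has exponent proportional to $n\lambda_2/q$ (linear, not quadratic, in $\lambda_2$), so absorbing the union bound over $p-q$ indices requires only $\lambda_2\gtrsim q\log(p-q)/n$ --- exactly the first branch of the hypothesis. This conditional-independence trick is the missing idea in your outline.
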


%
%
%
%
%
%
%
%
%
%
%
%
%
%
%

\noindent \textbf{Proof of (ii) of Theorem \ref{thm:1}} 

We need to show that $\| \widehat{f}_j \|_{2} > 0$ for all $j \in \calS_G$ with the probability lower bound stated in the theorem. For simplicity, assume that $\calS_G = \calS$. The same arguments hold if $\calS$ is replaced by $\calS_G$ below.

Note that $\pr(\wh \calS \supset \calS)= \pr( \min_{ j \in \calS} \| \widehat{f}_j \|_{2} > 0) \ge \pr(  \min_{j\in \calS} \| (\mathcal{T}^{(j,j)} )^{1/2} \widehat{f}_j \|_{2} > 0)$. By the triangle inequality,
\begin{align*}
	\min_{j \in \calS} \left\| (\Tjj)^{1/2} \widehat{f}_j \right\|_2 
	& \ge \min_{j \in \calS} \left\| (\Tjj)^{1/2} f_{0j} \right\|_2 - \max_{j \in \calS} \left\| (\Tjj)^{1/2} (\widehat{f}_j - f_{0j}) \right\|_2 \\
	& \ge G - \max_{j \in \calS} \left\| (\Tjj)^{1/2} (\widehat{f}_j - f_{0j}) \right\|_2.
\end{align*}
Thus, it suffices to provide an upper bound for $\pr\left(\max_{j \in \calS} \| (\Tjj)^{1/2} (\widehat{f}_j - f_{0j}) \|_2>G\right)$.
By \eqref{equ:f_diff}, we have
\begin{align*}
	\fchecks - \fzeros =&  (\JSS)^{-1} \bigg( \gns -  \lambda_2 \fzeros - \lambda_1 \ws \bigg) \\
	& + \bigg\{ (\JnSS)^{-1} - (\JSS)^{-1} \bigg\} \bigg( \gns - \lambda_2 \fzeros - \lambda_1 \ws \bigg).
\end{align*}
Since
$(\JnSS)^{-1} - (\JSS)^{-1} = (\JSS)^{-1} \left(\TSS - \TnSS \right)(\JnSS)^{-1}$,
\begin{align*}
	& \max_{j \in \calS} \left\|(\mathcal{T}^{(j,j)} )^{1/2} (\check{f}_j - f_{0j})\right\|_{2}  = \left\|(\QSS)^{1/2}(\check{\bdf}_{\calS}-\boldsymbol{f}_{0 \calS})\right\|_\infty \\
 & \le  \vertiii{ (\QSS)^{1/2}  (\JSS)^{-1}}_{\infty,\infty} \left\{ \left\| \gns \right\|_{\infty} + \lambda_2 \left( \|\fzeros\|_{\infty} + \frac{\lambda_1}{\lambda_2} C_{\max} \right) \right\} \\
    & \hspace{.5cm} \times \left( 1 + \frac{\sqrt{q}}{\lambda_2} \vertiii{\TSS - \TnSS}_{2,2} \right),
\end{align*}
where we applied the inequality 
\begin{align*}
\vertiii{ (\TSS - \TnSS) (\JnSS)^{-1}}_{\infty,\infty}  \le \frac{\sqrt{q}}{\lambda_2} \vertiii{\TSS - \TnSS}_{2,2}.
\end{align*}
By Lemma \ref{lm:(iii)} with $\| \fzeros \|_{\infty} = 1$, 
\begin{eqnarray}\label{equ:f_norm_diff}
    && \max_{j \in \calS} \left\|(\mathcal{T}^{(j,j)} )^{1/2} (\widehat{f}_j - f_{0j})\right\|_{2} \\
&& \le \frac{6-4\aleph(\lambda_2)}{(1-\aleph(\lambda_2))\sqrt{\lambda_2}} \left( \left\| \gns \right\|_{\infty} + \lambda_2 + C_{\max}\lambda_1 \right) \left( 1 + \frac{\sqrt{q}}{\lambda_2} \vertiii{\TSS - \TnSS}_{2,2} \right). \nonumber
\end{eqnarray}
Thus, with $G$ as given in the theorem,
\begin{align}
\begin{split}
    & \pr\left(\max_{j \in \calS} \left\| (\Tjj)^{1/2} (\widehat{f}_j - f_{0j}) \right\|_2>G\right) \\
    & \le \pr\left( \left\| \gns \right\|_{\infty} > \lambda_2\right) + \pr\left(\frac{\sqrt{q}}{\lambda_2} \vertiii{\TSS - \TnSS}_{2,2}>1\right).
\end{split}
\label{equ:tail_bound_final_2}
\end{align}
Finally, bound the rhs of \eqref{equ:tail_bound_final_2} using Lemmas \ref{lm:(iv)} and \ref{lemma:tail_bound} and note that it is dominated by the expression in \eqref{equ:vs_consistency_rate}
under Condition \eqref{equ:lambda_12_bound_1}.

\begin{lemma} \label{lm:(iii)}
Under Condition C.\ref{ass:a6}, for any $\lambda_2 > 0$
\begin{align}
\vertiii{ (\QSS)^{1/2}  (\JSS)^{-1}}_{\infty,\infty} < \frac{6-4\aleph(\lambda_2)}{1-\aleph(\lambda_2)} \frac{1}{\sqrt{\lambda_2}}
\label{equ:bound_4}
\end{align}
\end{lemma}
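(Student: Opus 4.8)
The goal is to bound $\vertiii{(\QSS)^{1/2} (\JSS)^{-1}}_{\infty,\infty}$, where $\JSS = \TSS + \lambda_2 \mathscrbf{I}$. First I would factor the resolvent through the diagonal part: write $(\JSS)^{-1} = (\QSS_{\lambda_2})^{-1/2} \big[ (\QSS_{\lambda_2})^{-1/2} \JSS (\QSS_{\lambda_2})^{-1/2} \big]^{-1} (\QSS_{\lambda_2})^{-1/2}$ and note that $(\QSS)^{1/2}(\QSS_{\lambda_2})^{-1/2}$ has $\|\cdot\|_2$-norm at most $1$ (it is block-diagonal, each block $(\Tjj)^{1/2}(\Tjj+\lambda_2\mathscr{I})^{-1/2}$ having singular values $\le 1$), and that $(\QSS_{\lambda_2})^{-1/2}$ has $\|\cdot\|_2$-norm at most $\lambda_2^{-1/2}$. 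The central object is then the middle matrix $\mathscrbf{M} := (\QSS_{\lambda_2})^{-1/2} \JSS (\QSS_{\lambda_2})^{-1/2} = \mathscr{I} + (\QSS_{\lambda_2})^{-1/2}(\TSS - \QSS)(\QSS_{\lambda_2})^{-1/2} + \lambda_2 (\QSS_{\lambda_2})^{-1}$, so $\mathscrbf{M} = \mathscrbf{I} + \mathscrbf{E}$ where $\mathscrbf{E} = (\QSS_{\lambda_2})^{-1/2}(\TSS - \QSS)(\QSS_{\lambda_2})^{-1/2} + \lambda_2(\QSS_{\lambda_2})^{-1}$.

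Next I would control $\mathscrbf{M}^{-1}$ via a Neumann series in the $\vertiii{\cdot}_{\infty,\infty}$ norm. The key estimate is $\vertiii{(\QSS_{\lambda_2})^{-1/2}(\TSS-\QSS)(\QSS_{\lambda_2})^{-1/2}}_{\infty,\infty}$: I would relate it to $\aleph(\lambda_2) = \vertiii{(\TSS-\QSS)(\QSS_{\lambda_2})^{-1}}_{\infty,\infty}$ from Condition C.\ref{ass:a6}. Since $(\QSS_{\lambda_2})^{-1/2}(\TSS-\QSS)(\QSS_{\lambda_2})^{-1/2} = \big[(\QSS_{\lambda_2})^{-1/2}(\TSS-\QSS)(\QSS_{\lambda_2})^{-1}\big](\QSS_{\lambda_2})^{1/2}$ — actually a cleaner route is to use that $(\QSS_{\lambda_2})^{\pm 1/2}$ are block-diagonal (hence their $\vertiii{\cdot}_{\infty,\infty}$ norms equal the max over $j$ of the $\|\cdot\|_2$-norm of the $j$th block) and apply submultiplicativity of $\vertiii{\cdot}_{\infty,\infty}$: $\vertiii{(\QSS_{\lambda_2})^{-1/2}(\TSS-\QSS)(\QSS_{\lambda_2})^{-1/2}}_{\infty,\infty} \le \vertiii{(\QSS_{\lambda_2})^{1/2}}_{\infty,\infty}\cdot\aleph(\lambda_2)\cdot\vertiii{(\QSS_{\lambda_2})^{-1/2}}_{\infty,\infty}$. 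This last product of block-diagonal factors is not bounded by $1$ in general, so I would instead keep the terms together and bound $\vertiii{\mathscrbf{E}}_{\infty,\infty} \le \aleph(\lambda_2) + \vertiii{\lambda_2(\QSS_{\lambda_2})^{-1}}_{\infty,\infty}$ after observing $\lambda_2(\QSS_{\lambda_2})^{-1}$ is block-diagonal with each block having norm $\le 1$, giving $\vertiii{\mathscrbf{E}}_{\infty,\infty} \le \aleph(\lambda_2) + 1 < 2$. Hmm — that bound is too weak for the Neumann series to converge. So the correct decomposition must be: write $\mathscrbf{M} = (\QSS_{\lambda_2})^{-1/2}(\QSS_{\lambda_2})(\QSS_{\lambda_2})^{-1/2} + (\QSS_{\lambda_2})^{-1/2}(\TSS-\QSS)(\QSS_{\lambda_2})^{-1/2} = \mathscr{I} + \mathscrbf{E}'$ with $\mathscrbf{E}' := (\QSS_{\lambda_2})^{-1/2}(\TSS-\QSS)(\QSS_{\lambda_2})^{-1/2}$ only, and bound $\vertiii{\mathscrbf{E}'}_{\infty,\infty} \le \aleph(\lambda_2) < 1$ by writing $\mathscrbf{E}' = (\TSS-\QSS)(\QSS_{\lambda_2})^{-1}$ conjugated appropriately, exploiting that each diagonal block of $\QSS_{\lambda_2}$ commutes with scalars — concretely, $\mathscrbf{E}'_{jk} = (\Tjj+\lambda_2\mathscr{I})^{-1/2}(\T^{(j,k)}-\delta_{jk}\Tjj)(\T^{(k,k)}+\lambda_2\mathscr{I})^{-1/2}$ while $[(\TSS-\QSS)(\QSS_{\lambda_2})^{-1}]_{jk} = (\T^{(j,k)}-\delta_{jk}\Tjj)(\T^{(k,k)}+\lambda_2\mathscr{I})^{-1}$, and the row-sum norm of the former is dominated by that of the latter since prepending the contraction $(\Tjj+\lambda_2\mathscr{I})^{-1/2}$ and transferring one factor $(\T^{(k,k)}+\lambda_2\mathscr{I})^{1/2}$ into the norm only helps. (I would do this bookkeeping carefully.) Then $\vertiii{\mathscrbf{M}^{-1}}_{\infty,\infty} \le \sum_{m\ge 0}\aleph(\lambda_2)^m = (1-\aleph(\lambda_2))^{-1}$.

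Finally I would assemble: $\vertiii{(\QSS)^{1/2}(\JSS)^{-1}}_{\infty,\infty} \le \vertiii{(\QSS)^{1/2}(\QSS_{\lambda_2})^{-1/2}}_{\infty,\infty}\cdot\vertiii{\mathscrbf{M}^{-1}}_{\infty,\infty}\cdot\vertiii{(\QSS_{\lambda_2})^{-1/2}}_{\infty,\infty}$, where the first factor is $\le 1$ (block-diagonal, each block a contraction) and the third is $\le \lambda_2^{-1/2}$ (block-diagonal, each block $(\Tjj+\lambda_2\mathscr{I})^{-1/2}$ with norm $\le\lambda_2^{-1/2}$). This yields $\vertiii{(\QSS)^{1/2}(\JSS)^{-1}}_{\infty,\infty} \le \frac{1}{(1-\aleph(\lambda_2))\sqrt{\lambda_2}}$, which is stronger than the claimed $\frac{6-4\aleph(\lambda_2)}{(1-\aleph(\lambda_2))\sqrt{\lambda_2}}$ (and I would then just note the stated inequality follows a fortiori, since $6-4\aleph(\lambda_2)\ge 2 > 1$ for $\aleph(\lambda_2)\in[0,1)$; the looser constant presumably absorbs slack from a slightly different decomposition the authors use).

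The main obstacle is the row-sum ($\vertiii{\cdot}_{\infty,\infty}$) bookkeeping in the middle step: unlike the spectral norm, $\vertiii{\cdot}_{\infty,\infty}$ is not unitarily invariant and the symmetric conjugation by $(\QSS_{\lambda_2})^{-1/2}$ must be handled by exploiting the block-diagonal structure and commutativity of each block with scalar multiples of the identity, so that transferring half-powers between the conjugating factors and the operator norm of each block is legitimate. Getting this transfer to go in the direction that reproduces $\aleph(\lambda_2)$ — rather than a quantity one cannot control — is the crux, and is exactly where Condition C.\ref{ass:a6}'s specific form (penalizing $(\TSS-\QSS)(\QSS_{\lambda_2})^{-1}$ rather than the symmetrized version) is used.
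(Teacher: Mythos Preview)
Your approach differs from the paper's and contains a genuine gap at precisely the step you flag as ``the crux.'' You claim that
\[
\vertiii{\mathscrbf{E}'}_{\infty,\infty}
=\vertiii{(\QSS_{\lambda_2})^{-1/2}\,\ESS\,(\QSS_{\lambda_2})^{-1/2}}_{\infty,\infty}
\le \aleph(\lambda_2),
\]
arguing that ``prepending the contraction $(\Tjj+\lambda_2\mathscr{I})^{-1/2}$ and transferring one factor $(\T^{(k,k)}+\lambda_2\mathscr{I})^{1/2}$ into the norm only helps.'' But $(\Tjj+\lambda_2\mathscr{I})^{-1/2}$ is \emph{not} a contraction: its operator norm is $\lambda_2^{-1/2}$, which is large when $\lambda_2$ is small. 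More broadly, the passage from $\ESS(\QSS_{\lambda_2})^{-1}$ to $(\QSS_{\lambda_2})^{-1/2}\ESS(\QSS_{\lambda_2})^{-1/2}$ is a conjugation by block-diagonal \emph{operators} (not scalars), and $\vertiii{\cdot}_{\infty,\infty}$ is not invariant under such similarity; the row-sum bookkeeping you allude to does not go through. Whether the inequality even holds for general operator blocks $\Tjj$ is unclear, and you have not supplied an argument.

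The paper sidesteps this entirely by a different, asymmetric factorization. Writing $\ESS=\TSS-\QSS$ and using the resolvent identity $(\JSS)^{-1}=(\QSS_{\lambda_2})^{-1}-(\QSS_{\lambda_2})^{-1}\ESS(\JSS)^{-1}$, one gets
\[
\vertiii{(\QSS)^{1/2}(\JSS)^{-1}}_{\infty,\infty}
\le \vertiii{(\QSS)^{1/2}(\QSS_{\lambda_2})^{-1}}_{\infty,\infty}\bigl(1+\vertiii{\ESS(\JSS)^{-1}}_{\infty,\infty}\bigr).
\]
The first factor is block-diagonal and bounded by $\tfrac{1}{2\sqrt{\lambda_2}}$ via $\max_x \sqrt{x}/(x+\lambda_2)$. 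For the second, write $\ESS(\JSS)^{-1}=\mathscrbf{I}-\QSS_{\lambda_2}(\JSS)^{-1}$ and observe the key identity
\[
\QSS_{\lambda_2}(\JSS)^{-1}=\bigl[\mathscrbf{I}+\ESS(\QSS_{\lambda_2})^{-1}\bigr]^{-1},
\]
whose Neumann series converges in $\vertiii{\cdot}_{\infty,\infty}$ exactly because $\vertiii{\ESS(\QSS_{\lambda_2})^{-1}}_{\infty,\infty}=\aleph(\lambda_2)<1$ \emph{by definition}. No transfer of half-powers between non-commuting block operators is needed; Condition C.\ref{ass:a6} is invoked in its native asymmetric form. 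Combining gives $\frac{1}{2\sqrt{\lambda_2}}\bigl(2+\frac{1}{1-\aleph}\bigr)=\frac{3-2\aleph}{2(1-\aleph)\sqrt{\lambda_2}}$, which implies the stated bound.
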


\begin{lemma} \label{lm:(iv)}
Suppose $\lambda_2 >  D_1^* (\sigma+1) \tau^{1/2}  \sqrt{\log q \over n}$, we have
\begin{align}
\mathbb{P} \left( \left\|  \gns \right\|_{\infty} \ge \lambda_2 \right)
\le   \exp \left( - D^{(3)} \lambda_2^2 n  \right) 
\label{equ:tail_bound_3}
\end{align}
holds for some $D^{(3)} < D_2^* \left( (\sigma + 1)^{2} \tau \right)^{-1} $ where $D_1^*$ and $D_2^*$ are universal constants.
\end{lemma}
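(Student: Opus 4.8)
The plan is to write $\{\|\gns\|_\infty\ge\lambda_2\}=\bigcup_{j\in\calS}\{\|g_j\|_2\ge\lambda_2\}$, where $g_j:=n^{-1}\wt\BX_{\bullet j}^\top\bdepsilon_n$ denotes the $j$th coordinate of $\bdg_n$, and to control each of the $q$ events by a conditional Gaussian concentration argument. Since $\bdepsilon_n\sim\Normal(\boldsymbol 0,\sigma^2\boldsymbol I_n)$ is independent of $\wt\BX_n$, conditionally on $\wt\BX_n$ the random function $g_j$ is a centered Gaussian element of $\mathbb{L}_2[0,1]$ with covariance operator $n^{-1}\sigma^2\Tnjj$. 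Hence its conditional mean length obeys $m_j:=\E[\|g_j\|_2\mid\wt\BX_n]\le(n^{-1}\sigma^2\tr(\Tnjj))^{1/2}$, its weak variance (the largest eigenvalue of its covariance operator) is $v_j:=n^{-1}\sigma^2\|\Tnjj\|_2$, and the Borell--TIS inequality gives $\mathbb{P}(\|g_j\|_2\ge m_j+u\mid\wt\BX_n)\le\exp(-u^2/(2v_j))$ for all $u>0$.

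The second step is to replace the random quantities $\tr(\Tnjj)$ and $\|\Tnjj\|_2$ by their population analogues on a high-probability ``good design'' event $\mathcal G:=\{\sup_{j\in\calS}\|\Tnjj\|_2\le 2\}\cap\{\sup_{j\in\calS}\tr(\Tnjj)\le 2\tau\}$. On $\mathcal G$ one has $v_j\le 2\sigma^2/n$ and $m_j\le\sigma(2\tau/n)^{1/2}$; because the hypothesis forces $\lambda_2$ to exceed a fixed multiple of $\sigma(\tau/n)^{1/2}$, we get $m_j\le\lambda_2/2$, so taking $u=\lambda_2/2$ yields $\mathbf 1_{\mathcal G}\,\mathbb{P}(\|g_j\|_2\ge\lambda_2\mid\wt\BX_n)\le\exp(-\lambda_2^2 n/(16\sigma^2))$. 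Integrating over $\wt\BX_n$, taking a union bound over $j\in\calS$, and using $\lambda_2^2 n\ge(D_1^*)^2\sigma^2\log q$ with $D_1^*$ large enough to absorb the $\log q$ factor, we arrive at $\mathbb{P}(\|\gns\|_\infty\ge\lambda_2)\le\mathbb{P}(\mathcal G^c)+q\exp(-\lambda_2^2 n/(16\sigma^2))\le\mathbb{P}(\mathcal G^c)+\exp(-c\lambda_2^2 n/\sigma^2)$ for a universal $c>0$. Since $\tau\ge\|\scrT^{(j,j)}\|_2=1$ gives $\sigma^2\le(\sigma+1)^2\tau$, the exponent has the claimed order $\gtrsim\lambda_2^2 n/((\sigma+1)^2\tau)$.

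It remains to bound $\mathbb{P}(\mathcal G^c)$. For the operator-norm piece, $\Tnjj-\scrT^{(j,j)}$ is the $(j,j)$ block of $\TnSS-\TSS$, so $\|\Tnjj\|_2\le 1+\vertiii{\TSS-\TnSS}_{2,2}$ by $\|\scrT^{(j,j)}\|_2=1$, whence $\{\sup_{j\in\calS}\|\Tnjj\|_2>2\}\subset\{\vertiii{\TSS-\TnSS}_{2,2}>1\}$, an event controlled by Lemma~\ref{lemma:tail_bound}. For the trace piece, $\tr(\Tnjj)=n^{-1}\sum_{i=1}^n\|\wt X_{ij}\|_2^2$ is an average of i.i.d.\ weighted sums of $\chi^2_1$ variables with mean $\tr(\scrT^{(j,j)})\le\tau$ and largest weight $\|\scrT^{(j,j)}\|_2=1$, so a Laurent--Massart/Bernstein bound together with a union bound over $j\in\calS$ shows $\mathbb{P}(\sup_{j\in\calS}\tr(\Tnjj)>2\tau)$ decays like $\exp(-c'n)$ up to the $\log q$ term. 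Under the operative scaling both contributions are dominated by the main term, and collecting everything gives the lemma for some $D^{(3)}<D_2^*((\sigma+1)^2\tau)^{-1}$.

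The main obstacle is the design-concentration step $\mathbb{P}(\mathcal G^c)$: controlling $\sup_{j\in\calS}\|\Tnjj\|_2$ genuinely uses the Gaussianity of the functional predictors (through Lemma~\ref{lemma:tail_bound}), and the trace control must be arranged so that its failure probability -- which carries a $\log q$ penalty from the union bound over coordinates -- stays below $\exp(-D^{(3)}\lambda_2^2 n)$; by contrast, the conditional Borell--TIS estimate and the coordinatewise union bound are routine.
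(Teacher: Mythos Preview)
Your overall skeleton---union bound over $j\in\calS$, condition on the design, bound the conditional Gaussian tail of $\|g_j\|_2$, and work on a good-design event for $\tr(\Tnjj)$---is correct and matches the paper. The paper, however, does not go through Borell--TIS. It instead applies the chi-squared--type large-deviation bound of Lemma~\ref{lm:large_deviation_2nd_ver}(i) directly to $\|g_j\|_2^2$, which conditionally on $\wt\BX_n$ is a weighted sum of independent $\chi^2_1$ variables; that bound requires only $\tr(\Tnjj)$ and yields $\mathbb{P}(\|g_j\|_2\ge\lambda_2\mid\wt\BX_n,\ \tr(\Tnjj)\le 2\tau)\le 2\exp\{-3\lambda_2^2 n/(16\sigma^2\tau)\}$ for every $\lambda_2>0$. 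The bad-design event $\{\tr(\Tnjj)>2\tau\}$ is then handled by Lemma~\ref{lm:large_deviation_2nd_ver}(ii). So the paper never needs operator-norm control, the centering step $m_j\le\lambda_2/2$, or Lemma~\ref{lemma:tail_bound}.

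Your Borell--TIS route is valid in spirit, but the detour through Lemma~\ref{lemma:tail_bound} to control $\sup_{j\in\calS}\|\Tnjj\|_2$ is both unnecessary and problematic: that lemma's exponent carries a factor $\rho_1^{-2}$, so the resulting $D^{(3)}$ would inherit dependence on $\rho_1$, which is absent from the stated bound $D^{(3)}<D_2^*/((\sigma+1)^2\tau)$. The easy repair is to drop the operator-norm part of $\mathcal G$ entirely and use the trivial inequality $\|\Tnjj\|_2\le\tr(\Tnjj)\le 2\tau$ on the trace event; then $v_j\le 2\tau\sigma^2/n$ and Borell--TIS gives $\exp\{-\lambda_2^2 n/(32\tau\sigma^2)\}$, of exactly the right form. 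A second, minor wrinkle is that your centering step $m_j\le\lambda_2/2$ requires $\lambda_2\ge c\,\sigma\sqrt{\tau/n}$, which the stated hypothesis $\lambda_2>D_1^*(\sigma+1)\tau^{1/2}\sqrt{\log q/n}$ does not provide when $q=1$; the paper's chi-squared bound avoids this because it never subtracts the mean.
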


\begin{lemma} \label{lemma:tail_bound}
Suppose $\rho_1$ is the largest eigenvalue of $\TSS$, then
\begin{align*}
\pr \left( \sqrt{q} \vertiii{\TSS - \TnSS}_{2,2} > u \right) \le \exp\left\{ -\frac{u^2 n}{C^2 \rho_1^2 q}  \right\}
\end{align*}
holds for some constant $C>0$, as long as $C$ and $q$ satisfy
\begin{align} \label{e:q_cond}
\frac{u^2}{C^2 \rho_1^2} < q \le  \sqrt{\frac{u^2 n }{\tau C^2 \rho_1}}.
\end{align}
\end{lemma}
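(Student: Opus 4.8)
The plan is to view $\TnSS-\TSS = n^{-1}\sum_{i=1}^n\big(\bV_i\otimes\bV_i-\E\,\bV_i\otimes\bV_i\big)$, where $\bV_i:=\wt\BX_{i\calS}$ are i.i.d.\ centered Gaussian random elements of $\mathbb{L}_2^q[0,1]$ with covariance operator $\TSS$ (Gaussianity is inherited from the standing assumption on $\BX_{i\bullet}$ in Section~\ref{sec:vs_theory} because $\wt X_{ij}=\scrL_{K_j^{1/2}}X_{ij}$ is a bounded linear image), and to apply a Koltchinskii--Lounici-type concentration bound for the operator norm of a Gaussian sample covariance. The one structural input we need is a bound on the \emph{effective rank} $r(\TSS):=\tr(\TSS)/\vertiii{\TSS}_{2,2}$: since $\tr(\TSS)=\sum_{j\in\calS}\tr(\Tjj)\le q\tau$ by Condition~C.\ref{ass:a2} and $\vertiii{\TSS}_{2,2}=\rho_1$, we have $r(\TSS)\le q\tau/\rho_1$.

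First I would use self-adjointness to write $\vertiii{\TnSS-\TSS}_{2,2}=\sup_{\|\bdf\|_2\le 1}\big|\langle(\TnSS-\TSS)\bdf,\bdf\rangle_2\big|$, and note that for fixed $\bdf$ on the unit sphere, $\langle\bV_i,\bdf\rangle_2\sim\Normal(0,\sigma_\bdf^2)$ with $\sigma_\bdf^2=\langle\TSS\bdf,\bdf\rangle_2\le\rho_1$, so $n^{-1}\sum_i(\langle\bV_i,\bdf\rangle_2^2-\sigma_\bdf^2)$ is a centered sum of sub-exponential variables obeying a Bernstein bound with scale of order $\rho_1$. Passing from this pointwise control to the supremum over the infinite-dimensional unit ball is the genuinely non-elementary step; here I would either cite the Gaussian sample-covariance inequality of Koltchinskii and Lounici --- that for every $s\ge 1$, with probability at least $1-e^{-s}$,
$$\vertiii{\TnSS-\TSS}_{2,2}\le C_0\,\rho_1\,\max\Big\{\sqrt{\tfrac{r(\TSS)}{n}},\ \tfrac{r(\TSS)}{n},\ \sqrt{\tfrac{s}{n}},\ \tfrac{s}{n}\Big\} ---$$
or reproduce the underlying generic-chaining argument for the Gaussian quadratic process indexed by the unit ball, whose metric entropy is controlled through the Hilbert--Schmidt and trace norms of $\TSS$.

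Granting that inequality, the remainder is bookkeeping that shows the two constraints in \eqref{e:q_cond} are precisely what forces the $\sqrt{s/n}$ term to dominate the maximum once we set $s=u^2 n/(C^2\rho_1^2 q)$. The lower bound $q>u^2/(C^2\rho_1^2)$ gives $s/n=u^2/(C^2\rho_1^2 q)<1$, hence $s/n\le\sqrt{s/n}$; the upper bound $q\le\sqrt{u^2 n/(\tau C^2\rho_1)}$ rearranges (squaring) to $q^2\tau C^2\rho_1\le u^2 n$, i.e.\ $r(\TSS)/n\le q\tau/(\rho_1 n)\le u^2/(C^2\rho_1^2 q)=s/n$, hence also $\sqrt{r(\TSS)/n}\le\sqrt{s/n}$. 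Thus the maximum equals $\sqrt{s/n}$, and on the event of probability $\ge 1-e^{-s}$ we get $\vertiii{\TnSS-\TSS}_{2,2}\le C_0\rho_1\sqrt{s/n}=(C_0/C)\,u/\sqrt q$; choosing $C\ge C_0$ yields $\sqrt q\,\vertiii{\TnSS-\TSS}_{2,2}\le u$ with probability at least $1-\exp\{-u^2 n/(C^2\rho_1^2 q)\}$, and the same $C$ serves both in the exponent and in \eqref{e:q_cond}. (If $s<1$ one instead invokes the bound at level $1$, which still suffices since then $\exp(-s)\ge e^{-1}$.)

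The main obstacle is the supremum-over-the-infinite-dimensional-ball step; everything else is Bernstein's inequality for i.i.d.\ sub-exponential summands plus elementary manipulation of \eqref{e:q_cond}. A secondary point of care is that the deviation must be measured on the scale $\rho_1$ rather than $\tr(\TSS)$ or $q\tau$, which is exactly why the effective-rank formulation, not a cruder trace-based one, is the right tool here.
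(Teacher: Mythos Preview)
Your proposal is correct and follows essentially the same approach as the paper: both invoke the Koltchinskii--Lounici concentration inequality for the Gaussian sample covariance operator, bound the effective rank by $r(\TSS)\le q\tau/\rho_1$ via Condition~C.\ref{ass:a2}, set the confidence level to $t=u^2 n/(C^2\rho_1^2 q)$, and verify that the two inequalities in \eqref{e:q_cond} are exactly what force $\sqrt{t/n}$ to dominate the four-term maximum. The only cosmetic difference is that the paper uses the Koltchinskii--Lounici definition $r(\TSS)=(\E\|\wt\BX_{1\calS}\|_2)^2/\rho_1$ and then passes to $\E\|\wt\BX_{1\calS}\|_2^2/\rho_1$ by Jensen, whereas you start directly from $\tr(\TSS)/\rho_1$; either way the bound $q\tau/\rho_1$ is the same.
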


The proofs for Lemmas \ref{lm:(iii)} - \ref{lemma:tail_bound} are in the Supplementary Material.

\noindent{\bf Proof of Theorem \ref{c:excess}}
\begin{proof}
   When $\widehat{\calS} \subset \calS$, the excess risk has the form
\begin{align*}
\mathcal{R}(\widehat{\boldsymbol{\beta}}) &= \mathbb{E}^{*}\left[ \sum_{j \in \calS } \langle X_{j}^*,  \beta_{0j} - \widehat{\beta}_j \rangle_{2}\right]^{2} = \left\| \left( \TSS \right)^{1/2} (\fzeros - \fhats)  \right\|_2^2.
\end{align*}
Following a similar derivation as in \eqref{equ:f_norm_diff},
\begin{align*}
& \left\| \left( \TSS \right)^{1/2} (\fzeros - \fhats)  \right\|_2 \\
& \le \sqrt{q} \vertiii{ (\TSS)^{1/2}  (\JSS)^{-1}}_{2,2} \left\{ \left\| \gns \right\|_{\infty} + \lambda_2 \left( \|\fzeros\|_{\infty} + \frac{\lambda_1}{\lambda_2} C_{\max} \right) \right\} \\
& \hspace{.5cm} \cdot \left( 1 + \frac{1}{\lambda_2} \vertiii{\TSS - \TnSS}_{2,2} \right). 
\end{align*}
Similar to \eqref{equ:bound_b},
\begin{align*}
\vertiii{ (\TSS)^{1/2}  (\JSS)^{-1}}_{2,2} \le \frac{1}{2\sqrt{\lambda_2}},
\end{align*}
together with a similar derivation as for \eqref{equ:f_norm_diff} with $\| \fzeros \|_{\infty} = 1$, we have 
\begin{align*}
 \mathcal{R}(\widehat{\boldsymbol{f}}) 
\le q (2 + C_{\max} \lambda_1 / \lambda_2)^2 \lambda_2 = q \left(4 C_{\max} \lambda_1 + 4 \lambda_2 + C_{\max}^2 \lambda_1^2 / \lambda_2 \right) 
\end{align*}
with probability greater than \eqref{equ:vs_consistency_rate}. 
\end{proof}

\noindent{\bf Proof of Corollary \ref{cor:R(f)}}
\begin{proof}
    Recall
\begin{align*} 
\alpha(p,q,n) = \max\left(q,\sqrt{\log (p-q)}, \sqrt{q\log n} \right),
\end{align*}
and define
$$
\ell_n = C q^{-1} \alpha^2(p,q,n)
$$
where $C$ is a large enough constant.
Let 
$$
\lambda_2=(\ell_n q/n)^{1/2} = C^{1/2} {1\over\sqrt{n}} \alpha(p,q,n)
$$ 
and $\lambda_1 = b\lambda_2$ for some suitable constant $b$. 
If $q^2\log(p-q)\le n$ for large $n$, which is guaranteed by the assumption $q\alpha(p,q,n) = o(n^{1/2})$, we have for all large $n$,
\begin{align*} 
\alpha(p,q,n) = \max\left(q, {q\log(p-q)\over\sqrt{n}},
\sqrt{\log (p-q)}, \sqrt{q\log n} \right),
\end{align*}
from which it is easy to see that \eqref{equ:lambda_12_bound_1} holds for $b,C$ sufficiently large.
Note that $\ell_n \ge C\log n$.
By Theorem \ref{c:excess}, the excess risk is bounded by a constant multiple of 
\begin{align*}
\lambda_2q 
= C^{1/2}{q\over\sqrt{n}} \alpha(p,q,n)
\end{align*}
where probability at least $1-n^{-D}$ for some constant $D$. The claim of the corollary follows from the Borel-Cantelli Lemma by choosing a large enough $C$ and hence $D>1$.
\end{proof}

%
%
%
%
%
%
%
%
%
%
\noindent{\bf Proof of Theorem \ref{thm:minimax_lower}}
\begin{proof}

Recall that the excess risk for an estimator $\ftildets$ has the form 
\begin{eqnarray*}
    \mathcal{R}(\ftildets) = \left\| \left( \TSS \right)^{1/2} (\fzeros - \ftildets)  \right\|_2^2.
\end{eqnarray*}
For any $\boldsymbol{f}_1, \boldsymbol{f}_2 \in \mathbb{{L}}_2^q$, define 
 \begin{align*}
       \mathcal{D}(\boldsymbol{f}_1, \boldsymbol{f}_2) = \left\| \left( \TSS \right)^{1/2} (\boldsymbol{f}_1 - \boldsymbol{f}_2)  \right\|_2, 
\end{align*}
which is a proper metric in $\mathbb{{L}}_2^q$.
Write the spectral decomposition of $\TSS$ as $\TSS = \sum_{k \ge 1} \rho_k \boldsymbol{\phi}_k \otimes \boldsymbol{\phi}_k$, where $\rho_1 \ge \rho_2 \ge \dots \ge 0$.  
By Corollary \ref{corollary:eigenvalue_rela}, for any covariance operator $\TSS \in \mathscrbf{P}(r) $, its eigenvalues satisfy $\rho_{q(k-1)+j} \le C k^{-2r}$ for some constant $C>0$. Consider sub-class of covariance operators, denoted as $\mathscrbf{P}(r,C, C')$ for some $0<C'<C<\infty$, which include all $\TSS$ with $C' k^{-2r} \le \rho_{q(k-1)+j} \le C k^{-2r}$.
It is straightforward to show that for $k > q$, $c_1 (k/q)^{-2r} \leq \rho_k \leq c_2 (k/q)^{-2r}$ for some $0 < c_1 < c_2 < \infty$.

As noted in \cite{CaiYuan2012} in the proof of their Theorem 1, any lower bound derived under a specific case yields a lower bound for the general case. For the rest of the proof, we will consider a special case
where $\TSS \in \mathscrbf{P}(r,C, C')$ and the functional coefficient in the oracle model has the form
\begin{align}
    \boldsymbol{\beta}_{\theta} = \scrbfL_{\BK_{\calS}^{1/2}} \boldsymbol{f}_{\theta}, \quad \boldsymbol{f}_{\theta} = M^{-1/2} \sum_{k=M+1}^{2M} \theta_{k} \boldsymbol{\phi}_k. \label{equ:f_theta}
\end{align}
where $\boldsymbol{\theta} = \left(\theta_{M+1}, \ldots, \theta_{2 M}\right) \in \{0,1\}^{M}$ for some large integer $M$.
The Varshamov–Gilbert bound (Lemma 2.9, \citesupp{Tsybakov2009}) shows that for any $M \ge 8$, there exists a subset $\boldsymbol{\Theta}_0 = \{ \boldsymbol{\theta}^{(0)}, \boldsymbol{\theta}^{(1)}, \dots, $ $\boldsymbol{\theta}^{(N)} \} \in \{0, 1\}^{M}$ such that 
(a) $\boldsymbol{\theta}^{(0)} = (0, \dots, 0)^{\top}$;  
(b) $ H(\boldsymbol{\theta}^{(j)}, \boldsymbol{\theta}^{(k)}) \ge M/8$  for any $\ 0 \le j < k \le N $, $H(\cdot, \cdot)$ is the Hamming distance; and (c) $N \ge 2^{M/8}$.
Because $\{\boldsymbol{f}_{\theta}: \boldsymbol{\theta} \in \boldsymbol{\Theta}_0\} \subset \mathbb{{L}}_2^q$, it is clear that $\forall B > 0$
\begin{align}\label{eq:special_case}
\begin{split}
    & \sup_{\TSS \in \mathscrbf{P}(r) } \ \sup_{\fzeros \in \mathbb{{L}}_2^q}  \mathbb{P}\left( \mathcal{D}(\ftildets, \fzeros) \ge B \right) \\ 
    &\quad \ge  \sup_{\TSS \in \mathscrbf{P}(r,C, C')} \ \max_{\boldsymbol{\theta} \in \boldsymbol{\Theta}_0}  \mathbb{P}_{\theta}\left( \mathcal{D}(\ftildets, \boldsymbol{f}_{\theta}) \ge B \right).
    \end{split}
\end{align}
Here, $\mathbb{P}_{\theta}$ is the probability measure when the function coefficient has the form given in \eqref{equ:f_theta}. 

Next, we proceed to establish the lower bound under the special case using results in Theorem 2.5 of \citesupp{Tsybakov2009}.
To that end, for any $\boldsymbol{\theta}, \boldsymbol{\theta}' \in \boldsymbol{\Theta}_0$ such that $\boldsymbol{\theta} \neq \boldsymbol{\theta}'$, the Kullback–Leibler distance between $\mathbb{P}_{\theta}$ and $\mathbb{P}_{\theta'}$ is given by
\begin{align*}
    & \mathcal{K}\left(\mathbb{P}_{\theta} \| \mathbb{P}_{\theta'} \right) = \frac{n}{2 \sigma^2} \mathcal{D}^2(\boldsymbol{f}_{\theta}, \boldsymbol{f}_{\theta'}) =  \frac{n}{2\sigma^2 M} \sum_{k=M+1}^{2M} (\theta_{k} - \theta'_{k})^2 \rho_k \\
    & \quad\quad\quad \le \frac{n \rho_M}{2\sigma^2 M} H(\boldsymbol{\theta}, \boldsymbol{\theta}') \le \frac{n \rho_M}{2\sigma^2 }. 
\end{align*}
For any $0 < \alpha < 1/8$, let $M = \lceil c_{0} n^{1 /(2 r+1)} q^{2r/(2r+1)} \rceil$ and $c_0 = D \alpha^{-1/(2r+1)}  $ for some large enough $D > 0$, then  
\begin{align*}
    \frac{1}{N}\sum_{k=1}^N \mathcal{K}\left(\mathbb{P}_{\theta^{(k)}} \| \mathbb{P}_{\theta^{(0)}} \right) \le \frac{c_2}{2 \sigma^2} n \left( \frac{M}{q} \right)^{-2r} \le \frac{c_2 c_0^{-(2r+1)}}{2 \sigma^2}  M \le \alpha \log N.
\end{align*}
On the other hand, 
\begin{align*}
    \mathcal{D}^2(\boldsymbol{f}_{\theta}, \boldsymbol{f}_{\theta'})  \ge  \frac{\rho_{2M}}{M}  H(\boldsymbol{\theta}, \boldsymbol{\theta}') \ge \frac{\rho_{2M}}{8} \ge \frac{c_1}{8} \left(\frac{2M}{q} \right)^{-2r} \ge 4 d {\alpha}^{\frac{2r}{2r+1}} \left( n/q \right)^{-\frac{2r}{2r+1}},
\end{align*}
for some small enough $d > 0$. By Theorem 2.5 in \citesupp{Tsybakov2009} we have
\begin{align*}
    & \inf_{ \wt{\boldsymbol{f}}_{\calS} } \ \sup_{\TSS \in \mathscrbf{P}(r,C, C')} \ \max_{\boldsymbol{\theta} \in \boldsymbol{\Theta}_0}  \mathbb{P}_{\theta}\left( \mathcal{D}^2(\ftildets, \boldsymbol{f}_{\theta}) \ge d {\alpha}^{\frac{2r}{2r+1}} (n / q)^{-\frac{2r}{2r+1}} \right) \\
    & \quad\quad \ge \frac{\sqrt{N}}{1+\sqrt{N}}\left(1-2 \alpha-\sqrt{\frac{2 \alpha}{\log N}}\right).
\end{align*}
Letting $a = d {\alpha}^{2r/(2r+1)}$, we have
\begin{align}\label{eq:lower_bound_special_case}
\lim _{a \rightarrow 0} \lim _{n \rightarrow \infty} \inf_{ \wt{\boldsymbol{f}}_{\calS} } \ \sup_{\TSS \in \mathscrbf{P}(r,C, C')} \ \max_{\boldsymbol{\theta} \in \boldsymbol{\Theta}_0}  \mathbb{P}_{\theta}\left( \mathcal{D}^2(\ftildets, \boldsymbol{f}_{\theta}) \ge a (n / q)^{-\frac{2r}{2r+1}} \right) = 1.
\end{align}
The minimax lower bound result in the theorem is derived by combining \eqref{eq:special_case} and \eqref{eq:lower_bound_special_case}.
\end{proof}

%
%
%
%
%
%
%
%
%
%
\noindent{\bf Proof of Theorem \ref{thm:minimax_upper}}
\begin{proof}
We first note that
\begin{align*}
    & \mathbb{P}\left( \mathcal{R}( \fhaths ) \ge B  \right) - \mathbb{P}\left( \mathcal{R}( \widehat{\boldsymbol{f}}_{\calS})\ge B \right) \\
    & =  \mathbb{P}\left( \mathcal{R}( \fhaths ) \ge B, \widehat{\calS}\not=\calS\right)
    - \mathbb{P}\left( \mathcal{R}( \widehat{\boldsymbol{f}}_{\calS})\ge B, \widehat{\calS}\not=\calS\right).
\end{align*}
Thus, as long as $\mathbb{P}\left(\widehat{\calS}\not=\calS\right)\to 0$, we have
\begin{align*}
    & \lim_{n \rightarrow \infty} \ \sup_{\TSS} \ \sup_{\fzeros \in \mathbb{L}_2^q}  \mathbb{P}\left( \mathcal{R}( \fhaths ) \ge B  \right) 
    = \lim_{n \rightarrow \infty}  \ \sup_{\TSS} \ \sup_{\fzeros \in \mathbb{L}_2^q}   \mathbb{P}\left( \mathcal{R}( \widehat{\boldsymbol{f}}_{\calS})\ge B \right).
\end{align*}
From \eqref{equ:object_ridge}, we can easily derive that
    \begin{align*}
        \fhatts = \left(\TSS_{n,\lambda_3} \right)^{-1} \left\{ \TnSS \fzeros + \gns \right\},
    \end{align*}
where $\fzeros$ and $\gns$ are defined in \eqref{equ:kkt0}.
Define $\ftildets = \left(\TSS_{\lambda_3} \right)^{-1} \TSS \fzeros$,
then
\begin{align}
       \mathcal{R}^{1/2}(\fhatts) &= \left\| \left( \TSS \right)^{1/2} (\fzeros - \fhatts)  \right\|_2 \nonumber \\
       &\le \left\| \left( \TSS \right)^{1/2} (\fzeros - \ftildets)  \right\|_2 + \left\| \left( \TSS \right)^{1/2} (\ftildets - \fhatts)  \right\|_2. \label{equ:thm4_bound}
\end{align}
By Lemma \ref{lemma:minimax_upper_1}, the first term in \eqref{equ:thm4_bound} can be bounded by $\frac{1}{2} \lambda_3^{1/2} \left\| \fzeros \right\|_2$. In order to bound the second term, note that
\begin{align*}
    \ftildets - \fhatts &= \left(\TSS_{\lambda_3} \right)^{-1} \TSS_{n,\lambda_3} \left( \ftildets - \fhatts \right) \\
    & \quad + \left(\TSS_{\lambda_3} \right)^{-1} \left( \TSS - \TnSS  \right) \left( \ftildets - \fhatts \right) \\
    &= \left(\TSS_{\lambda_3} \right)^{-1} \TnSS \left( \ftildets - \fzeros \right) + \lambda_3 \left(\TSS_{\lambda_3} \right)^{-1} \ftildets  \\
    & \quad - \left(\TSS_{\lambda_3} \right)^{-1} \gns + \left(\TSS_{\lambda_3} \right)^{-1} \left( \TSS - \TnSS  \right) \left( \ftildets - \fhatts \right) \\
    &= \left(\TSS_{\lambda_3} \right)^{-1} \TSS \left( \ftildets - \fzeros \right) \\ 
    & \quad + \left(\TSS_{\lambda_3} \right)^{-1} \left( \TnSS - \TSS  \right) \left( \ftildets - \fzeros \right) \\
    & \quad + \lambda_3 \left(\TSS_{\lambda_3} \right)^{-2} \TSS \fzeros 
 - \left(\TSS_{\lambda_3} \right)^{-1} \gns \\
    & \quad + \left(\TSS_{\lambda_3} \right)^{-1} \left( \TSS - \TnSS  \right) \left( \ftildets - \fhatts \right) .
\end{align*}
Therefore, by the triangular inequality,
\begin{align} \label{e:longdisplay}
\begin{split}
    & \left\| \left( \TSS \right)^{\nu_1} (\ftildets - \fhatts)  \right\|_2 \\
    & \le \left\| \left( \TSS \right)^{\nu_1} \left(\TSS_{\lambda_3} \right)^{-1} \TSS \left( \ftildets - \fzeros \right)  \right\|_2 \\
    & \quad + \left\| \left( \TSS \right)^{\nu_1} \left(\TSS_{\lambda_3} \right)^{-1} \left( \TnSS - \TSS  \right) \left( \ftildets - \fzeros \right)  \right\|_2 \\
    & \quad + \left\| \lambda_3 \left( \TSS \right)^{\nu_1} \left(\TSS_{\lambda_3} \right)^{-2} \TSS \fzeros \right\|_2  \\
    & \quad + \left\| \left( \TSS \right)^{\nu_1} \left(\TSS_{\lambda_3} \right)^{-1} \gns \right\|_2 \\
    & \quad + \left\| \left( \TSS \right)^{\nu_1} \left(\TSS_{\lambda_3} \right)^{-1} \left( \TSS - \TnSS  \right) \left( \ftildets - \fhatts \right) \right\|_2.
    \end{split}
\end{align}
For convenience, define
\begin{align*}
    A(\nu) &= \left\| \left( \TSS \right)^{\nu} (\ftildets - \fhatts)  \right\|_2, \\
    B_1(\nu) &= \left\| \left( \TSS \right)^{\nu} (  \ftildets - \fzeros )  \right\|_2, \\
    B_2(\nu_1,\nu_2) &=\vertiii{ \left( \TSS \right)^{\nu_1} \left(\TSS_{\lambda_3} \right)^{-1} \left( \TnSS - \TSS  \right) \left( \TSS \right)^{-\nu_2}  }_{2,2}, \\
    B_3(\nu) &= \vertiii{ \lambda_3 \left( \TSS \right)^{\nu} \left(\TSS_{\lambda_3} \right)^{-1}}_{2,2} \left\| \fzeros \right\|_2, \\
    B_4(\nu) &=\left\| \left( \TSS \right)^{\nu} \left(\TSS_{\lambda_3} \right)^{-1} \gns \right\|_2.
    \end{align*}
Then, \eqref{e:longdisplay} may be further developed as
\begin{align} \label{e:longdisplay_1}
A(\nu_1) \le B_1(\nu_1) + B_2(\nu_1,\nu_2) B_1(\nu_2) + 
B_3(\nu_1) + B_4(\nu_1) + B_2(\nu_1,\nu_2) A(\nu_2).
\end{align}
According to Lemma \ref{lemma:minimax_upper_1}-\ref{lemma:minimax_upper_3}, for $0 < \nu \le 1/2$,
\begin{align*}
    B_1(\nu), 
    \ B_3(\nu), 
    \ B_4(\nu) = O_p(\lambda_3^{\nu}), \ 
     B_4(\nu) = O_{p}\left( \left( \frac{n}{q}  \lambda_3^{1-2 \nu + \frac{1}{2r} }\right)^{-\frac{1}{2}} \right).
\end{align*}
First, let $\nu_1 = \nu_2 = \nu$ in \eqref{e:longdisplay_1}, where $0 < \nu < 1/2 - 1/(4r)$. 
According to Lemma \ref{lemma:minimax_upper_4}, 
\begin{align} \label{equ:thm4_order}
\begin{split}
   B_2(\nu,\nu) & = O_{p}\left( q^{\frac{1}{2}}\left( \frac{n}{q}  \lambda_3^{1-2 \nu + \frac{1}{2r} }\right)^{-\frac{1}{2}}\right) =  O_p \left(q^{\frac{1}{2}} \lambda_3^{\nu} \right)
   = o_p(1), \\
   B_4(\nu) &= O_p(\lambda_3^\nu),
   \end{split}
\end{align}
provided that $\lambda_3 \asymp (n/q)^{-2r/(2r+1)}$ and $q = o\left(n^{\frac{4r\nu}{2r+1+4r\nu}} \right)$. In this case, combining the last term on the rhs of \eqref{e:longdisplay_1} with the lhs, we obtain
\begin{align} \label{e:first_estimate}
\left\| \left( \TSS \right)^{\nu} (\ftildets - \fhatts)  \right\|_2  = O_p(\lambda_3^{\nu})
\end{align}
provided that $\lambda_3 \asymp (n/q)^{-2r/(2r+1)}$.

Next, we let $\nu_1 = 1/2$, $\nu_2 = \nu\in (0,1/2 - 1/(4r))$ in \eqref{e:longdisplay_1}. According to Lemma \ref{lemma:minimax_upper_4}, 
\begin{align*}
B_2(1/2,\nu) = O_{p}\left( q^{\frac{1}{2}}\left( \frac{n}{q}  \lambda_3^{\frac{1}{2r} }\right)^{-\frac{1}{2}}\right)
    =  O_p \left(q^{\frac{1}{2}} \lambda_3^{\frac{1}{2}} \right)
    = o_p \left(q^{\frac{1}{2}} \lambda_3^{\nu} \right)
\end{align*}
provided that $\lambda_3 \asymp (n/q)^{-2r/(2r+1)}$.

\begin{align*}
    & \vertiii{ \left( \TSS \right)^{1/2} \left(\TSS_{\lambda_3} \right)^{-1} \left( \TnSS - \TSS  \right) \left( \TSS \right)^{-\nu}  }_{2,2} \\
    &= O_{p}\left( q^{\frac{1}{2}}\left( \frac{n}{q}  \lambda_3^{\frac{1}{2r} }\right)^{-\frac{1}{2}}\right)
    =  O_p \left(q^{\frac{1}{2}} \lambda_3^{\frac{1}{2}} \right)
    = o_p \left(q^{\frac{1}{2}} \lambda_3^{\nu} \right)
\end{align*}
provided that $\lambda_3 \asymp (n/q)^{-2r/(2r+1)}$.
When $q = o\left(n^{\frac{4r\nu}{2r+1+4r\nu}} \right)$, the above expression has an order of $o_p(1)$. In this case, again
\begin{align*}
    & \left\| \left( \TSS \right)^{1/2} (\ftildets - \fhatts)  \right\|_2 = O_p \left(\lambda_3^{\frac{1}{2}} + \left( \frac{n}{q}  \lambda_3^{\frac{1}{2r} }\right)^{-\frac{1}{2}}  + q^{\frac{1}{2}}\lambda_3^{\frac{1}{2} + \nu}  \right).
\end{align*}
Thus, $\left\| \left( \TSS \right)^{1/2} (\ftildets - \fhatts)  \right\|_2 = O_p \left(\lambda_3^{\frac{1}{2}} \right)$. As a result, $\mathcal{R}(\fhatts) = O_p \left(\lambda_3 \right)$ provided that $\lambda_3 \asymp (n/q)^{-2r/(2r+1)}$.
Finally, let $\nu \rightarrow 1/2 - 1/(4r)$, we have $q = o\left(n^{\frac{2r-1}{4r}} \right)$.
\end{proof}

\subsection{Proofs of Lemmas} \label{s:lemmas}

\noindent{\bf Proof of Lemma \ref{lm:tail_sum}}
\begin{proof}
    Note that
    \begin{align*}
         \sum_{k=1}^K a_k \exp(-b_k x)  \ \le \ 
         Ka \exp(-bx)  \ = \ 
         \exp\left[ - \{ b - x^{-1} \log(Ka) \} x \right]. 
    \end{align*}
We established the Lemma by noting that $b - x^{-1} \log(Ka) > (1 - \epsilon) b$. 
\end{proof}

\noindent{\bf Proof of Lemma \ref{lm:(i)}} 
\begin{proof}
First of all, we claim that, for any $\xi \in (0,1)$, 
\begin{align} \label{equ:tail_bound_1}
\begin{split}
	& \pr\left(\mathrm{max}_{j \in \calS^c} \bigg\| \frac{\sigma}{\lambda_1 {n}} \wt \BX_{\bullet j}^{\top} \bigg( \mathrm{I} - \boldsymbol{\Delta}_n \bigg) \boldsymbol{z}_n  \bigg\|_{2} \ge \frac{ \xi C_{\min}}{3}\right) \\
	& \le 2 (p-q) \exp \left\{ -\frac{ \lambda_1^2 C_{\min}^2 \xi^2 n}{48 \sigma^2 \tau  } \right\} + (p-q) \exp \left( -\frac{n}{32} \right).
\end{split}
\end{align}
To show \eqref{equ:tail_bound_1}, first apply the union bound to get
\begin{align*}
	& \pr\Bigg(\mathrm{max}_{j \in \calS^c} \left\| \frac{\sigma}{\lambda_1 n} \wt \BX_{\bullet j}^{\top} \bigg( \mathrm{I} - \boldsymbol{\Delta}_n \bigg) \boldsymbol{z}_n  \right\|_{2} \ge \frac{ \xi C_{\min}}{3} \Bigg) \\
	& =  \pr\Bigg( \bigcup_{j \in \calS^c} \bigg\{ \left\| \frac{\sigma}{\lambda_1 {n}} \wt \BX_{\bullet j}^{\top} \bigg( \mathrm{I} - \boldsymbol{\Delta}_n \bigg) \boldsymbol{z}_n  \right\|_{2} \ge \frac{ \xi C_{\min}}{3} \bigg\} \Bigg) \\
	& \le \sum_{j \in \calS^c} \pr \Bigg( {1\over \sqrt{n} } \left\| \wt \BX_{\bullet j}^{\top} \bigg( \mathrm{I} - \boldsymbol{\Delta}_n \bigg) \boldsymbol{z}_n
\right\|_{2} \ge \frac{\lambda_1 \xi C_{\min} \sqrt{n}}{3\sigma} \Bigg).
\end{align*}
Write
\begin{align} \label{e:spectral_W}
 	\TnSS 
  = {1\over n}  \wt \BX_{\calS}^{\top} \otimes \wt\BX_{\calS} 
 	= \sum_{k=1}^{\infty} \widehat{\rho}_k \widehat{\boldsymbol{\phi}}_{k} \otimes \widehat{\boldsymbol{\phi}}_{k}^{\top},
\end{align}
where the $\widehat{\rho}_k$ are the (nonnegative) eigenvalues of $\TnSS$ arranged in descending order and $\widehat{\boldsymbol{\phi}}_{k}$ are the corresponding eigenfunctions.
Assume without loss of generality that $\{\widehat{\boldsymbol{\phi}}_{k}, k \ge 1\}$ is a CONS
of $\mathbb{L}_2^q[0,1]$. It follows that
\ben\label{eq:Xs_decomp}
 	\wt\BX_{\calS}(u)  = \sum_{k=1}^{\infty} \widehat{\boldsymbol{\zeta}}_k \widehat{\boldsymbol{\phi}}_{k}^{\top}(u),
\een
where $\widehat{\boldsymbol{\zeta}}_k := \int  \wt \BX_{\calS}(t) \widehat{\boldsymbol{\phi}}_{k}(t) dt$ satisfies
\begin{align}\label{eq:zeta_orthonorm}
	{1\over n} \widehat{\boldsymbol{\zeta}}_j^{\top} \widehat{\boldsymbol{\zeta}}_k 
	= \langle \widehat{\boldsymbol{\phi}}_{j}, \TnSS \widehat{\boldsymbol{\phi}}_{k} \rangle_2 
	= \widehat{\rho}_k \delta_{j,k}.
\end{align}
Since there are at most $n$ linearly independent $\widehat{\boldsymbol{\zeta}}_k$, $\widehat{\rho}_k = 0, k > n$ and higher order eigenfunctions
$\widehat{\boldsymbol{\phi}}_{k}$, for $ k>n$ can be obtained by the Gram-Schmidt orthogonalization. 
Thus, we can re-express $\boldsymbol{\Delta}_n$ as
\begin{align*}
	\boldsymbol{\Delta}_n &= {1\over n} \int \sum_{i=1}^n \widehat{\boldsymbol{\zeta}}_i \widehat{\boldsymbol{\phi}}_i^{\top}(u)
\left\{\sum_{k=1}^{\infty} (\widehat{\rho}_k + \lambda_2)^{-1}(\widehat{\boldsymbol{\phi}}_k\otimes
\widehat{\boldsymbol{\phi}}_k^{\top} ) \wt\BX_{\calS}\right\}(u) du \\
	& = {1\over n} \iint \sum_{i=1}^n \widehat{\boldsymbol{\zeta}}_i \widehat{\boldsymbol{\phi}}_i^{\top}(u)
\left\{\sum_{k=1}^{\infty} \widehat{\boldsymbol{\phi}}_k(u) (\widehat{\rho}_k + \lambda_2)^{-1} \widehat{\boldsymbol{\phi}}_k^{\top}(v) \right\} \sum_{j=1}^n \widehat{\boldsymbol{\phi}}_j(v) \widehat{\boldsymbol{\zeta}}_j^{\top}du dv \\
	& = {1\over n} \sum_{k=1}^{n} \frac{1}{\widehat{\rho}_k + \lambda_2} \widehat{\boldsymbol{\zeta}}_k \widehat{\boldsymbol{\zeta}}_k^{\top} 
	= \sum_{k=1}^{n} \frac{ \widehat{\rho}_k }{\widehat{\rho}_k + \lambda_2} \widehat{\boldsymbol{\zeta}}_k^* \widehat{\boldsymbol{\zeta}}_k^{* \top},
\end{align*}
where $\widehat{\boldsymbol{\zeta}}_k^* =(n \widehat{\rho_k})^{-1/2} \widehat{\boldsymbol{\zeta}}_k$ are $n$-dim orthonormal vectors. Clearly, $\mathrm{I} - \boldsymbol{\Delta}_n$ is a positive-definite matrix with all eigenvalues less or equal to $1$. 

Conditional on $\boldsymbol{X}_n$,  $Q_j(t):=n^{-1/2} \wt \BX_{\bullet j}^{\top}(t) ( \mathrm{I} - \boldsymbol{\Delta}_n ) \boldsymbol{z}_n$ is a rank $n$ Gaussian process with
\begin{align*}
	\E[Q_j(t)|\boldsymbol{X}_n]=0 \quad\mbox{and}\quad 
	\cov[Q_j(s), Q_j(t)|\boldsymbol{X}_n] = n^{-1} \wt \BX_{\bullet j}^{\top}(s) ( \mathrm{I} - \boldsymbol{\Delta}_n )^2 \wt \BX_{\bullet j}(t).
\end{align*}
Also, note that
\begin{align}
  	n^{-1} \int \wt \BX_{\bullet j}^{\top}(s) ( \mathrm{I} - \boldsymbol{\Delta}_n )^2 \wt \BX_{\bullet j}(s) ds 
	\le n^{-1} \int \wt \BX_{\bullet j}^{\top}(s)  \wt \BX_{\bullet j}(s) ds = \tr\left(\mathscr{T}_n^{(j,j)}\right).
   \end{align}
Define the event $\mathcal{D}_j(c_0) = \left\{\tr\left(\mathscr{T}_n^{(j,j)}\right) < c_0 \right\}$.
It follows that
\begin{align} \label{e:Q1}
\begin{split}
 & \pr \Bigg( \| Q_j \|_{2} \ge \frac{\lambda_1 \xi C_{\min} \sqrt{n}}{3\sigma} \Bigg)  \\
 & =  \E \Bigg[\pr \Bigg( \| Q_j \|_{2}^2 \ge \frac{\lambda_1^2 \xi^2 C_{\min}^2 n}{9\sigma^2} 
 \Bigg| \boldsymbol{X}_n \Bigg) \Bigg] \\
 & \le  \E \Bigg[ \pr \Bigg( \| Q_j \|_{2}^2 \ge \frac{\lambda_1^2 \xi^2 C_{\min}^2 n}{9\sigma^2} \Bigg| \boldsymbol{X}_n, \mathcal{D}_j(c_0) \Bigg) \mathrm{I}\left( \mathcal{D}_j(c_0) \right)  \Bigg]
+ \pr\left(\mathcal{D}_j^c(c_0)\right) \\
& \le  \E \Bigg[ \pr \Bigg( \| Q_j \|_{2}^2 \ge \frac{\lambda_1^2 \xi^2 C_{\min}^2 n}{9\sigma^2} \Bigg| \boldsymbol{X}_n, \mathcal{D}_j(c_0) \Bigg)   \Bigg]
+ \pr\left(\mathcal{D}_j^c(c_0)\right).
\end{split}
 \end{align}
By Lemma \ref{lm:large_deviation_2nd_ver} (i) with $L=1, K=n, s=4/3$,
\begin{align} \label{e:Q2}
 \pr \Bigg(  \| Q_j \|_{2}^2 \ge \frac{\lambda_1^2 \xi^2 C_{\min}^2 n}{9\sigma^2} \Bigg| \boldsymbol{X}_n, \mathcal{D}_j(c_0) \Bigg) 
\le 2 \exp\bigg\{ - \frac{\lambda_1^2 \xi^2 C_{\min}^2 n}{24 \sigma^2 c_0} \bigg\}.
\end{align}
Recall that $\Tnjj 
= \frac{1}{n} \sum_{i=1}^n \widetilde{X}_{ij}\otimes \widetilde{X}_{ij}$,
and $\widetilde{X}_{ij} \overset{indep}{\sim}  \mathcal{GP} \left(0, \mathcal{T}^{(j,j)} \right)$. Thus, $\tr\left(\mathscr{T}_n^{(j,j)}\right)  
=\frac{1}{n} \sum_{i=1}^n \| \widetilde{X}_{ij} \|_{2}^2
$ and
\begin{align*}
\pr\left(\mathcal{D}_j^c(c_0)\right) 
= \pr\left(\tr\left(\mathscr{T}_n^{(j,j)}\right) > c_0 \right)
= \pr\left(\sum_{i=1}^n \| \widetilde{X}_{ij} \|_{2}^2 > n c_0 \right).
\end{align*}
Thus, by Lemma \ref{lm:large_deviation_2nd_ver} (ii) with $L=n$, $s = 16/9$ and the assumption (C.\ref{ass:a2})
we obtain
\begin{align} \label{e:Q3}
	\pr\left(\mathcal{D}_j^c(c_0)\right) 
	\le  \exp\left\{  - \frac{ c_0 - \tr(\Tjj) }{ 32  } n \right\}
	\le  \exp\left\{  - \frac{ c_0 - \tau }{ 32  } n \right\},
\end{align}
for any $c_0 > (1 + s/2)\tau$.
It follows from \eqref{e:Q1}-\eqref{e:Q3}, with $c_0 = 2 \tau$, $\tau > 1$ and $\lambda_1 < D_{1,1}^*$, we obtain
\begin{align*}
	& \pr\left(\mathrm{max}_{j \in \calS^c} \left\|\frac{\sigma}{\lambda_1 \sqrt{n}} Q_j\right\|_{2} 
	\ge \frac{ \xi C_{\min}}{3} \right) \\
	& \le 2 (p-q) \exp \left\{ -\frac{ \lambda_1^2 C_{\min}^2 \xi^2 n}{48 \sigma^2 \tau  } \right\} + (p-q) \exp \left( -\frac{\lambda_1^2 n}{32 (D_{1,1}^*)^2 } \right).
\end{align*}
This proves \eqref{equ:tail_bound_1}.
Suppose for $d \in (0, 1)$,  we have
\begin{align*}
    \lambda_1 > \max\left( \sqrt{{48 \over d}} \cdot {\sigma \tau^{1/2} \over C_{\min} \xi} , \ \   \sqrt{\frac{32}{d}} \cdot D_{1,1}^* \right) \cdot \sqrt{{\log(p-q) \over n}},
\end{align*}
which is equivalent to
\begin{align*}
     \frac{  C_{\min}^2  \xi^2}{48 \sigma^2 \tau  } \cdot \lambda_1^2 n - \log(p-q) > (1-d) \frac{ C_{\min}^2 \xi^2}{48 \sigma^2 \tau  } \cdot \lambda_1^2 n,
\end{align*}
and
\begin{align*}
    \frac{ \lambda_1^2 n }{32 (D_{1,1}^*)^2 } - \log(p-q) > (1-d) \frac{ \lambda_1^2 n }{32 (D_{1,1}^*)^2 }.
\end{align*}
By Lemma \ref{lm:tail_sum} with $\xi = \gamma/3$ and $d = 1/2$, we have
\begin{align*}
	& \pr\left(\mathrm{max}_{j \in \calS^c} \bigg\| \frac{\sigma}{\lambda_1 {n}} \wt \BX_{\bullet j}^{\top} \bigg( \mathrm{I} - \boldsymbol{\Delta}_n \bigg) \boldsymbol{z}_n  \bigg\|_{2} \ge \frac{ \gamma C_{\min}}{9}\right) \le \exp\left( - D\lambda_1^2 n  \right)
\end{align*}
holds for any $D$ and $\lambda_1$ such that
\begin{align*}
    \lambda_1 >  D_1^* \cdot {  (\sigma+1) \tau^{1/2} \over C_{\min} \gamma} \cdot \sqrt{{\log(p-q) \over n}},
\end{align*}
and
\begin{align*}
    D < D_2^* \frac{ C_{\min}^2 \gamma^2}{ (\sigma + 1)^2 \tau  } < \min \left\{ \frac{ C_{\min}^2 \gamma^2}{ 864 \sigma^2 \tau  }, \frac{1}{32 ({D_{1,1}^*})^2 } \right\},
\end{align*}
where $D_1^*$ and $D_2^*$ are universal constants.
\end{proof}

\noindent{\bf Proof of Lemma \ref{lm:proof(ii)}}
\begin{proof}
Let constants $\xi,\delta$, and $\mu = \lambda_1 / \lambda_2$ satisfy
\begin{align} \label{e:xi_delta}
\xi\in (0,\gamma/2) \quad\mbox{and}\quad
\delta\in (0,(\gamma-2\xi)/(1-\gamma)) \quad\mbox{and}\quad \mu C_{\max} > (1-2\xi) / \xi.
\end{align}
We claim that
\begin{align} \label{equ:tail_bound_02}
\begin{split}
	& \pr\left(   \max_{j \in \calS^c}  \left\| \TnjS \left(\JnSS\right)^{-1} \left(\frac{\lambda_2}{\lambda_1} \fzeros+\ws\right) \right\|_{2}
\ge \left(1-{2 \xi \over 3}\right) C_{\min} \right)\\
	& \le  \exp\left\{ -\frac{\lambda_2^2 \varkappa^2 \delta^2 n}{4 C^2 \rho_1^2 q} \right\}
+ 2 (p-q) \exp \left\{ -\frac{ \lambda_2 (C_{\min}/C_{\max})^2 \xi^2 n}{24 (1 + \mu^{-1} C_{\max}^{-1})^2 \tau q}\right\},
\end{split}
\end{align}
for some constant $C >0$ and $q$ that satisfy
\begin{align} \label{e:C_q}
\frac{\lambda_2^2 \varkappa^2 \delta^2}{ 4 C^2 \rho_1^2} < q \le  \sqrt{\frac{\lambda_2^2  \varkappa^2 \delta^2 n }{ 4 C^2 \tau \rho_1}}.
\end{align}
To show \eqref{equ:tail_bound_02}, by Lemma \ref{lemma:cond_dist}, for any $j\in \calS^c$, 
\ben\label{eq:Xj_Xs}
	\wt \BX_{\bullet j}^{\top} \eqid \TjS (\TSS)^{-} \wt \BX_\calS^{\top} + \boldsymbol{E}_j^{\top},
\een
where $\boldsymbol{E}_j=(E_{1j},\ldots, E_{nj})^\top$ is a vector of iid zero-mean Gaussian processes independent of $\wt \BX_\calS$ with a covariance
operator 
\begin{align} \label{e:co_E_j}
	\mathscrbf{T}^{(j|\mathcal{S})} := \T^{(j,j)} - \TjS (\TSS)^{-} \T^{(\calS, j)} .
\end{align}
With (\ref{eq:Xj_Xs}) and Condition \ref{ass:a1},
%
\ben \label{e:main_1}
	&& \bigg\|  \TnjS \left(\JnSS\right)^{-1} \left(\frac{\lambda_2}{\lambda_1} \fzeros+\ws\right) \bigg\|_{2}  \\
	&& \hskip-5mm =  \bigg\| \int {1\over n} \wt \BX_{\bullet j}^{\top}(\cdot) \wt\BX_\calS (s) 
 \left(\JnSS\right)^{-1}  \left(\frac{\lambda_2}{\lambda_1} \fzeros+\ws\right)(s) ds \bigg\|_{2}  \nonumber\\
	&& \hskip-5mm =  \bigg\| \int   {1\over n} \{ \TjS (\TSS)^{-} \wt \BX_\calS^{\top} +  \boldsymbol{E}_j^{\top} \} (\cdot) \wt \BX_\calS(s)   \left(\JnSS\right)^{-1}  \left(\frac{\lambda_2}{\lambda_1} \fzeros+\ws\right)(s) ds \bigg\|_{2} \nonumber\\
	&& \hskip-5mm \le \vertiii{\TjS (\TSS)^{-}}_{\infty, 2} \vertiii{\TnSS (\JnSS)^{-1}}_{\infty, \infty} \left(\frac{\lambda_2}{\lambda_1} \|\fzeros\|_{\infty}  + C_{\max} \right) + \big\| \boldsymbol{E}_j^{\top} (\cdot) \boldsymbol{Z} \big\|_{2}, \nonumber 
\een
where
\begin{align} \label{e:Z}
\boldsymbol{Z}_{n\times 1}:={1\over n} \int \wt \BX_\calS (s)  \left(\JnSS\right)^{-1}  \left(\frac{\lambda_2}{\lambda_1} \fzeros+\ws\right)(s) ds.
\end{align}
Note that if 
\begin{align*} 
\vertiii{\TnSS (\JnSS)^{-1}}_{\infty, \infty} < \varkappa(1+\delta)
\quad\mbox{and}\quad
\max_{j \in \calS^c}\big\| \boldsymbol{E}_j^{\top} (\cdot) \boldsymbol{Z} \big\|_{2} < \frac{ \xi C_{\min}}{3},
\end{align*}
then \eqref{e:xi_delta}, \eqref{e:main_1} along with Condition C.\ref{ass:a3} and $\|\fzeros\|_{\infty} = 1$ give
\begin{align*}
& \max_{j \in \calS^c} \bigg\|\TnjS \left(\JnSS\right)^{-1} \left(\frac{\lambda_2}{\lambda_1} \fzeros+\ws\right) \bigg\|_{2} \\
& <  (1-\gamma)(1+\delta) \left(1 + C_{\max}^{-1}\frac{\lambda_2}{\lambda_1} \right) C_{\min} + {\xi C_{\min}\over 3}
< \left( 1-{2 \xi \over 3} \right) C_{\min},
\end{align*}
where the last inequality follows from the fact
$$
(1-\gamma)(1+\delta) \left(1 + C_{\max}^{-1}\frac{\lambda_2}{\lambda_1} \right) < 1-\xi
$$
by \eqref{e:xi_delta}.
In the following, we establish
\begin{align} \label{e:(i)0}
\pr\left(\vertiii{\TnSS (\JnSS)^{-1}}_{\infty, \infty} \ge \varkappa ( 1 + \delta ) \right)
\le \exp\left\{ -\frac{\lambda_2^2 \varkappa^2 \delta^2 n}{4 C^2 \rho_1^2 q} \right\}
\end{align}
and
\begin{align} \label{e:(ii)0}
\pr\left(\max_{j \in \calS^c}\big\| \boldsymbol{E}_j^{\top} (\cdot) \boldsymbol{Z} \big\|_{2} \ge \frac{ \xi C_{\min}}{3}\right) \le 2 (p-q) \exp \left\{ -\frac{ \lambda_2 (C_{\min}/C_{\max})^2  \xi^2 n }{24  (1 + \mu^{-1} C_{\max}^{-1})^2 \tau q }\right\}.
\end{align}
To show \eqref{e:(i)0}, first apply the triangle inequality to obtain
\begin{align} \label{e:(i)1}
\begin{split}
& \vertiii{\TnSS (\JnSS)^{-1}}_{\infty, \infty} \\
& \le \vertiii{\TSS (\JSS)^{-1}}_{\infty, \infty} + \vertiii{\TnSS \left\{ (\JnSS)^{-1} - (\JSS)^{-1} \right\} }_{\infty, \infty} \\
& \hspace{.3cm} + \vertiii{ (\TnSS - \TSS) (\JSS)^{-1} }_{\infty, \infty}.
\end{split}
\end{align}
Then, by Lemma \ref{lemma:norm_inequality}, 
\begin{align} \label{e:(i)2}
\begin{split}
&  \vertiii{\TnSS \left\{ (\JnSS)^{-1} - (\JSS)^{-1} \right\} }_{\infty, \infty} \\
& \le  \sqrt{q} \vertiii{\TnSS (\JnSS)^{-1} ( \TSS - \TnSS ) (\JSS)^{-1} }_{2,2} \\
& \le \sqrt{q} \vertiii{\TnSS (\JnSS)^{-1}}_{2,2} \vertiii{ \TSS - \TnSS}_{2,2} \vertiii{ (\JSS)^{-1} }_{2,2} 
\end{split}
\end{align}
and
\begin{align} \label{e:(i)3}
\begin{split}
& \vertiii{ (\TnSS - \TSS) (\JSS)^{-1} }_{\infty, \infty} \\
& \le \sqrt{q} \vertiii{ (\TnSS - \TSS) (\JSS)^{-1} }_{2,2} \\
& \le \sqrt{q} \vertiii{ \TnSS - \TSS }_{2,2} \vertiii{(\JSS)^{-1} }_{2,2}.
\end{split}
\end{align}
Since
\begin{align}\label{e:(i)4}
\vertiii{ (\JSS)^{-1} }_{2,2} \le \frac{1}{\lambda_2} \quad \mbox{and} \quad
\vertiii{\TnSS (\JnSS)^{-1}}_{2,2} \le 1, \end{align}
\eqref{e:(i)1}-\eqref{e:(i)4} together with the Condition C.\ref{ass:a3} give
\begin{align*}
\vertiii{\TnSS (\JnSS)^{-1}}_{\infty, \infty} \le \varkappa + \frac{2\sqrt{q}}{\lambda_2} \vertiii{ \TSS - \TnSS}_{2,2}.
\end{align*}
Thus, for $\delta > 0$, by Lemma \ref{lemma:tail_bound} we have
\begin{align*} 
\begin{split}
\pr \left( \vertiii{\TnSS (\JnSS)^{-1}}_{\infty, \infty} \ge \varkappa (1 + \delta) \right) 
& \le \pr \left( \vertiii{\TSS - \TnSS}_{2,2} \ge \frac{\lambda_2 \varkappa \delta}{2\sqrt{q}} \right) \\
& \le \exp\left\{ -\frac{\lambda_2^2 \varkappa^2 \delta^2 n}{4 C^2 \rho_1^2 q} \right\}
\end{split}
\end{align*}
for some constant $C >0$ and $q$ satisfies \eqref{e:C_q}.
This proves \eqref{e:(i)0}. 

To prove \eqref{e:(ii)0}, recall the definitions of $\boldsymbol{Z}$ in \eqref{e:Z} and let
$U_j(\cdot) =  \boldsymbol{E}_j^{\top} (\cdot) \boldsymbol{Z}$.
We have
\begin{align*}
	\pr\left(  \mathrm{max}_{j \in \calS^c}\|U_j(\cdot)\|_{2}\ge \frac{ \xi C_{\min}}{3}\right) 
	& \le \sum_{j \in \calS^c} \pr\left(\|U_j(\cdot)\|_{2}\ge \frac{ \xi C_{\min}}{3}\right)  \\
	& =  \sum_{j \in \calS^c} \E \left\{ \pr\left(\|U_j(\cdot)\|_{2}\ge \frac{ \xi C_{\min}}{3} | \wt\BX_\calS \right) \right\}. 
\end{align*}
Also, conditional on $\wt\BX_\calS$, $U_j$ is a zero-mean Gaussian process with covariance operator $\mathscr{H}_j$ with trace 
\begin{align} \label{e:tr_Hj}
	\tr(\mathscr{H}_j) = \|\boldsymbol{Z}\|^2 \tr(\mathscrbf{T}^{(j|\mathcal{S})}),
\end{align}
where $\mathscrbf{T}^{(j|\mathcal{S})}$ is defined in \eqref{e:co_E_j}. 
It remains to bound $\|\boldsymbol{Z}\|^2$ and $\tr(\mathscrbf{T}^{(j|\mathcal{S})})$.
First, by \eqref{e:co_E_j} and (C.\ref{ass:a2}),
\begin{align} \label{e:tr_jS}
	\tr(\mathscrbf{T}^{(j|\mathcal{S})}) \le  \tr(\mathscrbf{T}^{(j,j)}) \le \tau.
\end{align}
By the decompositions in \eqref{e:spectral_W} and \eqref{eq:Xs_decomp},
\begin{align*}
 	\left(\JnSS\right)^{-1}  \left(\frac{\lambda_2}{\lambda_1} \fzeros + \ws \right)
	= \sum_{k \ge 1} \frac{1}{\widehat{\rho}_k + \lambda_2} 
	\left\langle\widehat{\boldsymbol{\phi}}_k, \frac{\lambda_2}{\lambda_1} \fzeros + \ws \right\rangle_{2}  \widehat{\boldsymbol{\phi}}_k, 
\end{align*}
and therefore 
\begin{align} \label{e:Z_norm}
\begin{split}
	\|\boldsymbol{Z}\|^2 &={1\over n^2} \left\| \sum_{k \ge 1} \frac{1}{\widehat{\rho}_k + \lambda_2}  \left\langle\widehat{\boldsymbol{\phi}}_k, \frac{\lambda_2}{\lambda_1} \fzeros + \ws \right \rangle_{2} \wh{\boldsymbol{\zeta}}_k \right\|^2 \\
	&= {1\over n} \sum_{k \ge 1} \frac{\widehat{\rho}_k}{(\widehat{\rho}_k + \lambda_2)^2}  \left\langle\widehat{\boldsymbol{\phi}}_k, \frac{\lambda_2}{\lambda_1} \fzeros + \ws \right\rangle_{2}^2 \quad  \hbox{(by \eqref{eq:zeta_orthonorm})}\\
	& \le {1\over  n\lambda_2}\left\|\frac{\lambda_2}{\lambda_1} \fzeros + \ws\right\|_{2}^2  \\
	&\le {q\over  n \lambda_2} \left(\lambda_2/\lambda_1  + C_{\max} \right)^2.
\end{split}
\end{align}
By \eqref{e:tr_Hj}, \eqref{e:tr_jS}, \eqref{e:Z_norm}, and an application of Lemma \ref{lm:large_deviations} (i) with $s = 4/3$,
\begin{align*} 
	\sum_{j \in \calS^c} \E \Bigg\{ \pr \Bigg( \| U_j \|_{2} \ge \frac{\xi C_{\min}}{3} \Bigg| \wt\BX_\calS \Bigg) \Bigg\}
	\le 2 (p-q) \exp \left\{ -\frac{ \lambda_2 (C_{\min}/C_{\max})^2 \xi^2 n}{24 (1 + \mu^{-1} C_{\max}^{-1})^2 \tau q }\right\}.
\end{align*}
This concludes the proof of \eqref{e:(ii)0}. According to Lemma \ref{lemma:varkappa_lower}, we have $\varkappa \ge \rho_1 (\rho_1 + \lambda_2)^{-1}$. Let $\xi = \delta = \gamma/3$, we find \eqref{equ:tail_bound_02} is bounded by
\begin{align}
    \exp\left\{ -\frac{\lambda_2^2 \gamma^2 n}{36 C^2 (\rho_1 + \lambda_2)^2 q} \right\}
+ 2 (p-q) \exp \left\{ -\frac{ \lambda_2 (C_{\min}/C_{\max})^2 \gamma^2 n}{ 864 \tau q}\right\}.
\label{equ:tail_bound_21}
\end{align}

Note that $\rho_1$ must be bounded from below by a universal constant, denoted as $D_0^*$. Without this lower bound, the model will only contain noise and no meaningful signals.
Below, we will use $D^*$ to denote a universal constant in $(0,\infty)$ whose value changes from line to line.
Suppose $\lambda_2$ satisfies
\begin{align}
    \lambda_2 > \frac{6 \max(1,D_{2,1}^*)}{(D_0^*)^{1/2}} \frac{ C \tau^{1/2} \left( \rho_1 + 1 \right)}{ \gamma} \cdot \sqrt{\frac{q^2}{n}} > \frac{6 C \tau^{1/2} \left( \rho_1 + \lambda_2 \right)}{ \sqrt{\rho_1} \gamma} \cdot \sqrt{\frac{q^2}{n}},
\end{align}
which meets Condition \eqref{e:C_q}.
It can be shown that the first term of \eqref{equ:tail_bound_21} is bounded by $\exp\left( -D_a^{(2)} \frac{\lambda_2^2 n}{q} \right)$ for any $D_a^{(2)} \le D^* \gamma^2 (\rho_1 + 1)^{-2}$. 
Suppose for $d \in (0, 1)$, $\lambda_2$ also satisfies
\begin{align}
    \lambda_2 > \frac{864 \tau}{d (C_{\min}/C_{\max})^2 \gamma^2} \cdot \frac{q \log(p-q)}{n},
\end{align}
which is equivalent to
\begin{align*}
     \frac{ (C_{\min}/C_{\max})^2 \gamma^2}{864 \tau  } \cdot \frac{\lambda_2 n}{q} -\log(p-q) > (1-d) \cdot \frac{ (C_{\min}/C_{\max})^2 \gamma^2}{864 \tau  } \cdot \frac{\lambda_2 n}{q}.
\end{align*}
Then, the second term of \eqref{equ:tail_bound_21} is bounded by 
\begin{align*}
    &(p-q) \exp \left\{ -\frac{ \lambda_2 (C_{\min}/C_{\max})^2 \gamma^2 n}{ 864 \tau q}\right\} \\
    \le & \exp \left\{ -  \frac{(1-d) (C_{\min}/C_{\max})^2 \gamma^2}{864 \tau  } \cdot \frac{\lambda_2 n}{q}  \right\} \\
    \le & \exp \left\{ -  \frac{(1-d) (C_{\min}/C_{\max})^2 \gamma^2}{864 D_{2,1}^* \tau  } \cdot \frac{\lambda_2^2 n}{q}  \right\} \\
    \le & \exp \left( - D_b^{(2)} \frac{\lambda_2^2 n }{q} \right),
\end{align*}
where $D_b^{(2)} \le  D^* (1-d) (C_{\min}/C_{\max})^2 \gamma^2\tau^{-1}$. 
The second inequality uses the fact $\lambda_2 < D_{2,1}^*$.
It follows from Lemma \ref{lm:tail_sum} with $d = 1/2$
\begin{align*}
	& \pr\left(   \max_{j \in \calS^c}  \left\| \TnjS \left(\JnSS\right)^{-1} \left(\frac{\lambda_2}{\lambda_1} \fzeros+\ws\right) \right\|_{2} 
    \ge \left(1-{ 2\gamma \over 9}\right) C_{\min} \right) 
    \le \exp\left( - D^{(2)} \frac{\lambda_2^2 n}{q} \right)
\end{align*}
holds for any $D^{(2)}$ and $\lambda_2$ such that
\begin{align*}
    D^{(2)} \le D^* \frac{(C_{\min}/C_{\max})^2 \gamma^2}{(\rho_1+1)^2 \tau } \le \min\left\{D_a^{(2)}, D_b^{(2)} \right\},
\end{align*}
and
\begin{align*}
    \lambda_2 > D^* \frac{\tau (\rho_1 + 1)}{(C_{\min}/C_{\max})^2 \gamma^2}  \max\left({q\log(p-q)\over n},\sqrt{q^2\over n} \right).
\end{align*}

\end{proof}

\noindent{\bf Proof of Lemma \ref{lm:(iii)}}

\begin{proof}
Define $\ESS$ to be the operator that only contains the off-diagonal elements of $\TSS$, i.e.
$\ESS = \TSS - \QSS = \JSS - \QSS_{\lambda_2}$.
Then
\begin{align}
\begin{split}
  & \vertiii{ (\QSS)^{1/2}  (\JSS)^{-1}}_{\infty,\infty} \\
    & = \vertiii{ (\QSS)^{1/2}  (\QSS_{\lambda_2})^{-1} + (\QSS)^{1/2} \bigg\{ (\JSS)^{-1} - (\QSS_{\lambda_2})^{-1} \bigg\} }_{\infty,\infty} \\
    & = \vertiii{ (\QSS)^{1/2}  (\QSS_{\lambda_2})^{-1} - (\QSS)^{1/2} (\QSS_{\lambda_2})^{-1} \ESS (\JSS)^{-1} }_{\infty,\infty} \\
& \le \vertiii{ (\QSS)^{1/2}  (\QSS_{\lambda_2})^{-1} }_{\infty,\infty} \bigg( 1 +  \vertiii{\ESS (\JSS)^{-1} }_{\infty,\infty}\bigg).
\end{split}
\label{equ:bound_a}
\end{align}
Note that
\begin{align}
\begin{split}
& \vertiii{ (\QSS)^{1/2}  (\QSS_{\lambda_2})^{-1} }_{\infty,\infty} \\
& = \max_{j \in \calS} \vertiii{ (\Tjj)^{1/2} \left( \Tjj + \lambda_2 \mathscr{I} \right)^{-1} }_{2,2} \\
& = \max_{j \in \calS} \sup_{\| f_j\|_2 \le 1} \Bigg[ \sum_{k \ge 1} \frac{\nu_{jk}  }{(\nu_{jk} + \lambda_2)^2 } \langle f_j , \eta_{jk} \rangle_2^2 \Bigg]^{1/2} \le \frac{1}{2 \sqrt{\lambda_2}}.
\end{split}
\label{equ:bound_b}
\end{align}
The last inequality holds by observing that the maximum value of function $h(x) = x (x + \rho)^{-2}$ is $h(\rho) = (4\rho)^{-1}$. 
Meanwhile
\begin{align}
\begin{split}
& \vertiii{\ESS (\JSS)^{-1} }_{\infty,\infty} \\
 =&  \vertiii{\JSS (\JSS)^{-1} - \QSS_{\lambda_2} (\JSS)^{-1} }_{\infty,\infty} \\
    \le &  1 + \vertiii{\QSS_{\lambda_2} ( \QSS_{\lambda_2} + \ESS  )^{-1} }_{\infty,\infty} \\
     = & 1 + \vertiii{ \left\{ \mathscrbf{I} + \ESS (\QSS_{\lambda_2})^{-1} \right\}^{-1}  }_{\infty,\infty}.
\end{split}
\label{equ:bound_c}
\end{align}
By Theorem 3.5.5, \citesupp{hsing2015theoreticalsupp}, $\mathscrbf{I} + \ESS (\QSS_{\lambda_2})^{-1}$ is invertible if 
\begin{align*}
\aleph(\lambda_2) = \vertiii{\ESS (\QSS_{\lambda_2})^{-1}}_{\infty,\infty} < 1,
\end{align*}
which is warranted by Condition C.\ref{ass:a6}. In this case, 
\begin{align}
    \vertiii{ \left\{ \mathscrbf{I} + \ESS (\QSS_{\lambda_2})^{-1} \right\}^{-1}  }_{\infty,\infty} < \frac{1}{1-\aleph(\lambda_2)}.
\label{equ:bound_d}
\end{align}
Therefore, \eqref{equ:bound_4} holds by \eqref{equ:bound_a}-\eqref{equ:bound_d}.
\end{proof}

\noindent{\bf Proof of Lemma \ref{lm:(iv)}}

\begin{proof}
Recall that ${g}_{j} = n^{-1} \wt \BX_{\bullet j}^{\top} \boldsymbol{\epsilon}_n$.
Conditional on $\wt \BX_{\bullet j}$, ${g}_j$ is a rank $n$ Gaussian process with mean zero and covariance operator $\mathscr{R}_j  = n^{-1} \sigma^2 \Tnjj$, and $\tr(\mathscr{R}_j)  = n^{-1} \tr(\Tnjj) \sigma^2 $.
Define the event $\mathcal{D}_j(c_1) = \left\{\tr\left(\mathscr{T}_n^{(j,j)}\right) < c_1 \right\}$, it follows that
\begin{align*} 
\begin{split}
	\pr\left( \left\| \gns \right\|_{\infty} > \lambda_2 \right) 
  \le & \sum_{j \in \calS} \pr \Bigg( \| {g}_j \|_{2} \ge \lambda_2 \Bigg)  \\
    \le&   \sum_{j \in \calS} \E \Bigg[ \pr \Bigg( \| {g}_j \|_{2} \ge \lambda_2 \Bigg| \wt \BX_{\bullet j}, \mathcal{D}_j(c_1) \Bigg)  \Bigg]
    + \sum_{j \in \calS} \pr\left\{\mathcal{D}_j^c(c_1)\right\}.
    %
\end{split}
\end{align*}
Setting $c_1 = 2 \tau$ and applying Lemma \ref{lm:large_deviation_2nd_ver} (i) with $s = 4/3$, we get 
\bse
    \pr \Bigg( \| {g}_j \|_{2} \ge \lambda_2 \Bigg| \wt \BX_{\bullet j}, \mathcal{D}_j(c_1) \Bigg)  \le   2   \exp \left( -\frac{ 3 n \lambda_2^2 }{16 \sigma^{2}  \tau } \right). 
\ese
Given that $\widetilde{X}_{ij} \overset{indep}{\sim}  \mathcal{GP} \left(0, \mathcal{T}^{(j,j)} \right)$ and $\vertiii{\scrT^{(j,j)}}_{2,2}=1$ together with the facts $\tau > 1$ and $\lambda_2 < D_{2,1}^*$,
\begin{align*}
        \pr\left(\mathcal{D}_j^c(c_1)\right) &= \pr\left(\sum_{i=1}^n  \|\wt X_{ij}\|^2 > 2n \sigma_0^2\right)  \le   \exp \left( -\frac{\tau }{32 }  n  \right) \\
        & \le  \exp \left( -\frac{n}{32 }  \right)
        \le  \exp \left( -\frac{\lambda_2^2 n}{32 (D_{2,1}^*)^2 }  \right)
\end{align*}
by Lemma \ref{lm:large_deviation_2nd_ver} (ii) with $s = 16/9$.
Combining the two bounds entails
\begin{align*}
    \pr\left( \left\| \gns \right\|_{\infty} > \lambda_2 \right)  \le  2 q  \exp \left( -\frac{ 3 n \lambda_2^2 }{16 \sigma^{2}  \tau } \right)  +  q \exp \left( -\frac{\lambda_2^2 n}{32 (D_{2,1}^*)^2 }  \right).
\end{align*}
Suppose
\begin{align*}
    \lambda_2 >  D_1^* (\sigma + 1) \tau^{1/2} \cdot \sqrt{{\log q \over n}} > \max \left( {4 \over 3} \sqrt{6} \cdot \sigma \tau^{1/2}, \ 8 D_{2,1}^* \right) \cdot \sqrt{{\log q \over n}},
\end{align*}
which is equivalent to
\begin{align*}
     \frac{ 3 n \lambda_2^2 }{16 \sigma^{2}  \tau } - \log q > \frac{ 3 n \lambda_2^2 }{32 \sigma^{2}  \tau }
\end{align*}
and
\begin{align*}
     \frac{ n \lambda_2^2 }{32 (D_{2,1}^*)^2 } - \log q > \frac{ n \lambda_2^2 }{64 (D_{2,1}^*)^2 }.
\end{align*}
We have \eqref{equ:tail_bound_3} holds for some $D^{(3)} < D_2^* \left(  (\sigma + 1)^{2} \tau \right)^{-1} <  64^{-1} \min\left\{ 6 \sigma^{-2} \tau^{-1}, (D_{2,1}^{*})^{-2} \right\}$, where $D_1^*$ and $D_2^*$ are universal constants. 
\end{proof}

\noindent{\bf Proof of Lemma \ref{lemma:tail_bound}}

\begin{proof}
Recall $\vertiii{\TSS}_{2,2} = \rho_1$ and define $
t = \frac{u^2 n}{C^2 \rho_1^2 q}$ for some constant $C>0$, then by Corollary 2 in \citesupp{koltchinskii2017concentration},
\begin{align}\label{equ:lemma:tail_bound1}
\begin{split}
     &\pr\left(\sqrt{q} \vertiii{\TSS - \TnSS}_{2,2} \ge u \right)\\ 
     = & \pr\left( \vertiii{\TSS - \TnSS}_{2,2} \ge C \vertiii{\TSS}_{2,2} \sqrt{t\over n}\right) \\
     \le&  e^{-t}
\end{split}
\end{align}
as long as
\begin{align}
\sqrt{\frac{t}{n}} = \max \left( \sqrt{\frac{r(\TSS)}{n}}, \frac{r(\TSS)}{n}, \sqrt{\frac{t}{n}}, \frac{t}{n} \right),
\label{equ:lemma:tail_bound2}
\end{align}
where 
\begin{align*}
r(\TSS) = \frac{ (\mathbb{E} \|X_1 \|_2)^2 }{ \vertiii{\TSS}_{2,2} } \le
\frac{\mathbb{E} \|X_1\|_2^2 }{ \vertiii{\TSS}_{2,2} }
\le {q\tau \over \rho_1}
\end{align*}
by Jensen's inequality and Condition C.\ref{ass:a2}.
Hence \eqref{equ:lemma:tail_bound2} holds when
\begin{align*}
{q \tau \over \rho_1} < t < n,
\end{align*}
which amounts to \eqref{e:q_cond}.

\end{proof}

\subsection{Additional technical lemmas} \label{s:addi_lemmas}

\begin{lemma} \label{lemma:minimax_upper_1}
For any $0 < \nu < 1$,
\begin{align*}
    \left\|\left( \TSS \right)^{\nu}\left( \ftildets -\fzeros  \right)\right\|_{2} \leq(1-\nu)^{1-\nu} \nu^{\nu} \lambda_3 ^{\nu}\left\| \fzeros \right\|_{2}.
\end{align*}
\end{lemma}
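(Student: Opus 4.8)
The plan is to reduce the desired estimate to a scalar optimization via the functional calculus for the self-adjoint, positive semi-definite operator $\TSS$. First I would use the explicit form of the ridge target $\ftildets = (\TSS_{\lambda_3})^{-1}\TSS\fzeros$ with $\TSS_{\lambda_3} = \TSS + \lambda_3\mathscrbf{I}$. A one-line algebraic manipulation gives
\[
\ftildets - \fzeros = \big[(\TSS_{\lambda_3})^{-1}\TSS - \mathscrbf{I}\big]\fzeros = -\,(\TSS_{\lambda_3})^{-1}\big[\TSS_{\lambda_3} - \TSS\big]\fzeros = -\,\lambda_3\,(\TSS_{\lambda_3})^{-1}\fzeros,
\]
so that $(\TSS)^{\nu}(\ftildets - \fzeros) = -\,\lambda_3\,(\TSS)^{\nu}(\TSS_{\lambda_3})^{-1}\fzeros$, and therefore
\[
\big\|(\TSS)^{\nu}(\ftildets - \fzeros)\big\|_{2} \le \lambda_3\,\vertiii{(\TSS)^{\nu}(\TSS_{\lambda_3})^{-1}}_{2,2}\,\|\fzeros\|_{2}.
\]
It then remains to bound the operator norm appearing on the right.

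For this I would use the spectral decomposition $\TSS = \sum_{k\ge 1}\rho_k\,\boldsymbol{\phi}_k\otimes\boldsymbol{\phi}_k$ already introduced in the proof of Theorem~\ref{thm:minimax_lower}, with $\rho_1\ge\rho_2\ge\cdots\ge 0$. Since $(\TSS)^{\nu}$ (defined by functional calculus) and $(\TSS_{\lambda_3})^{-1}$ are simultaneously diagonalized in the orthonormal system $\{\boldsymbol{\phi}_k\}$, their product is $(\TSS)^{\nu}(\TSS_{\lambda_3})^{-1} = \sum_{k\ge 1}\frac{\rho_k^{\nu}}{\rho_k+\lambda_3}\,\boldsymbol{\phi}_k\otimes\boldsymbol{\phi}_k$ (terms with $\rho_k=0$ contribute nothing), so its $\mathbb{L}_2$ operator norm equals $\sup_{k:\rho_k>0}\frac{\rho_k^{\nu}}{\rho_k+\lambda_3}$, which is bounded above by $\sup_{x\ge 0}h(x)$ with $h(x) := x^{\nu}/(x+\lambda_3)$.

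Finally I would carry out the elementary maximization of $h$ on $[0,\infty)$: since $h(0)=0$ and $h(x)\to 0$ as $x\to\infty$, the maximum is attained at the unique interior critical point $x^{*} = \nu\lambda_3/(1-\nu)$ (from $h'(x)=0$), and substituting $x^{*}+\lambda_3 = \lambda_3/(1-\nu)$ yields $h(x^{*}) = (1-\nu)^{1-\nu}\nu^{\nu}\lambda_3^{\nu-1}$. Multiplying by $\lambda_3$ gives the claimed constant $(1-\nu)^{1-\nu}\nu^{\nu}\lambda_3^{\nu}$, completing the proof. The argument is essentially routine; the only points needing a touch of care are the justification that the operator norm of the product equals the supremum of the scalar ratios over the spectrum of $\TSS$ (legitimate because $\TSS$ is a compact self-adjoint operator, so the spectral sum converges and $h$ is continuous on the compact interval $[0,\rho_1]$) and the calculus step locating the maximizer of $h$; no deeper obstacle is anticipated.
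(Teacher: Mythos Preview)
Your proposal is correct and follows essentially the same route as the paper's proof: both reduce the bound to controlling $\sup_{k}\lambda_3\rho_k^{\nu}/(\rho_k+\lambda_3)$ via the spectral decomposition of $\TSS$, and both arrive at the same sharp constant. The only cosmetic differences are that the paper expands $\fzeros$ directly in the eigenbasis and bounds the sum coefficientwise (rather than factoring through the operator norm), and it justifies the scalar inequality via Young's inequality $\lambda_3+\rho_k\ge (1-\nu)^{-(1-\nu)}\nu^{-\nu}\lambda_3^{1-\nu}\rho_k^{\nu}$ instead of the explicit calculus optimization you perform.
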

\begin{proof}
    Write $\TSS = \sum_{k \ge 1} \rho_k \boldsymbol{\phi}_k \otimes \boldsymbol{\phi}_k$ and $\fzeros = \sum_{k \ge 1} f_k \boldsymbol{\phi}_k$. Then
    \begin{align*}
        \ftildets = \sum_{k \ge 1} \frac{\rho_k f_k}{\lambda_3 + \rho_k} \boldsymbol{\phi}_k.
    \end{align*}
    Therefore
    \begin{align*}
        \left\|\left( \TSS \right)^{\nu}\left( \ftildets -\fzeros  \right)\right\|_{2}^2 &= \sum_{k \ge 1} \rho_{k}^{2\nu} \left( \frac{\lambda_3 f_k}{\lambda_3 + \rho_k} \right)^2 \le \max_{k \ge 1} \frac{\lambda_3^2 \rho_k^{2\nu}}{(\lambda_3 + \rho_k)^2} \sum_{k \ge 1} f_k^2 \\
        & \le (1-\nu)^{2(1-\nu)} \nu^{2\nu} \lambda_3^{2\nu} \left\| \fzeros \right\|_{2}^2.
    \end{align*}
    The last inequality follows from Young's inequality: $\lambda_3 +\rho_{k} \geq(1-\nu)^{-(1-\nu)} \nu^{-\nu} \lambda_3^{1-\nu} \rho_{k}^{\nu}$. 
\end{proof}

\begin{lemma} \label{lemma:minimax_upper_2}
For $0 < \nu < 1$,
\begin{align*}
    \vertiii{ \left( \TSS \right)^{\nu} \left(\TSS_{\lambda} \right)^{-1}}_{2,2} \le (1-\nu)^{1-\nu} \nu^{\nu} \lambda^{\nu-1}.
\end{align*}
\end{lemma}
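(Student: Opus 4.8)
The plan is to reduce the operator-norm bound to a scalar optimization via simultaneous diagonalization, mirroring the argument in the proof of Lemma~\ref{lemma:minimax_upper_1}. First I would write the spectral decomposition $\TSS = \sum_{k \ge 1} \rho_k\, \boldsymbol{\phi}_k \otimes \boldsymbol{\phi}_k$ with $\rho_1 \ge \rho_2 \ge \cdots \ge 0$ and $\{\boldsymbol{\phi}_k\}_{k\ge1}$ orthonormal in $\mathbb{L}_2^q$ (completed to a CONS if $\TSS$ is not of full rank). Since $\lambda > 0$, the operator $\TSS_\lambda := \TSS + \lambda\mathscrbf{I}$ is boundedly invertible, $(\TSS)^\nu$ and $(\TSS_\lambda)^{-1}$ are simultaneously diagonal in the same basis, and their product $(\TSS)^\nu (\TSS_\lambda)^{-1}$ is self-adjoint with eigenvalue $\rho_k^\nu/(\rho_k+\lambda)$ on $\boldsymbol{\phi}_k$ and eigenvalue $0$ on the orthogonal complement of $\overline{\mathrm{Span}}\{\boldsymbol{\phi}_k\}$. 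Hence
$$
\vertiii{ \left( \TSS \right)^{\nu} \left(\TSS_{\lambda} \right)^{-1}}_{2,2} = \sup_{k\ge 1} \frac{\rho_k^\nu}{\rho_k + \lambda} \le \sup_{x \ge 0}\frac{x^\nu}{x+\lambda}.
$$

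Next I would bound the scalar quantity. Applying the same inequality invoked at the end of the proof of Lemma~\ref{lemma:minimax_upper_1} (weighted AM--GM / Young's inequality) gives $x + \lambda \ge (1-\nu)^{-(1-\nu)}\nu^{-\nu}\,\lambda^{1-\nu} x^{\nu}$ for every $x \ge 0$, so that $x^\nu/(x+\lambda) \le (1-\nu)^{1-\nu}\nu^{\nu}\lambda^{\nu-1}$ uniformly in $x$. Combining this with the previous display yields the claimed bound.

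I do not anticipate any real obstacle here: the statement is a routine consequence of the spectral calculus together with an elementary pointwise inequality already used in the paper. The only point meriting a sentence of care is the possible rank-deficiency of $\TSS$, which is harmless since $\TSS_\lambda$ is genuinely invertible and the eigenvalue contributed by $\ker(\TSS)$ is $0/\lambda = 0$, dominated by $\sup_{x\ge0}x^\nu/(x+\lambda)$ anyway.
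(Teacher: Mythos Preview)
Your proposal is correct and essentially identical to the paper's proof: both diagonalize $(\TSS)^\nu(\TSS_\lambda)^{-1}$ in the eigenbasis $\{\boldsymbol{\phi}_k\}$, reduce the operator norm to $\max_k \rho_k^\nu/(\rho_k+\lambda)$, and then invoke the Young/AM--GM inequality $x+\lambda \ge (1-\nu)^{-(1-\nu)}\nu^{-\nu}\lambda^{1-\nu}x^{\nu}$ already used in Lemma~\ref{lemma:minimax_upper_1}. The only cosmetic difference is that the paper writes out $\|(\TSS)^\nu(\TSS_\lambda)^{-1}\boldsymbol{f}\|_2$ for a generic unit $\boldsymbol{f}$ rather than invoking self-adjointness directly, but the content is the same.
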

\begin{proof}
For any $\boldsymbol{f} \in \mathbb{L}_2^q$ such that $\| \boldsymbol{f} \|_2 \le 1$, write $\boldsymbol{f} = \sum_{k \ge 1} f_k \boldsymbol{\phi}_k$, we have
\begin{align*}
    \left\| \left( \TSS \right)^{\nu} \left(\TSS_{\lambda} \right)^{-1} \boldsymbol{f} \right\|_2 = \sqrt{\sum_{k \ge 1} \frac{\rho_k^{2\nu}}{(\rho_k + \lambda)^2} f_k^2} \le \max_{k \ge 1} \left\{ \frac{\rho_k^{\nu}}{\rho_k + \lambda} \right\} \le (1-\nu)^{1-\nu} \nu^{\nu} \lambda^{\nu-1}.
\end{align*}
\end{proof}

\begin{lemma} \label{lemma:minimax_upper_3}
Assume Condition C.\ref{ass:a8}-C.\ref{ass:a10} hold. For $0 < \nu \le 1/2$, $r > 1/2$
\begin{align*}
     \left\| \left( \TSS \right)^{\nu} \left(\TSS_{\lambda_3} \right)^{-1} \gns \right\|_2 = O_{p}\left(\left( \frac{n}{q} \cdot \lambda_3^{1-2 \nu + \frac{1}{2r}}\right)^{-\frac{1}{2}}\right).
\end{align*}
\end{lemma}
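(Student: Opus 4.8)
\textbf{Proof of Lemma \ref{lemma:minimax_upper_3} (proposal).}
The plan is to bound the second moment of $\big\|(\TSS)^{\nu}(\TSS_{\lambda_3})^{-1}\gns\big\|_2$ and then invoke Markov's inequality. Write $\mathscrbf{M}:=(\TSS)^{\nu}(\TSS_{\lambda_3})^{-1}$, a bounded self-adjoint operator on $\mathbb{L}_2^q$ that commutes with $\TSS$, and recall from \eqref{equ:kkt0} that $\gns=n^{-1}\wt\BX_{\calS}^{\top}\bdepsilon_n=n^{-1}\sum_{i=1}^n\wt\BX_{i\calS}\varepsilon_i$, where $\bdepsilon_n\sim\mathcal{N}(\boldsymbol{0},\sigma^2\boldsymbol{I}_n)$ is independent of $\wt\BX_{\calS}$. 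First I would condition on $\wt\BX_{\calS}$: conditionally, $\gns$ is a mean-zero Gaussian element of $\mathbb{L}_2^q$ with covariance operator $n^{-1}\sigma^2\TnSS$, so $\E\big[\|\mathscrbf{M}\gns\|_2^2\mid\wt\BX_{\calS}\big]=n^{-1}\sigma^2\tr(\mathscrbf{M}\TnSS\mathscrbf{M})$. Taking expectations and using $\E\TnSS=\TSS$ (all operators involved commute) gives $\E\|\mathscrbf{M}\gns\|_2^2=n^{-1}\sigma^2\tr\big((\TSS)^{2\nu+1}(\TSS_{\lambda_3})^{-2}\big)=n^{-1}\sigma^2\sum_{k\ge1}\rho_k^{2\nu+1}(\rho_k+\lambda_3)^{-2}$, where $\rho_1\ge\rho_2\ge\cdots\ge0$ are the eigenvalues of $\TSS$.

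The heart of the argument is to show $\sum_{k\ge1}\rho_k^{2\nu+1}(\rho_k+\lambda_3)^{-2}\le C\,q\,\lambda_3^{-(1-2\nu+1/(2r))}$ for a constant $C=C(b,c,r,\nu)$, uniformly over $\TSS\in\mathscrbf{P}(r)$. By Corollary \ref{corollary:eigenvalue_rela} (which uses Condition C.\ref{ass:a8}), $\rho_{q(k-1)+j}\le bc\,k^{-2r}$ for all $k\ge1$ and $j=1,\dots,q$; grouping eigenvalues in blocks of size $q$ and writing $\phi(x):=x^{2\nu+1}(x+\lambda_3)^{-2}$, one obtains $\sum_{k}\rho_k^{2\nu+1}(\rho_k+\lambda_3)^{-2}\le q\sum_{m\ge1}\phi(bc\,m^{-2r})$. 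I would split this sum at $M^{*}\asymp(bc/\lambda_3)^{1/(2r)}$, the index at which $bc\,m^{-2r}$ crosses $\lambda_3$. For $m\le M^{*}$, using $x+\lambda_3\ge x$ and $2\nu-1\le0$, $\phi(bc\,m^{-2r})\le(bc\,m^{-2r})^{2\nu-1}\le\lambda_3^{2\nu-1}$, contributing $\lesssim M^{*}\lambda_3^{2\nu-1}\asymp\lambda_3^{2\nu-1-1/(2r)}$. For $m>M^{*}$, using $x+\lambda_3\ge\lambda_3$, $\phi(bc\,m^{-2r})\le(bc)^{2\nu+1}\lambda_3^{-2}m^{-2r(2\nu+1)}$; since $r>1/2$ and $\nu\ge0$ force $2r(2\nu+1)>1$, the tail sum is $\asymp(M^{*})^{1-2r(2\nu+1)}$ and this part also contributes $\lesssim\lambda_3^{2\nu-1-1/(2r)}$. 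Combining the two ranges yields $\E\|\mathscrbf{M}\gns\|_2^2\le C\,\tfrac{q}{n}\,\lambda_3^{-(1-2\nu+1/(2r))}$, uniformly in $\TSS\in\mathscrbf{P}(r)$ and in $\fzeros$ (on which $\gns$ does not depend).

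Markov's inequality then gives, for every $\eta>0$, $\sup_{\TSS,\fzeros}\mathbb{P}\big(\|\mathscrbf{M}\gns\|_2>\eta^{-1/2}(\tfrac{n}{q}\lambda_3^{1-2\nu+1/(2r)})^{-1/2}\big)\le C\eta$, which is exactly the asserted $O_p\big((\tfrac{n}{q}\lambda_3^{1-2\nu+1/(2r)})^{-1/2}\big)$. The only genuinely load-bearing step is the block-sum estimate: the hypotheses $r>1/2$ and $\nu\le1/2$ enter precisely there, ensuring on one hand $2\nu-1\le0$ (so $\phi$ is dominated by $\lambda_3^{2\nu-1}$ on the large-eigenvalue range) and on the other $2r(2\nu+1)>1$ (so the small-eigenvalue tail converges at the right rate). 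I do not anticipate any further obstacle beyond keeping track of the constants.
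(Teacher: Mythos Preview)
Your overall strategy---compute the second moment of $\|\mathscrbf{M}\gns\|_2$ via conditioning, reduce to $\frac{\sigma^2}{n}\sum_{k\ge1}\rho_k^{2\nu+1}(\rho_k+\lambda_3)^{-2}$, bound this eigenvalue sum using the block structure from Corollary~\ref{corollary:eigenvalue_rela}, then apply Markov---is exactly the paper's. The one step that does not go through as written is the passage ``grouping eigenvalues in blocks of size $q$ \ldots one obtains $\sum_{k}\phi(\rho_k)\le q\sum_{m\ge1}\phi(bc\,m^{-2r})$.'' This uses $\phi(\rho_{q(m-1)+j})\le\phi(bc\,m^{-2r})$, which would require $\phi(x)=x^{2\nu+1}(x+\lambda_3)^{-2}$ to be nondecreasing. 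For $\nu<1/2$ it is not: $\phi$ increases on $[0,\tfrac{2\nu+1}{1-2\nu}\lambda_3]$ and decreases thereafter, so replacing $\rho_l$ by a larger value can \emph{decrease} $\phi$.

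The fix is minor, and your subsequent bounds already contain it implicitly. Do not route through $\phi(bc\,m^{-2r})$; instead bound $\phi(\rho_l)$ directly by $\min\{C_\nu\lambda_3^{2\nu-1},\ \rho_l^{2\nu+1}\lambda_3^{-2}\}$ (the first is the global maximum of $\phi$, the second uses $\rho_l+\lambda_3\ge\lambda_3$). For $l\le qM^{*}$ use the first bound, giving $qM^{*}C_\nu\lambda_3^{2\nu-1}\asymp q\lambda_3^{2\nu-1-1/(2r)}$; for $l>qM^{*}$ use the second and the monotone map $x\mapsto x^{2\nu+1}$ to replace $\rho_l$ by $bc\lceil l/q\rceil^{-2r}$, recovering your tail estimate. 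The paper sidesteps the issue differently: it first writes $(\rho_k+\lambda_3)^{-2}\le\lambda_3^{-(1-2\nu)}(\rho_k+\lambda_3)^{-(1+2\nu)}$ (valid since $1-2\nu\ge0$), so the sum becomes $\lambda_3^{-(1-2\nu)}\sum_k\big(\rho_k/(\rho_k+\lambda_3)\big)^{1+2\nu}$, and $x\mapsto\big(x/(x+\lambda_3)\big)^{1+2\nu}$ \emph{is} increasing---Lemma~\ref{lemma:minimax_upper_5} then handles the resulting sum cleanly. Either repair yields the same rate.
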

\begin{proof}
For $0 \le \nu \le 1/2$,
\begin{align*}
    \left\| \left( \TSS \right)^{\nu} \left(\TSS_{\lambda_3} \right)^{-1} \gns \right\|_2^2 &= \sum_{k \ge 1} \left\langle \left( \TSS \right)^{\nu} \left(\TSS_{\lambda_3} \right)^{-1} \gns, \  \boldsymbol{\phi}_k\right\rangle_2^2 \\
    &= \sum_{k \ge 1} \left\langle \left( \TSS \right)^{\nu} \left(\TSS_{\lambda_3} \right)^{-1} \boldsymbol{\phi}_k,  \ \gns, \right\rangle_2^2 \\
    &= \sum_{k \ge 1} \left\langle \frac{\rho_k^{\nu}}{\rho_k + \lambda_3} \boldsymbol{\phi}_k,  \ \frac{1}{n}\sum_{i=1}^n \epsilon_i \wt \BX_{i\calS}  \right\rangle_2^2 \\
    &= \sum_{k \ge 1} \frac{\rho_k^{2\nu}}{(\rho_k + \lambda_3)^2} \left\{ \frac{1}{n}\sum_{i=1}^n \epsilon_i \left\langle  \boldsymbol{\phi}_k,  \  \wt \BX_{i\calS}  \right\rangle_2 \right\}^2. 
\end{align*}
Therefore
\begin{align*}
    \mathbb{E}\left\| \left( \TSS \right)^{\nu} \left(\TSS_{\lambda_3} \right)^{-1} \gns \right\|_2^2 
    & = \frac{\sigma^2}{n} \sum_{k \ge 1} \frac{\rho_k^{2\nu}}{(\rho_k + \lambda_3)^2} \cdot \mathbb{E} \left\langle  \boldsymbol{\phi}_k,  \  \wt \BX_{1\calS}  \right\rangle_2^2  \\
    & = \frac{\sigma^2}{n} \sum_{k \ge 1} \frac{\rho_k^{2\nu + 1}}{(\rho_k + \lambda_3)^2} \\
    & \le \frac{\sigma^2}{n \lambda_3^{1-2\nu}} \sum_{k \ge 1} \frac{\rho_k^{2\nu + 1}}{(\rho_k + \lambda_3)^{1+2\nu}} \\
    & \le C \sigma^2 \left( (n/q) \cdot \lambda_3^{1-2\nu + \frac{1}{2r}}\right)^{-1}
\end{align*}
for some constant $C > 0$. The last inequality is obtained by Lemma \ref{lemma:minimax_upper_5}.
The proof can be completed by Markov inequality. 
\end{proof}
\begin{lemma} \label{lemma:minimax_upper_4}
Assume Condition C.\ref{ass:a8}-C.\ref{ass:a10} hold. Then for any $r>1/2$, $0 < \nu < 1/2 - 1/(4r)$,
\begin{align*}
    &(1). \ \ \vertiii{ \left( \TSS \right)^{\nu} \left(\TSS_{\lambda_3} \right)^{-1} \left( \TnSS - \TSS  \right) \left( \TSS \right)^{-\nu}  }_{2,2} = O_{p}\left( q^{\frac{1}{2}}\left( \frac{n}{q}  \lambda_3^{1-2 \nu + \frac{1}{2r} }\right)^{-\frac{1}{2}}\right). \\
    &(2). \ \ \vertiii{ \left( \TSS \right)^{1/2} \left(\TSS_{\lambda_3} \right)^{-1} \left( \TnSS - \TSS  \right) \left( \TSS \right)^{-\nu}  }_{2,2} = O_{p}\left( q^{\frac{1}{2}} \left( \frac{n}{q} \lambda_3^{\frac{1}{2r} }\right)^{-\frac{1}{2}}\right).
\end{align*}
\end{lemma}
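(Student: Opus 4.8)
The plan is to bound the operator norm by the Hilbert--Schmidt norm and compute the second moment of the latter explicitly, using the spectral decomposition of $\TSS$ together with the Gaussianity of $\wt\BX_{i\calS}$. Write $\TSS=\sum_{k\ge1}\rho_k\boldsymbol{\phi}_k\otimes\boldsymbol{\phi}_k$ and set $\mathscrbf{M}_n:=\TnSS-\TSS=\tfrac1n\sum_{i=1}^n(\wt\BX_{i\calS}\otimes\wt\BX_{i\calS}-\TSS)$, a mean-zero average of i.i.d.\ terms. For part (1), put the deterministic operators $\mathscrbf{A}=\left(\TSS\right)^{\nu}\left(\TSS_{\lambda_3}\right)^{-1}$ and $\mathscrbf{B}=\left(\TSS\right)^{-\nu}$; then the $(k,l)$ entry of $\mathscrbf{A}\mathscrbf{M}_n\mathscrbf{B}$ in the basis $\{\boldsymbol{\phi}_k\}$ is $\tfrac{\rho_k^{\nu}}{\rho_k+\lambda_3}\rho_l^{-\nu}\langle\boldsymbol{\phi}_k,\mathscrbf{M}_n\boldsymbol{\phi}_l\rangle_2$, so
\[
\mathbb{E}\,\vertiii{\mathscrbf{A}\mathscrbf{M}_n\mathscrbf{B}}_{2,2}^2\le\mathbb{E}\|\mathscrbf{A}\mathscrbf{M}_n\mathscrbf{B}\|_{\mathrm{HS}}^2=\sum_{k,l\ge1}\frac{\rho_k^{2\nu}}{(\rho_k+\lambda_3)^2}\,\rho_l^{-2\nu}\,\mathbb{E}\langle\boldsymbol{\phi}_k,\mathscrbf{M}_n\boldsymbol{\phi}_l\rangle_2^2 .
\]
Since $\wt\BX_{1\calS}$ is Gaussian, its principal component scores $\xi_k=\langle\boldsymbol{\phi}_k,\wt\BX_{1\calS}\rangle_2$ are independent $\mathcal{N}(0,\rho_k)$, and Isserlis's formula gives $\mathbb{E}\langle\boldsymbol{\phi}_k,\mathscrbf{M}_n\boldsymbol{\phi}_l\rangle_2^2=n^{-1}\var(\xi_k\xi_l)=n^{-1}(\rho_k\rho_l+\rho_k^2\delta_{k,l})$. (Finiteness of this double sum, established below, also justifies that $\mathscrbf{A}\mathscrbf{M}_n\mathscrbf{B}$ is a genuine Hilbert--Schmidt operator even though $\mathscrbf{B}$ is unbounded; alternatively one truncates $\TSS$ to its $m$-dimensional projection, as in the proof of Proposition~\ref{theorem:eigenvalue_rela}, and lets $m\to\infty$.)

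Substituting and separating the $k=l$ and $k\ne l$ contributions, the cross term factorizes as $n^{-1}\bigl(\sum_k\rho_k^{2\nu+1}(\rho_k+\lambda_3)^{-2}\bigr)\bigl(\sum_l\rho_l^{1-2\nu}\bigr)$. By Corollary~\ref{corollary:eigenvalue_rela}, $\rho_{q(k-1)+j}\le bc\,k^{-2r}$, hence $\sum_l\rho_l^{1-2\nu}\le q\sum_{k\ge1}(bc)^{1-2\nu}k^{-2r(1-2\nu)}=O(q)$, the series converging precisely because $\nu<1/2-1/(4r)$; and writing $\rho_k^{2\nu+1}(\rho_k+\lambda_3)^{-2}=\bigl(\tfrac{\rho_k}{\rho_k+\lambda_3}\bigr)^{1+2\nu}(\rho_k+\lambda_3)^{2\nu-1}\le\lambda_3^{2\nu-1}\bigl(\tfrac{\rho_k}{\rho_k+\lambda_3}\bigr)^{1+2\nu}$ and invoking Lemma~\ref{lemma:minimax_upper_5} (equivalently, the elementary consequence $\sum_k(\tfrac{\rho_k}{\rho_k+\lambda_3})^a=O(q\lambda_3^{-1/(2r)})$ of the polynomial decay, valid for $a>1/(2r)$) gives $\sum_k\rho_k^{2\nu+1}(\rho_k+\lambda_3)^{-2}=O(q\lambda_3^{2\nu-1-1/(2r)})$. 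The diagonal contribution $n^{-1}\sum_k\rho_k^2(\rho_k+\lambda_3)^{-2}=O(q\lambda_3^{-1/(2r)}/n)$ is of smaller order. Altogether $\mathbb{E}\,\vertiii{\mathscrbf{A}\mathscrbf{M}_n\mathscrbf{B}}_{2,2}^2=O\bigl(q^2n^{-1}\lambda_3^{2\nu-1-1/(2r)}\bigr)=O\bigl(q\,(\tfrac nq\lambda_3^{1-2\nu+1/(2r)})^{-1}\bigr)$, and Markov's inequality yields (1). For part (2) the argument is identical with $\mathscrbf{A}=\left(\TSS\right)^{1/2}\left(\TSS_{\lambda_3}\right)^{-1}$: the cross term becomes $n^{-1}\bigl(\sum_k\rho_k^2(\rho_k+\lambda_3)^{-2}\bigr)\bigl(\sum_l\rho_l^{1-2\nu}\bigr)$, where $\sum_k\rho_k^2(\rho_k+\lambda_3)^{-2}\le\sum_k\tfrac{\rho_k}{\rho_k+\lambda_3}=O(q\lambda_3^{-1/(2r)})$ and $\sum_l\rho_l^{1-2\nu}=O(q)$, giving $\mathbb{E}\,\vertiii{\cdot}_{2,2}^2=O\bigl(q\,(\tfrac nq\lambda_3^{1/(2r)})^{-1}\bigr)$ and hence the claim.

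The main obstacle is the bookkeeping of the two spectral sums. Pinning down the exponent of $\lambda_3$ in $\sum_k\rho_k^{2\nu+1}(\rho_k+\lambda_3)^{-2}$ requires carefully splitting the sum at the ``effective rank'' index $k\asymp q\lambda_3^{-1/(2r)}$ and matching it to Lemma~\ref{lemma:minimax_upper_5}. More delicate is verifying $\sum_l\rho_l^{1-2\nu}=O(q)$: this is exactly where the restriction $\nu<1/2-1/(4r)$ and the polynomial eigenvalue decay from Corollary~\ref{corollary:eigenvalue_rela} enter, and outside this range the Hilbert--Schmidt bound diverges, so the proof strategy is sharp. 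A secondary point to watch is the Gaussian fourth-moment computation and confirming that replacing $\vertiii{\cdot}_{2,2}$ by $\|\cdot\|_{\mathrm{HS}}$ costs nothing for the stated rate.
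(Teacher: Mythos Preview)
Your proposal is correct and follows essentially the same route as the paper: bounding the operator norm by the Hilbert--Schmidt norm (the paper arrives at the same quantity via a Cauchy--Schwarz step on the sup over unit vectors), computing the entrywise second moments in the eigenbasis of $\TSS$, and then controlling the two spectral sums $\sum_k \rho_k^{1+2\nu}(\rho_k+\lambda_3)^{-2}$ and $\sum_l \rho_l^{1-2\nu}$ via Lemma~\ref{lemma:minimax_upper_5} and Corollary~\ref{corollary:eigenvalue_rela} respectively. Your Isserlis-based variance calculation $n^{-1}(\rho_k\rho_l+\rho_k^2\delta_{k,l})$ is in fact slightly sharper than the paper's bound $3n^{-1}\rho_k\rho_l$ (obtained by Cauchy--Schwarz on the fourth moments), but this refinement does not affect the rate.
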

\begin{proof}
    \noindent\textbf{(1).} Write $\boldsymbol{g} = \sum_{k \ge 1} g_k \boldsymbol{\phi}_k$, $\boldsymbol{h} = \sum_{k \ge 1} h_k \boldsymbol{\phi}_k$. We have
    \begin{align*}
        & \vertiii{ \left( \TSS \right)^{\nu} \left(\TSS_{\lambda_3} \right)^{-1} \left( \TnSS - \TSS  \right) \left( \TSS \right)^{-\nu}  }_{2,2} \\
        & = \sup_{\|g\| \le 1, \| h \| \le 1} \left| \left\langle  \boldsymbol{g}, \ \left( \TSS \right)^{\nu} \left(\TSS_{\lambda_3} \right)^{-1} \left( \TnSS - \TSS  \right) \left( \TSS \right)^{-\nu} \boldsymbol{h} \right\rangle_2 \right| \\
        & = \sup_{\|g\| \le 1, \| h \| \le 1} \left| \left\langle  \left(\TSS_{\lambda_3} \right)^{-1} \left( \TSS \right)^{\nu} \boldsymbol{g}, \   \left( \TnSS - \TSS  \right) \left( \TSS \right)^{-\nu} \boldsymbol{h} \right\rangle_2 \right| \\
        & = \sup_{\|g\| \le 1, \| h \| \le 1} \left| \left\langle  \sum_{k\ge 1} \frac{\rho_k^{\nu} g_k}{\rho_k + \lambda_3} \boldsymbol{\phi}_k, \  \sum_{l \ge 1} \rho_l^{-\nu} h_k \left( \TnSS - \TSS  \right)\boldsymbol{\phi}_l  \right\rangle_2 \right| \\
        & = \sup_{\|g\| \le 1, \| h \| \le 1} \left| \sum_{k,l \ge 1} \frac{\rho_k^{\nu} \rho_l^{-\nu} g_k h_l}{\rho_k + \lambda_3} \left\langle   \boldsymbol{\phi}_k, \  \left( \TnSS - \TSS  \right)\boldsymbol{\phi}_l  \right\rangle_2 \right| \\
        & \le \sup_{\|g\| \le 1, \| h \| \le 1} \left| \sum_{k,l \ge 1}  g_k^2 h_l^2 \right|^{1/2} \left| \sum_{k,l \ge 1} \frac{\rho_k^{2\nu} \rho_l^{-2\nu}}{(\rho_k + \lambda_3)^2} \left\langle   \boldsymbol{\phi}_k, \  \left( \TnSS - \TSS  \right)\boldsymbol{\phi}_l  \right\rangle_2^2 \right|^{1/2} \\
        & \le \left| \sum_{k,l \ge 1} \frac{\rho_k^{2\nu} \rho_l^{-2\nu}}{(\rho_k + \lambda_3)^2} \left\langle   \boldsymbol{\phi}_k, \  \left( \TnSS - \TSS  \right)\boldsymbol{\phi}_l  \right\rangle_2^2 \right|^{1/2}. 
    \end{align*}
    The second inequality from the bottom follows from the Cauchy-Schwarz inequality. By Jensen's inequality
\begin{align*}
    &\mathbb{E} \left| \sum_{k,l \ge 1} \frac{\rho_k^{2\nu} \rho_l^{-2\nu}}{(\rho_k + \lambda_3)^2} \left\langle   \boldsymbol{\phi}_k, \  \left( \TnSS - \TSS  \right)\boldsymbol{\phi}_l  \right\rangle_2^2 \right|^{1/2} \\
    &\le
     \left| \sum_{k,l \ge 1} \frac{\rho_k^{2\nu} \rho_l^{-2\nu}}{(\rho_k + \lambda_3)^2} \mathbb{E} \left\langle   \boldsymbol{\phi}_k, \  \left( \TnSS - \TSS  \right)\boldsymbol{\phi}_l  \right\rangle_2^2 \right|^{1/2}.
\end{align*}
Note that 
\begin{align*}
    & \mathbb{E} \left\langle   \boldsymbol{\phi}_k, \  \left( \TnSS - \TSS  \right)\boldsymbol{\phi}_l  \right\rangle_2^2 \\
    & = \mathbb{E} \left\langle   \boldsymbol{\phi}_k, \  \left( \frac{1}{n}\sum_{i=1}^n \wt\BX_{i\calS} \otimes \wt\BX_{i\calS}^{\top} - \mathbb{E} \wt\BX_{1\calS} \otimes \wt\BX_{1\calS}^{\top}  \right)\boldsymbol{\phi}_l  \right\rangle_2^2 \\
    & = \frac{1}{n} \mathbb{E} \left\langle   \boldsymbol{\phi}_k, \  \left( \wt\BX_{1\calS} \otimes \wt\BX_{1\calS}^{\top} - \mathbb{E} \wt\BX_{1\calS} \otimes \wt\BX_{1\calS}^{\top}  \right)\boldsymbol{\phi}_l  \right\rangle_2^2 \\
    & \le \frac{1}{n} \mathbb{E} \left\langle   \boldsymbol{\phi}_k, \  \left( \wt\BX_{1\calS} \otimes \wt\BX_{1\calS}^{\top}  \right)\boldsymbol{\phi}_l  \right\rangle_2^2  \\
    & = \frac{1}{n} \mathbb{E} \left\langle   \boldsymbol{\phi}_k, \   \wt\BX_{1\calS}  \right\rangle_2^2 \left\langle   \boldsymbol{\phi}_l, \   \wt\BX_{1\calS}  \right\rangle_2^2 \\
    & \le \frac{1}{n} \mathbb{E}^{1/2} \left\langle   \boldsymbol{\phi}_k, \   \wt\BX_{1\calS}  \right\rangle_2^4  \mathbb{E}^{1/2} \left\langle   \boldsymbol{\phi}_l, \   \wt\BX_{1\calS}  \right\rangle_2^4 \\
    & = \frac{3}{n} \mathbb{E} \left\langle   \boldsymbol{\phi}_k, \   \wt\BX_{1\calS}  \right\rangle_2^2 \mathbb{E} \left\langle   \boldsymbol{\phi}_l, \   \wt\BX_{1\calS}  \right\rangle_2^2\\
    & = \frac{3}{n} \rho_k \rho_l. 
\end{align*}
The last inequality follows from the Cauchy-Schwarz inequality. The second-to-last equality from the bottom is derived from the property of Gaussian kurtosis, $\mathbb{E} \langle \boldsymbol{\phi}_k,  \wt\BX_{1\calS} \rangle_{2} ^4 = 3 \left(\mathbb{E} \langle \boldsymbol{\phi}_k,  \wt\BX_{1\calS} \rangle_{2}^2 \right)^2$, where $\langle \boldsymbol{\phi}_k,  \wt\BX_{1\calS} \rangle_{2}$ follows a Gaussian distribution with mean $0$ and variance smaller than $\rho_1$.
Therefore
\begin{align}
    & \mathbb{E}\vertiii{ \left( \TSS \right)^{\nu} \left(\TSS_{\lambda_3} \right)^{-1} \left( \TnSS - \TSS  \right) \left( \TSS \right)^{-\nu}  }_{2,2} \nonumber \\
    & \le \left( \frac{3}{n} \sum_{k\ge 1} \frac{\rho_k^{1+2\nu} }{(\rho_k + \lambda_3)^2} \sum_{l \ge 1} \rho_l^{1-2\nu}  \right)^{1/2} \nonumber \\
    &\le  \left(3\sum_{l \ge 1} \rho_l^{1-2\nu} \right)^{1/2} \left( \frac{1}{n \lambda_3^{1-2\nu}} \sum_{k\ge 1} \frac{\rho_k^{1+2\nu} }{(\rho_k + \lambda_3)^{1+2\nu}} \right)^{1/2} . \label{equ:lemma_s4_expectation}
\end{align}
By Corollary \ref{corollary:eigenvalue_rela}, we have 
\begin{align*}
    \sum_{l \ge 1} \rho_l^{1-2\nu} = \sum_{j=1}^q \sum_{k \ge 1} \left(\rho_{q(k-1)+j} \right)^{1-2\nu}  \le (bc)^{1-2\nu} q \sum_{k \ge 1}  k^{-2r(1-2\nu)} = O(q).
\end{align*}
The last equation holds because $1-2\nu > 1/(2r)$.
By Lemma \ref{lemma:minimax_upper_5}, the expression \eqref{equ:lemma_s4_expectation} can be bounded by $C q^{1/2} \left( (n/q) \cdot \lambda_3^{1-2 \nu + \frac{1}{2r} }\right)^{-1 / 2}$ for some $C>0$. The proof is completed by applying the Markov inequality. 

\noindent\textbf{(2).} Similarly, we can show that
\begin{align*}
    & \mathbb{E}\vertiii{ \left( \TSS \right)^{1/2} \left(\TSS_{\lambda_3} \right)^{-1} \left( \TnSS - \TSS  \right) \left( \TSS \right)^{-\nu}  }_{2,2} \\
    &\le  \left(3 \sum_{l \ge 1} \rho_l^{1-2\nu} \right)^{1/2} \left( \frac{1}{n} \sum_{k\ge 1} \frac{\rho_k^{2} }{(\rho_k + \lambda_3)^{2}} \right)^{1/2} \\
    &\le C' q^{1/2} \left( (n/q) \cdot \lambda_3^{\frac{1}{2r} }\right)^{-1 / 2} 
\end{align*}
for some $C'>0$. The proof is completed by applying the Markov inequality. 
\end{proof}

\begin{lemma} \label{lemma:minimax_upper_5}
For $\lambda < 1$, suppose $\TSS$ satisfies Condition C.\ref{ass:a8}, $\{ \rho_l \}_{l \ge 1}$ are the eigenvalues of $\TSS$. Then there exist constants $c' > 0$ depending only on $b, c, r, \nu$ such that
\begin{align*}
    \sum_{l \geq 1} \frac{\rho_{l}^{1+2 \nu}}{\left(\lambda+\rho_{l}\right)^{1+2 \nu}} \leq c' q \left(1+\lambda^{-1 /(2 r)}\right),
\end{align*}
where $b, c$ are defined in Condition C.\ref{ass:a8}.
\end{lemma}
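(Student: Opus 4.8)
The plan is to reduce the sum to a one-dimensional series by exploiting the block structure of the eigenvalues of $\TSS$ furnished by Corollary \ref{corollary:eigenvalue_rela}. Observe first that the scalar function $x\mapsto\bigl(\tfrac{x}{\lambda+x}\bigr)^{1+2\nu}$ is nondecreasing on $[0,\infty)$ and obeys the elementary bound $\bigl(\tfrac{x}{\lambda+x}\bigr)^{1+2\nu}\le\min\bigl\{1,\,(x/\lambda)^{1+2\nu}\bigr\}$. Grouping the eigenvalues into consecutive blocks of length $q$ and using $\rho_{q(k-1)+j}\le bc\,k^{-2r}$ for $1\le j\le q$ (Corollary \ref{corollary:eigenvalue_rela}) together with this monotonicity gives
\begin{align*}
\sum_{l\ge1}\frac{\rho_l^{1+2\nu}}{(\lambda+\rho_l)^{1+2\nu}}
=\sum_{k\ge1}\sum_{j=1}^{q}\frac{\rho_{q(k-1)+j}^{1+2\nu}}{(\lambda+\rho_{q(k-1)+j})^{1+2\nu}}
\le q\sum_{k\ge1}\min\bigl\{1,\,a\,k^{-s}\bigr\},
\end{align*}
where $s:=2r(1+2\nu)$ and $a:=(bc/\lambda)^{1+2\nu}$. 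Since $r>1/2$ and $\nu>0$, we have $s>1$, so the right-hand series converges.

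It then remains to bound $\sum_{k\ge1}\min\{1,a k^{-s}\}$ by a constant multiple of $1+a^{1/s}$. I would split the sum at the crossover index $k_\ast:=a^{1/s}=(bc/\lambda)^{1/(2r)}$, the value for which $a k^{-s}=1$. The terms with $k\le k_\ast$ are each at most $1$ and there are at most $k_\ast$ of them; for $k>k_\ast$ one bounds the tail $\sum_{k>k_\ast}k^{-s}$ by the corresponding integral $\int_{k_\ast\vee1}^{\infty}x^{-s}\,dx$, so that, after multiplying by $a$, the contribution is of order $k_\ast/(s-1)$ when $k_\ast\ge1$, and of order $a\,\zeta(s)\le\zeta(s)$ when $k_\ast<1$ (in which case $a<1$). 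In every case
\begin{align*}
\sum_{k\ge1}\min\{1,a k^{-s}\}\le C(r,\nu)\bigl(1+(bc)^{1/(2r)}\lambda^{-1/(2r)}\bigr)
\end{align*}
for a constant depending only on $s$, hence only on $r$ and $\nu$.

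Combining the two displays and absorbing $(bc)^{1/(2r)}$ into the constant yields $\sum_{l\ge1}\rho_l^{1+2\nu}/(\lambda+\rho_l)^{1+2\nu}\le c'\,q\,(1+\lambda^{-1/(2r)})$ with $c'=c'(b,c,r,\nu)$, which is the claim. There is no real analytic obstacle; the only point requiring care is the bookkeeping of constants — one must keep $k_\ast$ and the tail estimate expressed purely in terms of $b,c,r,\nu$ so that the final constant is genuinely independent of $\lambda$ (recall $\lambda<1$, so $\lambda^{-1/(2r)}\ge1$ and the ``$1+$'' term is harmless) and of $q$.
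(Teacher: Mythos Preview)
Your proposal is correct and follows essentially the same approach as the paper: both invoke Corollary \ref{corollary:eigenvalue_rela} to group the eigenvalues into $q$ blocks, use the monotonicity of $x\mapsto x/(\lambda+x)$ to reduce to a single one-dimensional series, and then show that series is $O(\lambda^{-1/(2r)})$. The only cosmetic difference is in the last step: the paper bounds $\sum_{k\ge1}(\lambda k^{2r}+C)^{-(1+2\nu)}$ directly by an integral and applies the substitution $y=\lambda^{1/(2r)}x$, whereas you first pass to the cruder majorant $\min\{1,a k^{-s}\}$ and split at the crossover index $k_\ast=(bc/\lambda)^{1/(2r)}$ --- both yield the same $c'q(1+\lambda^{-1/(2r)})$ bound with constants depending only on $b,c,r,\nu$.
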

\begin{proof}
Let $C = bc$, according to Corollary \ref{corollary:eigenvalue_rela}, it is straightforward that
\begin{align*}
    \sum_{l \geq 1} \frac{\rho_{l}^{1+2 \nu}}{\left(\lambda+\rho_{l}\right)^{1+2 \nu}} & = 
    \sum_{j = 1}^q \sum_{k \ge 1} \left(\frac{\rho_{q(k-1)+j}}{\lambda + \rho_{q(k-1)+j} } \right)^{1+2\nu} \\
    & \le q\sum_{k \geq 1} \left(\frac{C k^{-2r}}{\lambda + C k^{-2r} }\right)^{1+2\nu} \\
    & = q C^{1+2\nu} \sum_{k \geq 1} \frac{1}{(\lambda k^{2r}  + C)^{1+2\nu} } \\
    & \le q C^{1+2\nu} \left( C^{-(1+2\nu)} + \int_{1}^{\infty} \frac{dx}{(\lambda x^{2r}  + C)^{1+2\nu}} \right) \\
    & \le qC^{1+2\nu} \left( C^{-(1+2\nu)} + \lambda^{-\frac{1}{2r}} \int_{0}^{\infty} \frac{dy}{(y^{2r}  + C)^{1+2\nu}} \right) \\
    & < q c' \left(1 + \lambda^{-\frac{1}{2r}} \right).
\end{align*}
The last inequality holds because for $r > 1/2$, 
\begin{align*}
    \int_{0}^{\infty} \frac{dy}{ (y^{2r} + C)^{1+2\nu}} < \sum_{k=0}^{\infty} \frac{1}{ (k^{2r} + C)^{1+2\nu}} < C^{-(1+2\nu)} + \sum_{k=1}^{\infty} k^{-2r(1+2\nu)}  < \infty.
\end{align*}
\end{proof}

\begin{lemma}
Suppose that $\boldsymbol{U}_1, \boldsymbol{U}_2$ are jointly Gaussian processes with means $\boldsymbol{\mu}_1,
\boldsymbol{\mu}_2$,
(auto) covariance operators $\boldsymbol{\mathcal{G}}_{11}, \boldsymbol{\mathcal{G}}_{22}$ and
cross covariance operator $\boldsymbol{\mathcal{G}}_{12} = \boldsymbol{\mathcal{G}}_{21}^*$.
Then, conditional on $\boldsymbol{U}_1$, $\boldsymbol{U}_2$ is a Gaussian process with 
mean $\boldsymbol{\mu}_2 + \boldsymbol{\mathcal{G}}_{21} \boldsymbol{\mathcal{G}}_{11}^{-} (\boldsymbol{U}_1 -  \boldsymbol{\mu}_1)$\
and covariance operator $\boldsymbol{\mathcal{G}}_{22} -  \boldsymbol{\mathcal{G}}_{21} \boldsymbol{\mathcal{G}}_{11}^{-} \boldsymbol{\mathcal{G}}_{12}$, where $\boldsymbol{\mathcal{G}}_{11}^{-}$ is the Moore-Penrose generalized 
inverse of $\boldsymbol{\mathcal{G}}_{11}$, and therefore
\begin{align*}
\boldsymbol{U}_2 \eqid \boldsymbol{\mu}_2 + 
\boldsymbol{\mathcal{G}}_{21} \boldsymbol{\mathcal{G}}_{11}^{-} (\boldsymbol{U}_1 -  \boldsymbol{\mu}_1) + \boldsymbol{Z}
\end{align*}
where $\boldsymbol{Z}$ is a zero-mean process independent of $U_1$ and has covariance operator 
$\boldsymbol{\mathcal{G}}_{22} -  \boldsymbol{\mathcal{G}}_{21} \boldsymbol{\mathcal{G}}_{11}^{-} \boldsymbol{\mathcal{G}}_{12}$.
\label{lemma:cond_dist}
 \end{lemma}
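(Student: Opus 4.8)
The plan is to prove the lemma by the Hilbert-space analogue of the classical ``orthogonal decomposition'' argument for conditioning multivariate Gaussians. First I would reduce to the centered case: replacing $\boldsymbol{U}_k$ by $\boldsymbol{U}_k-\boldsymbol{\mu}_k$ leaves every covariance operator unchanged and shifts both the claimed conditional mean and the process $\boldsymbol{U}_2$ by the same affine function of $\boldsymbol{U}_1-\boldsymbol{\mu}_1$, so it suffices to treat $\boldsymbol{\mu}_1=\boldsymbol{\mu}_2=0$. The skeleton is then to introduce the residual process $\boldsymbol{Z}:=\boldsymbol{U}_2-\boldsymbol{\mathcal{G}}_{21}\boldsymbol{\mathcal{G}}_{11}^{-}\boldsymbol{U}_1$ and to establish: (a) $\boldsymbol{Z}$ is a well-defined centered Gaussian process, jointly Gaussian with $\boldsymbol{U}_1$; (b) $\boldsymbol{Z}$ is independent of $\boldsymbol{U}_1$; and (c) $\boldsymbol{Z}$ has covariance operator $\boldsymbol{\mathcal{G}}_{22}-\boldsymbol{\mathcal{G}}_{21}\boldsymbol{\mathcal{G}}_{11}^{-}\boldsymbol{\mathcal{G}}_{12}$. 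Granting (a)--(c), writing $\boldsymbol{U}_2=\boldsymbol{\mathcal{G}}_{21}\boldsymbol{\mathcal{G}}_{11}^{-}\boldsymbol{U}_1+\boldsymbol{Z}$ displays $\boldsymbol{U}_2$ as the sum of a $\sigma(\boldsymbol{U}_1)$-measurable term and an independent Gaussian term, so the conditional law of $\boldsymbol{U}_2$ given $\boldsymbol{U}_1$ is Gaussian with mean $\boldsymbol{\mu}_2+\boldsymbol{\mathcal{G}}_{21}\boldsymbol{\mathcal{G}}_{11}^{-}(\boldsymbol{U}_1-\boldsymbol{\mu}_1)$ and covariance operator equal to that of $\boldsymbol{Z}$; the stated distributional identity follows at once.

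For part (b), given joint Gaussianity from (a), independence is equivalent to vanishing cross-covariance. A short computation gives the cross-covariance operator of $\boldsymbol{Z}$ with $\boldsymbol{U}_1$ as $\boldsymbol{\mathcal{G}}_{21}-\boldsymbol{\mathcal{G}}_{21}\boldsymbol{\mathcal{G}}_{11}^{-}\boldsymbol{\mathcal{G}}_{11}=\boldsymbol{\mathcal{G}}_{21}(\mathscr{I}-P_1)$, where $P_1=\boldsymbol{\mathcal{G}}_{11}^{-}\boldsymbol{\mathcal{G}}_{11}$ is the orthogonal projection onto $\overline{\mathrm{Range}(\boldsymbol{\mathcal{G}}_{11})}$; this is zero because nonnegative-definiteness of the joint covariance operator (via the Douglas/Schur-complement factorization $\boldsymbol{\mathcal{G}}_{12}=\boldsymbol{\mathcal{G}}_{11}^{1/2}\boldsymbol{V}\boldsymbol{\mathcal{G}}_{22}^{1/2}$ for some contraction $\boldsymbol{V}$) forces $\overline{\mathrm{Range}(\boldsymbol{\mathcal{G}}_{12})}\subseteq\overline{\mathrm{Range}(\boldsymbol{\mathcal{G}}_{11})}$, equivalently $\ker(\boldsymbol{\mathcal{G}}_{11})\subseteq\ker(\boldsymbol{\mathcal{G}}_{21})$, so that $\boldsymbol{\mathcal{G}}_{21}P_1=\boldsymbol{\mathcal{G}}_{21}$. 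Part (c) is then bookkeeping: expanding the covariance of $\boldsymbol{Z}=\boldsymbol{U}_2-\boldsymbol{\mathcal{G}}_{21}\boldsymbol{\mathcal{G}}_{11}^{-}\boldsymbol{U}_1$ and using $\boldsymbol{\mathcal{G}}_{11}^{-}\boldsymbol{\mathcal{G}}_{11}\boldsymbol{\mathcal{G}}_{11}^{-}=\boldsymbol{\mathcal{G}}_{11}^{-}$ together with $\boldsymbol{\mathcal{G}}_{21}\boldsymbol{\mathcal{G}}_{11}^{-}\boldsymbol{\mathcal{G}}_{11}=\boldsymbol{\mathcal{G}}_{21}$ collapses the cross terms and leaves $\boldsymbol{\mathcal{G}}_{22}-\boldsymbol{\mathcal{G}}_{21}\boldsymbol{\mathcal{G}}_{11}^{-}\boldsymbol{\mathcal{G}}_{12}$.

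The one genuinely delicate point is part (a): since $\boldsymbol{\mathcal{G}}_{11}$ is trace-class, $\boldsymbol{\mathcal{G}}_{11}^{-}$ is unbounded and $\boldsymbol{U}_1$ does not lie in $\mathrm{Range}(\boldsymbol{\mathcal{G}}_{11}^{1/2})$ almost surely, so $\boldsymbol{\mathcal{G}}_{21}\boldsymbol{\mathcal{G}}_{11}^{-}\boldsymbol{U}_1$ cannot be read as a literal composition of operators applied to $\boldsymbol{U}_1$. I would resolve this by a ridge-regularization limit: for $\varepsilon>0$ put $\boldsymbol{Z}_\varepsilon:=\boldsymbol{U}_2-\boldsymbol{\mathcal{G}}_{21}(\boldsymbol{\mathcal{G}}_{11}+\varepsilon\mathscr{I})^{-1}\boldsymbol{U}_1$, where the operator is now bounded and $(\boldsymbol{U}_1,\boldsymbol{Z}_\varepsilon)$ is jointly Gaussian as a bounded linear image of $(\boldsymbol{U}_1,\boldsymbol{U}_2)$; one then checks that $\boldsymbol{\mathcal{G}}_{21}(\boldsymbol{\mathcal{G}}_{11}+\varepsilon\mathscr{I})^{-1}\boldsymbol{U}_1$ is Cauchy in $L^2$ as $\varepsilon\downarrow 0$ — its covariance operator is $\boldsymbol{\mathcal{G}}_{21}(\boldsymbol{\mathcal{G}}_{11}+\varepsilon\mathscr{I})^{-1}\boldsymbol{\mathcal{G}}_{11}(\boldsymbol{\mathcal{G}}_{11}+\varepsilon\mathscr{I})^{-1}\boldsymbol{\mathcal{G}}_{12}$, which converges by dominated convergence in the joint spectral representation, the range inclusion above controlling the relevant ratios — and one defines $\boldsymbol{\mathcal{G}}_{21}\boldsymbol{\mathcal{G}}_{11}^{-}\boldsymbol{U}_1$ as this $L^2$ limit; passing to the limit preserves joint Gaussianity of $(\boldsymbol{U}_1,\boldsymbol{Z})$ and $\sigma(\boldsymbol{U}_1)$-measurability of the limiting term. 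An equivalent route is to condition first on finitely many linear functionals of $\boldsymbol{U}_1$, apply the classical finite-dimensional Gaussian conditioning formula, and pass to the limit along an increasing sequence of such functionals via the martingale convergence theorem; I prefer the regularization argument because it produces the explicit residual process directly. Once (a) is secured, (b) and (c) proceed as above, completing the proof.
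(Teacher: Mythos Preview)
The paper states this lemma without proof, treating it as a standard fact about Gaussian processes in Hilbert space; there is therefore no paper proof to compare against. Your argument is the correct and canonical one: define the residual $\boldsymbol{Z}=\boldsymbol{U}_2-\boldsymbol{\mathcal{G}}_{21}\boldsymbol{\mathcal{G}}_{11}^{-}\boldsymbol{U}_1$, verify via the range inclusion $\overline{\mathrm{Range}(\boldsymbol{\mathcal{G}}_{12})}\subseteq\overline{\mathrm{Range}(\boldsymbol{\mathcal{G}}_{11})}$ (equivalently $\boldsymbol{\mathcal{G}}_{21}P_1=\boldsymbol{\mathcal{G}}_{21}$) that $\boldsymbol{Z}$ and $\boldsymbol{U}_1$ have zero cross-covariance, and then read off the conditional law. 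Your identification of the genuine subtlety---that $\boldsymbol{\mathcal{G}}_{11}^{-}$ is unbounded and $\boldsymbol{U}_1\notin\mathrm{Range}(\boldsymbol{\mathcal{G}}_{11}^{1/2})$ a.s., so $\boldsymbol{\mathcal{G}}_{21}\boldsymbol{\mathcal{G}}_{11}^{-}\boldsymbol{U}_1$ must be \emph{defined} as an $L^2$ limit rather than a pointwise composition---is exactly right, and the ridge-regularization construction you sketch (or the equivalent finite-rank martingale route) is the standard way to make this rigorous. Many references gloss over this point, so your treatment is more careful than what the paper requires.
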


\begin{lemma}\label{lm:large_deviation_2nd_ver}
Suppose $U_l \overset{iid}{\sim} \GP(0, \mathscr{G})$, $l=1, \ldots, L $, with $\tr(\mathscr{G}) < \infty$, then for any $s > 1$,
\begin{enumerate}
\item[(i)] 
\begin{align*}
\pr\left(\sum_{l=1}^L \| U_l \|_2^2 > x\right) \le \left(\frac{s}{s-1}\right)^{L/2} \exp\left(-{x\over 2s \cdot \tr(\mathscr{G}) }\right);
\end{align*}
\item[(ii)]
if we further have $x > (1+s/2) L \cdot\tr(\mathscr{G})$, then
\begin{align*}
\pr\left(\sum_{l=1}^L \| U_l \|_2^2 > x\right) \le  \exp\left(  - \frac{ (1 - s^{-1/2})^2 }{2 \| \mathscr{G} \|_2 } (x - L \cdot \tr(\mathscr{G}) ) \right).
\end{align*}
\end{enumerate}
\end{lemma}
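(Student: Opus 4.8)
\textbf{Proof proposal for Lemma \ref{lm:large_deviation_2nd_ver}.}
The plan is to diagonalize $\mathscr{G}$ and reduce both tail bounds to Chernoff-type estimates for a weighted sum of independent $\chi^2_1$ variables. Let $\mu_1\ge\mu_2\ge\cdots\ge 0$ be the eigenvalues of $\mathscr{G}$, so that $\tr(\mathscr{G})=\sum_k\mu_k$ and $\|\mathscr{G}\|_2=\mu_1$. By the Karhunen-Lo\`eve expansion, $S:=\sum_{l=1}^L\|U_l\|_2^2\eqid\sum_{l=1}^L\sum_k\mu_k\xi_{lk}^2$ for i.i.d.\ standard normals $\xi_{lk}$, with $\E S=L\tr(\mathscr{G})$ and, for $0<t<1/(2\mu_1)$,
\begin{align*}
\E e^{tS}=\Big(\textstyle\prod_k(1-2t\mu_k)^{-1/2}\Big)^{L}.
\end{align*}

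For part (i), I would take $t=\{2s\,\tr(\mathscr{G})\}^{-1}$, which lies in $(0,1/(2\mu_1))$ because $\mu_1\le\tr(\mathscr{G})<s\,\tr(\mathscr{G})$, and for which $\sum_k2t\mu_k=1/s<1$. The Weierstrass product inequality $\prod_k(1-2t\mu_k)\ge1-\sum_k2t\mu_k=1-1/s$ then gives $\E e^{tS}\le\{s/(s-1)\}^{L/2}$, and Markov's inequality $\pr(S>x)\le e^{-tx}\E e^{tS}$ produces exactly the bound in (i).

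For part (ii), I would instead control the centred log-moment generating function: using $-\log(1-u)-u\le u^2/\{2(1-u)\}$ for $u\in[0,1)$, one gets $\log\E e^{t(S-\E S)}\le v t^2/\{2(1-ct)\}$ for $0<t<1/c$, with $c:=2\|\mathscr{G}\|_2$ and $v:=2L\sum_k\mu_k^2\le Lc\,\tr(\mathscr{G})$; that is, $S$ is sub-gamma, so the standard optimised Chernoff bound yields $\pr\{S\ge L\tr(\mathscr{G})+\sqrt{2v\xi}+c\xi\}\le e^{-\xi}$ for every $\xi>0$. Writing $\delta:=x-L\tr(\mathscr{G})$, the hypothesis $x>(1+s/2)L\tr(\mathscr{G})$ reads $L\tr(\mathscr{G})<2\delta/s$, so with the choice $\xi:=(1-s^{-1/2})^2\delta/c$ a short computation (using $v<2c\delta/s$) shows $\sqrt{2v\xi}+c\xi<(1-1/s)\delta<\delta$; hence $L\tr(\mathscr{G})+\sqrt{2v\xi}+c\xi<x$, and the sub-gamma bound gives $\pr(S>x)\le e^{-\xi}$, which is precisely the claim in (ii).

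The routine parts are the MGF identity and the elementary product/logarithm inequalities; the one place that needs care is the constant bookkeeping in (ii), namely choosing $\xi$ so that the sub-gamma tail lands on exactly $(1-s^{-1/2})^2/(2\|\mathscr{G}\|_2)$, and verifying that the hypothesis $x>(1+s/2)L\tr(\mathscr{G})$ is exactly what is needed to absorb the cross term $\sqrt{2v\xi}$. If one prefers not to quote the sub-gamma tail inequality, it can be re-derived inline from the above log-MGF estimate by the usual choice of dual parameter, so no external result is strictly required.
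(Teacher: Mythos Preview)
Your proof is correct and follows the same overall strategy as the paper: reduce via the Karhunen--Lo\`eve expansion to a weighted sum of independent $\chi^2_1$ variables, then apply a Chernoff bound for (i) and the sub-gamma/Laurent--Massart tail bound for (ii), with the same choice $t=\{2s\,\tr(\mathscr{G})\}^{-1}$ in (i) and the same constants $c=2\|\mathscr{G}\|_2$, $v=2L\sum_k\mu_k^2$ in (ii). The only notable difference is that for (i) you bound $\prod_k(1-2t\mu_k)\ge 1-\sum_k 2t\mu_k=1-1/s$ in one line via the Weierstrass product inequality, whereas the paper reaches the same conclusion by a longer boundary-optimization argument showing the product is extremized when all mass sits on a single eigenvalue; your route is cleaner but the content is identical.
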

The proof of this result is a straightforward application of the following Lemma \ref{lm:large_deviations}.

\begin{lemma} \label{lm:large_deviations}
Suppose that $\xi_{lk}, 1\le m\le L, 1\le k\le K$, are independent random variables 
where $L<\infty, K\le \infty$, $\xi_{lk} \sim N(0,\theta_k)$ for all $l,k$ with 
$\| \theta \|_1 < \infty$, where $\| \theta \|_1 = \sum_{k=1}^K \theta_k$, further define $\| \theta \|_{\infty} = \max_{ \{k=1,\dots,K\} } \theta_k$, then for any $s > 1$,
\begin{enumerate}
\item[(i)]
\begin{align}
\pr\left(\sum_{l=1}^L\sum_{k=1}^K \xi_{lk}^2 > x\right) \le \left(\frac{s}{s-1}\right)^{L/2} \exp\left(-{x\over 2s \| \theta \|_1 }\right); \label{equ:LM_inequality_0}
\end{align}
\item[(ii)]
if we further have $x > (1+s/2) L \| \theta \|_1$, then
\begin{align}
\pr\left(\sum_{l=1}^L\sum_{k=1}^K \xi_{lk}^2 > x\right) \le  \exp\left(  - \frac{ (1 - s^{-1/2})^2 }{2 \| \theta \|_{\infty} } (x - L \| \theta \|_1 ) \right). \label{equ:LM_inequality_1}
\end{align}
\end{enumerate}

\end{lemma}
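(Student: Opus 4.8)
The plan is to prove both bounds by a single Chernoff (exponential–moment) argument applied to $W:=\sum_{l=1}^{L}\sum_{k=1}^{K}\xi_{lk}^2$, a weighted sum of independent $\chi^2_1$ variables. Assuming (as we may) that $\|\theta\|_\infty>0$, for every $t$ with $0<t<1/(2\|\theta\|_\infty)$ the exponential moment factorizes by independence as $\E\exp(tW)=\prod_{k=1}^{K}(1-2t\theta_k)^{-L/2}$, the product converging in $(0,\infty)$ since $-\sum_k\log(1-2t\theta_k)\le(1-2t\|\theta\|_\infty)^{-1}\sum_k 2t\theta_k=2t\|\theta\|_1/(1-2t\|\theta\|_\infty)<\infty$; when $K=\infty$ the factorization follows from monotone convergence applied to the partial sums $\sum_{k\le n}\xi_{lk}^2\uparrow\sum_k\xi_{lk}^2$. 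Markov's inequality then gives, for all such $t$,
\[
\pr(W>x)\ \le\ \exp\!\Big(-tx-\tfrac{L}{2}\sum_{k}\log(1-2t\theta_k)\Big),
\]
so each part reduces to a good choice of $t$.

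For part (i) I would take $t=1/(2s\|\theta\|_1)$, which is admissible because $\|\theta\|_\infty\le\|\theta\|_1$ and $s>1$. Passing to the limit in the finite Weierstrass product inequality $\prod_{k\le n}(1-u_k)\ge1-\sum_{k\le n}u_k$ (valid since each $u_k=2t\theta_k\in[0,1)$ and $\sum_k u_k=1/s<1$) gives $\prod_k(1-2t\theta_k)\ge1-2t\|\theta\|_1=1-1/s>0$, hence $\prod_k(1-2t\theta_k)^{-L/2}\le(s/(s-1))^{L/2}$; substituting into the Chernoff bound yields \eqref{equ:LM_inequality_0} immediately.

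For part (ii) I would set $a:=1-s^{-1/2}\in(0,1)$ and $t:=a/(2\|\theta\|_\infty)$, so that $2t\theta_k\le a<1$ for every $k$ and $t$ is admissible. Using the termwise power-series estimate $-\log(1-u)\le u+\frac{u^2}{2(1-u)}\le u+\frac{u^2}{2(1-a)}$ on $[0,a]$ (the coefficient of $u^n$ satisfies $\tfrac1n\le\tfrac12$ for $n\ge2$), together with $\sum_k(2t\theta_k)^2\le(2t\|\theta\|_\infty)\sum_k 2t\theta_k=a\cdot 2t\|\theta\|_1$, one gets $-\tfrac{L}{2}\sum_k\log(1-2t\theta_k)\le Lt\|\theta\|_1\frac{2-a}{2(1-a)}$. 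Plugging this into the Chernoff exponent and rearranging, the bound becomes $\pr(W>x)\le\exp\!\big(-ta(x-L\|\theta\|_1)\big)$ as soon as
\[
x\ \ge\ L\|\theta\|_1\,\frac{2-3a+2a^2}{2(1-a)^2}\ =\ L\|\theta\|_1\Big(1+\tfrac{s}{2}-\tfrac{\sqrt{s}}{2}\Big),
\]
which is implied by the hypothesis $x>(1+s/2)L\|\theta\|_1$ since $\sqrt s>0$; finally $ta=(1-s^{-1/2})^2/(2\|\theta\|_\infty)$ identifies the right-hand side with \eqref{equ:LM_inequality_1}.

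The moment-generating-function computation and the Weierstrass inequality are routine; the only delicate point is the constant bookkeeping in part (ii) — choosing $t$ proportional to $1-s^{-1/2}$, controlling $\sum_k(2t\theta_k)^2$ by $\|\theta\|_\infty$ times the first moment, and checking that the threshold $(1+s/2)L\|\theta\|_1$ is exactly strong enough to absorb the residual term into $-ta(x-L\|\theta\|_1)$. I expect that (together with verifying that the $K=\infty$ case raises no convergence issue) to be the main, though modest, obstacle.
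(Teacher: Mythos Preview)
Your proposal is correct. Part (i) is essentially the paper's argument: both start from the Chernoff bound with $t=1/(2s\|\theta\|_1)$; the paper bounds the product $\prod_k(1-\theta_k/(s\|\theta\|_1))^{-L/2}$ by a constrained-optimization argument (the maximum over $r_k\ge 0$, $\sum_k r_k=1$ is achieved when all mass sits on one coordinate), while you invoke the Weierstrass inequality $\prod_k(1-u_k)\ge 1-\sum_k u_k$. The two lead to the identical bound, and your route is arguably cleaner and avoids the somewhat informal boundary analysis the paper uses.

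Part (ii) is where the approaches genuinely diverge. The paper does not do a direct Chernoff computation: it invokes the Laurent--Massart inequality
\[
\pr\!\Big(\sum_j a_j(Z_j^2-1)>y\Big)\le \exp\!\Big\{-\tfrac{v^2}{2c^2}\big((1+2v^{-2}cy)^{1/2}-1\big)^2\Big\},
\]
then shows that under the threshold $y>\tfrac{s}{2}L\|\theta\|_1$ one has $2cy/v^2>s$, which simplifies the exponent to the stated form. Your argument is fully self-contained: you fix $t=(1-s^{-1/2})/(2\|\theta\|_\infty)$ up front, bound $-\log(1-u)$ termwise by $u+u^2/(2(1-a))$, collapse $\sum_k(2t\theta_k)^2\le a\cdot 2t\|\theta\|_1$, and check that the threshold $x>(1+s/2)L\|\theta\|_1$ dominates the resulting residual $L\|\theta\|_1(1+s/2-\sqrt{s}/2)$. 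The algebra checks out (in particular $2-3a+2a^2=1-s^{-1/2}+2/s$ and $(1-a)^2=1/s$ give exactly that threshold, and $ta=(1-s^{-1/2})^2/(2\|\theta\|_\infty)$). What you gain is a proof that does not rely on an external concentration inequality; what the paper's route gains is that Laurent--Massart is sharper for intermediate $x$ and the reduction to \eqref{equ:LM_inequality_1} is shorter once that inequality is quoted. Either way, both arguments establish the lemma.
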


\begin{proof}
For (i), by Markov's inequality,
\begin{align*}
\pr\left(\sum_{l=1}^L\sum_{k=1}^K \xi_{lk}^2 > x\right)  \le e^{-tx}  
\left\{\prod_{k=1}^K \E\left(e^{t \xi_{1k}^2 }\right)\right\}^L
= e^{-tx} \prod_{k=1}^K \left( 1 - 2t \theta_{k} \right)^{-L/2}.
\end{align*}
Letting $t = (2s \sum_{k=1}^\infty \theta_{k})^{-1}$, $s > 1$, we obtain
\begin{align*}
\prod_{k=1}^K \left( 1 - 2t \theta_{k} \right)^{-L/2}=\prod_{k=1}^K \left( 1 - \frac{\theta_k}
{s \sum_{k=1}^K \theta_k}\right)^{-L/2} \le \left(\frac{s}{s-1}\right)^{L/2},
\end{align*}
where 
the maximum is attained when $\theta_1\not = 0, \theta_2 = \theta_3 = \cdots = 0$. 
To see why the above statement is true, define $r_k = \theta_k
(\sum_{k=1}^K \theta_k)^{-1}$, then we have $0 \le r_k \le 1$, $\sum_{k=1}^K r_k = 1$, denote $\boldsymbol{r}_K = (r_1, \dots, r_K)^{\top}$, define
\begin{align*}
   g_K(\boldsymbol{r}_K) = -\frac{L}{2} \sum_{k=1}^K \log \left( 1 - \frac{r_k}{s} \right). 
\end{align*}
It is straightforward to determine that the function $g_K$ has a compact support and is differentiable. By setting the gradient of $g_K$ with respect to $\boldsymbol{r}_K$ equal to zero, we obtain $r_k \equiv 1/K$, $k=1,\dots, K$, and this leads to the attainment of the function's minimum value. Note that function $g_K$ only have one critical point, as a result, the maximum value must be attained at the boundary of the support of $\boldsymbol{r}_K$. Without loss of generality, we have $r_K = 0$, then the minimum value of $g_{K-1}$ is attained at $r_k \equiv 1/(K-1)$, $k=1,\dots, K-1$, the maximum value of $g_{K-1}$ must be attained at the boundary of $\boldsymbol{r}_{K-1}$. Recursively using this fact, we have $r_1=1$, $r_2 = \dots = r_K = 0$. 

For (ii), the proof utilizes a modified version of the Laurent-Massart inequality \citepsupp{laurent2000adaptive}, as follows. Suppose $Z_j \overset{i.i.d.}{\sim} N(0, 1)$, $a_j \ge 0$ $(j=1, \dots, n)$, define $c = 2 \|a\|_{\infty}$ and $v^2 = 2 \|a\|_2^2$. Then, for any $y > 0$,
\begin{align*}
    \pr \left( \sum_{j=1}^n a_j (Z_j^2 - 1) > y \right) \le \exp\left\{ -\frac{v^2}{2c^2} \left((1+2v^{-2}cy)^{1/2} - 1\right)^2 \right\}.
\end{align*}
Back to our setting, letting $\xi_{lk} = \theta_{k}^{1/2} Z_{lk}$, $v^2 = 2L \| \theta \|_2^2$, $c = 2 \| \theta \|_{\infty}$, and assuming $y > 2^{-1} s L \| \theta \|_1$ $(s > 1)$, we have $2cy / s > 2 L \| \theta \|_1 \| \theta \|_{\infty}  \ge 2 L \| \theta \|_2^2 = v^2$. Then, $2 v^{-2} cy > s > 1$, and in this case
\begin{align*}
    \frac{v^2}{2c^2} \left((1+2v^{-2}cy)^{1/2} - 1\right)^2 > \frac{v^2}{2c^2} \left((2v^{-2}cy)^{1/2} - 1\right)^2  > \frac{ \left( 1 - s^{-1/2} \right)^{-2}}{c} y.
\end{align*}
Subsequently, 
\begin{align*}
\pr\left(\sum_{l=1}^L\sum_{k=1}^K (\xi_{lk}^2 - \theta_k) > y \right) \le  \exp\left(  - \frac{ (1 - s^{-1/2})^2 }{2 \| \theta \|_{\infty} } y \right).
\end{align*}
Let $x = y + L \| \theta \|_1$. Then, for $x > (1+s/2) L \| \theta \|_1$, \eqref{equ:LM_inequality_1} holds.

\end{proof}

The proofs of the following lemmas are straightforward and are omitted.

\begin{lemma} \label{lemma:norm_inequality}
For operator-valued matrices $\boldsymbol{A}$ and $\boldsymbol{B}$,
\begin{enumerate}
\item[(i)]
$\vertiii{\boldsymbol{A}\boldsymbol{B}}_{\alpha,\beta} \le \vertiii{\boldsymbol{A}}_{\eta, \beta}\vertiii{\boldsymbol{B}}_{\alpha, \eta}$ for $\alpha, \beta,\eta \in \{2, \infty \}$;
\item[(ii)] if $\boldsymbol{A}$ has dimension $q \times q$, then
$\frac{1}{\sqrt{q}} \vertiii{\boldsymbol{A}}_{2,2} \le \vertiii{\boldsymbol{A}}_{\infty, \infty} \le \sqrt{q} \vertiii{\boldsymbol{A}}_{2,2}$.
\end{enumerate}

\end{lemma}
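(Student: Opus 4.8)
The plan is to obtain both parts of Lemma~\ref{lemma:norm_inequality} directly from the definition $\vertiii{\boldsymbol{A}}_{a,b}=\sup_{\|\boldsymbol{f}\|_a\le 1}\|\boldsymbol{A}\boldsymbol{f}\|_b$ for $a,b\in\{2,\infty\}$, together with the elementary comparison between the two vector norms $\|\cdot\|_2$ and $\|\cdot\|_\infty$ on the product space $\mathbb{L}_2^q[0,1]$. First I would record the homogeneity consequence of the definition: for any operator-valued matrix $\boldsymbol{M}$ and any element $\boldsymbol{g}$ of the relevant product space, $\|\boldsymbol{M}\boldsymbol{g}\|_b\le\vertiii{\boldsymbol{M}}_{a,b}\,\|\boldsymbol{g}\|_a$, which follows by applying the definition to $\boldsymbol{g}/\|\boldsymbol{g}\|_a$ when $\boldsymbol{g}\neq 0$ and is trivial otherwise.

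For part (i), fix $\boldsymbol{f}$ with $\|\boldsymbol{f}\|_\alpha\le 1$ and set $\boldsymbol{g}=\boldsymbol{B}\boldsymbol{f}$. Then $\|\boldsymbol{A}\boldsymbol{B}\boldsymbol{f}\|_\beta\le\vertiii{\boldsymbol{A}}_{\eta,\beta}\|\boldsymbol{g}\|_\eta\le\vertiii{\boldsymbol{A}}_{\eta,\beta}\vertiii{\boldsymbol{B}}_{\alpha,\eta}\|\boldsymbol{f}\|_\alpha\le\vertiii{\boldsymbol{A}}_{\eta,\beta}\vertiii{\boldsymbol{B}}_{\alpha,\eta}$, and taking the supremum over all such $\boldsymbol{f}$ yields the submultiplicativity for every $\alpha,\beta,\eta\in\{2,\infty\}$.

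For part (ii), the single elementary fact needed is that for $\boldsymbol{f}=(f_1,\dots,f_q)^\top\in\mathbb{L}_2^q[0,1]$ one has $\|\boldsymbol{f}\|_\infty=\max_j\|f_j\|_2\le(\sum_{j=1}^q\|f_j\|_2^2)^{1/2}=\|\boldsymbol{f}\|_2\le\sqrt{q}\,\|\boldsymbol{f}\|_\infty$. The upper bound $\vertiii{\boldsymbol{A}}_{\infty,\infty}\le\sqrt{q}\,\vertiii{\boldsymbol{A}}_{2,2}$ follows because $\|\boldsymbol{f}\|_\infty\le 1$ forces $\|\boldsymbol{f}\|_2\le\sqrt{q}$, so $\|\boldsymbol{A}\boldsymbol{f}\|_\infty\le\|\boldsymbol{A}\boldsymbol{f}\|_2\le\vertiii{\boldsymbol{A}}_{2,2}\|\boldsymbol{f}\|_2\le\sqrt{q}\,\vertiii{\boldsymbol{A}}_{2,2}$; taking the supremum over $\|\boldsymbol{f}\|_\infty\le 1$ finishes it. The left inequality follows symmetrically: $\|\boldsymbol{f}\|_2\le 1$ implies $\|\boldsymbol{f}\|_\infty\le 1$, whence $\|\boldsymbol{A}\boldsymbol{f}\|_2\le\sqrt{q}\,\|\boldsymbol{A}\boldsymbol{f}\|_\infty\le\sqrt{q}\,\vertiii{\boldsymbol{A}}_{\infty,\infty}$, and taking the supremum over $\|\boldsymbol{f}\|_2\le 1$ gives $\vertiii{\boldsymbol{A}}_{2,2}\le\sqrt{q}\,\vertiii{\boldsymbol{A}}_{\infty,\infty}$, i.e.\ $\tfrac{1}{\sqrt{q}}\vertiii{\boldsymbol{A}}_{2,2}\le\vertiii{\boldsymbol{A}}_{\infty,\infty}$.

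I do not expect a genuine obstacle: the statement is just the operator-valued analogue of submultiplicativity of induced matrix norms and of the equivalence $\|\cdot\|_\infty\le\|\cdot\|_2\le\sqrt{q}\,\|\cdot\|_\infty$ on $\mathbb{R}^q$, which is precisely why the paper records it without proof. The only thing requiring mild care is the bookkeeping of indices in part (i) — one must apply the operator bound for $\boldsymbol{B}$ from norm-index $\alpha$ to the intermediate index $\eta$ and the bound for $\boldsymbol{A}$ from $\eta$ to $\beta$, so that the intermediate vector norm matches on both factors — and, analogously, selecting the correct direction of the vector-norm comparison in each half of part (ii).
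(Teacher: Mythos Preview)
Your proof is correct and is exactly the standard argument the paper has in mind; indeed, the paper explicitly omits the proof as straightforward. The only content is the submultiplicativity of induced operator norms and the elementary vector-norm comparison $\|\boldsymbol{f}\|_\infty\le\|\boldsymbol{f}\|_2\le\sqrt{q}\,\|\boldsymbol{f}\|_\infty$ on $\mathbb{L}_2^q[0,1]$, both of which you handle correctly.
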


\begin{lemma} \label{lemma:varkappa_lower}
For a $q \times q$ operator-valued covariance matrix $\boldsymbol{R}$, suppose $\rho_1$ is the largest eigenvalue of $\boldsymbol{R}$, then for any $\lambda > 0$
\begin{align*}
    \vertiii{ \boldsymbol{R} (\boldsymbol{R} + \lambda \mathscrbf{I})^{-1} }_{\infty, \infty} \ge \frac{\rho_1}{\rho_1 + \lambda}.
\end{align*}
\end{lemma}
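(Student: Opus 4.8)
The plan is to deduce the bound from the elementary principle that any operator norm induced by a vector norm dominates the modulus of every eigenvalue of the operator. Write $\boldsymbol{B}:=\boldsymbol{R}(\boldsymbol{R}+\lambda\mathscrbf{I})^{-1}$, viewed as a bounded operator on $\mathbb{L}_2^q[0,1]$. First I would dispose of the degenerate case: if $\rho_1=0$ then, $\boldsymbol{R}$ being nonnegative definite, $\boldsymbol{R}=\boldsymbol{0}$, so $\boldsymbol{B}=\boldsymbol{0}$ and the asserted inequality reads $0\ge 0$. Hence assume $\rho_1>0$ for the rest.

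Next, since $\boldsymbol{R}$ is a $q\times q$ operator-valued covariance matrix it is compact, self-adjoint and nonnegative on $\mathbb{L}_2^q[0,1]$, so by the spectral theorem there is an eigenvector $\boldsymbol{\phi}_1\in\mathbb{L}_2^q[0,1]$ with $\boldsymbol{\phi}_1\neq\boldsymbol{0}$ and $\boldsymbol{R}\boldsymbol{\phi}_1=\rho_1\boldsymbol{\phi}_1$. I would then note that $\boldsymbol{\phi}_1$ is automatically an eigenvector of $\boldsymbol{B}$: from $(\boldsymbol{R}+\lambda\mathscrbf{I})\boldsymbol{\phi}_1=(\rho_1+\lambda)\boldsymbol{\phi}_1$ one gets $(\boldsymbol{R}+\lambda\mathscrbf{I})^{-1}\boldsymbol{\phi}_1=(\rho_1+\lambda)^{-1}\boldsymbol{\phi}_1$, hence $\boldsymbol{B}\boldsymbol{\phi}_1=\rho_1(\rho_1+\lambda)^{-1}\boldsymbol{\phi}_1$.

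Finally, I would take the test element $\boldsymbol{f}:=\boldsymbol{\phi}_1/\|\boldsymbol{\phi}_1\|_\infty$, which is well defined because $\|\boldsymbol{\phi}_1\|_\infty=\max_{1\le j\le q}\|\phi_{1j}\|_2>0$ as $\boldsymbol{\phi}_1\neq\boldsymbol{0}$, and which trivially satisfies $\|\boldsymbol{f}\|_\infty=1$. Substituting $\boldsymbol{f}$ into the definition $\vertiii{\boldsymbol{B}}_{\infty,\infty}=\sup_{\|\boldsymbol{g}\|_\infty\le 1}\|\boldsymbol{B}\boldsymbol{g}\|_\infty$ and using the eigen-relation from the previous step gives
\[
\vertiii{\boldsymbol{R}(\boldsymbol{R}+\lambda\mathscrbf{I})^{-1}}_{\infty,\infty}\ \ge\ \|\boldsymbol{B}\boldsymbol{f}\|_\infty\ =\ \frac{\rho_1}{\rho_1+\lambda}\,\|\boldsymbol{f}\|_\infty\ =\ \frac{\rho_1}{\rho_1+\lambda},
\]
which is exactly the claimed bound.

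I do not expect a genuine obstacle: the argument is essentially a one-line computation once the top eigenvector is in hand. The only points deserving a word of justification are the existence of that eigenvector (compactness of covariance operators, which is standard here) and the harmless normalization $\boldsymbol{f}=\boldsymbol{\phi}_1/\|\boldsymbol{\phi}_1\|_\infty$; no sharp inequality, no probabilistic estimate, and no appeal to the earlier lemmas is required.
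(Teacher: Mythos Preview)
Your proposal is correct. The paper omits the proof of this lemma entirely, stating only that it is ``straightforward''; your argument --- plugging the top eigenvector of $\boldsymbol{R}$, normalized in $\|\cdot\|_\infty$, into the definition of $\vertiii{\cdot}_{\infty,\infty}$ --- is exactly the kind of one-line verification the authors had in mind.
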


\section{Substantiating examples for the technical conditions} \label{s:MA_AR_cases}
We now provide examples of functional predictors that satisfy technical conditions such as C.\ref{ass:a3} and C.\ref{ass:a6}. As described in Remark 2, we consider functional predictors with partially separable covariance structure \citepsupp{Zapata2021supp} such that 
\ben
    \scrbfT^{(\calS,\calS)}=\sum_{k=1}^\infty \BA_k \psi_k \otimes \psi_k,
\een
where $\{\psi_k, k\ge 1\}$ are orthonormal functions in $\bbL_2[0,1]$ and $\{\BA_k, k\ge 1\} $ are a sequence of $q \times q$ covariance matrices. Further, consider $\BA_k = \nu_k \BR$, where $\nu_1 \ge \nu_2 \ge \cdots >0$ are a sequence of eigenvalues and $\BR$ is a $q\times q$ correlation matrix, e.g. a $MA(1)$ correlation matrix. In this setting, $\{X_j, j\in \calS\}$ share the same eigenvalues and eigenfunctions, and their principal component scores have the same correlation structure across different order $k$. To satisfy Condition C.\ref{ass:a2}, $\nu_1=1$ and $\{\nu_k \}$ decay to $0$ fast enough such that $\sum_{k\ge 1} \nu_k <\infty$. To verify C.\ref{ass:a3},
\bse
    \scrbfT^{(\calS,\calS)} (\scrbfT^{(\calS,\calS)}_\lambda)^{-1} = \sum_{k=1}^\infty \BA_k (\BA_k +\lambda \BI)^{-1} \psi_k \otimes \psi_k \equiv \sum_{k=1}^\infty \BB_k \psi_k \otimes \psi_k.
    %
%
\ese
%
Under the setting considered, $\BB_k = \BR (\BR +\vartheta_k \BI)^{-1}$, where $\vartheta_k= \lambda/ \nu_k \to \infty$ as $k\to \infty$. 

\subsection{MA(1) correlation}
We first focus on MA(1) correlation 
\bse
    \BR=\left( \begin{array}{cccccc}
    1 & \rho & 0 & 0& \cdots & 0\\
    \rho & 1 & \rho  & 0 & \cdots & 0\\
    0 & \rho & 1 & \rho   & \cdots & 0\\
    \vdots & 0 & \ddots & \ddots & \ddots & \vdots \\
     0 & \vdots &  &  \rho & 1 & \rho\\
     0& 0 & \cdots & 0 & \rho & 1      
    \end{array}\right).
\ese
In order for $\BR$ to be a legitimate correlation matrix, we need $|\rho| < 1/2$. We will focus on the case $0\le \rho < 1/2$; the same conclusion can be reached for $\rho\in (-1/2, 0)$ using similar arguments. We have
\bse
    \BB_k= \BI - \vartheta_k (\BR +\vartheta_k \BI)^{-1} = \BI - {\vartheta_k \over 1+\vartheta_k}  \wt \BR_k ^{-1} 
\ese
where 
\bse
    \wt \BR_k =\left( \begin{array}{cccccc}
    1 & \wt \rho_k & 0 & 0& \cdots & 0\\
    \wt \rho_k & 1 & \wt \rho_k  & 0 & \cdots & 0\\
    0 & \wt \rho_k & 1 & \wt \rho_k   & \cdots & 0\\
    \vdots & 0 & \ddots & \ddots & \ddots & \vdots \\
     0 & \vdots &  &  \wt\rho_k & 1 & \wt \rho_k\\
     0& 0 & \cdots & 0 & \wt \rho_k & 1      
    \end{array}\right), 
\ese
with $\wt \rho_k= \rho/ (1+\vartheta_k)$. 
Note that both $\BB_k$ and $\wt \BR_k^{-1}$ are positive definite, all diagonal values for both matrices should be greater than $0$, hence $|B_{k, jj}| < 1$ for all $k, j$.
Let $\wt R^{jj'}$ be the $(j,j')$th element of $\wt \BR^{-1}$ and denote 
\bse
    \theta_k={1- \sqrt{1-4\wt \rho_k^2} \over 2 \wt \rho_k}= {2 \wt \rho_k \over 1+ \sqrt{1-4\wt \rho_k^2}  }. 
\ese
One can easily verify that $\theta_k$ is an increasing function of $\wt \rho_k$ and $|\theta_k|<1$. Hence, $\theta_k$ decreases to 0 as $\vartheta_k \to \infty$ with $k$.

By \citesupp{Shaman1969},
\ben
    |\wt R_k^{jj'} | &\le& {1\over \sqrt{1-4\wt \rho_k^2}} \theta_k^{|j-j'|} \nonumber \\
    &\le & {1\over \sqrt{1-4\wt \rho_1^2}} \theta_1^{|j-j'|}\nonumber \\
    &\le & {1\over \sqrt{1-4 \rho^2}} \theta^{|j-j'|},   
\een
where $\theta= {1- \sqrt{1-4 \rho^2} \over 2 \rho} \in [0,1)$. Hence, for $j\neq j'$, $|B_{k, jj'}| \le |\wt R_k^{jj'} |\le  {1\over \sqrt{1-4 \rho^2}} \theta^{|j-j'|} $ uniformly for all $k$. 
By \eqref{eq:infty_norm_bound}
\ben\label{eq:infty_norm_bound_ma1}
    \varkappa = \vertiii{\scrbfT^{(\calS,\calS)} (\scrbfT^{(\calS,\calS)}_\lambda)^{-1}}_{\infty, \infty} \le 1 + {1\over \sqrt{1-4 \rho^2}} {2\theta \over 1- \theta},
\een
which is a constant not depending on $\lambda$ or $q$. We continue to verify C.\ref{ass:a6} in this example:
\bse
    (\scrbfT^{(\calS,\calS)} - \QSS) (\QSS_\lambda)^{-1} = \sum_{k=1}^{\infty} {\nu_k \over \nu_k +\lambda} (\BR-\BI) \psi_k\otimes \psi_k,
\ese
where $\BR$ is the MA(1) correlation matrix above. 
Using the same argument as for \eqref{eq:infty_norm_bound},
\bse
    \vertiii{ (\scrbfT^{(\calS,\calS)} - \QSS) (\QSS_\lambda)^{-1}}_{\infty,\infty}  = 2\rho \max_k {\nu_k \over \nu_k +\lambda} \le 2\rho <1, 
\ese
which satisfies Condition C.\ref{ass:a6}.

\subsection{AR(1) correlation}
We shift our focus towards AR(1) correlation 
\bse
    \BR=\left( \begin{array}{cccccc}
    1 & \rho & \rho^2 & \rho^3& \cdots & \rho^{q-1} \\
    \rho & 1 & \rho  & \rho^2 & \cdots & \rho^{q-2}\\
    \rho^2 & \rho & 1 & \rho   & \cdots & \rho^{q-3}\\
    \vdots & \rho^2 & \ddots & \ddots & \ddots & \vdots \\
     \rho^{q-2} & \vdots & \ddots &  \rho & 1 & \rho\\
     \rho^{q-1} & \rho^{q-2} & \cdots & \rho^2 & \rho & 1      
    \end{array}\right),
\ese
and we will focus on the case $0\le \rho < 1$. Similarly, because $\BB_k= \BI - \vartheta_k (\BR +\vartheta_k \BI)^{-1}$, we have $|B_{k, jj}| < 1$ for all $k, j$. Define $\wt \BR_k = \BR +\vartheta_k \BI$, let $\wt R_k^{jj'}$ be the $(j,j')$th element of $\wt \BR_k^{-1}$, we have $|B_{k, jj'}| \le \vartheta_k |\wt R_k^{jj'}|$ for all $j' \neq j$. 

Consider stochastic process $Y_t$ with AR(1) mean and Gaussian white noise, i.e.
\bse
\left\{
\begin{array}{ll}
    Y_t = \mu_t + W_t, & W_t \overset{i.i.d}{\sim} \mathcal{N}(0, \vartheta) \\
    \mu_t = \rho \mu_{t-1} + V_t, & V_t \overset{i.i.d}{\sim} \mathcal{N}(0, 1 - \rho^2) 
\end{array}
\right.
\ese
then $\boldsymbol{Y}_{[1:q]} \sim \mathcal{N}(\boldsymbol{0}, \wt \BR)$, where $\wt \BR = \BR +\vartheta \BI$. It can be shown that $Y_t$ is an ARMA(1,1) process 
\bse
Y_t = \rho Y_{t-1} + U_t - \theta U_{t-1}, \quad U_t \overset{i.i.d}{\sim} \mathcal{N}(0, \kappa), 
\ese
where $0 \le \theta < 1$, and $(\theta, \kappa)$ satisfies
\bse
\left\{
\begin{aligned}
    & \mbox{Var}(Y_t) =  1 + \vartheta = \frac{1 - 2 \rho \theta + \theta^2}{1 - \rho^2} \kappa\\
    & \mbox{Cov}(Y_t, Y_{t-h}) =  \rho^{|h|} = \frac{(\rho - \theta)( 1 - \rho \theta)}{1 - \rho^2} \rho^{|h|-1} \kappa,
\end{aligned}
\right.
\ese
then 
\begin{align*}
    \theta \le \rho, \quad \frac{\rho}{1 + \vartheta} = \frac{(\rho - \theta)(1-\rho\theta)}{1-2\rho\theta+\theta^2}, \quad {\vartheta \over \kappa} = {\theta \over \rho}.
\end{align*}
According to \citesupp{tiao1971analysis}, for $j' \neq j$, we have
\begin{align*}
    \kappa |\wt R^{jj'}| &\le C \left\{(1 - \rho\theta)^2 \theta^{|j-j'|-1} + (\rho - \theta)^2 \theta^{2q - |j-j'|-1} + (1 - \rho\theta)(\rho - \theta) \left(\theta^{j+j'-2} + \theta^{2q -j-j'} \right) \right\},
\end{align*}
where
\begin{align*}
    C &= \left\{1 + \frac{(\rho - \theta)^2(1-\theta^{2q})}{(1-\rho^2)(1-\theta^2)} \right\}^{-1} \frac{(\rho - \theta)(1-\rho\theta)}{(1-\rho^2)(1-\theta^2)^2} \\
    &\le \frac{1}{(1-\theta^2)^2} \frac{(\rho - \theta)(1-\rho\theta)}{1-2\rho\theta+\theta^2} \\
    & = \frac{\rho}{(1+\vartheta)(1-\theta^2)^2} \\
    & \le \frac{\rho}{(1-\theta^2)^2}.
\end{align*}
Also, note that 
\begin{align*}
   {\vartheta \over \kappa} C \le {\theta \over (1-\theta^2)^{2}}; \quad
   1-\rho\theta \le 1-\theta^2; \quad 
   \rho - \theta \le 1 - \theta \le 1 - \theta^2,
\end{align*}
we have
\begin{align*}
    |B_{k,jj'}| & \le \theta_k^{|j-j'|} + \theta_k^{2q - |j-j'|} + \theta_k^{j+j'-1} + \theta_k^{2q-j-j'+1} \\
    & \le \rho^{|j-j'|} + \rho^{2q - |j-j'|} + \rho^{j+j'-1} + \rho^{2q-j-j'+1} .
\end{align*}
Applying some algebra, we have
\begin{align*}
    & \max_{j}\sum_{j\neq j'} \rho^{|j-j'|} \le \frac{2\rho}{1-\rho}(1-\rho^{q-1}), \quad \max_{j}\sum_{j\neq j'} \rho^{2q-|j-j'|} = \sum_{k=q+1}^{2q-1} \rho^k \le  \frac{\rho^{q+1}}{1-\rho}, \\
    & \max_j \sum_{j'\neq j} \rho^{j+j'-1} + \rho^{2q-j-j'+1} \le \max_j \left( \rho^{j-1}  + \rho^{q-j} \right) \sum_{k=1}^q \rho^k \le \frac{\rho}{1-\rho} (1 + \rho^{q-1}).
\end{align*}

By \eqref{eq:infty_norm_bound} and the above derivation, 
\begin{align}
   \varkappa = \vertiii{\scrbfT^{(\calS,\calS)} (\scrbfT^{(\calS,\calS)}_\lambda)^{-1}}_{\infty, \infty} \le 1 + \frac{3\rho}{1-\rho} \label{eq:infty_norm_bound_ar1}
\end{align}
which is a constant not depending on $\lambda$ or $q$. We continue to verify C. \ref{ass:a6}. Using the same argument as for \eqref{eq:infty_norm_bound}, 
\begin{align*}
     \vertiii{    \sum_{k=1}^{\infty} {\nu_k \over \nu_k +\lambda} (\BR-\BI) \psi_k\otimes \psi_k }_{\infty, \infty} & \le \max_{1\le j \le q} \sum_{j' \neq  j} \max_k \frac{\nu_k}{\nu_k + \lambda} \rho^{|j-j'|} \\
    & \le \max_{1\le j \le q} \sum_{j' \neq  j} \rho^{|j-j'|} \\
    & = \frac{\rho}{1 - \rho} \left(2 - \rho^{\lceil (q-1)/2 \rceil} - \rho^{\lfloor (q-1)/2 \rfloor} \right) \\
    & \le \frac{2\rho}{1-\rho}.
\end{align*}
Hence, for large $q$, we need $\rho \le 1/3$ in order that C. \ref{ass:a6} holds. 

\section{Additional Simulation Results}\label{sec:add_simulation}

\begin{table}[ht]
\caption{Simulation Scenario \II: summary of estimation, prediction, and variable selection performance of the proposed fEnet versus FLR-SCAD under different problem sizes.}
\centering
\scriptsize
\begin{tabular}{cccccccc}
\toprule
$n$ & $p$ & $q$ & Method & FPR $(\%)$ & FNR $(\%)$ & MND & RER \\
\midrule
\multicolumn{8}{c}{$\rho=0$} \\
500 & 50 & 5 & fEnet & 0 (0, 0) & 0 (0, 0) & 1.11 (0.61, 1.82) & 0.0009 (0.0005, 0.0015) \\
& & & FLR-SCAD & 0 (0, 0) & 0 (0, 0) & 1.80 (0.90, 3.59) & 0.0014 (0.0008, 0.0028) \\ 
200 & 100 & 5 & fEnet & 0 (0, 0) & 0 (0, 0) & 1.57 (0.81, 2.37) & 0.0025 (0.0015, 0.0040) \\
& & & FLR-SCAD & 0 (0, 0) & 0 (0, 0) & 2.16 (1.18, 3.71) & 0.0048 (0.0025, 0.0111) \\
100 & 200 & 10 & fEnet & 0 (0, 0.5) & 0 (0, 0) & 3.23 (2.01, 5.05) & 0.0252 (0.0124, 0.0611) \\
& & & FLR-SCAD & 5.8 (1.1, 13.2) & 10 (0, 30)  & 7.49 (4.90, 15.18) & 0.4896 (0.2332, 0.8809) \\
\midrule
\multicolumn{8}{c}{$\rho=0.3$} \\
500 & 50 & 5 & fEnet & 0 (0, 0) & 0 (0, 0) & 1.11 (0.68, 2.05) & 0.0011 (0.0007, 0.0017) \\
& & & FLR-SCAD & 0 (0, 0) & 0 (0, 0) & 1.96 (0.93, 4.11) & 0.0016 (0.0009, 0.0033) \\ 
200 & 100 & 5 & fEnet & 0 (0, 0) & 0 (0, 0) & 1.66 (0.90, 2.52) & 0.0028 (0.0016, 0.0049) \\
& & & FLR-SCAD & 0 (0, 0) & 0 (0, 0) & 2.18 (1.03, 3.60) & 0.0054 (0.0025, 0.0132) \\
100 & 200 & 10 & fEnet & 0 (0, 1.1) & 0 (0, 0) & 3.15 (1.95, 4.97) & 0.0230 (0.0110, 0.0735) \\
& & & FLR-SCAD & 8.4 (4.2, 14.2) & 10 (0, 30)  & 7.60 (4.95, 12.37) & 0.4162 (0.2522, 0.7676) \\
\midrule
\multicolumn{8}{c}{$\rho=0.75$} \\
500 & 50 & 5 & fEnet & 0 (0, 0) & 0 (0, 0) & 1.61 (0.82, 2.63) & 0.0013 (0.0008, 0.0021) \\
& & & FLR-SCAD & 0 (0, 0) & 0 (0, 0) & 3.08 (1.38, 6.41) &  0.0018 (0.0010, 0.0040) \\
200 & 100 & 5 & fEnet & 0 (0, 0) & 0 (0, 0) & 1.95 (0.99, 3.25) & 0.0032 (0.0018, 0.0055)  \\
& & & FLR-SCAD & 0 (0, 2.1) & 0 (0, 0) & 2.93 (1.41, 6.34) & 0.0060 (0.0030, 0.0140) \\
100 & 200 & 10 & fEnet & 0 (0, 3.7) & 0 (0, 10) & 4.15 (2.73, 6.55) & 0.0184 (0.0084, 0.0914)  \\
& & & FLR-SCAD & 4.7 (1.6, 10.6) & 50 (30, 70) & 8.16 (4.95, 16.04)  & 0.2345 (0.1581, 0.3791)  \\
\bottomrule
\end{tabular}
\end{table}

\begin{figure}[ht]
   \centering
    \includegraphics[width=0.87\linewidth]{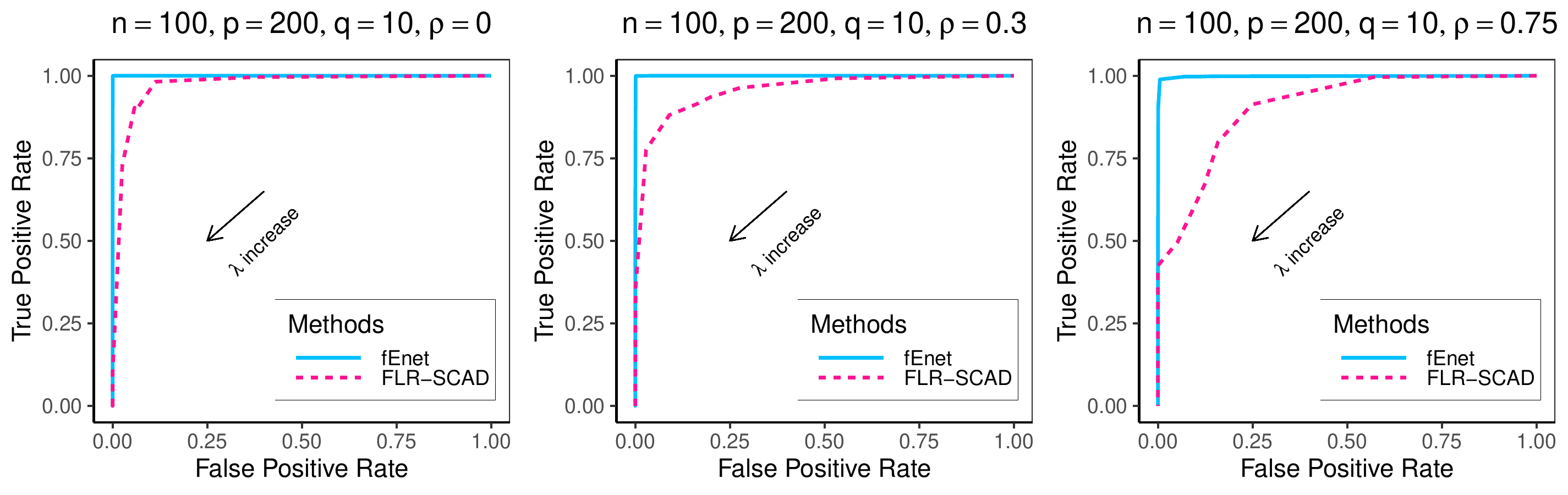}
    \caption{Simulation Scenario \II: the ROC curves of fEnet and FLR-SCAD under the ultra high-dimensional case. The ROC curves are obtained by changing the value of $\lambda$ and holding other hyperparameters as optimal.}
\end{figure}

\begin{figure}[ht]
    \centering
    \includegraphics[width=0.95\linewidth]{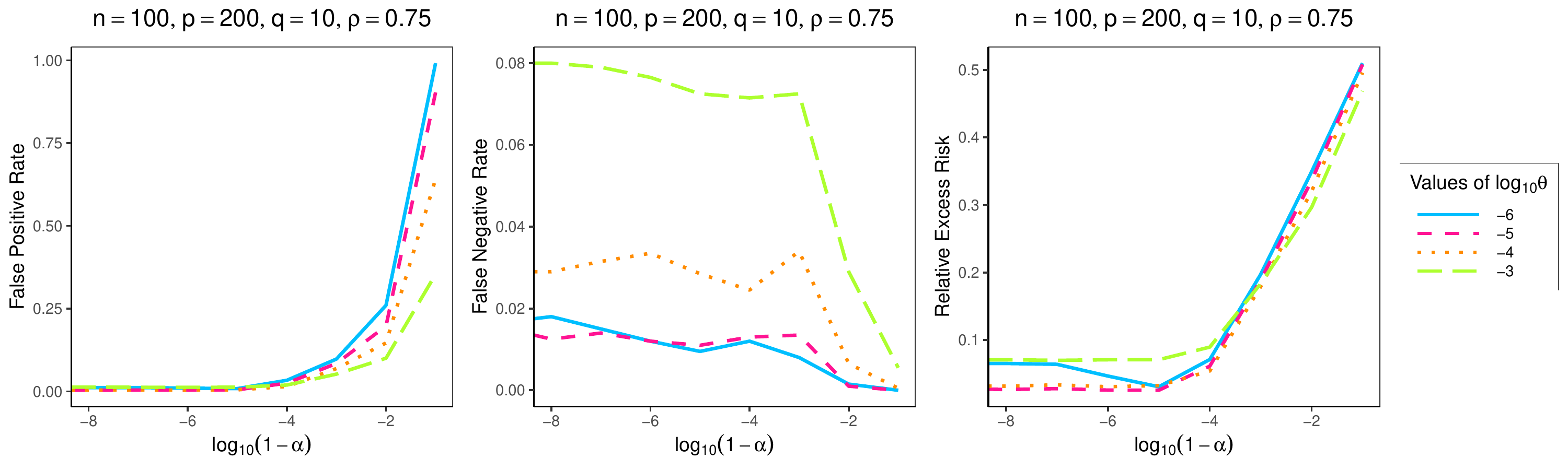}
    \caption{Simulation Scenario \II: the plots of FPR, FNR, and RER versus $\log_{10}(1-\alpha)$ for different values of $\theta$ under the ultra high-dimensional case.}
\end{figure}




%
\begin{table}[ht]
\caption{Simulation Scenario \III: summary of estimation, prediction, and variable selection performance of the proposed fEnet method versus FLR-SCAD under different problem sizes. }
\centering
\scriptsize
\begin{tabular}{cccccccc}
\toprule
$n$ & $p$ & $q$ & Method & FPR $(\%)$ & FNR $(\%)$ & MND & RER \\
\midrule
\multicolumn{8}{c}{$\rho=0$} \\
500 & 50 & 5 & fEnet & 0 (0, 0) & 0 (0, 0) & 0.55 (0.41, 0.82) & 0.0213 (0.0113, 0.0340) \\
& & & FLR-SCAD & 0 (0, 0) & 0 (0, 0) & 0.65 (0.46, 1.09) & 0.0381 (0.0245, 0.0604) \\ 
200 & 100 & 5 & fEnet & 0 (0, 0) & 0 (0, 0) & 0.86 (0.57, 1.39) & 0.0413 (0.0248, 0.0702) \\
& & & FLR-SCAD & 9.5 (4.2, 17.9) & 0 (0, 0) & 0.92 (0.65, 1.37) & 0.0612 (0.0405, 0.1034) \\
100 & 200 & 10 & fEnet & 0 (0, 0.5) & 0 (0, 10) & 1.49 (0.95, 4.18) & 0.0784 (0.0429, 0.2346) \\
& & & FLR-SCAD & 6.8 (2.6, 11.6) & 0 (0, 30)  & 4.01 (2.86, 4.18) & 0.4616 (0.2127, 0.7290) \\
\midrule
\multicolumn{8}{c}{$\rho=0.3$} \\
500 & 50 & 5 & fEnet & 0 (0, 0) & 0 (0, 0) & 0.58 (0.40, 0.89) & 0.0274 (0.0172, 0.0491) \\
& & & FLR-SCAD & 0 (0, 2.2) & 0 (0, 0) & 0.65 (0.48, 0.87) & 0.0528 (0.0353, 0.0830) \\ 
200 & 100 & 5 & fEnet & 0 (0, 0) & 0 (0, 0) & 0.95 (0.61, 1.39) & 0.0562 (0.0338, 0.1042) \\
& & & FLR-SCAD & 9.5 (4.2, 15.8) & 0 (0, 0) & 0.96 (0.66, 1.41) & 0.0797 (0.0503, 0.1410) \\
100 & 200 & 10 & fEnet & 0 (0, 1.1) & 0 (0, 20) & 1.84 (1.32, 4.18) & 0.1048 (0.0618, 0.3288) \\
& & & FLR-SCAD & 8.4 (3.7, 13.2) & 20 (0, 50)  & 4.18 (3.88, 4.18) & 0.5074 (0.3487, 0.7764) \\
\midrule
\multicolumn{8}{c}{$\rho=0.75$} \\
500 & 50 & 5 & fEnet & 2.2 (0, 6.7) & 0 (0, 0) & 0.86 (0.62, 1.42) & 0.0504 (0.0276, 0.0926) \\
& & & FLR-SCAD & 26.7 (13.3, 37.8) & 0 (0, 0) & 1.05 (0.73, 3.59) &  0.0870 (0.0506, 0.1701) \\
200 & 100 & 5 & fEnet & 1.1 (0, 4.2) & 0 (0, 20) & 1.45 (0.90, 4.18) & 0.1411 (0.0603, 0.3734)  \\
& & & FLR-SCAD & 9.5 (3.2, 16.8) & 20 (0, 40) & 4.18 (1.29, 4.18) & 0.3056 (0.1227, 0.5523) \\
100 & 200 & 10 & fEnet & 0.5 (0, 1.6) & 40 (20, 50) & 4.18 (4.18, 4.18) & 0.1518 (0.0878, 0.2769)  \\
& & & FLR-SCAD & 5.3 (2.1, 9.0) & 60 (40, 70) & 4.19 (4.18, 6.16)  & 0.2467 (0.1616, 0.3688)  \\
\bottomrule
\end{tabular}
\end{table}

\begin{figure}[ht]
    \centering
    \includegraphics[width=0.87\linewidth]{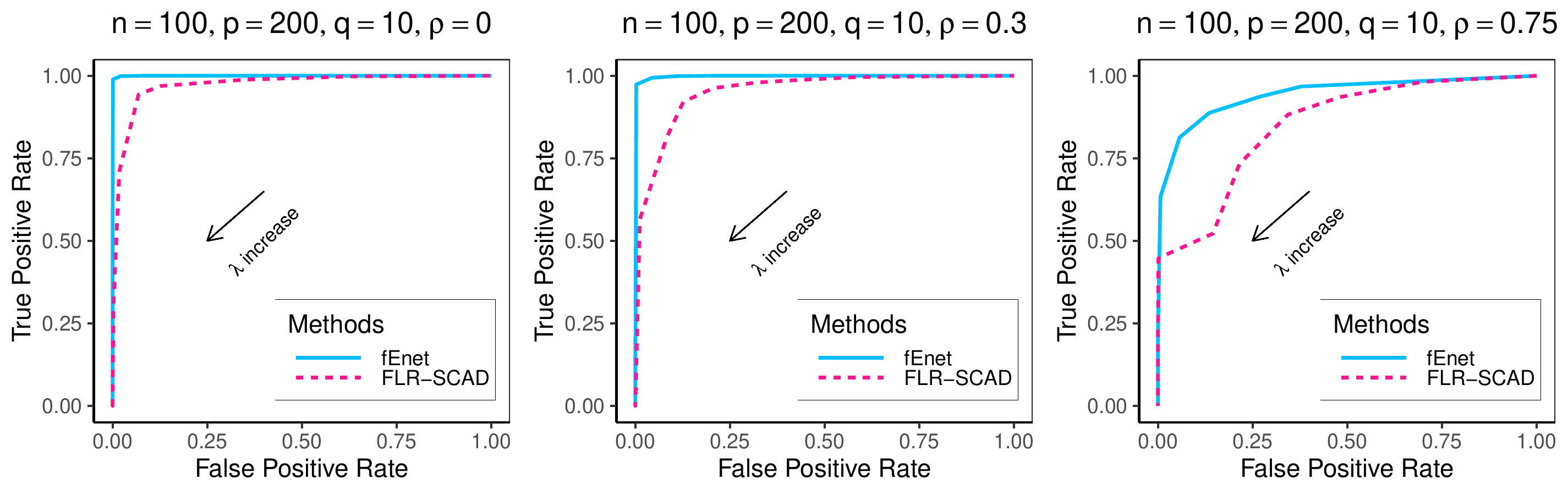}
    \caption{Simulation Scenario \III: the ROC curves of fEnet and FLR-SCAD under the ultra high-dimensional case. The ROC curves are obtained by changing the value of $\lambda$ and holding other hyperparameters as optimal.}
\end{figure}

\begin{figure}[ht]
    \centering
    \includegraphics[width=0.95\linewidth]{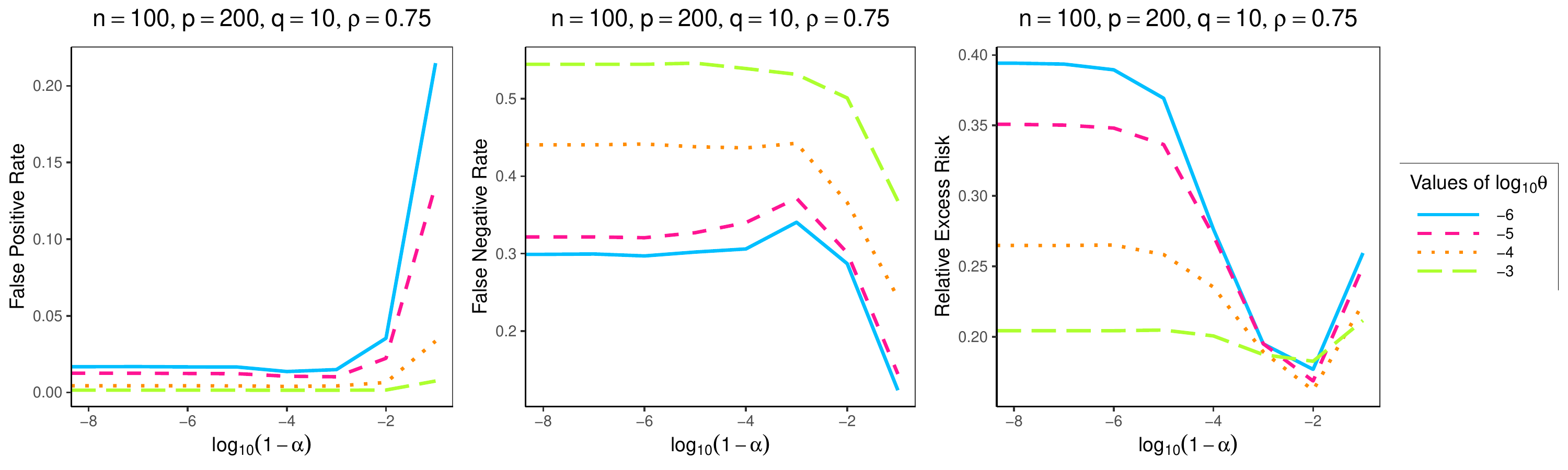}
    \caption{Simulation Scenario \III: the plots of FPR, FNR, and RER versus $\log_{10}(1-\alpha)$ for different values of $\theta$ under the ultra high-dimensional case.}
\end{figure}

\clearpage\pagebreak\newpage

\bibliographystylesupp{apalike}
\bibliographysupp{flm_vs_ref}

\end{document}